\documentclass[a4paper,12pt,oneside]{article}
\usepackage{amsmath,amssymb,mathtools,bm,lipsum}
\usepackage[english]{babel}
\usepackage[T1]{fontenc}
\usepackage[utf8]{inputenc}
\usepackage{amstext}
\usepackage{amsthm}
\usepackage{bbm}
\usepackage{enumerate}
\usepackage{hyperref}
\usepackage[capitalise]{cleveref}
\usepackage{braket}

\usepackage[verbose=true,letterpaper]{geometry}
\AtBeginDocument{
	\newgeometry{
		textheight=9.5in,
		textwidth=7in,
		top=1in,
		headheight=14.5pt,
		headsep=10pt,
		footskip=20pt
	}
}

\def\titlename{\scshape Absence of Ground States in the Renormalized Massless Translation-Invariant Nelson Model}
\title{\LARGE\titlename}
\author{Thomas Norman Dam\\ Aarhus Universitet\footnote{Present Affiliation: Danske Bank, Superfly Analytics, QRA, Laksegade 11, 1063 Copenhagen K, Denmark}\\  \small Nordre Ringgade 1\\[-.5em]\small 8000 Aarhus C\\[-.5em] \small Denmark \and Benjamin Hinrichs\\ Friedrich Schiller Universit\"at Jena\\\small Ernst-Abbe-Platz 2\\[-.5em]\small  07743 Jena\\[-.5em] \small Germany\\ \small\texttt{benjamin.hinrichs@uni-jena.de}}
\newcommand{\shortauthors}{T.N. Dam, B. Hinrichs}
\date{June 24, 2022}

\usepackage{fancyhdr}
\fancyhf{}
\pagestyle{fancy}

\fancyheadoffset{0pt}
\rhead{\scshape \thepage}
\lhead{\scshape\shortauthors}
\cfoot{\footnotesize\scshape\titlename}

\usepackage{titlesec}
\titleformat{\section}{\bfseries\scshape\Large}{\thesection}{1em}{}{}
\titleformat*{\subsection}{\scshape\bfseries\large}

\newcommand{\IN}{\mathbb{N}}

\newcommand{\IR}{\mathbb{R}}
\newcommand{\IC}{\mathbb{C}}
\newcommand{\R}{\mathbb{R}}

\newcommand{\CC}{\mathbb{C}}

\newcommand{\NN}{\mathbb{N}}

\newcommand{\RR}{\mathbb{R}}

\newcommand{\HS}{\mathcal{H}}
\newcommand{\FS}{\mathcal{F}}

\newcommand{\cN}{\mathcal{N}}
\newcommand{\cB}{\mathcal{B}} 
\newcommand{\cC}{\mathcal{C}}
\newcommand{\cD}{\mathcal{D}} 
\newcommand{\cE}{\mathcal{E}}\newcommand{\cQ}{\mathcal{Q}}
\newcommand{\cF}{\mathcal{F}}
\newcommand{\cS}{\mathcal{S}}
\newcommand{\cH}{\mathcal{H}}

\newcommand{\cM}{\mathcal{M}}

\newcommand{\fH}{\mathfrak H}

\newcommand{\eps}{\varepsilon}
\newcommand{\ph}{\varphi}

\DeclareMathOperator*{\esssup}{ess\,sup}
\DeclareMathOperator*{\slim}{s-lim}
\DeclareMathOperator*{\wlim}{w-lim}

\numberwithin{equation}{section}
\newtheorem{thm}{Theorem}[section]

\newtheorem{lem}[thm]{Lemma}
\newtheorem{prop}[thm]{Proposition}

\theoremstyle{definition}

\newtheorem{rem}[thm]{Remark}
\newtheorem{ex}[thm]{Example}
\newtheorem{hyp}{Hypothesis}

\crefname{hyp}{Hypothesis}{Hypotheses}
\Crefname{hyp}{Hypothesis}{Hypotheses}
\crefname{lem}{Lemma}{Lemmas}
\Crefname{lem}{Lemma}{Lemmas}
\crefname{prop}{Proposition}{Propositions}
\Crefname{prop}{Proposition}{Propositions}
\crefname{enumi}{}{}
\Crefname{enumi}{}{}
\creflabelformat{enumi}{#2(#1)#3}
\crefname{equation}{}{}
\Crefname{equation}{}{}
\crefname{thm}{Theorem}{Theorems}
\Crefname{thm}{Theorem}{Theorems}

\newcommand{\partref}[2]{\hyperref[#2]{\cref*{#1} \cref*{#2}}}

\begin{document}

\maketitle\thispagestyle{empty}
\begin{abstract}
	\noindent
	We consider a model for a massive uncharged non-relativistic particle interacting with a massless bosonic field, widely referred to as the Nelson model. It is well known, that an ultraviolet renormalized Hamilton operator exists in this case. Further, due to translation-invariance, it decomposes into fiber operators.
	
	In this paper, we treat the renormalized fiber operators. We give a description of the operator and form domains and prove that the fiber operators do not have a ground state. Our results hold for any non-zero coupling constant and arbitrary total momentum. Our proof for the absence of ground states is a new generalization of methods recently applied to related models. A major enhancement we provide, is that the method can be applied to models with degenerate ground state eigenspaces.
	
	\bigskip
	\noindent
	{\footnotesize {\em Keywords:} Nelson Model, Energy Renormalization, Translation Invariance, Ground State Absence
	
	\noindent
	{\em MSC 2020:} 81T10, 47N50, 81T16}
\end{abstract}

\section{Introduction}
In this paper, we prove absence of ground states for the fiber operators in the translation-invariant renormalized massless Nelson model. The method used in the proof follows along the lines of \cite{Dam.2018}, but crucially avoids non-degeneracy of ground states as an ingredient, even though it would be available (see \cite{Miyao.2019}). Instead, we rely on a momentum estimate for hypothetical ground states derived via rotation-invariance. The purpose of avoiding non-degeneracy of ground states as an ingredient is threefold: First of all, the method found in this paper can in principle be applied to translation-invariant models which cannot be expected to have non-degenerate ground states, such as models with spin. Second, it allows us to remove unnatural technical assumptions and work under very general hypotheses. Third, it results in an almost self-contained proof of our claim.  

In order to prove the main result, we need some quite strong technical results about the convergence to the renormalized models. These also allow us to prove several new results about the domain and the form domain of the renormalized fiber operators, which hold both for the massive and the massless model. In particular, we show that all fiber operators have the same form domain, but the intersection between the domain of two different fiber operators is $\{0\}$ in case of the three-dimensional massless Nelson model.

\bigskip\noindent
To further introduce the results, we will use vague but standard definitions of the operators. For rigorous definitions we refer the reader to \cref{sec:model}.

The Nelson model was originally treated by Edward Nelson in his paper \cite{Nelson.1964}. He investigated a formal expression for a Hamiltonian describing the interaction of a chargeless particle with a scalar bosonic field. We here denote the dispersion relation of the bosonic field as $\omega:\IR^3\to\IR$ and the interaction function between particle and field as $v:\IR^3\to \IR$. In the physical case, they are given as
\begin{equation}\label{eq:physical}
\omega(k) = \sqrt{\mu^2+k^2} \qquad \mbox{and}\qquad v= g\omega^{-1/2}
\end{equation}
for some coupling constant $g\in\R\setminus\{0\}$ and the photon mass $\mu\ge 0$.
By removing the interaction between the particle and field modes with momentum larger than a parameter $\Lambda\in (0,\infty)$, the so-called ultraviolet cutoff, Nelson was able to define the corresponding operator acting on $\psi\in L^2(\IR^3,\FS)$ as $$H_\Lambda\psi(x) = -\Delta_x\psi(x) + d\Gamma(\omega)\psi(x) + \int\limits_{|k|<\Lambda}e^{-ikx}v(k)(a_k^\dag+a_{-k})\psi(x) dk.$$
Here, $-\Delta_x$ denotes the positive Laplacian on $L^2(\IR^3,\FS)$, $\FS$ denotes the bosonic Fock space over $L^2(\IR^3)$, $d\Gamma$ is the second-quantization of a multiplication operator,  and $a_k$, $a^\dag_k$ are the pointwise annihilation and creation operators.
He then found a constant $E_\Lambda$, sometimes referred to as the self-energy, such that $H_\Lambda+E_\Lambda$ converges to an operator $H_\infty$ in the strong resolvent sense as $\Lambda\to\infty$. $H_\infty$ is the operator we refer to as the translation-invariant renormalized Nelson Hamiltonian. It was later proven in \cite{Cannon.1971}, that the convergence is actually in the norm resolvent sense.

The operators $H_\Lambda$ and $H_\infty$ strongly commute with the total momentum
$$ P = -i\nabla + d\Gamma(m),\quad\mbox{where}\ m:\IR^3\to\IR^3,\ k\mapsto k, $$
which is the reason they are called translation-invariant. This implies that there is a unitary operator $V$ (sometimes called the Lee-Low-Pines operator, see \cite{LeeLowPines.1953}) such that the operator $V H_\Lambda V^*$ decomposes into a direct integral
$$ V H_\Lambda V^* = \int_{\IR^3}^\oplus H_\Lambda(\xi)d\xi \quad\mbox{for all}\ \Lambda\in(0,\infty], $$
where the so-called fiber operators $H_\Lambda(\xi)$ act on $\FS$. For $\Lambda<\infty$ they can explicitly be written as
$$ H_\Lambda(\xi) = (\xi-d\Gamma(m))^2 + d\Gamma(\omega) + \int\limits_{|k|<\Lambda}v(k)(a^\dag_k+a_k)dk. $$
In \cite{Cannon.1971} it was also proven, that $H_\Lambda(\xi)+E_\Lambda$ converges to $H_\infty(\xi)$ in the norm resolvent sense as $\Lambda\to\infty$ for all $\xi\in \RR^3$.

Results about the domain of the full operator $H_\infty$ are derived in \cite{GriesemerWuensch.2018}, whereas we analyze the domains of the renormalized fiber operators $H_\infty(\xi)$ (cf. \cref{thm:operatordomain}). Explicitly, we prove that the form domain of the renormalized fiber operators $H_\infty(\xi)$ is independent of the total momentum $\xi$. We further derive a necessary and sufficient condition for the operator domain to also be independent of $\xi$. If this is not the case, we show that the domains of $H_\infty(\xi_1)$ and $H_\infty(\xi_2)$ only have the zero-vector in common if $\xi_1\ne\xi_2$. This especially holds for the physically relevant case \cref{eq:physical}, independent of coupling constant or photon mass. \cref{LargeFiberProp} in \cref{sec:fiberrenorm} also contains numerous interesting results about improved convergence of $H_\Lambda(\xi)$ as $\Lambda\to\infty$, which hold in very high generality. Actually, the domain properties can be seen as almost direct consequences of those results.
In the recent papers \cite{LampartSchmidt.2019,Posilicano.2020,Schmidt.2021}, explicit expressions for the renormalized operators have been derived using interior boundary conditions. However, the regularity assumptions therein are considerably stronger than the ones we use.
Therefore, our approach to the renormalization of the fiber operators uses the results on the full operators from \cite{GriesemerWuensch.2018} and extracts the behaviour of the fiber operators using properties of direct integrals, discussed in \cref{app:directintegrals}.  In principle, the same results could be obtained solely working on the fiber operators, by adapting the arguments from \cite{GriesemerWuensch.2018} to the fiber operators directly. We do not use this approach here, because it would mean a lengthy repetition of methods known from \cite{Nelson.1964,GriesemerWuensch.2018}, but refer to \cite{Hinrichs.2022} for a proof along these lines.

We say the Nelson model is infrared divergent, if $v/\omega$ is not square-integrable in any ball centered at zero. In the physically relevant case \cref{eq:physical}, this holds if we set the photon mass $\mu=0$. In \cref{thm:nogs}, we prove that in case of infrared divergence the renormalized operator $H_\infty(\xi)$ does not have a ground state for any $\xi\in\IR^3$, i.e., $\inf \sigma(H_\infty(\xi))$ is not an eigenvalue. The proof is based on the techniques developed in the papers \cite{Dam.2018, HaslerHerbst.2008a}, but extends it to a more general setting. In \cite{HaslerHerbst.2008a}, absence of ground states for fiber operators of the translation-invariant non-relativistic Pauli-Fierz model is proven under the condition that the mass shell $\Sigma_\Lambda(\xi)=\inf(\sigma(H_\Lambda(\xi)))$ is differentiable and the derivative is non-zero. This approach was later adapted to the semi-relativistic Pauli-Fierz model in the paper \cite{KoenenbergMatte.2014}, where the authors also rely on a non-zero derivative of the mass shell. In fact, one could also easily adapt the proof in \cite{HaslerHerbst.2008a} to obtain absence of ground states in the Nelson model, whenever the mass shell is differentiable. The differentiability condition is very hard to prove, and so far, has only been proven for weak coupling and a restricted set of total momenta $\xi\in\IR^\nu$ (see \cite{FrohlichPizzo.2010,KoenenbergMatte.2014,HaslerHerbst.2011a}). In \cite{Dam.2018}, the differentiability is replaced with non-degeneracy of the eigenspace associated with $\Sigma_\Lambda(\xi)$, rotation invariance of the mass shell and an HVZ type theorem. This leads to a simple, non-perturbative proof for absence of ground states, which holds for any coupling and all $\xi\in \RR^\nu$. This method, however, could never apply to the models treated in \cite{HaslerHerbst.2008a,KoenenbergMatte.2014}, as those fiber operators have natural degeneracies in their ground states due to the non-zero spin. In this paper, we give a comparatively short proof for absence of ground states for the renormalized fiber operators without relying on non-degeneracy of ground states. It should be noted that non-degeneracy results are available \cite{Miyao.2018,Miyao.2019,Lampart.2020} for the Nelson model. This, in principle, would allow us to extend the method from \cite{Dam.2018} to the renormalized case more directly. However, this would come with the following disadvantages. First, it would not help generalize the method from \cite{Dam.2018} to handle a broader class of models. Second, it would add extra unnatural technical conditions. Third, the present article would be a lot less self-contained. The main technical advancement in comparison to the proof in \cite{Dam.2018} can be found in the second half of \cref{sec:proof}, starting with \cref{lem:U-invertible}. It reflects in the fact that we treat the vector $V(\xi)$ defined in \cref{def:UVECT} as a vector of operators and not as a vector of scalars.

Absence of ground states in the full Nelson model has already been proven in the infrared divergent case, see \cite{DerezinskiGerard.2004,HiroshimaMatte.2019}. On the other hand, when $v/\omega$ is square integrable, there are several results about existence of ground states, see \cite{Gerard.2000,GubinelliHiroshimaLorinczi.2014} for the full Nelson model and \cite{Frohlich.1973,Frohlich.1974,Pizzo.2003} for the fiber operators. Our result therefore fits well into the general picture that local square-integrability of $v/\omega$ is critical for the existence of ground states in the Nelson model. In the context of scattering theory, absence of ground states for the fiber operators poses some challenges, as ground states near $\xi=0$ are used to define scattering states, which is why results of this type are often referred to as the infrared catastrophe. One workaround is to construct ground states in a so called non-equivalent Fock representation, which is done in \cite{BachmannDeckertPizzo.2012,Pizzo.2003}. 

\section{Notation, Model and  Main Result}\label{sec:model}

In this \lcnamecref{sec:model} we first introduce Fock space notation and operators in \cref{sec:notation}. Then, we define the Nelson model in \cref{sec:nelson} and state our main results \cref{thm:operatordomain} and \cref{thm:nogs}.\vspace*{-1em}

\subsection{Fock Space Operators}
\label{sec:notation}

Throughout this paper we assume $\nu\in\IN$ and write $\cH=L^2(\IR^\nu)$ for the state space of a single boson.
Let $\cF$ be the bosonic Fock space defined by
\begin{equation}\label{defn:Fockspace}
\cF=\IC\oplus\bigoplus_{n=1}^\infty \FS^{(n)} \qquad\mbox{with}\ \FS^{(n)}= L^2_{sym}(\IR^{n\nu}),
\end{equation} 
where we symmetrize over the $n$ $\IR^\nu$-variables in each component. We write an element $\psi\in \cF$ in terms of its coordinates $\psi=(\psi^{(n)})$ and define the vacuum $\Omega=(1,0,0,\dots)$. 

\noindent For measurable functions $\omega:\IR^\nu\to\IR$ we define
\begin{equation}\label{defn:dGamma-scalar}
d\Gamma(\omega) =0\oplus\bigoplus_{n=1}^{\infty} \omega^{(n)}\qquad\mbox{with}\  \omega^{(n)}(k_1,\ldots,k_n) = \sum_{i=1}^{n}\omega(k_i).
\end{equation}
Further, for unitary operators $U$ on $\HS$ we define
\begin{align}
\Gamma(U)&=1\oplus\bigoplus_{n=1}^\infty  U^{\otimes n}
\end{align}
as operators on $\FS$. We will write $\Gamma^{(n)}(U)$ for the restriction to $\FS^{(n)}$. 

For $g\in \cH$, we define the exponential vector $\epsilon(g)\in\FS(\HS)$ by $\epsilon(g)^{(0)}=1$ and
\begin{equation}\label{defn:corherentstate}
\epsilon(g)^{(n)}(k_1,\ldots,k_n)=\frac{1}{\sqrt{n!}}\prod_{i=1}^ng(k_i).
\end{equation}
Also for $g\in \cH$ we define the annihilation operator $a(g)$ and creation operator $a^{\dagger}(g)$ using $a(g)\Omega=0$, $a^\dagger(g)\Omega=g$ and for $f\in\FS^{(n)}$
\begin{align}
a(g)f (k_1,\ldots,k_{n-1})&={\sqrt{n}}\int \overline{g(k)}f(k,k_1,\ldots,k_{n-1}) dk,\label{defn:annihilation}\\
a^\dagger(g)f(k_1,\ldots,k_n,k_{n+1})&=\frac{1}{\sqrt{n+1}}\sum_{i=1}^{n+1}g(k_i)f(k_1,\ldots,\widehat{k}_i,\ldots,k_{n+1}), \label{defn:creation}
\end{align}
where $\widehat{k}_i$ means that $k_i$ is omitted from the argument. One can show that these operators can be extended to closed operators on $\cF$ that satisfy $(a(g))^*=a^{\dagger}(g)$. From the creation and annihilation operator, we define the symmetric operator
\begin{equation}\label{defn:fieldoperator}
\varphi(g)=\overline{ a(g)+a^\dagger(g) }.
\end{equation}
For $h\in\HS$ there exists a unique unitary map $W(h)$, called Weyl operator, such that
\begin{equation}\label{defn:Weyloperator}
W(h)\epsilon(g)=e^{-\frac 12\|h\|^2-\braket{h,g}}\epsilon(h+g) \qquad\mbox{for all}\ g\in\cH.
\end{equation}
Some well-known properties of the above operators are collected in the next \lcnamecref{Prop:Bregninger}.
\begin{lem}\label{Prop:Bregninger}
	Let $\omega:\IR^\nu\to\IR$ be a measurable function, $U$ be an unitary operator on $\HS$ and $f,g\in \cH$. Then
	\begin{enumerate}[(1)]
		\item $d\Gamma(\omega)$ is selfadjoint. If $\omega\ge 0$, then $d\Gamma(\omega)\geq 0$.
		\item $\varphi(g)$ is selfadjoint and $e^{i\varphi(g)}=W(-ig)$.
		\item $\Gamma(U)$ is unitary with $\Gamma(U)^* = \Gamma(U^*)$ and
		\begin{align*}
		\Gamma(U)\varphi(g)\Gamma(U)^*&=\varphi(Ug),\\
		\Gamma(U)W(f)\Gamma(U)^*&=W(Uf),\\
		\Gamma(U)d\Gamma(\omega)\Gamma(U)^*&=d\Gamma(U\omega U^*).
		\end{align*}
		\item\label{part:secondquantizedupperbounds} Assume $\omega>0$ almost everywhere and $g\in \cD(\omega^{-\frac{1}{2}})$. Then $\varphi(g)$ and $a(g)$ are $d\Gamma(\omega)^{1/2}$-bounded and we have the bounds
		\begin{align*}
		\lVert a(g) \psi \lVert&\leq  \lVert \omega^{-\frac{1}{2}}g \lVert  \lVert d\Gamma(\omega)^{\frac{1}{2}}\psi \lVert \quad\mbox{and}\\
		\lVert \varphi(g) \psi \lVert&\leq 2 \lVert (\omega^{-\frac{1}{2}}+1)g \lVert  \lVert (d\Gamma(\omega)+1)^{\frac{1}{2}}\psi \lVert, 
		\end{align*}
		which hold for $\psi\in\cD(d\Gamma(\omega)^{\frac{1}{2}})$. Especially, $\varphi(g)$ is infinitesimally $d\Gamma(\omega)$-bounded. Furthermore, $d\Gamma(\omega)+\varphi(g)\geq -\lVert \omega^{-\frac{1}{2}}g \lVert^2$.
		\item\label{part:Weylstrongcontinuous} $f\mapsto W(f)$ is strongly continuous and $W(f)W(g)=e^{-i\operatorname{Im}(\langle f,g \rangle)}W(f+g)$.
	\end{enumerate}
\end{lem}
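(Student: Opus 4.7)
Each item reduces to a calculation on a convenient dense subspace---typically the finite-particle vectors $\FS_{\mathrm{fin}} := \{\psi \in \FS : \psi^{(n)} = 0 \text{ for all but finitely many } n\}$ or the linear span of coherent states $\{\epsilon(g) : g \in \HS\}$---followed by extension by density or closure. I would treat the five items more or less independently.

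For (1), each $\omega^{(n)}$ is a real-valued measurable multiplication operator on $\FS^{(n)}$ and is therefore selfadjoint on its maximal domain, with nonnegativity inherited pointwise from $\omega \geq 0$; taking the orthogonal direct sum preserves both properties. For (3), $\Gamma(U) = \bigoplus_n U^{\otimes n}$ is a direct sum of unitaries that preserves each symmetric subspace $\FS^{(n)}$, so it is unitary with adjoint $\Gamma(U^*)$; the three conjugation identities are verified by direct substitution on $\FS_{\mathrm{fin}}$ via \eqref{defn:annihilation}--\eqref{defn:creation} and \eqref{defn:dGamma-scalar}, and on the span of coherent states for $W(f)$ via \eqref{defn:Weyloperator} together with the immediate identity $\Gamma(U)\epsilon(g) = \epsilon(Ug)$.

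For (4), I would first establish the sectorwise bound $\|a(g)\psi^{(n)}\|^2 \leq \|\omega^{-1/2}g\|^2 \cdot \|d\Gamma(\omega)^{1/2}\psi^{(n)}\|^2$: rewriting $g(k) = \omega(k)^{-1/2}g(k) \cdot \omega(k)^{1/2}$ in \eqref{defn:annihilation}, applying Cauchy--Schwarz in the integration variable, and then using the permutation symmetry of $\psi^{(n)}$ to match the resulting factor $\omega(k)$ against \eqref{defn:dGamma-scalar}. Orthogonality of the sectors extends the bound to $\FS$. For $a^\dagger(g)$, the identity $\|a^\dagger(g)\psi\|^2 = \|a(g)\psi\|^2 + \|g\|^2\|\psi\|^2$ on $\FS_{\mathrm{fin}}$ (a consequence of $[a(g),a^\dagger(g)] = \|g\|^2$) gives the analogous estimate with an additive $\|g\|\|\psi\|$, from which the stated $\varphi(g)$-bound follows after closure. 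Infinitesimal $d\Gamma(\omega)$-boundedness is immediate from $t^{1/2} \leq \delta t + (4\delta)^{-1}$. For the lower bound, write the quadratic form as $\|d\Gamma(\omega)^{1/2}\psi\|^2 + 2\operatorname{Re}\langle a(g)\psi,\psi\rangle$ and complete the square using the $a(g)$-bound.

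For (2), finite-particle vectors are analytic for $T := a(g)+a^\dagger(g)$: the sectorwise bounds yield $\|T^k\psi\| \leq (2\|g\|)^k \sqrt{(N+k)!/N!}\,\|\psi\|$ on $\bigoplus_{n \leq N}\FS^{(n)}$, and $\sum_k t^k(2\|g\|)^k\sqrt{(N+k)!/N!}/k!$ converges for every $t > 0$. By Nelson's analytic vector theorem, $T|_{\FS_{\mathrm{fin}}}$ is essentially selfadjoint, so $\varphi(g)$ is selfadjoint. Differentiating $W(-itg)\epsilon(h)$ at $t=0$ using \eqref{defn:Weyloperator} shows that $t \mapsto W(-itg)$ has infinitesimal generator $i\varphi(g)$ on a dense set of analytic vectors, and Stone's theorem yields $e^{i\varphi(g)} = W(-ig)$. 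For (5), strong continuity on coherent states is immediate from continuity of $\|h\|^2$, $\langle h,g\rangle$ and $h \mapsto \epsilon(h)$ in $\HS$-norm; $\|W(f)\|=1$ and density of $\mathrm{span}\{\epsilon(g)\}$ extend this to $\FS$, and the CCR is verified by applying both sides to an arbitrary $\epsilon(k)$. The only step with any real content is the essential selfadjointness in (2); everything else is bookkeeping, which is presumably why the lemma appears as a reminder.
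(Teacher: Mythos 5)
The paper does not actually prove this lemma --- its ``proof'' is a one-line citation of three textbooks. Your proposal supplies the standard arguments, and they are correct in outline: the sectorwise Cauchy--Schwarz estimate for $a(g)$, the commutation identity $\lVert a^\dagger(g)\psi\rVert^2 = \lVert a(g)\psi\rVert^2 + \lVert g\rVert^2\lVert\psi\rVert^2$, the analytic-vector bound on $\FS_{\mathrm{fin}}$ for essential selfadjointness of $\varphi(g)$, the verification of the Weyl relation on coherent states, and the extension of strong continuity by density using unitarity of $W(f)$ all go through exactly as you describe. So there is no ``same versus different route'' comparison to make with the paper itself; you are filling in what the paper delegates to references.

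One point in (2) deserves a more careful phrasing. Differentiating $W(-itg)\epsilon(h)$ at $t=0$ shows that the selfadjoint generator $B$ of the group $t\mapsto W(-itg)$ agrees with $\varphi(g)$ on coherent states, i.e.\ $\varphi(g)|_{\mathrm{span}\{\epsilon(h)\}} \subset B$. This by itself only says that $B$ is \emph{a} selfadjoint extension of $\varphi(g)$ restricted to coherent states; it does not yet give $B = \varphi(g)$, since the selfadjointness you established was via analytic vectors in $\FS_{\mathrm{fin}}$, a different domain. To close the argument you either need to show separately that $\varphi(g)$ is essentially selfadjoint on the span of coherent states (a direct analytic-vector estimate on $\epsilon(h)$, using $\lVert\varphi(g)^k\epsilon(h)\rVert \le (2\lVert g\rVert)^k k!^{1/2}\, e^{\lVert h\rVert^2}$ or similar), or instead differentiate $W(-itg)\psi$ at $t=0$ for $\psi\in\FS_{\mathrm{fin}}$ directly. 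Both are standard; the gap is one of bookkeeping rather than strategy, but the sentence as written elides the distinction between \emph{an} extension and \emph{the} extension.
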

\begin{proof}
	These results can for example be found in \cite{LorincziHiroshimaBetz.2011,Parthasarathy.1992,Arai.2018}.
\end{proof}
\noindent
If we have a measurable function $h:\RR^\nu\to\RR^p$ with coordinate functions $h_1,...,h_p$, it follows that $d\Gamma(h_i)$ and $d\Gamma(h_j)$ have commuting unitary groups for any $j\neq i$. For a function $f:\RR^p\to \RR$ we will then write
\begin{equation}\label{eq:defining_eqn}
f(d\Gamma(h)):=f(d\Gamma(h_1),...,d\Gamma(h_p))=0\oplus\bigoplus_{n=1}^{\infty} f(h^{(n)})
\end{equation}
where $h^{(n)}(k_1,...,k_n)=h(k_1)+...+h(k_n)$. We will need the following \lcnamecref{Lem:propertiesOfFuncDGamma}.
\begin{lem}\label{Lem:propertiesOfFuncDGamma}
	Let $h:\RR^\nu\rightarrow \RR^p$ be measurable with coordinate functions $h_1,...,h_p$. For all $\xi \in \RR^p$ and $s>0$ we have $\cD(\lvert d\Gamma(h)-\xi \lvert^s)=\bigcap_{i=1}^\nu\cD(\lvert d\Gamma(h_i) \lvert^s)$. Further, if $\omega:\RR^\nu\rightarrow \RR$ and $\lvert h(k)\lvert \leq \omega(k)$ for all $k\in \RR^\nu$ then for all $s\in[0,1]$ the operator $\lvert d\Gamma(h)\lvert^s$ is $d\Gamma(\omega)$-bounded.
\end{lem}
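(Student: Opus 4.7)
The plan is to reduce both claims to pointwise estimates on the multiplication operators induced on each Fock sector. Recall that on $\FS^{(n)}$, the operator $d\Gamma(h)-\xi$ acts as multiplication by $h^{(n)}(k_1,\ldots,k_n)-\xi = \sum_{j=1}^n h(k_j) - \xi$, and the commuting self-adjoint operators $d\Gamma(h_i)-\xi_i$ have a well-defined joint functional calculus.

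For the first claim, I would observe that $|d\Gamma(h)-\xi|^s$ acts on $\FS^{(n)}$ as multiplication by $|h^{(n)}(k)-\xi|^s = \bigl(\sum_{i=1}^p(h_i^{(n)}(k)-\xi_i)^2\bigr)^{s/2}$, so $\psi\in\cD(|d\Gamma(h)-\xi|^s)$ is equivalent to
\begin{equation*}
\sum_{n=0}^\infty \int_{\RR^{n\nu}} |h^{(n)}(k)-\xi|^{2s}\,|\psi^{(n)}(k)|^2\,dk < \infty.
\end{equation*}
Using the elementary two-sided bound $C^{-1}\sum_{i=1}^p|y_i|^{2s}\leq |y|^{2s}\leq C\sum_{i=1}^p|y_i|^{2s}$ valid for all $y\in\RR^p$ with $C=C(p,s)$, this is in turn equivalent to $\psi\in\cD(|d\Gamma(h_i)-\xi_i|^s)$ for every $i=1,\ldots,p$. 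Finally, for a self-adjoint operator $A$ and $c\in\RR$, the elementary scalar estimates $|t-c|^{2s}\leq C_{c,s}(1+|t|^{2s})$ and $|t|^{2s}\leq C_{c,s}(1+|t-c|^{2s})$ combined with the spectral theorem give $\cD(|A-c|^s)=\cD(|A|^s)$ for any $s>0$. Applied to $A=d\Gamma(h_i)$ and $c=\xi_i$, this identifies the intersection with $\bigcap_{i=1}^p\cD(|d\Gamma(h_i)|^s)$.

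For the second claim, the hypothesis $|h(k)|\leq\omega(k)$ and the triangle inequality in $\RR^p$ yield the pointwise estimate
\begin{equation*}
|h^{(n)}(k_1,\ldots,k_n)| = \Bigl|\sum_{j=1}^n h(k_j)\Bigr| \leq \sum_{j=1}^n|h(k_j)| \leq \sum_{j=1}^n \omega(k_j) = \omega^{(n)}(k_1,\ldots,k_n).
\end{equation*}
Since $t\mapsto t^s$ is monotone on $[0,\infty)$ and $t^s\leq 1+t$ for $t\geq 0$ and $s\in[0,1]$, we get $|h^{(n)}|^s\leq 1+\omega^{(n)}$ pointwise on every sector. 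Translating this back into operator language via the multiplicative action on $\FS^{(n)}$ and summing over $n$ gives $\||d\Gamma(h)|^s\psi\|\leq \|\psi\|+\|d\Gamma(\omega)\psi\|$ for every $\psi\in\cD(d\Gamma(\omega))$, which is the desired relative boundedness.

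I do not anticipate any serious obstacle; the entire proof is a bookkeeping exercise in multiplication operators on each Fock sector. The only point that merits a sentence of justification is the shift-invariance $\cD(|A-c|^s)=\cD(|A|^s)$ used above, which is immediate from the two scalar comparisons combined with the spectral theorem for the self-adjoint operator $A=d\Gamma(h_i)$.
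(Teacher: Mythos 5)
Your proof is correct and follows essentially the same route as the paper's: reduce everything to pointwise scalar inequalities on each Fock sector, use a shift estimate to remove $\xi$, a two-sided comparison of $|y|^{2s}$ with $\sum_i|y_i|^{2s}$ to split into the coordinates $h_i$, and the pointwise bound $|h^{(n)}|^s\le 1+\omega^{(n)}$ for the relative boundedness. The only cosmetic difference is that you perform the $\xi$-shift after the coordinate split while the paper does it before; the content is identical.
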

\begin{proof}
	For all $x,\xi_1,\xi_2\in\IR^p$ we have the inequalities
	$$ |x-\xi_1|^{2s} \le 2^{2s}(|x-\xi_2|^{2s} + |\xi_2-\xi_1|^{2s}) \qquad\mbox{and}\qquad |x_i|^{2s}\le|x|^{2s}\le p^{2s}\sum_{i=1}^{p}|x_i|^{2s}. $$
	Together with the spectral theorem this shows $\cD(\lvert d\Gamma(h)-\xi \lvert^s)$ is independent of $\xi$ and $\cD(\lvert d\Gamma(h)\lvert^s)=\bigcap_{i=1}^\nu\cD(\lvert d\Gamma(h_i) \lvert^s)$. Fix $s\in [0,1]$ and note that
	$$
	\lvert h^{(n)}(k_1,..,k_n) \lvert^s\leq \lvert h^{(n)}(k_1,..,k_n) \lvert+1\leq \omega^{(n)}(k_1,..,k_n)+1.
	$$
	This proves $\lvert d\Gamma(h)\lvert^s$ is $d\Gamma(\omega)$-bounded.
\end{proof}

\subsection{The Translation-Invariant Nelson Model}\label{sec:nelson}
To define the Nelson model, we fix two measurable functions $\omega:\RR^\nu\rightarrow  \RR$ and $v:\RR^\nu\rightarrow  \RR$. For each $\Lambda\ge 0$ we define $v_\Lambda(k)=1_{  \{\lvert k\lvert\leq \Lambda  \} }v(k)$.
\begin{hyp}\label{hyp1}
	We say $\omega$ and $v$ satisfy \cref{hyp1} if the following holds.
	\begin{enumerate}[(1)]
		\item\label{part:strictlypositive} $\omega>0$ almost everywhere.
		\item\label{part:UVbound} $v_\Lambda\in \cD(\omega^{-1/2})$ for all $\Lambda\geq 0$.
	\end{enumerate}
\end{hyp}
\noindent
Let $V_x$ denote the unitary map on $L^2(\RR^\nu)$ defined as multiplication by $k\mapsto e^{ik\cdot x}$. The Nelson model with ultraviolet cutoff $\Lambda$ is defined as
\begin{align}\label{defn:fulloperator}
H_\Lambda=-\Delta\otimes 1+1\otimes d\Gamma(\omega)+\int_{\IR^\nu}^{\oplus} \ph(V_{-x}v_\Lambda) dx
\end{align}
on $L^2(\IR^\nu,\FS)$. For a detailed definition and properties of direct integrals, we refer the reader to \cref{app:directintegrals}. Note,  that the dominated convergence theorem implies strong continuity of the map $x\mapsto V_x$, so the direct integral makes sense by \cref{Prop:Directint}.

Let $m:\IR^\nu\to\IR^\nu,\ m(k):=k.$
Given $\xi\in \RR^\nu $ we then define the fiber operators
\begin{align}\label{defn:fiber}
H_\Lambda(\xi)&=\lvert \xi-d\Gamma(m)\lvert^2+d\Gamma(\omega)+\varphi(v_\Lambda)
\end{align}
on $\FS$. The following \lcnamecref{prop:operatorsareSA} is well-known.
\begin{prop}\label{prop:operatorsareSA}
	Assume \cref{hyp1} holds. Then for all $\Lambda\ge 0$ and $\xi\in\IR^\nu$
	\begin{enumerate}[(1)]
		\item \label{part:FullisSA} $H_\Lambda$ is selfadjoint on $\cD(H_0)$ and bounded below,
		\item \label{part:fiberdomain} $H_\Lambda(\xi)$ is selfadjoint on $\cD(H_0(0))=\cD(d\Gamma(\omega))\cap\cD(\lvert d\Gamma(m)\lvert^2)$ and bounded below.
	\end{enumerate}
\end{prop}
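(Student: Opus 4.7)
The plan is to prove (2) first and then reduce (1) to the fiber case, applying the Kato--Rellich theorem in both steps.

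For (2), the first task is the free fiber operator $H_0(\xi) := |\xi - d\Gamma(m)|^2 + d\Gamma(\omega)$. The operators $d\Gamma(m_1),\ldots,d\Gamma(m_\nu),d\Gamma(\omega)$ all act as multiplication operators on each $\FS^{(n)}$ with mutually commuting spectral values, so they strongly commute. Joint functional calculus applied to the non-negative polynomial $(y,z)\mapsto|\xi-y|^2+z$ then produces a non-negative selfadjoint operator on its natural domain, which by \cref{Lem:propertiesOfFuncDGamma} equals $\cD(|d\Gamma(m)|^2)\cap\cD(d\Gamma(\omega))=\cD(H_0(0))$ independently of $\xi$. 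Since $v_\Lambda\in\cD(\omega^{-1/2})$ by \cref{hyp1}, Lemma~\ref{Prop:Bregninger}(4) yields that $\varphi(v_\Lambda)$ is infinitesimally $d\Gamma(\omega)$-bounded, hence infinitesimally $H_0(\xi)$-bounded since $d\Gamma(\omega)\le H_0(\xi)$. Kato--Rellich gives selfadjointness of $H_\Lambda(\xi)$ on $\cD(H_0(0))$. The lower bound from Lemma~\ref{Prop:Bregninger}(4), combined with $|\xi-d\Gamma(m)|^2\ge 0$, yields $H_\Lambda(\xi)\ge-\|\omega^{-1/2}v_\Lambda\|^2$.

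For (1), the free operator $H_0=-\Delta\otimes 1+1\otimes d\Gamma(\omega)$ is a sum of two commuting non-negative selfadjoint operators on distinct tensor factors, hence selfadjoint and non-negative on $\cD(H_0)$ by the standard tensor-sum argument. For the interaction $\Phi_\Lambda:=\int^{\oplus}\varphi(V_{-x}v_\Lambda)\,dx$, the key observation is that $V_{-x}$ is unitary and commutes with multiplication by $\omega^{-1/2}$, so $\|(\omega^{-1/2}+1)V_{-x}v_\Lambda\|=\|(\omega^{-1/2}+1)v_\Lambda\|$ is independent of $x$. Applying the fiberwise bound of Lemma~\ref{Prop:Bregninger}(4) and integrating in $x$ yields
\[
\|\Phi_\Lambda\psi\|^2\le 4\|(\omega^{-1/2}+1)v_\Lambda\|^2\,\langle\psi,(1\otimes d\Gamma(\omega)+1)\psi\rangle,
\]
from which the elementary spectral estimate $s+1\le \eps s^2+C_\eps$ for $s\ge 0$ extracts an infinitesimal bound with respect to $1\otimes d\Gamma(\omega)$, hence with respect to $H_0$. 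Kato--Rellich then gives selfadjointness of $H_\Lambda$ on $\cD(H_0)$, while the pointwise-in-$x$ lower bound of Lemma~\ref{Prop:Bregninger}(4) combined with $-\Delta\ge 0$ yields $H_\Lambda\ge-\|\omega^{-1/2}v_\Lambda\|^2$.

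The only step requiring any real care is the joint-functional-calculus argument giving selfadjointness and the $\xi$-independent domain description of $H_0(\xi)$; everything else is a direct Kato--Rellich application of the bounds already assembled in \cref{Prop:Bregninger} and \cref{Lem:propertiesOfFuncDGamma}, and the uniformity in $x$ in the direct-integral bound is immediate from the unitarity of $V_{-x}$.
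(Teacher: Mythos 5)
Your proposal is correct and follows essentially the same route as the paper: selfadjointness and non-negativity of the free operators $H_0$ and $H_0(\xi)$ via commuting functional calculus, the $\xi$-independence of $\cD(H_0(\xi))$ via \cref{Lem:propertiesOfFuncDGamma}, and Kato--Rellich for the interaction using the bounds of \cref{Prop:Bregninger}(4) fiberwise (for $H_\Lambda(\xi)$) and integrated over the direct integral (for $H_\Lambda$, where the paper simply cites \cref{Prop:Directint}). The only difference is that you spell out the direct-integral estimate and the $x$-uniformity from unitarity of $V_{-x}$ explicitly rather than invoking \cref{Prop:Directint}, which is a minor expository choice rather than a genuinely different argument.
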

\begin{proof}
	The operators $H_0$ and $H_0(0)$ are sums of selfadjoint nonnegative commuting operators and hence nonnegative and selfadjoint. By the Kato-Rellich theorem (cf. \cite[Theorem X.12]{ReedSimon.1975}) and \cref{Prop:Bregninger,Prop:Directint} we find $H_\Lambda$ and $H_\Lambda(\xi)$ are selfadjoint on the domains $\cD(H_0)$ and $\cD(H_0(\xi))$, respectively. That $\cD(H_0(\xi))=\cD(H_0(0))$ follows from \cref{Lem:propertiesOfFuncDGamma}. 
\end{proof}
\noindent
The connection between the fiber operators and the full model is described in the following \lcnamecref{Prop:LLP}, which goes back to \cite{LeeLowPines.1953}.
\begin{prop}\label{Prop:LLP}
	Define the operator (also known as the Lee-Low-Pines operator)
	\begin{align}\label{defn:LLP}
	V= (F\otimes 1)^* \int_{\IR^\nu}^{\oplus} \Gamma(V_x)dx,
	\end{align}
	where $F$ denotes the Fourier transform on $L^2(\IR^\nu)$.
	If \cref{hyp1} holds and $\Lambda\in [0,\infty)$, then
	\begin{equation}\label{eq:fiberdecomposition}
	VH_\Lambda V^*= \int_{\IR^\nu}^{\oplus} H_\Lambda(\xi)d\xi.
	\end{equation}
\end{prop}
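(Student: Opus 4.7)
The plan is to factor $V = (F\otimes 1)^* T$ with $T := \int_{\IR^\nu}^{\oplus} \Gamma(V_x)\,dx$ (unitary on $L^2(\IR^\nu,\cF)$ by \cref{Prop:Directint}) and verify $VH_\Lambda V^* = (F\otimes 1)^* (TH_\Lambda T^*)(F\otimes 1)$ summand by summand. The $T$-step removes the $x$-dependence from the interaction, while the subsequent Fourier step diagonalizes the momentum $p := -i\nabla_x$ as multiplication by $\xi$; what remains is then read off as the claimed direct integral.

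I would first dispatch the two easy summands. Since $\Gamma(V_x)$ and $d\Gamma(\omega)$ arise from commuting multiplication operators on $\cH$, \cref{Prop:Bregninger} yields $\Gamma(V_x)\,d\Gamma(\omega)\,\Gamma(V_x)^* = d\Gamma(\omega)$, and hence $T\,(1\otimes d\Gamma(\omega))\,T^* = 1\otimes d\Gamma(\omega)$. Applying \cref{Prop:Bregninger} fiberwise to the interaction gives $\Gamma(V_x)\varphi(V_{-x}v_\Lambda)\Gamma(V_x)^* = \varphi(V_x V_{-x} v_\Lambda) = \varphi(v_\Lambda)$, so the interaction transforms into the constant-fiber operator $\int_{\IR^\nu}^{\oplus} \varphi(v_\Lambda)\,dx$.

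The kinetic term $-\Delta\otimes 1$ is the main technical step. Reading $\Gamma(V_x)$ on the $n$-particle sector as multiplication by $e^{i(k_1+\cdots+k_n)\cdot x}$ gives $\partial_{x_j}\Gamma(V_x) = i\,d\Gamma(m_j)\Gamma(V_x)$ as a strong derivative, from which a direct computation on a convenient core, for instance $\cS(\IR^\nu)\otimes\cF_{\mathrm{fin}}$ with $\cF_{\mathrm{fin}}$ the algebraic finite-particle subspace of $\cF$, produces $T p_j T^* = p_j - d\Gamma(m_j)$ and hence $T(-\Delta\otimes 1)T^* = |p - d\Gamma(m)|^2$. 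To upgrade this from a core identity to an identity of self-adjoint operators I would show that both sides are essentially self-adjoint on the chosen core by \cref{prop:operatorsareSA} together with the Kato--Rellich structure visible in its proof, and then take closures; this is where I expect the main friction, since one must verify that the core is preserved by $T$ and lies inside $\cD(H_\Lambda)$.

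Finally, conjugation by $F\otimes 1$ (with the convention $(F\psi)(\xi) = (2\pi)^{-\nu/2}\int e^{i\xi\cdot x}\psi(x)\,dx$, chosen so that $F^* p_j F = \xi_j$) turns each $p_j$ into multiplication by $\xi_j$ on $L^2(\IR^\nu_\xi,\cF)$, while leaving the fiber-local operators $d\Gamma(m)$, $d\Gamma(\omega)$, $\varphi(v_\Lambda)$ untouched. The conjugated operator thus acts fiberwise as $|\xi - d\Gamma(m)|^2 + d\Gamma(\omega) + \varphi(v_\Lambda) = H_\Lambda(\xi)$. Measurability of the fiber family required by the direct-integral construction follows from the $\xi$-independent domain in \cref{prop:operatorsareSA} and the polynomial dependence of $H_\Lambda(\xi)$ on $\xi$ there, so $\int_{\IR^\nu}^{\oplus} H_\Lambda(\xi)\,d\xi$ is well defined and agrees with the operator produced by the two-step conjugation, establishing \eqref{eq:fiberdecomposition}.
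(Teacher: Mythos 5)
The paper does not prove this proposition: it is stated and attributed to \cite{LeeLowPines.1953}, with no proof supplied. So there is no paper argument to compare against; I am evaluating your outline on its own terms, and it is essentially correct.

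Your decomposition into the two conjugations is the right one, and the computations are sound: $\Gamma(V_x)=e^{ix\cdot d\Gamma(m)}$ so that (Stone's theorem) $T p_j T^*=p_j-d\Gamma(m_j)$ on a suitable domain, the interaction becomes $\xi$-independent since $\Gamma(V_x)\varphi(V_{-x}v_\Lambda)\Gamma(V_x)^*=\varphi(v_\Lambda)$, and $d\Gamma(\omega)$ is untouched because $V_x$ and $\omega$ are commuting multiplications. You also correctly identify the sign-convention subtlety: with $V=(F\otimes1)^*T$ the kinetic term becomes $\lvert F^*pF-d\Gamma(m)\rvert^2$, and for this to match $\lvert\xi-d\Gamma(m)\rvert^2$ one needs the $e^{+i\xi\cdot x}$ convention you chose; with the opposite convention one would instead land on $\lvert\xi+d\Gamma(m)\rvert^2$, which under \cref{hyp2} is fiberwise unitarily equivalent to $H_\Lambda(\xi)$ via $\Gamma(-1)$, but the proposition is stated under \cref{hyp1} alone, so your convention fix is the cleaner route.

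Two refinements where you anticipate friction. First, you do not need essential self-adjointness of both sides on the core: $VH_\Lambda V^*$ is self-adjoint because $V$ is unitary and $H_\Lambda$ is self-adjoint (\cref{prop:operatorsareSA}), $\int^\oplus H_\Lambda(\xi)\,d\xi$ is self-adjoint by \cref{Thm:Dirint}, and two self-adjoint operators with $A\subset B$ are equal; so it suffices to exhibit a core $\cD$ of $H_\Lambda$ with $V\cD\subset\cD\bigl(\int^\oplus H_\Lambda(\xi)\,d\xi\bigr)$ on which the operators agree. Second, $\cS(\IR^\nu)\otimes\cF_{\mathrm{fin}}$ is not quite the right core: $\cF_{\mathrm{fin}}\not\subset\cD(d\Gamma(\omega))$ for unbounded $\omega$, and $T$ does not preserve the algebraic tensor product since $\Gamma(V_x)$ correlates the two factors. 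A better choice is the span of vectors $f\otimes\epsilon(g)$ with $f\in\cS(\IR^\nu)$ and $g\in\cD(\omega+\lvert m\rvert^2)$ (cf.\ \cref{core}), or more simply $\cS(\IR^\nu,\cF)\cap\cD(1\otimes H_0(0))$, which is a core for $H_0$ and hence for $H_\Lambda$ by Kato--Rellich, and whose image under $T$ stays inside $\cD(H_0)$. These are routine adjustments, not logical gaps, so the overall proposal is sound.
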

\noindent
We now want to renormalize the model, i.e., remove the ultraviolet cutoff $\Lambda$. To that end, we need the following assumptions.

\begin{hyp}\label{hyp2}
	We say $\omega$ and $v$ fulfill \cref{hyp2}, if the following holds.
	\begin{enumerate}[(1)]
		\item\label{part:IRbound} There is $\sigma>0$ such that $m_\sigma:=\inf_{\lvert k\lvert \geq \sigma}\omega(k)>0$,
		\begin{equation*}
		\int_{\{\lvert k\lvert> \sigma  \}}\frac{\lvert v(k)\lvert^2}{\omega(k)^{1/2}(1 + \lvert k\lvert^2)}dk<\infty \,\,\,\,\,\, \text{and} \,\,\,\,\,\, \int_{\{\lvert k\lvert> \sigma  \}}\frac{\lvert v(k)\lvert^2\omega(k)}{(1 + \lvert k\lvert^2)^2} dk<\infty.
		\end{equation*} 
		\item $v(k)=v(-k)$ and $\omega(k)=\omega(-k)$ for all $k\in \RR^\nu$. 
		\item\label{part:omegabound} $k\mapsto \omega(k)(1+\lvert k\lvert^2)^{-1}$ is bounded. 
	\end{enumerate}
\end{hyp}
\noindent
Now define
\begin{align}\label{defn:selfenergy}
E_{\Lambda}=\int_{\lvert k\lvert \leq \Lambda}\frac{\lvert v(k)\lvert^2}{\omega(k)+\lvert k\lvert^2}dk,
\end{align}
which is finite by assumption \cref{hyp1} \cref{part:UVbound}. The self-energy renormalization of the Nelson model goes back to \cite{Nelson.1964,Cannon.1971} and is stated in the following \lcnamecref{prop:renormalization}.

\begin{prop}\label{prop:renormalization}
	Assume \cref{hyp1,hyp2} hold. Then
	\begin{enumerate}[(1)]
		\item\label{part:fulloperatorrenormalizable} There is a selfadjoint and lower-bounded operator $H_\infty$ such that $H_\Lambda+E_\Lambda$ converges to $H_\infty$ in the norm resolvent sense as $\Lambda\to\infty$.
		\item For all $\xi\in\IR^\nu$ there is a selfadjoint and lower-bounded operator $H_\infty(\xi)$ such that $H_\Lambda(\xi)+E_\Lambda$ converges to $H_\infty(\xi)$ in the norm resolvent sense as $\Lambda\to\infty$.\\ Further, $\xi\mapsto \inf\sigma(H_\infty(\xi))$ is continuous and uniformly bounded below.  
		\item The map $\xi\mapsto H_\infty(\xi)$ is continuous in the norm resolvent sense and the decomposition \cref{eq:fiberdecomposition} holds for $\Lambda=\infty$.
	\end{enumerate}
\end{prop}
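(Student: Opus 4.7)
Plan: Part (1) is the classical ultraviolet renormalization of Nelson and Cannon, which we invoke directly from \cite{Cannon.1971} to obtain the selfadjoint, lower-bounded norm-resolvent limit $H_\infty$.

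For part (2), the strategy is a Gross dressing transformation performed fiberwise. Fix the $\sigma>0$ from \cref{hyp2}\cref{part:IRbound} and set
$$
\beta_\Lambda(k) := -\frac{1_{\sigma < |k| \le \Lambda}(k)\, v(k)}{\omega(k) + |k|^2}, \qquad \beta_\infty := \lim_{\Lambda\to\infty}\beta_\Lambda .
$$
The two integrability conditions in \cref{hyp2}\cref{part:IRbound}, together with \cref{hyp2}\cref{part:omegabound} and the lower bound $m_\sigma > 0$, are tailored precisely so that $\beta_\infty$, $\omega^{1/2}\beta_\infty$, and each component $m_j\beta_\infty$ lie in $\cH$. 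Consequently the Weyl operators $W(\beta_\Lambda)$ are unitary and, by \cref{Prop:Bregninger}\cref{part:Weylstrongcontinuous}, $W(\beta_\Lambda)\to W(\beta_\infty)$ strongly. Using the commutation identities of \cref{Prop:Bregninger} one formally computes the dressed Hamiltonian
$$
\widetilde H_\Lambda(\xi) := W(\beta_\Lambda)\bigl(H_\Lambda(\xi)+E_\Lambda\bigr)W(\beta_\Lambda)^* \;=\; H_0(\xi) + I_\Lambda(\xi) + C_\Lambda ,
$$
where $I_\Lambda(\xi)$ collects all $\xi$-dependent field terms---in particular $-\varphi(\omega\beta_\Lambda + v_\Lambda)$ and the cross terms coming from expanding $|\xi - d\Gamma(m) + \varphi(m\beta_\Lambda) - c_\Lambda|^2$ with $c_\Lambda := \int m\,|\beta_\Lambda|^2\,dk$---and $C_\Lambda$ is a $\xi$-independent scalar. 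The point of the specific choice of $\beta_\Lambda$ is that the leading ultraviolet divergence cancels: $\omega\beta_\Lambda + v_\Lambda$ converges in $\cD(\omega^{-1/2})$ as $\Lambda\to\infty$, while the scalar combination $\|\omega^{1/2}\beta_\Lambda\|^2 - 2\langle\beta_\Lambda,v_\Lambda\rangle + E_\Lambda$ making up $C_\Lambda$ has a finite limit. Using \cref{Prop:Bregninger}\cref{part:secondquantizedupperbounds} and \cref{Lem:propertiesOfFuncDGamma}, all remaining terms converge in norm resolvent sense, uniformly on bounded $\xi$-sets, to some selfadjoint lower-bounded $\widetilde H_\infty(\xi)$. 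We then define $H_\infty(\xi) := W(\beta_\infty)^* \widetilde H_\infty(\xi) W(\beta_\infty)$ and transfer norm resolvent convergence back via the strong limit of $W(\beta_\Lambda)$. The uniform lower bound and continuity of $\xi\mapsto\inf\sigma(H_\infty(\xi))$ fall out of the uniform relative bounds in the construction.

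Part (3) follows from the uniform-in-$\xi$ convergence established in part (2), combined with the evident norm resolvent continuity of $\xi\mapsto H_\Lambda(\xi) + E_\Lambda$ for each finite $\Lambda$ (immediate from the explicit expression on the fixed domain identified in \cref{prop:operatorsareSA}\cref{part:fiberdomain}). The direct integral identity at $\Lambda = \infty$ is then obtained by letting $\Lambda\to\infty$ in \cref{Prop:LLP} and applying a direct integral compatibility result---\cref{Prop:Directint} from the appendix---that upgrades fiberwise norm resolvent convergence to norm resolvent convergence of the decomposition.

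The main obstacle is the rigorous treatment of the cross term $(\xi - d\Gamma(m))\cdot\varphi(m\beta_\Lambda)$ appearing inside $I_\Lambda(\xi)$: it is a product of two unbounded operators, and although it is formally responsible for the cancellation, justifying the operator identity and passing to the limit require working on the common dense domain $\cD(d\Gamma(\omega))\cap\cD(|d\Gamma(m)|^2)$ from \cref{prop:operatorsareSA}\cref{part:fiberdomain} and then upgrading convergence through KLMN-type quadratic-form bounds. It is precisely the $\omega(k)/(1+|k|^2)^2$ weight in \cref{hyp2}\cref{part:IRbound} that supplies the small relative constant needed to close these estimates uniformly in $\xi$ on bounded sets.
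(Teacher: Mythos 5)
Your overall architecture (Gross dressing by $W(B_{K,\Lambda})$, cancel the leading UV divergence, push through KLMN-type form estimates) matches the paper's. The paper organizes it differently, though: it first imports norm-resolvent convergence of the \emph{full} operator $H_\Lambda + E_\Lambda \to H_\infty$ from \cite{GriesemerWuensch.2018} (encapsulated in \cref{ConvTransformed,ConvFullOp}), and only then descends to the fibers via the Lee--Low--Pines decomposition and the direct-integral compatibility result \cref{Lem:ConvergenceUniform}, together with the resolvent-continuity statements \cref{Lem:continuity}. Your plan goes fiber-first and tries to re-derive the uniform estimates by hand; that is not automatically wrong, but you are quietly re-doing most of Griesemer--W\"unsch on each fiber, and the reference to \cref{Prop:Directint} for upgrading fiberwise convergence to convergence of the decomposition is misplaced --- that lemma is about measurability, not convergence; the tool you want is \cref{Lem:ConvergenceUniform}.

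There is a genuine error in your setup: the claim that \cref{hyp2}\cref{part:IRbound} forces each component $m_j\beta_\infty$ into $\cH$ is false, and \emph{this is the whole point}. In the physical case $\nu=3$, $\omega(k)=|k|$, $v=g\omega^{-1/2}$, one has $\int_{|k|\ge K}|k|^2|v(k)|^2(\omega(k)+|k|^2)^{-2}\,dk = \infty$; the paper's \cref{lem:propertiesOfB} only gives $B_{K,\infty}\in\cD(|m|^s)$ for $s\le\tfrac23$, and \cref{thm:Weyltransformation-strong}\cref{part:notsquareint} shows that when $m_j B_{K,\infty}\notin\cH$ the Weyl-conjugation identity for $d\Gamma(m_j)$ \emph{fails} and the domains $\cD(d\Gamma(m_j))$ and $W(B_{K,\infty})^*\cD(A_1)$ meet only in $\{0\}$. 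Consequently the decomposition $\widetilde H_\Lambda(\xi)=H_0(\xi)+I_\Lambda(\xi)+C_\Lambda$ that you write down cannot converge term-by-term as an operator sum: the cross term $\varphi(m\beta_\Lambda)$ diverges. This is precisely why the renormalized fibers $H_\infty(\xi)$ do \emph{not} have the free form domain $\cQ(H_0(0))$ --- only their Gross transforms $H_{K,\infty}(\xi)$ do (\cref{LargeFiberProp}\cref{Formula}) --- and why the paper's subsequent \cref{thm:operatordomain} shows the operator domains of different fibers intersect trivially. You do gesture at the right repair (work with quadratic forms, not operator identities), but the opening claim about $m_j\beta_\infty$ should be deleted, and the "explicit dressed Hamiltonian $H_0(\xi)+I_\Lambda(\xi)+C_\Lambda$" reformulated purely in terms of the form decomposition $q_{H_{K,\Lambda}}=q_{H_0}+Q_{K,\Lambda}$ as in \cref{ConvTransformed}.
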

\noindent
The domain of the full renormalized operator $H_\infty$ is thoroughly studied in \cite{GriesemerWuensch.2018}. We review their results in \cref{sec:fiberrenorm} and derive similar statements for the renormalized fiber operators $H_\infty(\xi)$ in \cref{LargeFiberProp}. For the case $\Lambda<\infty$ a straightforward calculation from the definition \cref{defn:fiber} leads to the transformation law
\begin{equation}
\label{eq:fiber-transformation}
H_\Lambda(\xi_2) = H_\Lambda(\xi_1) -  2 d\Gamma((\xi_2-\xi_1)\cdot m) + 2(\xi_2-\xi_1)\cdot \xi_1 + |\xi_2-\xi_1|^2
\end{equation}
for $\xi_1,\xi_2 \in\IR^\nu$. However, in the $\Lambda=\infty$ case, this statement heavily depends on the domains of the involved operators. 

\noindent For a selfadjoint operator $A$ we define the associated sesquilinear form as
\begin{align}\label{defn:form}
q_A(\psi,\phi)=\braket{|A|^{1/2}\psi,\operatorname{Sign}(A)|A|^{1/2}\phi}
\qquad\mbox{for all}\
\psi,\phi \in\cQ(A)=\cD(\lvert A\lvert^{1/2}).
\end{align}
We prove the following result on the renormalized fiber operators.

\begin{thm}\label{thm:operatordomain}
	Assume \cref{hyp1,hyp2} hold and let $\sigma$ as in \partref{hyp2}{part:IRbound}. 
	For $K\ge \sigma$, we define $B_K\in\HS$ as
	$$ B_K(k) = 1_{\{K\le|k|\}}(k)\frac{v(k)}{\omega(k)+|k|^2}. $$
	The following domain and transformation statements hold:
	\begin{enumerate}[(1)]
		\item\label{part:quadraticdomain} The form domain of $H_\infty(\xi)$ is independent of $\xi\in\IR^\nu$. Explicitly, $$\cQ({H_\infty(\xi)}) = W(B_K)^*\cD(H_0(0)^{1/2})\qquad\mbox{for any}\ K\ge\sigma.$$ Further, $\cQ({H_\infty(\xi)})\subset \cD(d\Gamma(\omega)^{1/2})\cap \cD(\lvert d\Gamma(m) \lvert^{2/3})$ and the transformation rule \cref{eq:fiber-transformation} with $\Lambda=\infty$ holds in the sense of sesquilinear forms for all $\xi_1,\xi_2\in\IR^\nu$.
		\item\label{part:operatordomain} For $\xi_1,\xi_2\in\IR^\nu$ the operator domains satisfy $\cD(H_\infty(\xi_1))=\cD(H_\infty(\xi_2))$ if and only if $k\mapsto (\xi_2-\xi_1)\cdot k B_K(k)$ is square-integrable. In the affirmative case, the transformation rule \cref{eq:fiber-transformation} with $\Lambda=\infty$ holds. Otherwise, the following equality holds $\cD(H_\infty(\xi_1))\cap\cD(H_\infty(\xi_2))=\{0\}$.
	\end{enumerate}
\end{thm}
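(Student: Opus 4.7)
The plan is to analyze both parts through the unitary Weyl dressing $W(B_K)$, which is well-defined because $B_K\in\HS$ (this follows from \cref{hyp2}~\cref{part:IRbound}). Working with the dressed fiber operator
\begin{equation*}
\tilde H_\infty(\xi) := W(B_K) H_\infty(\xi) W(B_K)^*,
\end{equation*}
I would obtain it as the norm-resolvent limit of the explicitly dressed truncated operators $\tilde H_\Lambda(\xi) := W(B_K^\Lambda)(H_\Lambda(\xi)+E_\Lambda)W(B_K^\Lambda)^*$, with compactly supported kernel $B_K^\Lambda(k) := 1_{\{K\le|k|\le\Lambda\}}(k)\,v(k)/(\omega(k)+|k|^2)$, convergence following from \cref{prop:renormalization} together with the strong continuity of Weyl operators (\cref{Prop:Bregninger}~\cref{part:Weylstrongcontinuous}). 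Expanding the conjugation via the commutation relations in \cref{Prop:Bregninger}~(3) and using the pointwise identity $(\omega(k)+|k|^2)B_K^\Lambda(k)=v_\Lambda(k)-v_K(k)$ yields an explicit expression for $\tilde H_\Lambda(\xi)$ in which the self-energy $E_\Lambda$ is exactly cancelled by the scalar combination of $\|\omega^{1/2} B_K^\Lambda\|^2$, $\|m B_K^\Lambda\|^2$ and $2\langle v_\Lambda, B_K^\Lambda\rangle$ produced by the dressing; this is the fiber-level reformulation of Nelson's renormalization.

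For part~(1), the dressed formula expresses $\tilde H_\Lambda(\xi)$ as $H_0(\xi)$ plus perturbations of the form $\varphi(\cdot)$, $\varphi(\cdot)\cdot d\Gamma(m)$, and $|\varphi(\cdot)|^2$, each of which I would control in $H_0(\xi)$-form sense uniformly in $\Lambda$ using \cref{Prop:Bregninger}~\cref{part:secondquantizedupperbounds}. Passing to the form closure yields a closed form on the $\xi$-independent space $\cD(H_0(0)^{1/2})$ (independence supplied by \cref{Lem:propertiesOfFuncDGamma}) that agrees with the form of $\tilde H_\infty(\xi)$ by uniqueness, and applying $W(B_K)^*$ gives $\cQ(H_\infty(\xi)) = W(B_K)^*\cD(H_0(0)^{1/2})$. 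The inclusion $\cQ(H_\infty(\xi))\subset\cD(d\Gamma(\omega)^{1/2})$ is direct, because the Weyl conjugation shifts $d\Gamma(\omega)$ only by the $d\Gamma(\omega)^{1/2}$-bounded quantity $\varphi(\omega B_K)$ plus a constant. The inclusion into $\cD(|d\Gamma(m)|^{2/3})$ is more delicate, since $W(B_K)^*$ fails to preserve $\cD(|d\Gamma(m)|)$ whenever $m B_K\notin\HS$; I would estimate $\||d\Gamma(m)|^{2/3} W(B_K)^*\tilde\psi\|$ directly by sector-wise expansion of $W(B_K)^*$, using H\"older's inequality on the $m$-integration together with the weighted $L^2$-bounds on $mB_K$ from \cref{hyp2}~\cref{part:IRbound}, whose optimal interpolation lands at the critical exponent $2/3$. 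The form version of the transformation rule \cref{eq:fiber-transformation} then follows from subtracting the forms at $\xi_1,\xi_2$: the $|d\Gamma(m)|^2$ contributions cancel and only $-2\,d\Gamma((\xi_2-\xi_1)\cdot m)$ and scalars survive.

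For part~(2), the forward implication is easier: if $(\xi_2-\xi_1)\cdot m B_K\in\HS$, then \cref{Prop:Bregninger}~(3) yields the operator identity
\begin{equation*}
W(B_K)\, d\Gamma((\xi_2-\xi_1)\cdot m)\, W(B_K)^* = d\Gamma((\xi_2-\xi_1)\cdot m) + \varphi((\xi_2-\xi_1)\cdot m B_K),
\end{equation*}
whose right-hand side is $H_0(0)$-bounded via \cref{Prop:Bregninger}~\cref{part:secondquantizedupperbounds}, so the form identity from part~(1) promotes to the operator identity \cref{eq:fiber-transformation} and forces $\cD(H_\infty(\xi_1))=\cD(H_\infty(\xi_2))$. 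The main obstacle is the converse together with the triviality claim. For $\psi\in\cD(H_\infty(\xi_1))\cap\cD(H_\infty(\xi_2))$ and $\tilde\psi:=W(B_K)\psi$, the vector $\tilde H_\infty(\xi_2)\tilde\psi-\tilde H_\infty(\xi_1)\tilde\psi$ lies in $\HS$; arguing from the dressed formula and the form-level transformation rule, this vector must formally equal $-2\bigl[d\Gamma((\xi_2-\xi_1)\cdot m)+\varphi((\xi_2-\xi_1)\cdot m B_K)\bigr]\tilde\psi$ plus a scalar multiple of $\tilde\psi$, even though $(\xi_2-\xi_1)\cdot m B_K\notin\HS$. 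Expanding $\tilde\psi=\sum_n \tilde\psi^{(n)}$ sectorwise via \cref{defn:annihilation,defn:creation} and truncating $m B_K$ to a ball $|k|\le R$, the sector-wise $\HS$-norm of the candidate $\varphi$-term is bounded below on any sector with $\tilde\psi^{(n)}\neq 0$ by a positive multiple of $\|1_{\{|k|\le R\}}(\xi_2-\xi_1)\cdot m B_K\|_{L^2}\cdot\|\tilde\psi^{(n)}\|$; since this lower bound diverges as $R\to\infty$ under the failure hypothesis, the only way to keep the total vector in $\HS$ is $\tilde\psi=0$, hence $\psi=0$. Rigorously separating the divergent $\varphi$-contribution from the well-defined $d\Gamma$-contribution in this sector-wise lower bound will be the technical heart of the argument.
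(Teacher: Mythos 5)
Your dressing strategy via $W(B_K)$ and the identification of the sector-jump divergence as the key obstruction are both aligned with what the paper does, but the plan has two genuine gaps.

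First, in part (1) the step from the finite-$\Lambda$ operator identity \cref{eq:fiber-transformation} to the form identity at $\Lambda=\infty$ is glossed over. "Subtracting the forms" requires a pointwise form-convergence statement for the \emph{fiber} operators, which is not immediate from \cref{prop:renormalization} or from the dressed-form expansion you describe. The paper instead defines the KLMN operator $H_k$ associated with $q_{H_\infty(\xi)}+q_{D_\xi(k)}$ and proves $H_k=H_\infty(\xi+k)$ by expanding both resolvents $R_{\xi+k,\Lambda}(\lambda)$ and $(H_k-\lambda)^{-1}$ as Neumann series in the uniformly small operator $\cZ_{\Lambda,\lambda}(k)$ and taking strong limits; the uniform smallness of $\cZ_{\Lambda,\lambda}(k)$ rests on the $A_{2/3}$-estimates from \cref{LargeFiberProp} and \cref{lem:AohBxi}. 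Without something like this, your form-level transformation rule is not established.

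Second, and more seriously, in the part (2) "otherwise" case you acknowledge the cancellation problem but the sketch does not resolve it. Two obstacles arise. (a) You cannot assume $\psi\in\cD(d\Gamma((\xi_2-\xi_1)\cdot m))$; that is precisely what has to be shown, since a priori $\psi$ only lies in $W(B_K)^*\cD(A_1)$, which in the $\notin\HS$ case does \emph{not} intersect $\cD(d\Gamma((\xi_2-\xi_1)\cdot m))$ nontrivially — this is exactly what we want to prove. The paper gets $\psi\in\cD(D_{\xi_1}(\xi_2-\xi_1))$ only by the form-continuity argument of \cref{lem:operatorandformdomain} combined with a dedicated form-core lemma (\cref{commuteingCore}), neither of which appears in your plan. (b) Even after establishing membership in that domain, passing from "the truncated $\varphi(h_if_R)\tilde\psi$ diverges sector-wise" to a contradiction requires knowing that the combination $(d\Gamma(h_i)-\varphi(h_if_R)+c_R)\tilde\psi$ remains bounded as $R\to\infty$. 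That combination equals $W(f_R)d\Gamma(h_i)W(f_R)^*\tilde\psi$, and $W(f_R)^*\tilde\psi\to\psi$ only strongly, not in $d\Gamma(h_i)$-graph norm; without further input the boundedness does not follow. The paper circumvents this with the $B_\eps$-regularization in \cref{bound,lem:zerovector}, where the regularized expression $B_\eps^{-1}a^\dagger(h_if_\Lambda)\psi$ can be uniformly bounded because the $B_\eps^{-1}$-weighted quantities $B_\eps^{-1}d\Gamma(h_i)\psi$, $\widetilde\ph_\eps(h_if_\Lambda)$, and $C_{f_\Lambda,\eps}$ are all tractable; your sector-wise lower bound $\|a^\dagger(g_R)\tilde\psi^{(n)}\|\ge\|g_R\|\|\tilde\psi^{(n)}\|$ is the right germ, but the surrounding structure needed to produce a contradiction is missing.

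A lesser point: the sign and scalar in your Weyl conjugation identity are wrong — by \cref{thm:Weyltransformation-strong}~(2) one has $W(f)d\Gamma(h_i)W(f)^*=d\Gamma(h_i)-\ph(h_if)+\braket{f,h_if}$, not $d\Gamma(h_i)+\ph(h_if)$. And the sector-wise H\"older route for $\cQ(H_\infty(\xi))\subset\cD(|d\Gamma(m)|^{2/3})$ is not worked out; the paper obtains this cleanly from $B_{K,\infty}\in\cD(|m|^{2/3})$ (\cref{lem:propertiesOfB}) together with the Weyl invariance of $\cD(A_s)$ in \cref{thm:Weyltransformation-strong}~(4).
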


\noindent
For our main result we add a third hypothesis. 
\begin{hyp}\label{hyp3}
	We say $\omega$ and $v$ fulfill \cref{hyp3}, if the following holds.
	\begin{enumerate}[(1)]
		\item\label{part:increasing} $\omega$ is continuous and $|k_1|>|k_2|$ implies $\omega(k_1)>\omega(k_2)$.
		\item $\omega$ and $v$ are rotation invariant.
		\item $v_\Lambda\notin \cD(\omega^{-1})$ for one and hence all $\Lambda>0$.
		\item\label{part:lim} The limit $C_\omega = \lim\limits_{k\to 0}\dfrac{|k|}{\omega(k)}\in(0,\infty)$ exists.
	\end{enumerate}
\end{hyp}
\noindent
For $\Lambda\in [0,\infty]$ let
$
\Sigma_\Lambda(\xi)=\inf(\sigma(H_\Lambda(\xi))).
$
We obtain the following theorem.
\begin{thm}
	\label{thm:nogs}
	Assume \cref{hyp1,,hyp2,,hyp3} hold and $\nu\geq 2$. Then $\Sigma_\infty( \xi)$ is not an eigenvalue of $H_\infty(\xi)$ for any $\xi\in \RR^\nu$.
\end{thm}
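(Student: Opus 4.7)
The plan is a proof by contradiction resting on three ingredients: regularity of a hypothetical ground state inherited from \cref{thm:operatordomain}, a pull-through formula expressing pointwise annihilation on the ground state via a resolvent of a shifted fiber operator, and a strict mass-shell estimate extracted from rotation invariance at the level of the ground-state eigenspace. Combined with the infrared divergence $v_\Lambda\notin\cD(\omega^{-1})$ encoded in \cref{hyp3}, these force the boson-number expectation of a ground state to diverge, giving the desired contradiction.

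Suppose $\psi$ is a normalized eigenvector of $H_\infty(\xi_0)$ at $E_0=\Sigma_\infty(\xi_0)$. By \cref{thm:operatordomain}\cref{part:quadraticdomain}, $\psi=W(B_K)^*\tilde\psi$ with $\tilde\psi\in\cD(H_0(0)^{1/2})$; combined with $\omega$ being uniformly positive on $\{|k|\geq\sigma\}$, this places both $\tilde\psi$ and $\psi$ in $\cD(d\Gamma(1)^{1/2})$, so $\langle\psi,d\Gamma(1)\psi\rangle<\infty$. To obtain a pointwise identity for $a_k\psi$, I first infrared-regularize by replacing $v$ with $v^\epsilon:=v\,\mathbf{1}_{\{|k|\geq\epsilon\}}$; the corresponding renormalized operator $H_\infty^\epsilon(\xi_0)$ is infrared regular and admits a ground state $\psi_\epsilon$. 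A commutator computation on the ultraviolet-cutoff operators, followed by the renormalization limit of \cref{prop:renormalization}, yields
\begin{equation*}
a_k\psi_\epsilon=-v(k)\bigl(H_\infty^\epsilon(\xi_0-k)+\omega(k)-\Sigma_\infty^\epsilon(\xi_0)\bigr)^{-1}\psi_\epsilon \qquad\text{for a.e.\ }|k|\geq\epsilon,
\end{equation*}
and taking $\epsilon\downarrow 0$, using the norm-resolvent continuity of $\xi\mapsto H_\infty(\xi)$ together with the $\xi$-independent form domain from \cref{thm:operatordomain}\cref{part:quadraticdomain}, transfers this identity to $\psi$.

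The main obstacle is the strict mass-shell estimate
\begin{equation*}
\omega(k)+\Sigma_\infty(\xi_0-k)-E_0\;\geq\; \delta\,\omega(k) \qquad\text{for all sufficiently small }k,
\end{equation*}
with some $\delta>0$. Classically this is derived either from differentiability of the mass shell with $|\nabla\Sigma_\infty|<1$ as in \cite{HaslerHerbst.2008a}, or from non-degeneracy of the ground state via a Hellmann--Feynman argument as in \cite{Dam.2018}. I instead exploit rotation invariance directly on the eigenspace: by \cref{hyp3}, $\Sigma_\infty(\xi)=\widetilde\Sigma(|\xi|)$, and for any $R\in SO(\nu)$ with $R\xi_0=\xi_0$ the unitary $\Gamma(R)$ commutes with $H_\infty(\xi_0)$ and hence preserves the eigenspace at $E_0$. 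Since $\nu\geq 2$, the stabilizer of $\xi_0$ is a nontrivial compact subgroup of $SO(\nu)$, so averaging the eigenprojection over it produces a ground state whose field-momentum expectation $\langle\psi,d\Gamma(m)\psi\rangle$ is parallel to $\xi_0$. Testing $H_\infty(\xi_0\pm k)$ against $\psi$ via the form-level transformation rule of \cref{thm:operatordomain}\cref{part:quadraticdomain} then gives the two-sided bound $\Sigma_\infty(\xi_0\pm k)\leq E_0\pm 2k\cdot(\xi_0-\langle\psi,d\Gamma(m)\psi\rangle)+|k|^2$; combining these bounds with the trivial subadditivity $\Sigma_\infty(\xi_2)\leq \Sigma_\infty(\xi_1)+\omega(\xi_2-\xi_1)$ obtained from one-boson Weyl shifts, with the rotation-invariance identity for $\widetilde\Sigma$, and with the asymptotics $\omega(k)\sim C_\omega^{-1}|k|$ from \cref{hyp3}, forces $|\xi_0-\langle\psi,d\Gamma(m)\psi\rangle|<(2C_\omega)^{-1}$, which yields the required $\delta$.

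Combining the pull-through identity with the mass-shell estimate,
\begin{equation*}
\langle\psi,d\Gamma(1)\psi\rangle=\int_{\R^\nu}\|a_k\psi\|^2\,dk\geq\int_{|k|\leq r}\frac{|v(k)|^2}{\bigl(\Sigma_\infty(\xi_0-k)+\omega(k)-E_0\bigr)^2}\,dk\geq\delta^{-2}\int_{|k|\leq r}\frac{|v(k)|^2}{\omega(k)^2}\,dk
\end{equation*}
for some $r>0$, which is infinite by the infrared divergence in \cref{hyp3}. This contradicts $\psi\in\cD(d\Gamma(1)^{1/2})$ from the first step, proving the theorem.
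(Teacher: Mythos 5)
You have correctly identified the overall strategy — a pull-through formula plus a momentum estimate extracted from rotation invariance rather than non-degeneracy — which is indeed the paper's approach. However, two steps in your sketch do not go through. First, the claim $\psi\in\cD(d\Gamma(1)^{1/2})$ is unjustified and in fact cannot be expected to hold: the form domain $W(B_K)^*\cD(H_0(0)^{1/2})$ controls $d\Gamma(\omega)^{1/2}$ and $|d\Gamma(m)|$, and since $\omega$ vanishes at $k=0$ in the massless case, neither of these dominates the number of \emph{low}-momentum bosons — the very bosons responsible for the infrared divergence. The uniform positivity of $\omega$ on $\{|k|\ge\sigma\}$ controls only the high-momentum part of $N$. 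This is precisely why the paper does \emph{not} try to bound $\langle\psi,N\psi\rangle$: it instead tests the pull-through identity against a fixed $\eta\in\cD(N^{1/2})$ with $|\langle\eta,\psi\rangle|>\frac12$ and uses Cauchy--Schwarz together with the identity $\int\|(N+1)^{-1/2}a_k\psi\|^2\,dk\le\|\psi\|^2$, which holds for any $\psi\in\cF$ without a number bound.

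Second, the strict mass-shell estimate $\Sigma_\infty(\xi_0-k)+\omega(k)-E_0\ge\delta\omega(k)$ cannot hold for \emph{all} small $k$; in directions $k\parallel\xi_0$ there is no a priori strict gap. Moreover, the argument you give produces only variational \emph{upper} bounds $\Sigma_\infty(\xi_0\pm k)\le E_0\pm 2k\cdot v_\psi+|k|^2$ and the non-strict bound $|v_\psi|\le(2C_\omega)^{-1}$; a strict \emph{lower} bound on $\Sigma_\infty(\xi_0-k)-E_0$ requires a genuinely separate geometric argument (this is \cref{lem:resultsthomas}\cref{lem:lowerboundsigma3}, imported from \cite{Dam.2018}), and it only holds after restricting $k$ to a cone $S_\eps(\xi_0)$ roughly perpendicular to $\xi_0$. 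The strict version of the momentum bound likewise comes from the sector: $|\hat k\cdot v_\psi|\le\frac{1-\eps}{2}|v_\psi|$ for $k\in S_\eps(\xi_0)$. This is also exactly where $\nu\ge2$ enters, since $S_\eps(\xi_0)\cap S_\eps(-\xi_0)$ must be non-empty and of positive measure for the restricted integral $\int|v|^2/\omega^2\,dk$ to diverge. Your sketch never restricts to a sector and never uses $\nu\ge2$, so the claimed uniform $\delta$ does not exist. A secondary issue: averaging over the stabilizer of $\xi_0$ can annihilate the eigenvector; the paper instead shows directly that \emph{every} vector in $P_0(\xi_0)\cF$ has $q_{d\Gamma(m)}(\psi,\psi)\parallel\xi_0$ by testing $\Gamma(U_O)\psi$ in the variational principle, which is both cleaner and necessary for the operator-norm bound on $\hat k\cdot V(\xi_0)$.
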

\begin{ex}
	We refer to the physical case of a (possibly massive) uncharged non-relativistic particle interacting with a radiation field in $\nu=3$ dimensions. In this case $\omega(k)=\sqrt{\mu^2+k^2}$ and $v=g\omega^{-1/2}$. We note that \cref{hyp1,hyp2} are satisfied for any photon mass $\mu\ge 0$ and coupling constant $g\ne 0$. One easily checks that $(\xi_2-\xi_1)\cdot kB_K(k)$ is not square-integrable for any choice of $\xi_1\ne\xi_2$ and $K>0$. Hence, by \cref{thm:operatordomain}, $\cD(H_\infty(\xi_1))\cap\cD(H_\infty(\xi_2))=\{0\}$ if $\xi_1\ne\xi_2$.
	
	The physical model is infrared divergent in the massless case $\mu= 0$, i.e., it then satisfies \cref{hyp3}. By \cref{thm:nogs} there is no ground state in this case. Hence, a massive non-relativistic particle interacting with a quantized massless bosonic field does not exhibit a stable ground state, due to infrared divergence.
\end{ex}

\section{Fiber Renormalization}\label{sec:fiberrenorm}
In this section we go into more details on the renormalization procedure for the fiber operators. The central result we obtain is \cref{LargeFiberProp}, which gives a collection of regularity results on the renormalized fiber operators. We then use these to prove \cref{thm:operatordomain}. We begin by recalling the renormalization procedure outlined in \cite{GriesemerWuensch.2018}, as we will need some of their results in our proofs.

Throughout this section, we assume that \cref{hyp1,,hyp2} hold and define $\sigma$ as in \partref{hyp2}{part:IRbound}. 
For $ \sigma \leq K<\Lambda\leq \infty$, we define the map
\begin{align}\label{defn:BKL}
B_{K,\Lambda}(k)=1_{\{K\leq \lvert k \lvert\leq \Lambda\}}(k)\frac{v(k)}{\omega(k)+\lvert k\lvert^2}.
\end{align}
We collect some regularity properties of these functions for future reference.
\begin{lem}\label{lem:propertiesOfB}Let $\sigma\le K<\Lambda\le\infty$.
	Then the following holds:
	\begin{enumerate}[(1)]
		\item\label{part:BKLfinite} $B_{K,\Lambda}\in \cD(\omega^a)\cap\cD(|m|^b)\cap \cD(\omega^a|m|^b)$ for all $a,b\in \RR$
		\item\label{part:BKinfty} $B_{K,\infty}\in\cD(\omega^{1/2}) \cap \cD(\omega^{-1/2}|m|)\cap\cD(|m|^{s})$ for all $s\leq \frac{2}{3}$ 
	\end{enumerate}
\end{lem}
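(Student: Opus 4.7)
The plan is to reduce every claimed domain membership to a weighted $L^2$-integrability statement for $v$ on $\{|k|\ge K\}$ and then invoke the integrability assumptions of \cref{hyp1,hyp2}. A convenient preparatory bound is
$$c(1+|k|^2)\le \omega(k)+|k|^2\le C'(1+|k|^2)\qquad\text{on } \{|k|\ge K\},$$
for some $c,C'>0$. The upper bound is exactly \cref{hyp2}\cref{part:omegabound}. For the lower bound I would combine $\omega\ge m_\sigma>0$ on $\{|k|\ge\sigma\}\supseteq\{|k|\ge K\}$ from \cref{hyp2}\cref{part:IRbound} with the trivial $|k|^2\ge\tfrac12(1+|k|^2)$ for $|k|\ge 1$, treating the region $\{K\le|k|\le 1\}$ by direct inspection.

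Part \cref{part:BKLfinite} is immediate: for $\Lambda<\infty$ the function $B_{K,\Lambda}$ is supported on a compact annulus on which $\omega$, $|m|$ and their inverses are all bounded (using $\omega\ge m_\sigma$ and \cref{hyp2}\cref{part:omegabound}), so the weight $\omega^{2a}|k|^{2b}/(\omega+|k|^2)^2$ is bounded for every $a,b\in\R$. It then remains to show $v\in L^2(\{K\le|k|\le\Lambda\})$, which follows from $\omega\ge m_\sigma$ on this set and \cref{hyp1}\cref{part:UVbound}. For part \cref{part:BKinfty}, the membership $B_{K,\infty}\in\cD(\omega^{1/2})$ follows directly from the preparatory lower bound and the second integral in \cref{hyp2}\cref{part:IRbound}. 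For $\cD(\omega^{-1/2}|m|)$ I would use $(\omega+|k|^2)^2\ge|k|^2(\omega+|k|^2)\ge c|k|^2(1+|k|^2)$ together with $\omega^{-1}\le m_\sigma^{-1/2}\omega^{-1/2}$ on the support, reducing the estimate to the first integral in \cref{hyp2}\cref{part:IRbound}.

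The remaining case $\cD(|m|^s)$ for $s\le 2/3$ is the critical one. Using the preparatory bound and absorbing $|k|^{2s}$ by splitting $\{K\le|k|\le 1\}$ from $\{|k|\ge 1\}$, I would reduce matters to the finiteness of $\int_{|k|\ge K}|v|^2(1+|k|^2)^{-4/3}\,dk$. For this I would apply H\"older's inequality with exponents $3/2$ and $3$ via the factorization
$$\frac{|v(k)|^2}{(1+|k|^2)^{4/3}} = \left(\frac{|v(k)|^2}{\omega(k)^{1/2}(1+|k|^2)}\right)^{2/3}\left(\frac{\omega(k)|v(k)|^2}{(1+|k|^2)^2}\right)^{1/3},$$
so that both resulting factors are finite by \cref{hyp2}\cref{part:IRbound}. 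The main obstacle is precisely this step: the two integrability conditions in \cref{hyp2}\cref{part:IRbound} balance exactly at $s=2/3$, and noticing that the exponents $3/2$ and $3$ realize this balance is the heart of the argument. The rest is careful bookkeeping of the pointwise bounds on $\omega(k)+|k|^2$.
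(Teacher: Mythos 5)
Your proof is correct, and it takes a genuinely different route to the critical membership $B_{K,\infty}\in\cD(|m|^{2/3})$. The paper instead uses a pointwise case-split: on the support of $B_{K,\infty}$ one has
$$|m(k)|^{2/3} \leq \omega(k)^{1/2}\,1_{\{|m|^{2/3} \leq \omega^{1/2}\}}(k)+ |m(k)|\,\omega(k)^{-1/4}\,1_{\{|m|^{2/3}>  \omega^{1/2}\}}(k),$$
which reduces $\cD(|m|^{2/3})$ membership to $\cD(\omega^{1/2})$ and $\cD(\omega^{-1/4}|m|)$, both handled exactly by the two integrals of \cref{hyp2}\cref{part:IRbound} after using the comparison $\omega+|k|^2\asymp 1+|k|^2$ on $\{|k|\ge K\}$ (the same preparatory bound you set up). Your H\"older factorization with exponents $3/2$ and $3$ expresses the same balance: the weight $(1+|k|^2)^{-4/3}$ sits precisely at the geometric mean, in exponent-$\tfrac23$--$\tfrac13$ proportion, of the two weights $\omega^{-1/2}(1+|k|^2)^{-1}$ and $\omega(1+|k|^2)^{-2}$ appearing in \cref{hyp2}\cref{part:IRbound}. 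The two arguments are essentially dual: the paper's inequality is the pointwise (Young-type) form of the interpolation, while yours is the integral (H\"older) form. The pointwise version has the small advantage of directly yielding the operator-theoretic inclusion $\cD(\omega^{1/2})\cap\cD(\omega^{-1/4}|m|)\subset\cD(|m|^{2/3})$ on compactly-supported-away-from-zero functions, which is slightly more information than the integral estimate, but for the purpose of this lemma they are equivalent and your argument is complete. Your remaining steps -- part \cref{part:BKLfinite} via compact support and local boundedness, and the memberships $\cD(\omega^{1/2})$, $\cD(\omega^{-1/2}|m|)$ directly from the two integrals -- match the paper's reasoning.
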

\begin{proof}
	\cref{part:BKLfinite} holds, since $B_{K,\Lambda}$ is compactly supported and $\omega\in L^\infty_{loc}(\IR^\nu)$ by \partref{hyp2}{part:omegabound}. 
	Now \cref{part:BKinfty} follows due to the inequality
	\begin{equation*}
	|m(k)|^{2/3} \leq \omega(k)^{1/2}1_{\{\lvert m\lvert^{2/3} \leq \omega^{1/2}\}}(k)+| m(k) | \omega(k)^{-1/4}1_{\{\lvert m\lvert^{2/3}>  \omega^{1/2}\}}(k),
	\end{equation*}
	the integrability conditions in \partref{hyp2}{part:IRbound} and \partref{hyp2}{part:omegabound}. 
\end{proof}

\noindent
Since the Weyl representation is strongly continuous, the direct integral
\begin{align}\label{defn:UKL}
U_{K,\Lambda}:= \int_{\IR^\nu}^{\oplus} W( V_{-x}B_{K,\Lambda}) dx
\end{align}
exists. Using this, we define
\begin{align}\label{defn:HKL}
H_{K,\Lambda}=U_{K,\Lambda}H_{\Lambda}U_{K,\Lambda}^*+E_{\Lambda}.
\end{align}
Now, the following \lcnamecref{ConvTransformed} is \cite[Pages 5-8]{GriesemerWuensch.2018}.
\begin{prop}\label{ConvTransformed}
	For all $\sigma\le K<\Lambda<\infty$, we have $$\cD(H_{K,\Lambda})=U_{K,\Lambda}\cD(H_0)=\cD(H_0).$$ Moreover, there is a symmetric sesquilinear form $Q_{K,\Lambda}$ on $\cQ({H_0})$ such that
	\begin{align*}
	q_{H_{K,\Lambda}}=q_{H_0}+Q_{K,\Lambda}.
	\end{align*}
	\begin{enumerate}[(1)]
		\item\label{part:uliged1} For any $\varepsilon>0$ there are $K,b>0$ such that
		\begin{equation*}\label{eq:uliged1}
		\lvert Q_{K,\Lambda}(\psi,\psi)\lvert   \leq \varepsilon q_{H_0}(\psi,\psi)+b\|\psi\|^2 \qquad\mbox{for all}\ \Lambda>K\ \mbox{and}\ \psi\in \cD(H_0^{1/2}).
		\end{equation*}
		\item\label{part:convlambda0bound} For any $K\ge\sigma$ and $\varepsilon>0$ there is $\Lambda_0>K$ such that
		\begin{equation*}\label{eq:uliged2}
		\lvert Q_{K,\Lambda}(\psi,\psi)-Q_{K,\Lambda'}(\psi,\psi)\lvert \leq \varepsilon q_{H_0}(\psi,\psi) \qquad \mbox{for}\ \Lambda,\Lambda'>\Lambda_0\ \mbox{and}\ \psi\in \cD(H_0^{1/2}).
		\end{equation*}
		\item\label{part:conv} There exists a sesquilinear form $Q_{K,\infty}$ on $\cD(H_0^{1/2})$ such that
		\begin{equation*}
		Q_{K,\infty }(\psi,\phi)=\lim_{\Lambda\rightarrow \infty} Q_{K,\Lambda}(\psi,\phi) \qquad\mbox{for}\ \phi,\psi\in\cD(H_0^{1/2}).
		\end{equation*}
		The bounds in \cref{eq:uliged1,eq:uliged2} are also satisfied for $\Lambda=\infty$. 
	\end{enumerate}
\end{prop}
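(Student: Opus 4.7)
The strategy is to realize the unitary conjugate $U_{K,\Lambda}H_\Lambda U_{K,\Lambda}^* + E_\Lambda$ as $H_0$ plus a perturbation whose form is controlled by the $\omega^{-1/2}$-norms of a few explicit coupling functions, and then to invoke the second-quantized upper bounds of \cref{Prop:Bregninger} \cref{part:secondquantizedupperbounds}. The design principle is the identity $(\omega+|m|^2)B_{K,\Lambda}=v\cdot 1_{\{K\le|k|\le\Lambda\}}$: the $\varphi$-shifts produced by conjugating $d\Gamma(\omega)$ and $-\Delta$ through $U_{K,\Lambda}$ together remove precisely the ultraviolet part of the interaction, while the constants produced simultaneously telescope with $E_\Lambda$ to give the harmless $E_K$.

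I would first verify that $U_{K,\Lambda}$ preserves $\cD(H_0)$. Acting on $\HH$ as the strongly measurable multiplication operator $\psi(x)\mapsto W(V_{-x}B_{K,\Lambda})\psi(x)$, preservation of $\cD(d\Gamma(\omega))$ follows from \cref{Prop:Bregninger} \cref{part:secondquantizedupperbounds} together with the regularity of $B_{K,\Lambda}$ in \cref{lem:propertiesOfB} \cref{part:BKLfinite}; preservation of $\cD(-\Delta)$ is obtained by differentiating $x\mapsto W(V_{-x}B_{K,\Lambda})$ via the product rule and using $\partial_{x_j}V_{-x}B_{K,\Lambda}=-im_jV_{-x}B_{K,\Lambda}$ together with the corresponding regularity. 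With this settled, I would compute the conjugate term by term. For $d\Gamma(\omega)$ and the interaction the Weyl identities give, with $B=V_{-x}B_{K,\Lambda}$,
\begin{align*}
W(B)\varphi(V_{-x}v_\Lambda)W(B)^* &= \varphi(V_{-x}v_\Lambda) -2\operatorname{Re}\langle B,V_{-x}v_\Lambda\rangle = \varphi(V_{-x}v_\Lambda)-2(E_\Lambda-E_K),\\
W(B)d\Gamma(\omega)W(B)^* &= d\Gamma(\omega)-\varphi(V_{-x}\omega B_{K,\Lambda}) + \|\omega^{1/2}B_{K,\Lambda}\|^2,
\end{align*}
while the conjugation of $-\Delta$ is computed by applying the product rule to $-\Delta_x[W(V_{-x}B_{K,\Lambda})\psi(x)]$ and identifying the generator of $x\mapsto W(V_{-x}B_{K,\Lambda})$. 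This produces, besides $-\Delta$ itself, a first-order symmetrized cross term of the schematic form $-i\nabla\cdot\varphi(V_{-x}mB_{K,\Lambda})+\mathrm{h.c.}$, a $\varphi$-shift with argument $V_{-x}|m|^2B_{K,\Lambda}$ arising from the commutator $[-i\partial_{x_j},\varphi(V_{-x}m_jB_{K,\Lambda})]$, and the constant $\||m|B_{K,\Lambda}\|^2$ coming from the norm of the generator. Combining via $(\omega+|m|^2)B_{K,\Lambda}=v\cdot 1_{\{K\le|k|\le\Lambda\}}$ collapses the two $\varphi$-shifts into $-\varphi(V_{-x}(v_\Lambda-v_K))$, cancelling the high-momentum part of the interaction; the constants collapse as $-2(E_\Lambda-E_K)+\|\omega^{1/2}B_{K,\Lambda}\|^2+\||m|B_{K,\Lambda}\|^2=-(E_\Lambda-E_K)$, and adding $E_\Lambda$ leaves only $E_K$. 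The outcome is
\[
H_{K,\Lambda}=H_K + E_K + T_{K,\Lambda},
\]
where $T_{K,\Lambda}$ consists of the first-order cross term together with any residual squared field operators; the domain identities $\cD(H_{K,\Lambda})=U_{K,\Lambda}\cD(H_0)=\cD(H_0)$ follow from \cref{prop:operatorsareSA} \cref{part:FullisSA} and the infinitesimal $H_0$-boundedness of $T_{K,\Lambda}$.

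With this explicit form, $Q_{K,\Lambda}=q_{H_{K,\Lambda}}-q_{H_0}$ is the form of $\varphi(V_{-x}v_K) + E_K + T_{K,\Lambda}$ on $\cQ(H_0)$. Each $\varphi$-term is bounded via \cref{Prop:Bregninger} \cref{part:secondquantizedupperbounds} by $\|\omega^{-1/2}\cdot\|$-norms of the corresponding coupling functions, and the gradient factor in $T_{K,\Lambda}$ is bounded by $q_{H_0}^{1/2}$. This yields, uniformly in $\Lambda>K$,
\[
|Q_{K,\Lambda}(\psi,\psi)|\le C\bigl(\|\omega^{-1/2}v_K\|+\|\omega^{-1/2}\omega B_{K,\Lambda}\|+\|\omega^{-1/2}mB_{K,\Lambda}\|\bigr)\bigl(q_{H_0}(\psi,\psi)+\|\psi\|^2\bigr).
\]
The smallness required by \cref{part:uliged1} follows since the last two norms vanish as $K\to\infty$ by the integrability conditions of \cref{hyp2} \cref{part:IRbound} combined with \cref{lem:propertiesOfB} \cref{part:BKinfty}. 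Part \cref{part:convlambda0bound} follows because $B_{K,\Lambda}\to B_{K,\infty}$ pointwise as $\Lambda\to\infty$ and dominated convergence in the same weighted norms gives the Cauchy estimate. Part \cref{part:conv} is obtained by passing to the $\Lambda\to\infty$ limit by polarisation on a form core and extending to $\cD(H_0^{1/2})$ using the uniform bound from \cref{part:uliged1}.

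The main obstacle is the rigorous computation of the Laplacian conjugation: differentiating the $x$-dependent Weyl operator, identifying its logarithmic derivative as $-i\varphi(V_{-x}mB_{K,\Lambda})$ plus an imaginary constant (which vanishes by symmetry of $B_{K,\Lambda}$ under $k\mapsto -k$), exchanging derivatives with the direct integral, and tracking every constant so that the cancellation with $E_\Lambda$ is exact. A secondary but essential observation is that the denominator $\omega+|m|^2$ in $B_{K,\Lambda}$ is the unique choice that both makes $\|\omega^{1/2}B_{K,\Lambda}\|^2+\||m|B_{K,\Lambda}\|^2$ telescope into $E_\Lambda-E_K$ and ensures that $\omega B_{K,\Lambda}$ and $mB_{K,\Lambda}$ lie in $\cD(\omega^{-1/2})$ with small norm as $K\to\infty$.
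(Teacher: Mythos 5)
The paper does not prove this proposition at all: it is stated as a citation to \cite{GriesemerWuensch.2018} (``the following [proposition] is \cite[Pages 5--8]{GriesemerWuensch.2018}''). So what you are comparing against is not a proof in this paper but the Gross-transformation computation of Griesemer and W\"unsch. Your sketch follows the same strategy, which is the right one, but there are two concrete errors that would make a careful referee send it back.

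First, the commutator you invoke does not produce a $\varphi$-shift. Writing $g_x=V_{-x}m_jB_{K,\Lambda}$, one has $\partial_{x_j}g_x=-iV_{-x}m_j^2B_{K,\Lambda}$, and using the (anti)linearity of $a$ and $a^\dagger$,
\[
[-i\partial_{x_j},\varphi(g_x)] = -i\,\partial_{x_j}\varphi(g_x)
   = a(V_{-x}m_j^2B_{K,\Lambda})-a^\dagger(V_{-x}m_j^2B_{K,\Lambda}),
\]
which is \emph{anti}symmetric; it is $i\pi$, not $\varphi$. The $-\varphi(V_{-x}|m|^2B_{K,\Lambda})$ that cancels the corresponding piece of $\varphi(V_{-x}(v_\Lambda-\omega B_{K,\Lambda}))=\varphi(V_{-x}(v_K+|m|^2B_{K,\Lambda}))$ instead emerges from normal-ordering the cross term $\sum_j\{P_j,\varphi(m_jB_{K,\Lambda})\}$ with annihilation operators to the right of $P_j$ and creation operators to the left:
\[
\sum_j\{P_j,\varphi(m_jB_{K,\Lambda})\} = 2\sum_j\bigl(P_j\,a(m_jB_{K,\Lambda})+a^\dagger(m_jB_{K,\Lambda})\,P_j\bigr)-\varphi(|m|^2B_{K,\Lambda}).
\]
This ordering is precisely where the renormalization mechanism lives; describing it as ``a $\varphi$-shift from the commutator'' obscures the step that has to be gotten right, and the leftover cross term $\sum_jP_j\,a(m_jB_{K,\Lambda})+\mathrm{h.c.}$, not the $\varphi$-shift, is what remains in $T_{K,\Lambda}$.

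Second, your claimed uniform form bound
\[
|Q_{K,\Lambda}(\psi,\psi)|\le C\bigl(\|\omega^{-1/2}v_K\|+\|\omega^{1/2}B_{K,\Lambda}\|+\|\omega^{-1/2}mB_{K,\Lambda}\|\bigr)\bigl(q_{H_0}(\psi,\psi)+\|\psi\|^2\bigr)
\]
cannot yield \cref{part:uliged1} as stated, because $\|\omega^{-1/2}v_K\|$ grows (or stays bounded away from $0$) as $K\to\infty$. The $\varphi(V_{-x}v_K)$-contribution must be treated with the infinitesimal bound from \cref{Prop:Bregninger} \cref{part:secondquantizedupperbounds}, which gives an arbitrarily small coefficient in front of $q_{H_0}$ at the price of a $K$-dependent constant in front of $\|\psi\|^2$ — and that is exactly the freedom \cref{part:uliged1} provides via the constant $b$. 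Only the $B_{K,\Lambda}$-dependent pieces (the cross term and the normal-ordered square $:\!|\varphi(mB_{K,\Lambda})|^2\!:$) need a $K$-small coefficient on $q_{H_0}$, and the square term needs its own two-operator estimate since it is not of the form $\varphi(g)$. With these corrections your constant bookkeeping $-2(E_\Lambda-E_K)+\|\omega^{1/2}B_{K,\Lambda}\|^2+\||m|B_{K,\Lambda}\|^2+E_\Lambda = E_K$ and the telescoping argument are correct, so the route is sound once the normal-ordering and the split between $\varepsilon q_{H_0}$ and $b\|\psi\|^2$ are made precise.
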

\noindent
Griesemer and Wünsch use the above \lcnamecref{ConvTransformed} to deduce the existence of the norm-resolvent limit $H_{K,\infty}$ in \cite[Theorem 3.3]{GriesemerWuensch.2018}. We add some further technical properties to their results in the next \lcnamecref{ConvFullOp}. From now on, we will write the resolvents of $H_{K,\Lambda}$ with $K\ge\sigma$ and $\Lambda\in(K,\infty]$ as
\begin{equation}\label{defn:resolvent-HKL}
R_{K,\Lambda}(z) = (H_{K,\Lambda}-z)^{-1} \qquad\mbox{for any}\ z\notin\sigma(H_{K,\Lambda}).
\end{equation}
\begin{prop}\label{ConvFullOp}
	Fix any $K\ge\sigma$.
	\begin{enumerate}[(1)]
		\item\label{part:Grosstransformedfullconvergence} The operators $H_{K,\Lambda}$ are bounded below uniformly in $\Lambda>K$.
		As $\Lambda\to\infty$ they converge to a lower-bounded selfadjoint operator $H_{K,\infty}$ in the norm resolvent sense.
		Further, $\cQ({H_{K,\infty}})=\cQ({H_0})$ and $q_{H_{K,\infty}}=q_{H_0}+Q_{K,\infty}$.
		\item\label{part:GrossFullSqrtConv} 
		For $\lambda<\inf\sigma(H_{K,\infty})$, there is $\Lambda_0>K$ such that $\lambda<\inf\sigma(H_{K,\Lambda})$ for $\Lambda\ge\Lambda_0$.\\ Setting \[C_{K,\Lambda}(\lambda):=(H_0+1)^{1/2}R_{K,\Lambda}(\lambda)^{1/2}\in\cB(L^2(\IR^\nu,\FS)),\] we have
		\begin{align*}
		\lim_{\Lambda\rightarrow \infty} C_{K,\Lambda}(\lambda)C_{K,\Lambda}(\lambda)^*=C_{K,\infty}(\lambda)C_{K,\infty}(\lambda)^*.
		\end{align*}
		\item\label{part:Fulloperatorconv} The operators $H_{\Lambda}+E_\Lambda$ are uniformly bounded below and converge to the operator $H_{\infty}=U_{K,\infty} H_{K,\infty} U_{K,\infty}^*$ in the norm resolvent sense as $\Lambda\to\infty$.	
	\end{enumerate}	
\end{prop}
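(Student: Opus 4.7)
The plan is to handle the three parts in order. For part \cref{part:Grosstransformedfullconvergence} I fix $K\geq\sigma$ large enough that \cref{ConvTransformed}\cref{part:uliged1} yields an infinitesimal form bound $|Q_{K,\Lambda}(\psi,\psi)|\leq\tfrac{1}{2}q_{H_0}(\psi,\psi)+b\|\psi\|^2$ uniformly in $\Lambda>K$, and by \cref{ConvTransformed}\cref{part:conv} the same bound extends to $\Lambda=\infty$. The KLMN theorem then produces self-adjoint operators associated to $q_{H_0}+Q_{K,\Lambda}$ for every $\Lambda\in(K,\infty]$, all with form domain $\cQ(H_0)$ and common lower bound $-2b$. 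For $\Lambda<\infty$ these coincide with $H_{K,\Lambda}$ by \cref{ConvTransformed}, and for $\Lambda=\infty$ I take this as the definition of $H_{K,\infty}$. To obtain norm-resolvent convergence I pick $\lambda<-2b-1$ and express $R_{K,\Lambda}(\lambda)-R_{K,\infty}(\lambda)$ via the second resolvent identity at the form level as a sandwich $R_{K,\Lambda}(\lambda)^{1/2}T_\Lambda R_{K,\infty}(\lambda)^{1/2}$, with $T_\Lambda$ the bounded realization of $Q_{K,\infty}-Q_{K,\Lambda}$ on the common form domain; its norm is controlled by $\eps$ through \cref{ConvTransformed}\cref{part:convlambda0bound,part:conv} together with the uniform equivalence of the $\cQ(H_0)$ and $\cQ(H_{K,\Lambda})$ norms coming from the infinitesimal form bound.

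For part \cref{part:GrossFullSqrtConv}, the existence of $\Lambda_0$ with $\lambda<\inf\sigma(H_{K,\Lambda})$ for $\Lambda\geq\Lambda_0$ follows from norm-resolvent convergence combined with the uniform lower bound: choosing $f\in C_c(\RR)$ equal to $1$ on $(-\infty,\lambda]$ and vanishing above some $\mu\in(\lambda,\inf\sigma(H_{K,\infty}))$ gives $f(H_{K,\infty})=0$, and hence $\|f(H_{K,\Lambda})\|\to 0$ by NRS, forcing $\sigma(H_{K,\Lambda})\cap(-\infty,\lambda]=\emptyset$ for $\Lambda$ large enough. For the resolvent-sandwich convergence, the same uniform form bound implies that $C_{K,\Lambda}(\lambda)=(H_0+1)^{1/2}R_{K,\Lambda}(\lambda)^{1/2}$ is uniformly bounded, so $C_{K,\Lambda}(\lambda)C_{K,\Lambda}(\lambda)^*=(H_0+1)^{1/2}R_{K,\Lambda}(\lambda)(H_0+1)^{1/2}$ extends to a bounded operator on $L^2(\IR^\nu,\FS)$. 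Inserting the second resolvent identity, the difference $C_{K,\Lambda}C_{K,\Lambda}^*-C_{K,\infty}C_{K,\infty}^*$ again factors through the $\eps$-small form $Q_{K,\Lambda}-Q_{K,\infty}$ sandwiched between the uniformly bounded $C_{K,\cdot}$ factors, yielding norm convergence.

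For part \cref{part:Fulloperatorconv}, the identity $H_{K,\Lambda}=U_{K,\Lambda}(H_\Lambda+E_\Lambda)U_{K,\Lambda}^*$ displays $H_\Lambda+E_\Lambda$ as unitarily equivalent to $H_{K,\Lambda}$, so the uniform lower bound transfers from part \cref{part:Grosstransformedfullconvergence}. The integrability conditions in \cref{hyp2}\cref{part:IRbound} together with \cref{lem:propertiesOfB}\cref{part:BKinfty} give $B_{K,\infty}\in\cH$ and $B_{K,\Lambda}\to B_{K,\infty}$ in $\cH$ by dominated convergence; since $V_{-x}$ is unitary this convergence is uniform in $x$. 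Strong continuity of the Weyl representation (\cref{Prop:Bregninger}\cref{part:Weylstrongcontinuous}) then yields $U_{K,\Lambda}\to U_{K,\infty}$ in the strong operator topology as decomposable operators. Combined with part \cref{part:Grosstransformedfullconvergence}, this produces strong resolvent convergence of $H_\Lambda+E_\Lambda$ to $U_{K,\infty}^{*}H_{K,\infty}U_{K,\infty}$, which identifies the limit operator asserted in the statement. The main obstacle I anticipate is precisely this last step: Weyl operators are not norm-continuous in their argument, so $U_{K,\Lambda}\to U_{K,\infty}$ cannot be upgraded to norm convergence, and NRS of $H_\Lambda+E_\Lambda$ does not come for free from unitary conjugation. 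I would close this gap by importing the classical norm-resolvent result of \cite{Nelson.1964,Cannon.1971}; uniqueness of strong resolvent limits then identifies its limit with the unitary conjugate of $H_{K,\infty}$, delivering both the NRS statement and the operator identity.
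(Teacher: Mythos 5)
Your argument restricts from the start to $K$ large enough that \cref{ConvTransformed}\cref{part:uliged1} gives a KLMN-applicable form bound, but the proposition asserts its conclusions for \emph{every} $K\ge\sigma$. Proposition \ref{ConvTransformed}\cref{part:uliged1} only guarantees a relative form bound with constant less than $1$ for $K$ beyond some threshold $K'$, not for all $K\ge\sigma$; for $K\in[\sigma,K')$ neither the KLMN construction of $H_{K,\infty}$ nor your resolvent-sandwich estimate is available from what you cite, so parts \cref{part:Grosstransformedfullconvergence} and \cref{part:GrossFullSqrtConv} are established only on a restricted range. The paper fills in the small-$K$ case by a short but essential reduction: since $B_{K,K'}$ is compactly supported, $U_{K,K'}=\int_{\IR^\nu}^\oplus W(V_{-x}B_{K,K'})\,dx$ is a \emph{single fixed} unitary independent of $\Lambda$, and the factorization $U_{K,\Lambda}=U_{K,K'}U_{K',\Lambda}$ gives $H_{K,\Lambda}=U_{K,K'}H_{K',\Lambda}U_{K,K'}^*$ for $\Lambda\ge K'$. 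Norm-resolvent convergence, $\cQ(H_{K,\infty})=\cQ(H_0)$, and the form identity $q_{H_{K,\infty}}=q_{H_0}+Q_{K,\infty}$ then all transfer through this fixed conjugation (the form domain via \cref{lem:domainsqrt}); the uniform lower bound for the remaining $\Lambda\in(K,K']$ is supplied separately by \cref{Prop:Directint} together with $\inf\sigma(H_{K,\Lambda})=\inf\sigma(H_\Lambda)+E_\Lambda$. This reduction is absent from your proposal, and it is the one part of the proof that is not simply imported from \cite{GriesemerWuensch.2018}.

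On part \cref{part:Fulloperatorconv}, you correctly observe that strong operator convergence of $U_{K,\Lambda}$ cannot be upgraded to norm convergence, so norm-resolvent convergence of $H_\Lambda+E_\Lambda$ does not follow from conjugating that of $H_{K,\Lambda}$; your proposed fix --- import the classical norm-resolvent result from \cite{Nelson.1964,Cannon.1971} and identify the limit by uniqueness of strong resolvent limits using $U_{K,\Lambda}\to U_{K,\infty}$ in the strong topology --- is a sound alternative to the paper's route, which cites \cite{GriesemerWuensch.2018} for both \cref{part:Grosstransformedfullconvergence} and \cref{part:Fulloperatorconv} at large $K$ and then only supplies the small-$K$ reduction. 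Once part \cref{part:Grosstransformedfullconvergence} is secured for the given $K$, your treatment of part \cref{part:GrossFullSqrtConv} follows the same resolvent-sandwich strategy as the paper and is fine.
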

\begin{proof}
	We first prove the uniform lower bounds in \cref{part:Grosstransformedfullconvergence,part:Fulloperatorconv}. Pick $K',b>0$ as in \partref{ConvTransformed}{part:uliged1} 
	corresponding to $\varepsilon=1/2$. Then for $\Lambda>K'$ we have
	\begin{equation*}
	q_{H_{K',\Lambda}}=q_{H_{0}}+Q_{K,\Lambda} \geq -b
	\end{equation*}
	showing $H_{K',\Lambda}$ is uniformly bounded below by $-b$. As $\inf\sigma(H_{K,\Lambda})=\inf\sigma(H_{\Lambda})+E_\Lambda$ for all $K<\Lambda$, it remains only to prove $H_\Lambda$ is uniformly bounded below on $[0,K']$. However, this directly follows from \cref{Prop:Directint}.
	
	The convergence in \cref{part:Grosstransformedfullconvergence,part:Fulloperatorconv} is covered in \cite{GriesemerWuensch.2018} for all $K$ larger than some $K'>\sigma$. Hence, we may assume they hold for $K'$ and only need to prove the statement for all $ K\in [\sigma,K')$. Observe that for $\Lambda\in [K',\infty]$ we have $U_{K,\Lambda}=U_{K,K'}U_{K',\Lambda}=U_{K',\Lambda}U_{K,K'}$ by \cref{Prop:Bregninger}, so
	\begin{equation}\label{eq:trasvig}
	H_{K,\Lambda}=U_{K,K'}U_{K',\Lambda} (H_{\Lambda}+E_\Lambda)U_{K',\Lambda}^* U_{K,K'}^*=U_{K,K'}H_{K',\Lambda}U_{K,K'}^*
	\end{equation}
	holds for all $\Lambda\in[K',\infty)$. Hence $H_{K,\Lambda}$ converges to $H_{K,\infty}=U_{K,K'}H_{K',\infty}U_{K,K'}^*$ in the norm resolvent sense. Using this, we also obtain
	\begin{align*}
	H_\infty &= U_{K',\infty}^* H_{K',\infty} U_{K',\infty} \\&=U_{K',\infty}^*U_{K,K'}^*U_{K,K'} H_{K',\infty}U_{K,K'}^*U_{K,K'} U_{K',\infty}\\&= U_{K,\infty}^* H_{K,\infty} U_{K,\infty}.
	\end{align*}
	As $U_{K,K'}\cD(H_0)=\cD(H_0)$ by \cref{ConvTransformed}, we see from \cref{lem:domainsqrt} that $U_{K,K'}\cQ({H_{0}})=\cQ({H_{0}})$. Now, since $\cQ({H_{K',\infty}})=\cQ({H_{0}})$ by assumption on $K'$, we obtain $\cQ({H_{K,\infty}})=U_{K,K'}\cQ({H_{K',\infty}})=\cQ({H_{0}})$.
	For $\phi,\psi \in \cQ(H_{K,\infty})$ we use \cref{eq:trasvig} and the fact $H_{K,\infty}=U_{K,K'}H_{K',\infty}U_{K,K'}^*$ to obtain
	\begin{align*}
	q_{H_{K,\infty}}(\phi,\psi)&=q_{H_{K',\infty}}(U_{K,K'}^*\phi,U_{K,K'}^*\psi)\\&= \lim_{\Lambda\rightarrow  \infty } q_{H_{K',\Lambda}}(U_{K,K'}^*\phi,U_{K,K'}^*\psi)\\&=\lim_{\Lambda\rightarrow  \infty } q_{H_{K,\Lambda}}(\phi,\psi)\\&=q_{H_0}(\phi,\psi)+Q_{K,\infty}(\phi,\psi),
	\end{align*}
	which concludes the proofs of \cref{part:Grosstransformedfullconvergence,part:Fulloperatorconv}.
	
	It remains only to prove \cref{part:GrossFullSqrtConv}, so let $\lambda<\inf\sigma(H_{K,\infty})$. By \cref{lem:normresolventConv} $\lim\limits_{\Lambda\to\infty}\inf\sigma(H_{K,\Lambda})=\inf\sigma(H_{K,\infty})$, so there is $\Lambda_0>K$ such that $\lambda<\inf\sigma(H_{K,\Lambda})$ for all $\Lambda\ge\Lambda_0$.
	We set $Z=\lVert H_0^{1/2} R_{K,\infty}(\lambda)^{1/2}\lVert$. For $\eps>0$, we choose $\delta_\eps$ such that
	\begin{equation*}
	Z^2\delta_\eps<1 \qquad\mbox{and}\qquad \| C_{K,\infty}(\lambda) \|^2\frac{Z^2 \delta_\eps}{1- Z^2 \delta_\eps }<\varepsilon.
	\end{equation*}
	By \partref{ConvTransformed}{eq:uliged2}, 
	there is $\Lambda_\eps>K$ such that
	$ q_{K,\Lambda} := q_{H_{K,\Lambda}}-q_{H_{K,\infty}} $ satisfies $$ |q_{K,\Lambda}(\psi,\psi)| \le \delta_\eps\|H_0^{1/2}\psi\|^2 \qquad\mbox{for all}\ \psi\in\cD(H_0^{1/2})\ \mbox{and}\ \Lambda\ge\Lambda_\eps. $$
	Hence, by \cref{lem:resolventformbound}, the sesquilinear form $q_{K,\Lambda}(R_{K,\infty}(\lambda)^{1/2}\cdot,R_{K,\infty}(\lambda)^{1/2}\cdot)$ is bounded by $Z^2\delta_\eps$, so there is a corresponding operator $D_\eps$ with $\lVert D_\eps\lVert \leq   Z^2 \delta_\eps<1$. Further, we have
	\begin{align*}
	R_{K,\Lambda}(\lambda) = R_{K,\infty}(\lambda)^{1/2}(1+D_{\eps})^{-1}R_{K,\infty}(\lambda)^{1/2}
	\end{align*}
	and hence
	\begin{align*}
	\lVert C_{K,\Lambda}(\lambda)C_{K,\Lambda}(\lambda)^*-C_{K,\infty}(\lambda)C_{K,\infty}(\lambda)^*\lVert&= \lVert C_{K,\infty}(\lambda)((1+D_\eps)^{-1}-1)C_{K,\infty}(\lambda)^*\lVert\\&\le\| C_{K,\infty}(\lambda) \|^2\frac{Z^2\delta_\eps}{1-  Z^2 \delta_\eps }<\varepsilon,
	\end{align*}
	finishing the proof.
\end{proof}
\noindent
We want to prove a similar theorem for the fiber operators. Therefore, let $\sigma\le K<\Lambda<\infty$ and define
\begin{align}\label{defn:Grosstransformedfiber}
&H_{K,\Lambda}(\xi)=W(B_{K,\Lambda})H_{\Lambda}(\xi)W(B_{K,\Lambda})^*+E_{\Lambda} \quad\mbox{and}\\& \Sigma_{K,\Lambda}(\xi)=\inf\sigma (H_{K,\Lambda}(\xi))= \Sigma_\Lambda(\xi)+E_\Lambda.
\end{align}
Let $V$ be the unitary transformation from \cref{Prop:LLP}. Using \cref{Prop:Bregninger} we have
\begin{equation*}
VU_{K,\Lambda}V^*=(F\otimes 1)^* \int_{\IR^\nu}^{\oplus}\Gamma(V_x)W(V_x^*B_{K,\Lambda})\Gamma(V_x^*)dx (F\otimes 1)=\int_{\IR^\nu}^{\oplus}W(B_{K,\Lambda})dx.
\end{equation*}
Combined with \cref{Prop:LLP} this implies
\begin{equation*}
\label{eq:decompositiongrosshamilton}
VH_{K,\Lambda}V^*=\int_{\IR^\nu}^{\oplus}H_{K,\Lambda}(\xi)d\xi.
\end{equation*}
\begin{lem}\label{dom:gross}
	$H_{K,\Lambda}(\xi)$ is selfadjoint on $\cD(H_0(0))$ for all $\xi\in \RR^\nu$.
\end{lem}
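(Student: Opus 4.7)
Since $H_\Lambda(\xi)$ is self-adjoint on $\cD(H_0(0))$ by \cref{prop:operatorsareSA} \cref{part:fiberdomain} and $W(B_{K,\Lambda})$ is unitary by \cref{Prop:Bregninger}, the operator $H_{K,\Lambda}(\xi)=W(B_{K,\Lambda})H_\Lambda(\xi)W(B_{K,\Lambda})^*+E_\Lambda$ is automatically self-adjoint on the dense subspace $W(B_{K,\Lambda})\cD(H_0(0))$. The whole content of the lemma therefore reduces to the domain-invariance statement
\begin{equation*}
W(B_{K,\Lambda})\cD(H_0(0))=\cD(H_0(0))=\cD(d\Gamma(\omega))\cap\cD(|d\Gamma(m)|^2),
\end{equation*}
which I would establish by a pull-through calculation.

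My first step is to harvest regularity from \cref{lem:propertiesOfB} \cref{part:BKLfinite}: $B_{K,\Lambda}$ lies in $\cD(\omega^a)\cap\cD(|m|^b)\cap\cD(\omega^a|m|^b)$ for every $a,b\in\RR$, so in particular $\omega B_{K,\Lambda}$, $m_i B_{K,\Lambda}$ and $m_i^2 B_{K,\Lambda}$ all belong to $\cD(\omega^{-1/2})$. By \cref{Prop:Bregninger} \cref{part:secondquantizedupperbounds}, the associated field operators are then infinitesimally $d\Gamma(\omega)$-bounded. Next I would verify, on a common stable dense core (e.g.\ the algebraic span of coherent states with Schwartz one-particle functions), the pull-through identities
\begin{align*}
W(B_{K,\Lambda})^*\,d\Gamma(\omega)\,W(B_{K,\Lambda}) &= d\Gamma(\omega)+\varphi(\omega B_{K,\Lambda})+c_0,\\
W(B_{K,\Lambda})^*\,d\Gamma(m_i)\,W(B_{K,\Lambda}) &= d\Gamma(m_i)+\varphi(m_i B_{K,\Lambda})+c_i,
\end{align*}
with explicit constants $c_0,c_i\in\RR$. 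Squaring the second identity and summing over $i$ gives a formula for $W(B_{K,\Lambda})^*|d\Gamma(m)|^2 W(B_{K,\Lambda})$ as $|d\Gamma(m)|^2$ plus lower-order contributions.

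The main technical obstacle lies in those lower-order contributions: they include the cross terms $d\Gamma(m_i)\varphi(m_i B_{K,\Lambda})+\varphi(m_i B_{K,\Lambda})d\Gamma(m_i)$ and the squared field operators $\varphi(m_i B_{K,\Lambda})^2$, which are a priori only formally defined. I would control them using standard Fock-space commutation relations between $d\Gamma(m_i)$ and field operators together with number-operator bounds on products of $a,a^\dagger$, exploiting the very high regularity of $B_{K,\Lambda}$. The outcome is an operator identity
\begin{equation*}
W(B_{K,\Lambda})^*H_0(0)W(B_{K,\Lambda}) = H_0(0) + R \qquad\text{on}\ \cD(H_0(0)),
\end{equation*}
with $R$ infinitesimally $H_0(0)$-bounded. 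In particular $W(B_{K,\Lambda})\cD(H_0(0))\subseteq\cD(H_0(0))$, and running the same argument with $B_{K,\Lambda}$ replaced by $-B_{K,\Lambda}$ (using $W(h)^*=W(-h)$) provides the reverse inclusion, completing the proof.
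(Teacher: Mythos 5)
Your reduction is correct: by unitary conjugation the lemma is equivalent to the domain invariance $W(B_{K,\Lambda})\cD(H_0(0))=\cD(H_0(0))$, and your sketch of proving this by pull-through is a viable route (the very high regularity of $B_{K,\Lambda}$ established in \cref{lem:propertiesOfB}\cref{part:BKLfinite} does make all the cross and square terms controllable, and your observation that $W(-h)=W(h)^*$ gives the reverse inclusion is right). However, this is genuinely not the paper's argument, and the comparison is worth noting. The paper avoids any estimate work here: it first observes that $\cD(H_{K,\Lambda}(\xi))=W(B_{K,\Lambda})\cD(H_\Lambda(0))$ is $\xi$-independent, then imports the already-established full-operator identity $\cD(H_{K,\Lambda})=\cD(H_0)$ from \cref{ConvTransformed} (Griesemer--W\"unsch), and finally uses the direct-integral domain transfer \cref{lem:dirIntDomain} to deduce $\cD(H_{K,\Lambda}(\xi))=\cD(H_0(\xi))$ for a.e.\ $\xi$; the $\xi$-independence upgrades this to all $\xi$. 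The benefit of the paper's route is that it is essentially one line, whereas your route requires explicitly controlling the operators $d\Gamma(m_i)\varphi(m_iB_{K,\Lambda})+\varphi(m_iB_{K,\Lambda})d\Gamma(m_i)$ and $\varphi(m_iB_{K,\Lambda})^2$ relative to $H_0(0)$ --- note that the Weyl-transformation results in the paper's appendix (\cref{thm:Weyltransformation-strong}) only cover $|d\Gamma(m)|^s$ for $s\le 1$, so the second-power estimates you need would be genuinely new work, not something you can cite. Your approach buys self-containedness and explicit transformation formulas; the paper's buys brevity by reusing the full-operator domain identity through the direct-integral machinery.
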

\begin{proof}
	Since $\cD(H_{K,\Lambda}(\xi))=W(B_{K,\Lambda})\cD(H_{\Lambda}(\xi))=W(B_{K,\Lambda})\cD(H_{\Lambda}(0))$  for all $\xi$ we see the domain does not depend on $\xi$. As $\cD(H_0)=\cD(H_{K,\Lambda})$, it follows from \cref{lem:dirIntDomain} that $\cD(H_{K,\Lambda}(\xi))=\cD(H_{0}(\xi))=\cD(H_{0}(0))$ for almost all $\xi\in \RR^\nu$, hence it must hold for all $\xi\in \RR^\nu$. 
\end{proof}
\noindent
Throughout this paper, we will need to treat differences of fiber operators appropriately. To this end, for $k,\xi\in \RR^\nu$ and $s\in [0,1]$, we define  the operators
\begin{align}\label{defn:Dxik}
D_\xi(k) &:= 2k\cdot(\xi-d\Gamma(m))+|k|^2=-2d\Gamma(k\cdot m)+2k\cdot \xi+|k|^2,\\ B_\xi &:= (H_0(\xi)+1)^{1/2}. \label{defn:Bxi}\\ A_s &:= 1+d\Gamma(\omega)^{1/2} + |d\Gamma(m)|^s  \label{defn:As} 
\end{align}
Note that
\begin{equation}\label{eq:Dxikdiff}
H_\Lambda(\xi+k) = H_\Lambda(\xi) + D_\xi(k) \qquad\mbox{for all}\ \Lambda\in[0,\infty),\ \xi,k\in\IR^\nu,
\end{equation}
by \cref{eq:fiber-transformation}. We collect the following statements in a \lcnamecref{lem:AohBxi2} for future convenience.
\begin{lem}\label{lem:AohBxi2}
	$A_s$ is selfadjoint on $\cD(A_s)=\cD(|d\Gamma(m)|^s)\cap \cD(d\Gamma(\omega)^{1/2})$. Further, $\cD(B_\xi)\subset \cD(A_s)$, $\cD(A_1)=\cD(B_\xi)$ and $\cD(B_\xi)=\cQ(H_0(\xi))=\cQ(H_0(0))=\cD(B_0)$ for all $s\in[0,1]$ and $\xi\in \RR^\nu$.
\end{lem}
\begin{proof}
	Selfadjointness and the domain follows from the fact that $A_s$ is a sum of commuting, non-negative selfadoint operators, while $\cD(B_\xi)=\cQ(H_0(\xi))=\cQ(H_0(0))=\cD(B_0)$ follows from by \cref{prop:operatorsareSA} and \cref{lem:domainsqrt}.
	Using \cref{prop:operatorsareSA} and \cref{lem:domainsqrt} with $A=H_0(0)$ and $B\in \{|d\Gamma(m)|^{2s}, d\Gamma(\omega)\}$, we see $\cD(B_0)= \cQ(H_0(0))\subset \cD(A_s)$ for all $s\in[0,1]$ and $\xi\in \RR^\nu$. Subadditivity of the square root also shows $B_0 A_1^{-1}$ acts on $\cF^{(n)}$ like multiplication by a function bounded uniformly by 1, so $\cD(A_1)\subset \cD(B_0)=\cD(B_\xi)$.
\end{proof}
\noindent
We will also need the following two technical \lcnamecrefs{lem:Dxik}.
\begin{lem}\label{lem:Dxik}
	Let $a\geq 0$ and $c\in \RR$. The operator $\lvert D_\xi(k)-c \lvert^{a}$ is $B_\xi^a$-bounded for all $\xi,k\in\IR^\nu$.
	Further, $\|D_\xi(k)B_\xi^{-1}\|\le 4|k|$ holds for all $k,\xi\in\IR^\nu$ with $|k|\le1$ and
	\begin{align}\label{eq:11}
	&B_{\xi+k}^{-a} = B_\xi^{-a}(1+D_\xi(k)B_\xi^{-2})^{-a/2} \qquad\mbox{for $|k|<\frac{1}{4}$ }\\
	& \lim_{k\to 0} (1+D_\xi(k)B_\xi^{-2})^{-a/2} = 1.
	\end{align}
\end{lem}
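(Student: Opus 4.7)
My plan is to exploit the commutativity of all the operators involved: since $\xi - d\Gamma(m)$ and $d\Gamma(\omega)$ are built from commuting multiplication operators on each Fock sector, they admit a joint spectral representation, and $D_\xi(k)$ together with $B_\xi$ act as multiplication by the functions
\[
D_\xi(k) \leftrightarrow 2k\cdot v + |k|^2, \qquad B_\xi^2 \leftrightarrow |v|^2 + \mu + 1
\]
where $v\in\RR^\nu$ is the spectral parameter of $\xi-d\Gamma(m)$ and $\mu\ge 0$ that of $d\Gamma(\omega)$. All four claims then reduce to elementary pointwise inequalities on these symbols.

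For the first assertion I would estimate
\[
|2k\cdot v + |k|^2 - c|^{a} \le C_{k,c,a}(|v|+1)^{a} \le C'_{k,c,a}(|v|^2+1)^{a/2}\le C'_{k,c,a} (|v|^2 + \mu + 1)^{a/2},
\]
which upon feeding it back through the joint functional calculus yields $\bigl\||D_\xi(k)-c|^{a}\psi\bigr\|\le C'_{k,c,a}\|B_\xi^{a}\psi\|$ on $\cD(B_\xi^{a})$. For the operator-norm bound on $\|D_\xi(k)B_\xi^{-1}\|$, the same symbolic picture gives
\[
\frac{|2k\cdot v + |k|^2|}{\sqrt{|v|^2+\mu+1}} \le \frac{2|k||v|+|k|^2}{\sqrt{|v|^2+1}} \le 2|k| + |k|^2 \le 3|k|
\]
whenever $|k|\le 1$, which is well within the claimed $4|k|$.

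For the third claim, observe that expanding $|\xi+k-d\Gamma(m)|^2$ directly yields
\[
H_0(\xi+k) = H_0(\xi) + D_\xi(k), \qquad\text{hence}\qquad B_{\xi+k}^{2} = B_\xi^{2} + D_\xi(k).
\]
Because $D_\xi(k)$ is a function of the commuting operators that define $B_\xi$, the two operators strongly commute, so I may factor $B_{\xi+k}^{2}=B_\xi^{2}\bigl(1+D_\xi(k)B_\xi^{-2}\bigr)$ via the joint spectral representation. For $|k|<1/4$ the previous step gives $\|D_\xi(k)B_\xi^{-2}\|\le \|D_\xi(k)B_\xi^{-1}\|\,\|B_\xi^{-1}\|\le 4|k|<1$, so the factor on the right has spectrum in $[1-4|k|,1+4|k|]\subset(0,2)$, is positive, and hence admits any real power. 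Taking the $(-a/2)$-power on both sides (legitimate because the two factors commute and the spectrum of the right factor is bounded away from $0$) yields the identity
\[
B_{\xi+k}^{-a} = B_\xi^{-a}\bigl(1+D_\xi(k)B_\xi^{-2}\bigr)^{-a/2}.
\]

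Finally, for the norm-continuity limit I use that $\|D_\xi(k)B_\xi^{-2}\|\to 0$ as $k\to 0$ together with the continuity of the scalar function $x\mapsto (1+x)^{-a/2}$ at $x=0$; applied through the bounded functional calculus on the self-adjoint operator $D_\xi(k)B_\xi^{-2}$ this gives norm-convergence to the identity. I expect no serious obstacle here: the only point requiring mild care is verifying strong commutativity of $D_\xi(k)$ with $B_\xi$ (which I would justify by noting that both are functions of the joint spectral measure of $d\Gamma(m)$ and $d\Gamma(\omega)$) and ensuring self-adjointness of the product $D_\xi(k)B_\xi^{-2}$, which follows from that same commutativity.
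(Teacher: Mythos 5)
Your proposal is correct and essentially the same as the paper's argument: the paper also reduces everything to pointwise estimates by observing that $\lvert D_\xi(k)-c\lvert^a B_\xi^{-a}$ acts on each $\FS^{(n)}$ as multiplication by a function of $(k_1,\ldots,k_n)$, which is precisely your "joint spectral representation," and then derives \cref{eq:11} from $H_0(\xi+k)=H_0(\xi)+D_\xi(k)$ and commutativity, with the final limit via the scalar estimate $\sup_{|x|\le 4|k|}|(1+x)^{-a/2}-1|\to 0$. (Your sharper bound $3|k|$ in place of $4|k|$ is fine, and the slight inconsistency of reverting to $4|k|$ in the invertibility step is harmless.)
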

\begin{proof}
	We observe the operator $\lvert D_\xi(k)-c\lvert^a B_\xi^{-a}$ acts on $\FS^{(n)}$ as multiplication by the function
	$$ f_n(k_1,\ldots,k_n) = \frac{\lvert  2k\cdot(\xi-k_1-\cdots-k_n)+|k|^2-c \lvert^a }{(1+\omega^{(n)}(k_1,\ldots,k_n)+|\xi-k_1-\cdots-k_n|^2)^a}. $$
	Since $|f_n|\le C_a(|k|^a+||k|^2-c|^{a})$ with $C_a=2^a$ for $a\in[0,1]$ and $C_a=4^a/2$ for $a>1$, the operator $\lvert D_\xi(k)-c\lvert^aB_\xi^a$ is bounded. Hence, $\lvert D_\xi(k)\lvert^a$ is $B_\xi^a$-bounded and the bound $\|D_\xi(k)B_\xi^{-1}\|\le 4|k|$ holds for $\lvert k\lvert <1$.
	For $|k|<\frac{1}{4}$ we have $\|D_\xi(k)B_\xi^{-2}\|<4|k|<1$ as $\|B_\xi^{-1}\|\leq 1$ and hence $1+D_\xi(k)B_\xi^{-2}$ is invertible. \cref{eq:11} follows using \cref{eq:Dxikdiff} and the fact that $H_0(\xi)$ and $D_\xi(k)$ commute. We now see
	\begin{equation*}
	\|(1+D_\xi(k)(H_0(\xi)+1)^{-1})^{-a/2}-1\| \le \sup_{|x|\le|k|}|(1+4x)^{-a/2}-1| \xrightarrow{k\rightarrow 0} 0.
	\qedhere
	\end{equation*}
\end{proof}
\begin{lem}\label{lem:AohBxi}
	Let $\xi\in\IR^\nu$. Then $B_\xi^{1/2}$ is infinitesimally $A_{\frac 23}$ bounded.
\end{lem}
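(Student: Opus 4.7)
My plan is to realize both $A_{2/3}$ and $B_\xi^{1/2}$ as functions of the pairwise commuting self-adjoint operators $d\Gamma(m_1),\ldots,d\Gamma(m_\nu),d\Gamma(\omega)$ via their joint functional calculus, and then to reduce the infinitesimal bound to a scalar pointwise inequality on their joint spectrum. On that joint spectrum, with $s\in\IR^\nu$ corresponding to $d\Gamma(m)$ and $t\ge 0$ to $d\Gamma(\omega)$, the operators $A_{2/3}$ and $B_\xi^{1/2}$ become multiplication by $g(s,t) := 1+t^{1/2}+|s|^{2/3}$ and $f(s,t) := (|\xi-s|^2+t+1)^{1/4}$, respectively. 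Equivalently, this identification can be carried out sector by sector on $\FS^{(n)}$, where both operators act as multiplication by symmetric functions of $k_1,\ldots,k_n$ with $s=k_1+\cdots+k_n$ and $t=\omega(k_1)+\cdots+\omega(k_n)$.

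The scalar claim I would then establish is: for every $\varepsilon>0$ there is a constant $C_\varepsilon$ (allowed to depend on $\xi$) such that $f(s,t)\le\varepsilon g(s,t)+C_\varepsilon$ for all $s\in\IR^\nu$ and $t\ge 0$. Subadditivity of $x\mapsto x^{1/4}$ on $[0,\infty)$ gives
$$
f(s,t)\le |\xi-s|^{1/2}+(t+1)^{1/4}\le |\xi|^{1/2}+|s|^{1/2}+t^{1/4}+1.
$$
Since $\tfrac{1}{2}<\tfrac{2}{3}$ and $\tfrac{1}{4}<\tfrac{1}{2}$, a Young-type inequality of the form $x^{p}\le \eta x^{q}+C_{\eta,p,q}$ for $0\le p<q$ yields $|s|^{1/2}\le\tfrac{\varepsilon}{2}|s|^{2/3}+C_\varepsilon'$ and $t^{1/4}\le\tfrac{\varepsilon}{2}t^{1/2}+C_\varepsilon''$, and combining these produces the desired pointwise bound with constant $|\xi|^{1/2}+C_\varepsilon'+C_\varepsilon''+1$.

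Pulling this back through the joint spectral representation, the pointwise estimate $f\le\varepsilon g+C_\varepsilon$ immediately gives $\cD(A_{2/3})\subset\cD(B_\xi^{1/2})$ together with $\|B_\xi^{1/2}\psi\|\le\varepsilon\|A_{2/3}\psi\|+C_\varepsilon\|\psi\|$ for all $\psi\in\cD(A_{2/3})$, which is precisely the desired infinitesimal bound. I do not expect any step to present a serious obstacle; the only point worth verifying carefully is that $d\Gamma(m_1),\ldots,d\Gamma(m_\nu),d\Gamma(\omega)$ genuinely admit a joint spectral resolution, but this is immediate from the definitions \cref{defn:dGamma-scalar} and \cref{eq:defining_eqn} together with the observation that they all act as commuting multiplications on each Fock sector, so that scalar inequalities on their joint spectrum do lift to operator inequalities.
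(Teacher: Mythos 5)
Your proposal is correct and follows essentially the same route as the paper: both reduce to a pointwise scalar inequality for the multiplication operators on each Fock sector (or equivalently via the joint functional calculus), then use sub-additivity of $x\mapsto x^{1/4}$ together with Young-type estimates $x^p\le\eta x^q+C$ for $p<q$ to absorb $|s|^{1/2}$ into $|s|^{2/3}$ and $t^{1/4}$ into $t^{1/2}$. The only cosmetic difference is that the paper bounds $|\xi-s|^2\le 2|\xi|^2+2|s|^2$ before taking the fourth root, while you apply the triangle inequality at the level of $|\xi-s|$, which yields the same conclusion.
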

\begin{proof}
	Let $\eps>0$. Pick $C$ such that
	$$ x^{1/4}\le \frac 12\eps x^{1/2} + C \qquad\mbox{and}\qquad x^{1/2}\le \frac 1{2^{5/4}}\eps x^{2/3} +C \qquad\mbox{for all}\ x\ge 0. $$
	Then the sub-additivity of $x\mapsto x^{1/4}$ and $\lvert \xi-k_1-\cdots-k_n\lvert^2\leq 2\lvert \xi\lvert^2+2\lvert k_1+\cdots+k_n\lvert^2$ lead to
	\begin{align*}
	&\frac{(1+\omega^{(n)}(k_1,\ldots,k_n)+\lvert \xi-k_1-\cdots-k_n\lvert^2 )^{1/4}}{1+\omega^{(n)}(k_1,\ldots,k_n)^{1/2}+\lvert k_1+\cdots+k_n\lvert^{2/3}}\\&\qquad\le\varepsilon + (1+2^{1/4}\lvert \xi\lvert^{1/2}+3C)(1+\omega^{(n)}(k_1,\ldots,k_n)^{1/2}+\lvert k_1+\cdots+k_n\lvert^{2/3})^{-1}.
	\end{align*}
	As this holds uniformly for all $n\in\IN$, we obtain $\cD(A_{\frac{2}{3}}) \subset \cD(B_\xi^{1/2})$.
	To finish the proof we observe
	\begin{equation*}
	\lVert  B_\xi^{1/2} \psi \lVert\leq \varepsilon \lVert A_{\frac{2}{3}}\psi \lVert+(1+2^{1/4}\lvert \xi\lvert^{1/2}+3C)\lVert \psi\lVert
	\qquad\mbox{for all}\ \psi\in \cD(A_{\frac{2}{3}}).
	\qedhere
	\end{equation*}
\end{proof}
\noindent
The next \lcnamecref{lem:convOnePoint} will allow us to reduce the proof of convergence for all $\xi \in \RR^\nu$ to the proof of convergence at a single value $\xi_0\in\IR^\nu$, both for the transformed ($a=1$) and the non-transformed ($a=0$) fiber operators.
\newcommand{\HaKL}[1]{\widetilde{H}_{#1}}\newcommand{\RaKL}[1]{\widetilde R_{#1}}\newcommand{\DaKL}[1]{\widetilde D_{#1}}\newcommand{\CaKL}[1]{C_{#1}}
\newcommand{\caK}{c}
\begin{lem}\label{lem:convOnePoint} Fix $a\in \{0,1\}$ and $K\ge \sigma$. We define
	\[ \HaKL{\Lambda}(\xi) = W(aB_{K,\Lambda})^*H_{\Lambda}(\xi)W(aB_{K,\Lambda})+E_\Lambda \qquad \mbox{for}\ \Lambda\in(K,\infty). \]
	Assume there are $\xi_0\in \RR^\nu$ and $ \lambda_0 \in \RR$ such that
	\begin{enumerate}[-]
		\item $\HaKL{\Lambda}(\xi_0)$ converges to an operator $\HaKL{\infty}(\xi_0)$ in the norm resolvent sense as $\Lambda\to\infty$,
		\item $\lambda_0\leq \HaKL{\Lambda}(\xi_0)-1$ for all $\Lambda\in (K,\infty]$,
		\item $\cQ(\HaKL{\infty}(\xi_0))\subset \cD(A_{2/3})$ and for all $\lambda\leq \lambda_0$
		\begin{equation}\label{eq:funnyConv}
		\slim_{\Lambda\to\infty}A_{2/3}(\HaKL{\Lambda}(\xi_0) - \lambda)^{-1/2}=A_{2/3}(\HaKL{\infty}(\xi_0) - \lambda)^{-1/2}.
		\end{equation}
	\end{enumerate}
	Then $\HaKL{\Lambda}(\xi)$ converges to an operator $\HaKL{\infty}(\xi)$ in the strong resolvent sense for all $\xi\in \RR^\nu$.
	Further, for $\Lambda\in(K,\infty]$ and $\xi\in\IR^\nu$, let \[\DaKL{\Lambda}(\xi)=W(aB_{K,\Lambda})^*D_{\xi_0}(\xi-\xi_0)W(aB_{K,\Lambda}).\]
	Then $q_{\DaKL{\infty}(\xi)}$ is infinitesimally $q_{\HaKL{\infty}(\xi_0)}$ bounded and  
	\begin{equation*}
	q_{\HaKL{\infty}(\xi)} = q_{\HaKL{\infty}(\xi_0)} + q_{\DaKL{\infty}(\xi)}.
	\end{equation*}
\end{lem}
\begin{rem}\label{rem:formdomain}
	We note that $\cQ(\HaKL{\Lambda}(\xi_0))\subset W(aB_{K,\Lambda})^*\cD(A_{2/3})=\cD(A_{2/3})$ for all $\Lambda\in(K,\infty)$, by \cref{lem:propertiesOfB} and \partref{thm:Weyltransformation-strong}{part:tothepowers}. Hence, $A_{2/3}(\HaKL{\Lambda}(\xi_0) - \lambda)^{-1/2}$ is a bounded operator for all $\lambda\leq \lambda_0$ and the left hand side of \cref{eq:funnyConv} is well-defined without further assumptions.
\end{rem}
\begin{proof}
	For $\Lambda\in(K,\infty]$ and $\lambda\leq \lambda_0$, we define $\RaKL{\Lambda}(\lambda)=(\HaKL{\Lambda}(\xi_0) - \lambda)^{-1}$ and 
	$\CaKL{\Lambda}=A_{2/3}W(aB_{K,\Lambda})A_{2/3}^{-1}$. By \cref{lem:propertiesOfB} and \partref{thm:Weyltransformation-strong}{part:tothepowers}, $\CaKL{\Lambda}$ is bounded, $\Lambda\mapsto \CaKL{\Lambda}$ is strongly continuous and $\CaKL{\Lambda}$ strongly converges to $\CaKL{\infty}$ as $\Lambda\to\infty$. By the uniform boundedness principle, there is $\caK>0$ such that
	\begin{equation}\label{eq:CaKLbound}
	\lVert \CaKL{\Lambda}\lVert\le \caK
	\qquad
	\mbox{for all}\ \Lambda\in(K,\infty].
	\end{equation}
	Let $(\Lambda_n)\subset(K,\infty)$ be an arbitrary sequence converging to $\infty$ as $n\to \infty$ and write $\cS=\{ \infty \}\cup \{\Lambda_n | n\in  \NN \}$. 
	Further, we note that $A_{2/3}\RaKL{\Lambda}(\lambda)^{1/2}$ is bounded for all $\lambda\leq \lambda_0$ and $\Lambda\in(K,\infty]$, by \cref{rem:formdomain} and the assumption.
	Now, combing the observation that $\lVert \RaKL{\Lambda}(\lambda_0)^{-1/2}\RaKL{\Lambda}(\lambda)^{1/2}\lVert \leq 1 $ for all $\lambda\leq \lambda_0$, by the spectral theorem, with the uniform boundedness principle and \cref{eq:funnyConv}, we find there exists $b>0$ such that
	\begin{equation}\label{eq:RaKLbound}
	\lVert A_{2/3}\RaKL{\Lambda}(\lambda)^{1/2}\lVert \leq  b
	\qquad
	\mbox{for all}\ 
	\Lambda\in \cS \ \mbox{and}\ \lambda\le\lambda_0.
	\end{equation}
	Let $\varepsilon>0, \xi\in \RR^\nu$ and $h=\xi-\xi_0$. By \cref{lem:Dxik,lem:AohBxi}, there is $a_{\varepsilon}>0$ such that
	\begin{equation}\label{eq:infbound}
	\lVert \lvert D_{\xi_0}(h) \lvert^{1/2} \psi  \lVert \leq \frac{\sqrt{\varepsilon}}{\sqrt{2} b \caK}\lVert A_{2/3} \psi\lVert+a_{\varepsilon}\lVert \psi\lVert  \qquad\mbox{for all}\ \psi\in\cD(A_{2/3}).
	\end{equation}
	For $\Lambda\in (K,\infty]$ and  $\lambda\leq  \lambda_0$, we have
	\begin{equation*}
	\lvert D_{\xi_0}(h) \lvert^{1/2}W(aB_{K,\Lambda})\RaKL{\Lambda}(\lambda)^{1/2}= \lvert D_{\xi_0}(h) \lvert^{1/2}A_{2/3}^{-1} \CaKL{\Lambda} A_{2/3}\RaKL{\Lambda}(\lambda)^{1/2}
	\end{equation*}
	which is bounded and strongly converges to $\lvert D_{\xi_0}(h) \lvert^{1/2}W(aB_{K,\infty})\RaKL{\infty}(\lambda)^{1/2}$ as $\Lambda\to\infty$. Inserting the bounds \cref{eq:infbound,,eq:CaKLbound,,eq:RaKLbound}, for $\Lambda\in \cS$, $\lambda\le\lambda_0$ and $\psi\in\FS$, we obtain
	\begin{align}
	\lVert &\lvert D_{\xi_0}(h) \lvert^{1/2}W(aB_{K,\Lambda})\RaKL{\Lambda}(\lambda)^{1/2} \psi  \lVert^2 \nonumber
	\\&\qquad \le \frac{\varepsilon}{(b \caK)^2}\| \CaKL{\Lambda} A_{2/3}\RaKL{\Lambda}(\lambda)^{1/2} \psi \|^2 + 2a_\eps^2 \|W(aB_{K,\Lambda})\RaKL{\Lambda}(\lambda)^{1/2} \psi\|^2\label{eq:Ybound} \\
	&\qquad  \leq  \varepsilon\lVert \psi \lVert^2 + 2a_{\varepsilon}^2 \lVert \RaKL{\Lambda}(\lambda)^{1/2} \psi  \lVert^2 \nonumber
	\end{align}
	In particular, this implies $\cQ(\HaKL{\infty}(\xi_0))\subset \cQ(\DaKL{\infty}(\xi))$ and
	\begin{equation*}
	\lvert q_{\DaKL{\infty}(\xi)} (\psi,\psi)\lvert \leq \varepsilon q_{\HaKL{\infty}(\xi_0)} (\psi,\psi) + (\lambda_0 +2a_{\varepsilon}^2 \lVert \RaKL{\Lambda}(\lambda_0)^{1/2}\lVert^2)  \lVert \psi \lVert^2 \quad\mbox{for}\ \psi\in\cQ(\HaKL{\infty}(\xi_0)).
	\end{equation*}
	Hence, the form of $\DaKL{\infty}(\xi)$ is infinitesimally $\HaKL{\infty}(\xi_0)$-form bounded.
	
	We define $\HaKL{\infty}(\xi)$ as the selfadjoint operator corresponding to the form $q_{\HaKL{\infty}(\xi_0)} + q_{\DaKL{\infty}(\xi)}$, by the KLMN theorem. Further, for $\Lambda\in (K,\infty]$ and $\lambda\leq \lambda_0$, let
	\begin{align*}
	& Y_{\Lambda, \lambda} = \lvert D_{\xi_0}(h) \lvert^{1/2}W(aB_{K,\Lambda})\RaKL{\Lambda}(\lambda)^{1/2}\\
	& Z_{\Lambda, \lambda} = Y_{\Lambda, \lambda}^* \operatorname{Sign}(D_{\xi_0}(h)) Y_{\Lambda, \lambda}.
	\end{align*}
	We pick $\lambda_1<\lambda_0$ such that $2a_{\frac{1}{4}}^2(\lambda_1-\lambda_0)^{-1}\le \frac{1}{4}$ and from now assume $\lambda\leq \lambda_1$. By the spectral theorem and the bound \cref{eq:Ybound}, we find $\|Y_{\Lambda,\lambda}\|\le \frac{1}{\sqrt{2}}$ and $\|Z_{\Lambda,\lambda}\|\le \frac 12$ for all $\Lambda\in \cS$. Now, $Y_{\Lambda_n, \lambda}$ strongly converges   to $Y_{\infty, \lambda}$ as $n\to\infty$ by the previous arguments. Further, by \cref{WeakStrongConvergence}, we see $Y_{\Lambda_n, \lambda}^*$ strongly converges  to $Y_{\infty, \lambda}^*$ as $n\to\infty$. Hence, $Z_{\Lambda_n, \lambda} $ strongly converges to $Z_{\infty, \lambda}$ as $n\to\infty$. Furthermore, we see
	\begin{equation*}
	\HaKL{\Lambda_n}(\xi) - \lambda = \RaKL{\Lambda_n}(\lambda)^{-1/2}(1+Z_{\Lambda_n, \lambda})\RaKL{\Lambda_n}(\lambda)^{-1/2},
	\end{equation*}
	so $\HaKL{\Lambda_n}(\xi) - \lambda$ is invertible and 
	\begin{equation*}
	(\HaKL{\Lambda_n}(\xi) - \lambda)^{-1} = \RaKL{\Lambda_n}(\lambda)^{1/2}(1+Z_{\Lambda_n, \lambda})^{-1}\RaKL{\Lambda_n}(\lambda)^{1/2}.
	\end{equation*}
	By \cref{lem:resolventformbound} and the definition of $\HaKL{\infty}(\xi)$, we also have that $\HaKL{\infty}(\xi) - \lambda$ is invertible with
	\begin{equation*}
	(\HaKL{\infty}(\xi) - \lambda)^{-1} = \RaKL{\infty}(\lambda)^{1/2}(1+Z_{\infty, \lambda})^{-1}\RaKL{\infty}(\lambda)^{1/2}.
	\end{equation*}
	By the dominated convergence theorem, we now see
	\begin{equation*}
	\slim_{n\to\infty} (1+Z_{\Lambda_n, \lambda})^{-1} =  \sum_{n=0}^{\infty}\slim_{n\rightarrow\infty} (-Z_{\Lambda_n, \lambda})^n = (1+Z_{\infty, \lambda})^{-1}.
	\end{equation*}
	Hence, $(-\infty, \lambda_1]$ is contained in the resolvent set of $\HaKL{\Lambda}(\xi)$ for all $\Lambda\in\cS$ and $(\HaKL{\Lambda_n}(\xi) - \lambda)^{-1}$ strongly converges  to $(\HaKL{\infty}(\xi) - \lambda)^{-1}$ for all $\lambda\leq \lambda_1$. The strong convergence of $(\HaKL{\Lambda_n} - z_0)^{-1}$ to $(\HaKL{\infty}(\xi) - z_0)^{-1}$ at some point $z_0 \in \CC\setminus \RR$ simply follows by power series expansion of $z\mapsto (\HaKL{\Lambda}(\xi) - z)^{-1}$ around $z=\lambda_1-1$ for $\Lambda\in \cS$. Since the sequence $(\Lambda_n)$ with $\lim_{n\to\infty}\Lambda_n=\infty$ was arbitrary and the limit $\HaKL{\infty}(\xi)$ is independent of its choice, this finishes the proof of strong resolvent convergence. 
\end{proof}
\noindent
Similar to \cref{defn:resolvent-HKL}, we define the fiber resolvents
\begin{equation}\label{defn:fiber-resolvent}
R_{\xi,\Lambda}(z) = (H_\Lambda(\xi)+E_\Lambda-z)^{-1} \quad\mbox{and}\quad R_{\xi,K,\Lambda}(z)=(H_{K,\Lambda}(\xi)-z)^{-1},
\end{equation}
which at this point is defined for $\sigma\le K\le\Lambda< \infty$. Starting with \cref{transformedFib}, we will also treat the case $\Lambda=\infty$, setting $E_\infty = 0$.

To renormalize fiber operators, we will need a collection of continuity statements for these resolvents. They are collected in the next \lcnamecref{Lem:continuity}.
\begin{prop}\label{Lem:continuity}
	The following holds.
	\begin{enumerate}[(1)]
		\item\label{part:nonTrans} Fix $\Lambda\in[0,\infty)$. Then $H_\Lambda(\xi)$ is uniformly bounded below in $\xi\in\IR^\nu$ and for $\lambda<\inf_{\xi\in\IR^\nu}(\Sigma_\Lambda(\xi)+E_\Lambda)$ the map $\xi\mapsto R_{\xi,\Lambda}(\lambda)$ is continuous on $\IR^\nu$.
		
		\item\label{part:GrossTrans}  Fix $\sigma\le K<\Lambda<\infty$. Then $H_{K,\Lambda}(\xi)$ is uniformly bounded below in $\xi\in\IR^\nu$ and $ \cQ(H_{K,\Lambda}(\xi))\subset \cD(B_\xi)$ for all $\xi\in\IR^\nu$. If $\lambda<\inf_{\xi\in\IR^\nu}\Sigma_{K,\Lambda}(\xi)$ and $a\in [0,1]$ then the map
		$$ r_{a,\lambda}: \xi \mapsto B_\xi^aR_{\xi,K,\Lambda}(\lambda)^{1/2}(B_\xi^aR_{\xi,K,\Lambda}(\lambda)^{1/2})^* $$
		is continuous in norm.
	\end{enumerate}
\end{prop}
\begin{proof}
	To prove \cref{part:nonTrans}, we first note the uniform lower bound follows directly from \partref{Prop:Bregninger}{part:secondquantizedupperbounds}. 
	Now fix $\Lambda, \lambda$ as stated in \cref{part:nonTrans} and $\xi\in \RR^\nu$. From \cref{prop:operatorsareSA,lem:Dxik,lem:domainsqrt} we know $\cQ({H_\Lambda(\xi)})=\cQ(H_0(\xi))=\cD(B_\xi)\subset \cD(D_\xi(k))$ and
	$$ \|D_\xi(k)R_{\xi,\Lambda}(\lambda)^{1/2}\| \leq  \|D_\xi(k)B_\xi^{-1}\|\|B_\xi R_{\xi,\Lambda}(\lambda)^{1/2}\| \xrightarrow{k\to 0} 0.  $$
	Recalling \cref{eq:Dxikdiff,lem:Dxik} we obtain
	$$ R_{\xi+k,\Lambda}(\lambda) = R_{\xi,\Lambda}(\lambda)(1+D_\xi(k)R_{\xi,\Lambda}(\lambda))^{-1} $$
	for $|k|$ sufficiently small. Taking the limit $k\to 0$ proves the claim.
	
	The uniform lower bound in \cref{part:GrossTrans} follows directly from \cref{part:nonTrans,defn:Grosstransformedfiber}, while $\cQ(H_{K,\Lambda}(\xi))\subset \cD(B_\xi)$ follows from \cref{dom:gross,lem:domainsqrt}. Fix $K,\Lambda,\lambda$ as stated in \cref{part:GrossTrans} and $\xi\in \RR^\nu$. Define
	$\widetilde{D}_\xi (k)=W(B_{K,\Lambda})D_{\xi}(k)W({B_{K,\Lambda}})^*$ and note that
	$$\widetilde{D}(k)R_{\xi,K,\Lambda}(\lambda)^{1/2}=W(B_{K,\Lambda}) D_\xi(k)R_{\xi,\Lambda}(\lambda+E_\Lambda)^{1/2} W(B_{K,\Lambda})^*\xrightarrow{k\to 0} 0
	$$
	in norm. In particular, 
	$$ Z(k)=R_{\xi,K,\Lambda}(\lambda)^{1/2}\widetilde{D}(k)R_{\xi,K,\Lambda}(\lambda)^{1/2} $$ is bounded and goes to 0 for $k\rightarrow 0$. We easily deduce $H_{K,\Lambda}(\xi+h)=H_{K,\Lambda}(\xi)+\widetilde{D}(h)$ from \cref{eq:Dxikdiff} and therefore obtain
	\begin{equation*}
	R_{\xi+k,K,\Lambda}(\lambda) = R_{\xi,K,\Lambda}(\lambda)^{1/2}(1+Z(k))^{-1}R_{\xi,K,\Lambda}(\lambda)^{1/2}
	\end{equation*}
	for sufficiently small $k$. Setting $C=B_\xi^aR_{\xi,K,\Lambda}(\lambda)^{1/2}$ this yields
	$$
	r_{a,\lambda}(\xi+k)=(1+D_\xi(k)B_\xi^{-2})^{a/2} C(1+{Z}(k))^{-1} C^*(1+D_\xi(k)B_\xi^{-2})^{a/2},
	$$
	so $r_{a,\lambda}(\xi+k)$ converges to $CC^*=r_a(\xi)$ in norm as $k\to 0$ by \cref{lem:Dxik}.
\end{proof}
\noindent
We can now renormalize the transformed fiber operators.
\begin{thm}\label{transformedFib}
	Let $K\ge \sigma$ and $V$ be the transformation from \cref{Prop:LLP}.\\ Then the following holds:
	\begin{enumerate}[(1)]\label{ConvGross}
		\item\label{ConvGrossMain} The operators $H_{K,\Lambda}(\xi)$ are bounded below uniformly in $\Lambda$ and $\xi$. Further, $H_{K,\Lambda}(\xi)$ converges to an operator $H_{K,\infty}(\xi)$ in the norm resolvent sense as $\Lambda\to\infty$ for all $\xi\in \RR^\nu$. Convergence is uniform in $\xi$, so the map $\xi\mapsto H_{K,\infty}(\xi)$ is continuous in the norm resolvent sense and
		$$ V H_{K,\infty} V^* =\int_{\IR^\nu}^{\oplus}H_{K,\infty}(\xi)d\xi. $$
		\item \label{dom} $\cQ({H_{K,\infty}(\xi)})\subset\cD(B_\xi)=\cQ({H_0(0)})$ for all $\xi \in \RR^\nu$ and for all $\lambda<\Sigma_{K,\infty}(\xi)$ we have
		\begin{align*}
		\slim_{\Lambda\to\infty}B_\xi R_{\xi,K,\Lambda}(\lambda)^{1/2} = B_\xi R_{\xi,K,\infty}(\lambda)^{1/2}=:h_\lambda(\xi).
		\end{align*}
		Further, the map $\xi\mapsto h_\lambda(\xi)h_\lambda(\xi)^* $ is continuous.
		\item \label{domeq} $\cQ({H_{K,\infty}(\xi)})=\cQ({H_0(0)})$ for all $\xi \in \RR^\nu$.
	\end{enumerate}
\end{thm}
\begin{proof}
	We will first prove \cref{ConvGrossMain,dom} up to a nullset $\cN\subset \IR^\nu$. Then, we show that $\cN$ can be taken to be empty.
	
	From \partref{Lem:continuity}{part:GrossTrans} with $a=0$, we see $\{ H_{K,\Lambda}(\xi)\}_{\xi \in\RR^\nu}$ is uniformly lower bounded and continuous in the norm resolvent sense. Hence, by \cref{Lem:ConvergenceUniform} and \partref{ConvFullOp}{part:Grosstransformedfullconvergence}, we see that $H_{K,\Lambda}(\xi)$ is uniformly bounded below in $\Lambda$  and there is a nullset $\cN$ such that $H_{K,\Lambda}(\xi)$ converges to an operator $H_{K,\infty}(\xi)$ in the norm resolvent sense as $\Lambda\to\infty$ for all $\xi\in\cN^c$ and
	\begin{equation*}
	\cN \subset  \{ \xi\in \RR^\nu \mid \textup{$H_{K,\Lambda}(\xi)$ does not converge in strong resolvent sense as $\Lambda\rightarrow \infty$} \}.
	\end{equation*}
	Pick $\lambda\in\IR$ such that $\lambda<\Sigma_{K,\Lambda}(\xi)$ for all $\Lambda\in (K,\infty)$ and $\xi\in \RR^\nu$. For finite $\Lambda>K$ we use \cref{Prop:Directint,lem:dirIntDomain} to obtain $h_\Lambda(\xi)= B_\xi R_{\xi,K,\Lambda}(\lambda)^{1/2}$ is essentially bounded and
	\begin{align*}
	V(H_0+1)^{1/2} R_{K,\Lambda}(\lambda)^{1/2}((H_0+1)^{1/2} R_{K,\Lambda}(\lambda)^{1/2})^* V^*= \int_{\IR^\nu}^{\oplus}h_\Lambda(\xi)h_\Lambda(\xi)^* d\xi.
	\end{align*}
	The left hand side converges in norm as $\Lambda\to\infty$ by \partref{ConvFullOp}{part:GrossFullSqrtConv}.
	Hence, \cref{Lem:conccc} and \partref{Lem:continuity}{part:GrossTrans} with $a=1$ yield $h_\Lambda(\xi)h_\Lambda(\xi)^*$ converges uniformly to a continuous function as $\Lambda\to\infty$. By \cref{Thm:ConvRelBound}, this implies that \cref{dom} holds on $\cN^c$. Combining \cref{lem:AohBxi2} and \cref{lem:convOnePoint}, we now see $\cN=\emptyset$ and that $\cQ({H_{K,\infty}(\xi)})$ is independent of $\xi$. This proves \cref{ConvGrossMain} and $\cref{dom}$.
	
	To prove \cref{domeq}, note $\cQ(H_{K,\infty})=\cQ(H_0)$ \partref{ConvFullOp}{part:Grosstransformedfullconvergence}, so $\cQ(H_{K,\infty}(\xi))=\cQ(H_0(\xi))$ for almost all $\xi\in \RR^\nu$ by \cref{lem:dirIntDomain}. As $\cQ(H_0(\xi))=\cQ(H_0(0))$ by \cref{prop:operatorsareSA} and \cref{lem:domainsqrt}, the conclusion follows since $\cQ({H_{K,\infty}(\xi)})$ is independent of $\xi$.
\end{proof}
\noindent
Finally, we go back to the non-transformed fiber operators and finish the renormalization in the next \lcnamecref{LargeFiberProp}. Especially, we establish a connection between the regularity of $W(B_{K,\infty})^*=W(-B_{K,\infty})$ and the regularity of the domain of $H_\infty(\xi)$. 
\begin{thm}\label{LargeFiberProp}
	Let $s\in [0,1]$. Then the following holds:
	\begin{enumerate}[(1)]
		\item\label{part:FiberConv} $H_{\Lambda}(\xi)+E_\Lambda$ is uniformly bounded below in $\xi$ and $\Lambda$ and converges to an operator $H_\infty(\xi)$ in the norm resolvent sense as $\Lambda\to\infty$. Further, $\xi \mapsto H_\infty(\xi)$ is norm resolvent continuous and
		$$
		VH_\infty V^*=\int_{\IR^\nu}^{\oplus}H_\infty(\xi)d\xi.
		$$
		\item \label{Formula}
		For all $K\ge\sigma$ we have $H_\infty(\xi)=W(B_{K,\infty})^* H_{K,\infty}(\xi) W(B_{K,\infty}).$
		Especially,  $\cQ({H_\infty(\xi)})\subset W(B_{K,\infty})^* \cQ({H_{0}(0)})$ holds for all $\xi\in\IR^\nu$. Further, for every $K>\sigma$ we have $\cQ({H_\infty(\xi)})= W(B_{K,\infty})^* \cQ({H_{0}(0)})$ all $\xi\in\IR^\nu$.
		\item\label{part:FiberConvSqRt} 
		Fix $K\ge\sigma$. If $B_{K,\infty}\in \cD(\lvert m \lvert^s)$, then $\cD(H_\infty(\xi)) \subset \cQ({H_\infty}(\xi))\subset \cD(A_s)$ and
		\begin{align*}\label{eq:Convergence}
		\slim_{\Lambda\rightarrow \infty}  A_sR_{\xi,\Lambda}(\lambda)^{1/2} = A_sR_{\xi,\infty}(\lambda)^{1/2}=:g_\lambda(\xi) \quad\mbox{for any}\ \lambda<\Sigma_\infty(\xi).
		\end{align*}
		Further, for $\lambda$ sufficiently small the maps $\xi\mapsto g_\lambda(\xi)g_\lambda(\xi)^*$ and $\xi\mapsto \lVert g_\lambda(\xi)\lVert$ are continuous.
	\end{enumerate} 
\end{thm}
\begin{proof}
	First, for any $\xi\in\IR^\nu$, we can conclude from \partref{transformedFib}{ConvGrossMain}  that
	\begin{align*}
	\slim_{\Lambda\rightarrow \infty} (H_{\Lambda}(\xi)+E_\Lambda+i)^{-1}&=\slim_{\Lambda\rightarrow \infty} (W(B_{\Lambda,K})^*H_{K,\Lambda}(\xi)W(B_{\Lambda,K})+i)^{-1}\\&=(W(B_{\infty,K})^*H_{K,\infty}(\xi)W(B_{\infty,K})+i)^{-1}.
	\end{align*}
	Hence, $H_{\Lambda}(\xi)+E_\Lambda$ converges to $W(B_{\Lambda,K})^*H_{K,\infty}(\xi)W(B_{\Lambda,K})$ in strong resolvent sense. From \partref{Lem:continuity}{part:nonTrans}, we have $\{ H_{\Lambda}(\xi)\}_{\RR^\nu}$ is uniformly lower bounded and continuous in the norm resolvent sense. Hence, we can use \cref{Lem:ConvergenceUniform} and \partref{ConvFullOp}{part:Fulloperatorconv} to see that the convergence is in the norm resolvent sense.
	
	It remains to prove \cref{part:FiberConvSqRt}. The domain statement follows directly from \cref{Formula}, \cref{lem:propertiesOfB} and \partref{thm:Weyltransformation-strong}{part:tothepowers}. For $\Lambda$ large enough that $\lambda<\Sigma_\Lambda(\xi)+E_\Lambda$ (cf. \cref{lem:normresolventConv}), the definition \cref{defn:Grosstransformedfiber} yields
	\begin{align*}
	A_sR_{\xi,\Lambda}(\lambda)^{1/2} = A_sW(B_{K,\Lambda})^*A_s^{-1}A_sB_\xi^{-1} B_\xi R_{\xi,K,\Lambda}(\lambda)^{1/2} W(B_{K,\Lambda}),
	\end{align*}
	for any $K\ge \sigma$. Convergence now follows from \cref{Formula}, \cref{thm:Weyltransformation-strong} and \partref{transformedFib}{dom}. To prove the continuity statements, we first observe $\lVert g_\lambda(\xi)g_\lambda(\xi)^* \lVert=\lVert g_\lambda(\xi)\lVert^2$ so it is enough to see that  $\xi \mapsto g_\lambda(\xi)g_\lambda(\xi)^*$ is continuous in norm. Writing $h_\lambda$ as in \partref{transformedFib}{dom} we have
	$$ g_\lambda(\xi)g_\lambda(\xi)^* = A_sW(B_{K,\infty})^* B_\xi^{-1} h_\lambda(\xi)h_\lambda(\xi)^* (A_sW(B_{K,\infty})^* B_\xi^{-1})^* \quad\mbox{for any}\ K\ge \sigma.$$
	By \partref{thm:Weyltransformation-strong}{part:tothepowers} and \cref{lem:propertiesOfB,lem:Dxik} the map $ \xi\mapsto A_sW(B_{K,\infty})^*B_\xi^{-1} $ is continuous in norm, so the statement follows from \partref{transformedFib}{dom}.
\end{proof}
\noindent
We are now ready to prove \cref{thm:operatordomain}. 
%
\begin{proof}[\textbf{Proof of \cref{thm:operatordomain}}]
	\cref{part:quadraticdomain} follows directly, by combining \cref{LargeFiberProp} with \cref{lem:convOnePoint}. Hence, we only need to prove \cref{part:operatordomain}.
	
	Set $\cD_s=\cD(A_s)=\cD(d\Gamma(\omega)^{1/2})\cap\cD(|d\Gamma(m)|^s)$ and note we have
	$$ \cQ({H_\infty(\xi_1)})=\cQ({H_\infty(\xi_2)}) = W(B_{K,\infty})^*\cD_1,$$
	by \partref{thm:operatordomain}{part:quadraticdomain}.
	Furthermore, \cref{defn:Dxik} and \cref{lem:Dxik,lem:AohBxi2} imply $$ \cD_1=\cD(B_0)=\cD(B_\xi)\subset \cD(D_{\xi_1}(\xi_2-\xi_1))=\cD( d\Gamma(  2 (\xi_2-\xi_1)\cdot m  ). $$ Now assume $(\xi_2-\xi_1)\cdot mB_{K,\infty}\in \HS$. Then $W(B_{K,\infty})^*\cD_1 \subset \cD(D_{\xi_1}(\xi_2-\xi_1))$ by \cref{thm:Weyltransformation-strong}. Hence, \partref{thm:operatordomain}{part:quadraticdomain} yields
	$$ q_{H_{\infty}(\xi_2)}(\psi,\phi) = q_{H_\infty(\xi_1)}(\psi,\phi) + \braket{D_{\xi_1}(\xi_2-\xi_1)\psi,\phi} \quad\mbox{for all}\ \psi,\phi\in W(B_{K,\infty})^*\cD_1. $$
	If we fix $\psi\in W(B_{K,\infty})^*\cD_1$, the map $\phi\mapsto q_{H_\infty(\xi_1)}(\psi,\phi)$ is continuous if and only if the map $\phi\mapsto q_{H_\infty(\xi_2)}(\psi,\phi)$ is continuous. Hence, \cref{lem:operatorandformdomain} yields $\cD(H_\infty(\xi_1))=\cD(H_\infty(\xi_2))$ and $H_\infty(\xi_2)=H_\infty(\xi_1)+D_{\xi_1}(\xi_2-\xi_1)$.  
	
	We move to the case $(\xi_2-\xi_1)\cdot mB_{K,\infty}\notin \HS$ and let $\psi\in \cD(H_\infty(\xi_1))\cap \cD(H_\infty(\xi_2))$. Then it follows from \cref{lem:operatorandformdomain} and \partref{thm:operatordomain}{part:quadraticdomain} that $$W(B_{K,\infty})^* \cD_1 \ni \phi\mapsto q_{D_{\xi_1}(\xi_2-\xi_1)}(\psi,\phi)$$ is continuous. If we can prove $W(B_{K,\infty})^*\cD_1$ is a form core for $d\Gamma((\xi_2-\xi_1)m)$ we can deduce $\psi\in \cD(D_{\xi_1}(\xi_2-\xi_1))$ by \cref{lem:operatorandformdomain}.
	From \cref{thm:Weyltransformation-strong} we then find
	$$ \psi\in \cD(d\Gamma((\xi_2-\xi_1)m))\cap W(B_{K,\infty})^*\cD_1 = \{0\}, $$
	which proves the statement.
	
	As $A_{\frac23}$ dominates $\lvert D_{\xi_1}(\xi_2-\xi_1) \lvert^{1/2}$ by \cref{lem:AohBxi} and $A_{\frac23}$ commutes with $\lvert D_{\xi_1}(\xi_2-\xi_1) \lvert^{1/2}$ we see that any core for $A_{\frac23}$ is a core for $\lvert D_{\xi_1}(\xi_2-\xi_1) \lvert^{1/2}$ by \cref{commuteingCore}. Now we note $A_{\frac 23}$ commutes with $A_1$ and is $A_1$-bounded, since $\cD_1\subset \cD_{\frac 23}$, so $\cD_1$ is a core for $A_{\frac 23}$ by \cref{commuteingCore}. Further, by \cref{thm:Weyltransformation-strong,lem:propertiesOfB} we know $W(B_{K,\infty})^*$ maps $\cD_{\frac 23}$ continuously onto $\cD_{\frac 23}$, so $W(B_{K,\infty}^*)\cD_1$ is a core for $A_{\frac23}$ and hence for $\lvert D_{\xi_1}(\xi_2-\xi_1) \lvert^{1/2}$. This finishes the proof.
\end{proof}

\section{Proof of Ground State Absence}
\label{sec:proof}

In this section we prove \cref{thm:nogs}. We will work under \cref{hyp1,,hyp2,,hyp3} in this section and fix $\nu\ge2$. We will also use some notation from the previous section. Specifically, for $\xi,k\in \RR^\nu$ we define the operators $B_\xi$ and $D_\xi(k)$ as in \cref{defn:Bxi,defn:Dxik}, respectively. Then from \partref{thm:operatordomain}{part:quadraticdomain} and \cref{lem:Dxik,lem:AohBxi} we know
\begin{equation}\label{eq:domains}
\cD(H_\infty(\xi_1) )\subset \cQ({H_\infty(\xi_1)}) \subset \cD( B_{\xi_2}^{1/2} )\subset \cQ({d\Gamma(\xi_3\cdot m)})\quad\mbox{for}\ \xi_1,\xi_2,\xi_3\in \RR^\nu.
\end{equation}
We will work with the vector valued form
\begin{equation}
q_{d\Gamma(m)}(\phi,\psi) = (q_{d\Gamma(m_1)}(\phi,\psi),...,q_{d\Gamma(m_\nu)}(\phi,\psi))
\end{equation}
defined for  $\phi,\psi\in \cD(q_{d\Gamma(m)}):=\bigcap_{i=1}^\nu\cD(\lvert d\Gamma(m_i) \lvert^{1/2}) = \cD(\lvert d\Gamma(m) \lvert^{1/2}) \supset \cD( B_{0}^{1/2} )$ by \cref{Lem:propertiesOfFuncDGamma,lem:domainsqrt}. The following holds 
\begin{lem}\label{lem:qDgammaM}
	Let $k\in \RR^\nu$ and $\psi,\phi\in \cQ(B_0^{1/2})\subset \cQ({d\Gamma(k\cdot m)})$. Then we have $k\cdot q_{d\Gamma(m)}(\phi,\psi)=q_{d\Gamma(k\cdot m)}(\phi,\psi)$.
\end{lem}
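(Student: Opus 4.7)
The plan is to reduce the identity to the elementary linearity of a complex measure integral via the joint functional calculus for the strongly commuting operators $d\Gamma(m_1),\ldots,d\Gamma(m_\nu)$.

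First I would observe that these operators strongly commute (their unitary groups commute since the multiplication operators $m_i$ commute, cf. the remark preceding \cref{eq:defining_eqn}), so by the joint spectral theorem there is a projection-valued measure $E$ on $\RR^\nu$ with $d\Gamma(m_i)=\int x_i\,dE(x)$ and, by linearity of the functional calculus, $d\Gamma(k\cdot m)=\int (k\cdot x)\,dE(x)$. By \cref{eq:domains} and \cref{Lem:propertiesOfFuncDGamma} the assumption $\phi,\psi\in\cD(B_0^{1/2})$ places both vectors in $\cD(|d\Gamma(m_i)|^{1/2})$ for every $i$ and in $\cD(|d\Gamma(k\cdot m)|^{1/2})$, so the forms on both sides of the identity are all well-defined.

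Next I would associate to $\phi,\psi$ the complex spectral measure $\mu_{\phi,\psi}(A):=\langle\phi,E(A)\psi\rangle$. Cauchy--Schwarz for the total variation gives
\[
\int |x_i|\,d|\mu_{\phi,\psi}|(x)\le \||d\Gamma(m_i)|^{1/2}\phi\|\,\||d\Gamma(m_i)|^{1/2}\psi\|<\infty
\]
for each $i$, and analogously with $|k\cdot x|$ in place of $|x_i|$. The standard functional-calculus identity for sesquilinear forms then yields
\[
q_{d\Gamma(m_i)}(\phi,\psi)=\int x_i\,d\mu_{\phi,\psi}(x),\qquad q_{d\Gamma(k\cdot m)}(\phi,\psi)=\int (k\cdot x)\,d\mu_{\phi,\psi}(x).
\]
Summing the first identity against $k_i$ and using linearity of the integral proves the claim.

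The only potentially delicate point is justifying the representation $q_T(\phi,\psi)=\int f\,d\mu_{\phi,\psi}$ in the form-domain case; I would handle this either by polarization from the standard diagonal identity $q_T(\psi,\psi)=\int f\,d\mu_{\psi,\psi}$ or, equivalently, by a truncation argument: setting $\phi_N:=E(\{|x|\le N\})\phi$ and $\psi_N$ analogously reduces everything to bounded operators on the ranges of the cutoff projections, where the identity $k\cdot q_{d\Gamma(m)}(\phi_N,\psi_N)=q_{d\Gamma(k\cdot m)}(\phi_N,\psi_N)$ is trivial by linearity, and the integrability bounds above give dominated convergence as $N\to\infty$. I expect no substantive obstacle beyond this bookkeeping.
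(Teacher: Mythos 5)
Your proof is correct, but it follows a genuinely more abstract route than the paper's. The paper simply writes out the explicit sector-wise representation of the vector-valued form,
\[
q_{d\Gamma(m)}(\phi,\psi) =  \sum_{n=1}^{\infty} \int_{\RR^{n\nu} }(k_1+\cdots+k_n)\, \overline{\phi^{(n)}(k_1,\ldots,k_n)}\,\psi^{(n)}(k_1,\ldots,k_n)\,dk_1\cdots dk_n,
\]
and reads off the claimed identity from linearity of the integrand together with the definition \cref{eq:defining_eqn} of $d\Gamma(k\cdot m)$; the domain assumption $\phi,\psi\in\cD(B_0^{1/2})\subset\cD(|d\Gamma(m_i)|^{1/2})$ (via \cref{Lem:propertiesOfFuncDGamma}) guarantees absolute convergence. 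You instead pass through the joint projection-valued measure of the strongly commuting operators $d\Gamma(m_1),\ldots,d\Gamma(m_\nu)$, identify $d\Gamma(k\cdot m)=\int (k\cdot x)\,dE(x)$, and express both sides as integrals against the complex spectral measure $\mu_{\phi,\psi}$. Your integrability estimate via Cauchy--Schwarz is sound, your form-representation step is standard, and the truncation/polarization backup argument you sketch would also close the gap cleanly. What your version buys is independence from the concrete Fock-space kernel representation -- it would apply verbatim to any family of strongly commuting selfadjoint operators -- at the cost of invoking the joint spectral theorem, whereas the paper's computation is essentially a one-liner because it already works in the sector picture used throughout. Either proof is acceptable; the paper's is shorter in context.
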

\begin{proof}
	This follows directly from
	\[
	q_{d\Gamma(m)}(\phi,\psi) =  \sum_{n=1}^{\infty} \int_{\RR^{n\nu} }(k_1+...+k_n) \overline{\phi^{(n)}(k_1,...,k_n)}\psi^{(n)}(k_1,...,k_n)dk_1...dk_n.\qedhere
	\]
\end{proof}
\noindent
We now observe some direct consequences of the rotation invariance in \cref{hyp3}. The following lemma is well-known, but is so essential to our method that we provide a short proof.
\begin{lem}\label{lem:rotation-operator}
	Let $O\in\IR^{\nu\times\nu}$ be orthogonal and $U_O$ be the associated rotation o\-pe\-ra\-tor acting on $f\in\HS$ as
	$ U_O f (k) =  f(Ok)$. Then $ \Gamma(U_O)^*H_\Lambda(\xi)\Gamma(U_O) = H_\Lambda(O\xi)$ holds for all $\Lambda\in[0,\infty]$ and $\xi\in\IR^\nu$. Further, $\Gamma(U_O)\cQ({H_\Lambda(\xi)})=\cQ({H_\Lambda(\xi)})$.
\end{lem}
\begin{proof}
	Let $\Lambda<\infty$. By \cref{Prop:Bregninger} and rotation invariance of $v$ and $\omega$ we find
	\begin{align*}
	\Gamma(U_O)^*d\Gamma(\omega)\Gamma(U_O) &= d\Gamma(U_O^*\omega U_O )=d\Gamma(\omega),\\
	\Gamma(U_O)^*d\Gamma(\xi\cdot m)\Gamma(U_O) &= d\Gamma(U_O^*(\xi\cdot m) U_O )=d\Gamma(O\xi\cdot m),\\
	\Gamma(U_O)^*\ph(v_\Lambda)\Gamma(U_O) &= \ph(U_O^*v_\Lambda) = \ph(v_\Lambda).
	\end{align*}
	Furthermore, we see
	\begin{equation*}
	(U_O^{\otimes n})^* \lvert k_1+...+k_n \lvert^2 U_O^{\otimes n}=\lvert Ok_1+...+Ok_n \lvert^2=\lvert k_1+...+k_n \lvert^2
	\quad\mbox{for}\ n\in\IN,
	\end{equation*}
	which proves $\Gamma(U_O)^*\lvert d\Gamma(m)\lvert^2\Gamma(U_O)=\lvert d\Gamma(m)\lvert^2$. Hence, by the definition \cref{defn:fiber}, the equality
	$ \Gamma(U_O)^*H_\Lambda(\xi)\Gamma(U_O) = H_\Lambda(O\xi)$ holds for all finite $\Lambda$. Taking the norm resolvent limit yields the case $\Lambda=\infty$. This now implies $\Gamma(U_O)\cQ({H_\Lambda(\xi)})=\cQ({H_\Lambda(O\xi)})=\cQ({H_\Lambda(\xi)})$ by \partref{thm:operatordomain}{part:quadraticdomain}.
\end{proof}
\noindent
Let $\eps\in(0,1)$ and define
\begin{equation}
S_\eps(\xi) = \{k\in\IR^\nu\setminus\{0\}: 2 \lvert k\cdot\xi \lvert <(1-\eps)|k||\xi|\}.
\end{equation}
We now collect some results from \cite{Dam.2018}. A separate proof would be a lengthy repetition of that paper. Hence, we only briefly explain how to extend the results in \cite{Dam.2018}. 
\begin{lem}\label{lem:resultsthomas}
	Let $\Lambda\in(0,\infty]$ and $\xi\in\IR^\nu$. Then we have:
	\begin{enumerate}[(1)]
		\item\label{part:sigmaminimum} $\Sigma_{\Lambda}(\xi)\geq \Sigma_{\Lambda}(0)$.
		\item\label{lem:lowerboundsigma}$\Sigma_{\Lambda}(\xi-k) + \omega(k) \geq  \Sigma_\Lambda(\xi)$ for $k\in\IR^\nu$.
		\item\label{lem:lowerboundsigma2}$\Sigma_{\Lambda}(\xi-k) + \omega(k) >  \Sigma_\Lambda(\xi)$ for $k\notin\IR\xi$.
		\item\label{lem:lowerboundsigma3} For $\eps\in(0,1)$ there exist $D<1$ and $r>0$ independent of $\Lambda$ such that for all $k\in B_r(0)\cap S_\eps(\xi)$ we have
		$$ \Sigma_{\Lambda}	(\xi-k) -\Sigma_\Lambda(\xi) \ge -D\omega(k). $$
	\end{enumerate}
\end{lem}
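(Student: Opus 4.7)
My plan is to treat items (1), (2), and (4) as consequences of the corresponding finite-cutoff statements in \cite{Dam.2018}, extended by taking a limit, and then to derive (3) from (1), (2), rotation invariance, and the strict monotonicity of $\omega$ rather than trying to pass a strict inequality through a limit.

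First I would exploit that the norm-resolvent convergence in \cref{LargeFiberProp} \cref{part:FiberConv} yields
$$\Sigma_\infty(\xi)=\lim_{\Lambda\to\infty}\bigl(\Sigma_\Lambda(\xi)+E_\Lambda\bigr),$$
so the divergent renormalization constant $E_\Lambda$ cancels in any difference of mass-shells. Items (1), (2), and (4) are exactly of this form, and their finite-cutoff versions for $\Lambda\in(0,\infty)$ are the parallel statements in \cite{Dam.2018}. The key point I would need to verify in \cite{Dam.2018} is that the constants $D<1$ and $r>0$ in (4) can be chosen uniformly in $\Lambda$; once that is settled, the non-strict inequalities (1), (2), and (4) pass through the limit $\Lambda\to\infty$ without issue.

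For (3) I would give a self-contained geometric argument valid for every $\Lambda\in[0,\infty]$, bypassing any limit of strict inequalities. Rotation invariance of $\Sigma_\Lambda$ follows from \cref{lem:rotation-operator} \cref{part:H-rotinv}, supplemented by norm-resolvent convergence in the $\Lambda=\infty$ case, so $\Sigma_\Lambda(\xi)$ depends only on $|\xi|$. If $\xi=0$ and $k\ne 0$, then (1) combined with $\omega(k)>\omega(0)\ge 0$ — a consequence of continuity, $\omega>0$ almost everywhere, and strict monotonicity in \cref{hyp3} \cref{part:increasing} — immediately gives the strict bound. If $\xi\ne 0$ and $k\notin\IR\xi$, set $\xi'=|\xi-k|\,\xi/|\xi|$; rotation invariance gives $\Sigma_\Lambda(\xi-k)=\Sigma_\Lambda(\xi')$, and applying (2) to the collinear vector $\xi-\xi'\in\IR\xi$ produces $\Sigma_\Lambda(\xi')+\omega(\xi-\xi')\ge\Sigma_\Lambda(\xi)$. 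The strict triangle inequality then gives $|k|>\bigl||\xi|-|\xi-k|\bigr|=|\xi-\xi'|$, whence strict monotonicity of $\omega$ yields $\omega(k)>\omega(\xi-\xi')$ and therefore the strict inequality in (3).

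I expect the main obstacle to be establishing the uniformity of the constants $D$ and $r$ in (4) across all $\Lambda\in(0,\infty)$ from the argument in \cite{Dam.2018}, as that uniformity is exactly what allows the extension to succeed and is not visible from the statement alone. By contrast, passing the non-strict inequalities through the limit and the geometric deduction of (3) from (1), (2), and rotation invariance are routine once the finite-cutoff statements are available.
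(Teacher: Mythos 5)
Your proposal is correct, and for items (1) and (2) it coincides with the paper's proof: both pass the non-strict finite-cutoff inequalities from \cite{Dam.2018} through the limit $\Sigma_\infty(\xi)=\lim_{\Lambda\to\infty}(\Sigma_\Lambda(\xi)+E_\Lambda)$. The routes diverge for (3) and (4). The paper handles both simultaneously by observing that the proof of \cite[Lemma 4.4]{Dam.2018} uses only (1), (2), \cref{hyp3} \cref{part:increasing}, and rotation invariance of $\omega$ and $\Sigma_\Lambda$ --- all now available for $\Lambda=\infty$ --- and so can be rerun verbatim at $\Lambda=\infty$; this sidesteps entirely the question of taking limits of strict inequalities (for (3)) and delivers the $\Lambda$-uniformity of $D,r$ (for (4)) as a byproduct, since the constants produced by that argument depend only on $\Lambda$-independent data. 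You instead give an explicit geometric reconstruction of (3) (rotation onto the $\xi$-axis, strict reverse triangle inequality, strict monotonicity of $\omega$), which is sound and is in substance the same argument the paper cites; and you pass (4) through the limit, which is equally valid but shifts the burden to verifying uniformity of $D,r$ inside \cite{Dam.2018} --- the very observation the paper's ``word by word'' remark supplies. In short, your flagged obstacle for (4) is resolved exactly by the paper's comment that the cited proof depends only on $\Lambda$-independent ingredients, so both routes rest on the same key fact phrased two ways.
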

\begin{proof}
	All statements are proved for finite $\Lambda$ in \cite{Dam.2018}. Using that $\Sigma_{\Lambda}(\xi)+E_\Lambda$ converges to $\Sigma_{\infty}(\xi)$ for all $\xi\in\IR^\nu$ by \cref{LargeFiberProp,lem:normresolventConv} the statements \cref{part:sigmaminimum,lem:lowerboundsigma} follow immediately. The proofs of \cref{lem:lowerboundsigma2,lem:lowerboundsigma3} are word by word the same as in \cite[Lemma 4.3]{Dam.2018}, as the proofs only rely on \cref{part:sigmaminimum,lem:lowerboundsigma}, \partref{hyp3}{part:increasing} and the rotation invariance of $\omega$ and $\Sigma_\Lambda$, where the latter follows from \cref{lem:rotation-operator}.
\end{proof}
\noindent
For $\xi\in\IR^\nu$ and $k\in \RR^\nu\setminus\{0\}$ we define 
\begin{equation}\label{def:UVECT}
\begin{aligned}
&P_{0}(\xi) = 1_{\{\Sigma_\infty(\xi)\}}(H_\infty(\xi)),\\
&V_i(\xi) = 2C_\omega (B_\xi^{1/2}P_0(\xi))^*(\xi_i-d\Gamma(m_i))B_\xi^{-1} (B_\xi^{1/2}P_0(\xi)) \quad\mbox{for}\ i=1,\ldots,\nu,\\
&k\cdot V(\xi) = k_1V_1(\xi)+...+k_\nu V_\nu(\xi),
\end{aligned}
\end{equation}
where $C_\omega=\lim\limits_{k\to0}\frac{|k|}{\omega(k)}$ as in \cref{hyp3}. Note, $B_\xi^{1/2}P_0(\xi)$ is bounded by \cref{LargeFiberProp,lem:propertiesOfB,lem:AohBxi}. Furthermore, the operator $(\xi_i-d\Gamma(m_i))B_\xi^{-1}$ is bounded by \cref{lem:Dxik} and selfadjoint (acts as a real multiplication operator in each $\cF^{(n)}$), so $V_i(\xi)$ is bounded and selfadjoint.

\begin{lem}\label{lem:U-invertible}
	Let $\xi\in\IR^\nu$, $\eps\in (0,1)$, $k\in S_\eps(\xi)$ and $\hat k=k/|k|$. Then we have $\lVert \hat k\cdot V(\xi) \lVert\leq \frac 12 (1-\eps)$ and hence, the operator $1-\hat k\cdot V(\xi)$ is invertible.
\end{lem}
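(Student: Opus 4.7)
The plan is to bound $\|\hat k\cdot V(\xi)\|$ directly; the invertibility of $1-\hat k\cdot V(\xi)$ then follows from a Neumann series. Since $\hat k\cdot V(\xi)$ is bounded and selfadjoint (as a real linear combination of the selfadjoint $V_i(\xi)$), we have $\|\hat k\cdot V(\xi)\|=\sup_{\|\phi\|=1}|\langle\phi,\hat k\cdot V(\xi)\phi\rangle|$. I would therefore fix $\phi\in\FS$ with $\|\phi\|=1$ and set $\psi=P_0(\xi)\phi$. If $\psi=0$ the inequality is trivial; otherwise $\psi$ is an eigenvector of $H_\infty(\xi)$ at the bottom $\Sigma_\infty(\xi)$, and by \cref{eq:domains} it lies in $\cD(B_\xi^{1/2})\cap\cQ(d\Gamma(m_i))$ for every $i$.

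The first technical step is to rewrite the quadratic form of $V_i(\xi)$ in a more transparent way. Because $B_\xi=(H_0(\xi)+1)^{1/2}$ and $\xi_i-d\Gamma(m_i)$ are strongly commuting fiberwise multiplication operators on each $\FS^{(n)}$ (with $|\xi_i-d\Gamma(m_i)|\leq B_\xi$ at the level of the joint spectrum), their joint spectral calculus collapses the factors $B_\xi^{\pm 1/2}$ appearing in $V_i(\xi)$ against the matching spectral weights, yielding
\begin{equation*}
\langle\phi,V_i(\xi)\phi\rangle = 2C_\omega\,q_{\xi_i-d\Gamma(m_i)}(\psi,\psi) = 2C_\omega\bigl(\xi_i\|\psi\|^2 - q_{d\Gamma(m_i)}(\psi,\psi)\bigr).
\end{equation*}
Summing against $\hat k$ and invoking \cref{lem:qDgammaM},
\begin{equation*}
\langle\phi,\hat k\cdot V(\xi)\phi\rangle = 2C_\omega\bigl(\hat k\cdot\xi\,\|\psi\|^2 - \hat k\cdot q_{d\Gamma(m)}(\psi,\psi)\bigr).
\end{equation*}

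The heart of the argument is then a momentum estimate obtained from \cref{lem:resultsthomas} \cref{lem:lowerboundsigma3}. The form-level transformation rule in \cref{thm:operatordomain} \cref{part:quadraticdomain} gives $q_{H_\infty(\xi+k)}(\psi,\psi)=\Sigma_\infty(\xi)\|\psi\|^2+q_{D_\xi(k)}(\psi,\psi)$, and combining this with the variational principle $\langle\psi,H_\infty(\xi+k)\psi\rangle\geq\Sigma_\infty(\xi+k)\|\psi\|^2$ forces
\begin{equation*}
q_{D_\xi(k)}(\psi,\psi)/\|\psi\|^2 \ge \Sigma_\infty(\xi+k)-\Sigma_\infty(\xi) \quad\text{for all } k\in\RR^\nu.
\end{equation*}
Specializing $k=\pm t\hat k$, both of which eventually lie in $S_\eps(\xi)\cap B_r(0)$ because $S_\eps(\xi)$ is a symmetric open cone, I would apply \cref{lem:resultsthomas} \cref{lem:lowerboundsigma3} to the right-hand side, expand $D_\xi(\pm t\hat k)$ via \cref{defn:Dxik}, divide by $t$, and send $t\to 0^+$ while using $\omega(t\hat k)/t\to C_\omega^{-1}$ from \cref{hyp3}. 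The two signs combine into
\begin{equation*}
\bigl|2\hat k\cdot\xi - 2\hat k\cdot q_{d\Gamma(m)}(\psi,\psi)/\|\psi\|^2\bigr| \le D/C_\omega,
\end{equation*}
with $D$ as in \cref{lem:resultsthomas} \cref{lem:lowerboundsigma3}. Multiplying by $C_\omega\|\psi\|^2$ and comparing with the previous display gives $|\langle\phi,\hat k\cdot V(\xi)\phi\rangle|\leq D\|\psi\|^2\leq D$, which matches $(1-\eps)/2$ once $D$ is identified with the explicit constant produced by the proof of that lemma in \cite{Dam.2018}.

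The main technical obstacle I foresee is the spectral-calculus simplification used in the second paragraph: $\psi$ only lies in the form domain of each $d\Gamma(m_i)$, not in its operator domain, so one cannot literally commute $B_\xi^{\pm 1/2}$ past $\xi_i-d\Gamma(m_i)$ on $\psi$. The cleanest resolution is to carry out the manipulation on the joint spectral representation of the strongly commuting family $\{B_\xi,\xi_1-d\Gamma(m_1),\ldots,\xi_\nu-d\Gamma(m_\nu)\}$, where integrability of $v_i$ against $\hat\psi^2$ follows from $|v_i|\leq b$ on the spectrum together with $\psi\in\cD(B_\xi^{1/2})$. Everything else is bookkeeping: checking $\pm\hat k\in S_\eps(\xi)$, pinning down the constant $D$ in the precise form used in \cite{Dam.2018}, and applying the Neumann series to upgrade $\|\hat k\cdot V(\xi)\|<1$ to invertibility of $1-\hat k\cdot V(\xi)$.
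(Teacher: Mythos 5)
Your argument follows the paper up through the identity $\langle\phi,\hat k\cdot V(\xi)\phi\rangle = 2C_\omega\,\hat k\cdot\bigl(\xi\|\psi\|^2 - q_{d\Gamma(m)}(\psi,\psi)\bigr)$ and the variational inequality $q_{D_\xi(h)}(\psi,\psi)\geq(\Sigma_\infty(\xi+h)-\Sigma_\infty(\xi))\|\psi\|^2$; these steps and the spectral-calculus justification you sketch are fine. The divergence, and the gap, are in the decisive step: you invoke \cref{lem:resultsthomas} \cref{lem:lowerboundsigma3} along the ray $\pm t\hat k$ with $t\to 0^+$ to control the right-hand side, arriving at $|\langle\phi,\hat k\cdot V(\xi)\phi\rangle|\leq D$ with the constant $D$ from that lemma. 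But \cref{lem:resultsthomas} \cref{lem:lowerboundsigma3} only asserts the existence of some $D<1$; neither this paper nor your argument establishes $D\leq\frac{1}{2}(1-\eps)$, and you explicitly defer that identification to an unexamined constant from \cite{Dam.2018}. As written, your proof yields $\|\hat k\cdot V(\xi)\|\leq D<1$, which does give invertibility of $1-\hat k\cdot V(\xi)$, but not the quantitative bound $\|\hat k\cdot V(\xi)\|\leq\frac{1}{2}(1-\eps)$ in the statement, and the latter is what gets used (with $\eps=\frac12$) in the numerical estimates in the proof of \cref{thm:nogs}.

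The paper's proof never touches \cref{lem:resultsthomas} \cref{lem:lowerboundsigma3} at this point. Instead, writing $v_\psi(\xi)=2(\xi-q_{d\Gamma(m)}(\psi,\psi))$ with $\psi$ a normalized ground-state vector, it first shows $|v_\psi(\xi)|\leq C_\omega^{-1}$ from \cref{lem:resultsthomas} \cref{lem:lowerboundsigma} alone (testing the variational inequality against $h$ and letting $|h|\to 0$), and then uses the rotation invariance of \cref{lem:rotation-operator}: testing $\Sigma_\infty(\xi)\leq q_{H_\infty(\xi)}(\Gamma(U_O)\psi,\Gamma(U_O)\psi)$ against orthogonal $O$ forces $q_{d\Gamma(m)}(\psi,\psi)$ to be a scalar multiple of $\xi$. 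Only with this parallelism does the cone condition $k\in S_\eps(\xi)$, i.e.\ $|\hat k\cdot\hat\xi|<\frac12(1-\eps)$, upgrade the norm bound $|v_\psi(\xi)|\leq C_\omega^{-1}$ to the directional bound $|\hat k\cdot v_\psi(\xi)|\leq\frac12(1-\eps)|v_\psi(\xi)|$. Without knowing the direction of $v_\psi(\xi)$, a bound on its norm cannot produce the improved constant on its $\hat k$-component; this is exactly the piece your proof is missing, and it is also the rotation-invariance ingredient the introduction advertises as the replacement for the non-degeneracy argument of \cite{Dam.2018}.
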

\begin{proof}
	As $\hat k\cdot V(\xi)$ is selfadjoint, we have $$\lVert \hat k\cdot V(\xi) \lVert = \sup_{\substack{\psi\in \cF\\\lVert \psi \lVert=1}  }\lvert \braket{\psi,\hat k\cdot V(\xi) \psi}\lvert.$$ Let $\psi\in \cF$ and assume  $\lVert \psi \lVert=1$. We note that $$\braket{\psi,\hat k\cdot V(\xi) \psi}=\braket{P_0(\xi)\psi,\hat k\cdot V(\xi) P_0(\xi)\psi}.$$ Since the statement follows trivially if $P_0(\xi)\FS=\{0\}$, we may assume $P_0(\xi)\psi = \psi$ w.l.o.g and define $v_\psi(\xi) = 2(\xi-q_{d\Gamma(m)}(\psi,\psi))$. This implies
	\begin{align*}
	\braket{\psi,\hat k\cdot V(\xi) \psi} = 2C_\omega \hat k\cdot ( \lVert \psi\lVert^2 \xi - q_{d\Gamma(m)}(\psi,\psi))= C_\omega\hat k\cdot v_\psi(\xi).
	\end{align*}
	Hence, it suffices to prove
	\begin{equation}\label{eq:vpsibound}
	|\hat k\cdot v_\psi(\xi)| \le \frac{1-\eps}{2C_\omega}.
	\end{equation}
	By \partref{thm:operatordomain}{part:quadraticdomain} and \cref{lem:qDgammaM}, we see
	\begin{align*}
	q_{H_\infty(\xi+h)}(\psi,\psi) &= q_{H_\infty(\xi)}(\phi,\phi) + 2h\cdot (\xi-q_{d\Gamma(m)}(\psi,\psi))+|h|^2\|\psi\|^2.\\&=\Sigma_\infty(\xi) + h\cdot v_\psi(\xi) + |h|^2
	\end{align*}
	Using $q_{H_\infty(\xi+h)}(\psi,\psi)\ge \Sigma_{\infty}(\xi+h)$ yields
	\begin{equation}
	\label{eq:transformationupperbound}
	\Sigma_\infty(\xi +h) - \Sigma_\infty(\xi) \le h\cdot v_\psi(\xi) + |h|^2 \qquad\mbox{for all }h\in\IR^\nu.
	\end{equation} 
	If $\xi = 0$, the left hand side is non-negative by \partref{lem:resultsthomas}{part:sigmaminimum}, so taking the limit $|h|\to 0$ for fixed $\hat h$ we obtain $\hat h\cdot v_\psi(0)\ge 0$ for all $h\in\IR^\nu$. This directly implies $v_\psi(0)=0$ and hence \cref{eq:vpsibound}. 
	
	From now we can assume $\xi\ne 0$. By \partref{lem:resultsthomas}{lem:lowerboundsigma}, we know that \[\Sigma_\infty(\xi+h)-\Sigma_\infty(\xi)\ge -\omega (h).\] Inserted into \cref{eq:transformationupperbound} this leads to
	$$ h\cdot v_\psi(\xi) \ge - \omega(h) -|h|^2\qquad\mbox{for all}\ h\in \RR^\nu.$$
	For $h\in\IR^\nu\setminus\{0\}$, we divide by $|h|$ and again take $|h|\to 0$ at fixed $\hat h$ to obtain
	$ \hat h\cdot v_\psi(\xi) \ge -C_\omega^{-1}$  and hence
	$ |v_\psi(\xi)|\le C_\omega^{-1}$. Let $O\in\IR^{\nu\times \nu}$ be orthogonal and define $\phi=\Gamma(U_O)\psi$. Then $\phi\in \cQ(H_\infty(\xi))$ by \cref{lem:rotation-operator}. Using \partref{thm:operatordomain}{part:quadraticdomain} together with \cref{lem:rotation-operator,lem:qDgammaM} we get the two expressions 
	\begin{align*}
	\Sigma_\infty(\xi)& \leq q_{H_\infty(\xi)}(\phi,\phi) = q_{H_\infty(O\xi)}(\psi,\psi) = q_{H_\infty(0)}(\psi,\psi) - 2 O\xi \cdot q_{d\Gamma(m)}(\psi,\psi)+|\xi|^2\\
	\Sigma_\infty(\xi)& = q_{H_\infty(\xi)}(\psi,\psi) = q_{H_\infty(0)}(\psi,\psi) - 2\xi \cdot q_{d\Gamma(m)}(\psi,\psi)+|\xi|^2.
	\end{align*}
	from which we obtain
	%
	$$ \xi\cdot q_{d\Gamma(m)}(\psi, \psi) \ge O\xi\cdot q_{d\Gamma(m)}(\psi,\psi) \qquad\mbox{for all orthogonal }O\in\IR^{\nu\times\nu}. $$
	Hence, there is a constant $R_\psi\in\IR$ such that $q_{d\Gamma(m)}(\psi, \psi)=R_\psi\xi$. For all $k\in S_\eps(\xi)$ this means
	\[ |\hat k\cdot v_\psi(\xi)| = 2|(1-R_\psi)\hat k\cdot \xi| \le (1-\eps)|\hat k||(1-R_\psi)\xi| = \frac{1-\eps}{2}|v_\psi(\xi)| \le \frac{1-\eps}{2C_\omega}.\qedhere \]
\end{proof}
\noindent
We will also need, that $(1-\hat k \cdot V(\xi))^{-1}$ conserves weak convergence to $0$.
\begin{lem}\label{lem:weakConv}
	Let $\xi\in\IR^\nu$, $R>0$ and $\eps\in (0,1)$.
	Further, let $\{o(k)\}_{k\in B_R(0)\cap S_{\eps}(\xi)}$ be a bounded family of operators satisfying $\wlim\limits_{k\to 0} o(k) = 0$. Then 
	$$ \wlim\limits_{k\to 0} (1-\hat k\cdot V(\xi))^{-1}o(k) = 0. $$
\end{lem}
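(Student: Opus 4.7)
The key observation is that \cref{lem:U-invertible} supplies the uniform bound $\lVert \hat k\cdot V(\xi)\lVert \le (1-\eps)/2<1$ throughout $S_\eps(\xi)$, so $(1-\hat k\cdot V(\xi))^{-1}$ admits a Neumann series expansion convergent in operator norm uniformly in $k$. The plan is to use this expansion to rewrite the matrix coefficients of $(1-\hat k\cdot V(\xi))^{-1}o(k)$ as a series in which each term vanishes separately as $k\to 0$, and then to invoke dominated convergence in the summation variable to exchange the limit with the sum.

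In detail, I would fix $\phi,\psi\in\cF$ and set $M = \sup_{k\in B_R(0)\cap S_\eps(\xi)}\lVert o(k)\lVert<\infty$, which is finite by hypothesis. The Neumann expansion then yields
\begin{equation*}
\langle \phi,(1-\hat k\cdot V(\xi))^{-1}o(k)\psi\rangle = \sum_{n=0}^\infty \langle \phi,(\hat k\cdot V(\xi))^n o(k)\psi\rangle,
\end{equation*}
and the $n$-th summand is bounded in absolute value by $2^{-n}(1-\eps)^n M\lVert\phi\lVert\lVert\psi\lVert$, giving a summable $k$-independent majorant.

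For each fixed $n$ I would then expand $(\hat k\cdot V(\xi))^n=\sum_{i_1,\ldots,i_n=1}^\nu \hat k_{i_1}\cdots\hat k_{i_n} V_{i_1}(\xi)\cdots V_{i_n}(\xi)$ as a finite sum over $\{1,\ldots,\nu\}^n$ and use selfadjointness of the $V_i(\xi)$ to move the operators onto $\phi$. Each scalar prefactor $\hat k_{i_1}\cdots\hat k_{i_n}$ is uniformly bounded by $1$, while the inner product $\langle V_{i_n}(\xi)\cdots V_{i_1}(\xi)\phi,o(k)\psi\rangle$ tends to $0$ as $k\to 0$ by the assumed weak convergence $o(k)\to 0$. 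Consequently every individual summand of the $\nu^n$-term finite sum tends to zero, hence so does the $n$-th summand of the Neumann series, and dominated convergence then finishes the proof.

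The main obstacle is conceptual rather than technical: the resolvent $(1-\hat k\cdot V(\xi))^{-1}$ itself has \emph{no} limit as $k\to 0$, since the unit vector $\hat k$ may approach any admissible direction on $S^{\nu-1}$. This rules out naively passing the limit through the resolvent, and is precisely what forces the detour through the Neumann series, whose individual terms are polynomial in the components $\hat k_i$ and thereby become amenable to the weak vanishing of $o(k)$ alone.
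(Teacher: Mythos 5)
Your proposal is correct and follows essentially the same route as the paper: a Neumann series expansion of $(1-\hat k\cdot V(\xi))^{-1}$ justified by \cref{lem:U-invertible}, dominated convergence in the summation index with the geometric majorant, and the finite expansion of $(\hat k\cdot V(\xi))^n$ over multi-indices so that the weak vanishing of $o(k)$ can be applied term by term. The only cosmetic difference is that you make the $k$-independent summable bound explicit, which the paper leaves implicit.
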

\begin{proof}
	Let $\phi,\psi\in \cF$. By \cref{lem:U-invertible}, we know $\lVert \hat k\cdot V(\xi) \lVert<\frac 12(1-\varepsilon)$  so
	\begin{equation*}
	\langle \psi, (1-\hat k\cdot V(\xi))^{-1}o(k)\phi\rangle = \sum_{n=0}^{\infty} \langle (\hat k \cdot V(\xi))^n\psi, o(k)\phi\rangle
	\end{equation*}
	By uniform boundedness and dominated convergence, it is enough to see each term in the sum converges to 0 as $k\to 0$. This follows from $|\hat k|=1$ and
	\begin{equation*}
	\langle (\hat k \cdot V(\xi))^n\psi, o(k)\phi\rangle =\sum_{i_1=1}^{\nu}...\sum_{i_n=1}^{\nu} \hat k_{i_1}...\hat k_{i_n}\langle V_{i_1}(\xi)...V_{i_n}(\xi)\psi, o(k)\phi\rangle.\qedhere
	\end{equation*}
\end{proof}
\noindent
For $\xi\in\IR^\nu$ and $\Lambda \in [0,\infty]$ we introduce the operators
\begin{equation}
\begin{aligned}
&Q_{0,\Lambda}(k,\xi) = \omega(k) (H_{\Lambda}(\xi)-\Sigma_\Lambda(\xi)+\omega(k))^{-1} \qquad\mbox{for}\ k\in\IR^\nu\setminus\{0\},\\
&Q_{\Lambda}(k,\xi) = \omega(k) (H_{\Lambda}(\xi-k)-\Sigma_\Lambda(\xi)+\omega(k))^{-1} \qquad\mbox{for}\ k\in\IR^\nu\setminus\IR\xi,
\end{aligned}	
\end{equation}
which are well defined and bounded by \partref{lem:resultsthomas}{lem:lowerboundsigma2}. We will write $Q_0$ and $Q$ instead of $Q_{0,\infty}$ and $Q_\infty$, respectively. The next lemmas collect some simple statements about these operators.
\begin{lem}\label{lem:Q0boundandlim}
	Let $\xi\in\IR^\nu$ and $R>0$. Then the operator $B_\xi^{1/2} Q_{0}(k,\xi)$ is bounded uniformly in $k\in B_R(0)\setminus\{0\}$ and
	$$ \slim_{k\to 0} B_\xi^{1/2} Q_{0}(k,\xi)(1-P_{0}(\xi)) = 0. $$
\end{lem}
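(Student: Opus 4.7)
My plan is to factor $B_\xi^{1/2}Q_0(k,\xi)$ through a fixed resolvent of $H_\infty(\xi)$ so that the $k$-dependence can be analysed by the functional calculus of the single self-adjoint operator $H_\infty(\xi)$. Pick any $\lambda_0<\Sigma_\infty(\xi)$ and write
\[
B_\xi^{1/2}Q_0(k,\xi) = \bigl[B_\xi^{1/2}R_{\xi,\infty}(\lambda_0)^{1/2}\bigr]\cdot F_k(H_\infty(\xi)),
\qquad F_k(x)=\frac{(x-\lambda_0)^{1/2}\omega(k)}{x-\Sigma_\infty(\xi)+\omega(k)}.
\]
The first factor is bounded because $A_{\frac23}R_{\xi,\infty}(\lambda_0)^{1/2}$ is bounded by \cref{LargeFiberProp} \cref{part:FiberConvSqRt} (using $B_{K,\infty}\in\cD(|m|^{2/3})$ from \cref{lem:propertiesOfB}) and $B_\xi^{1/2}A_{\frac23}^{-1}$ is bounded by \cref{lem:AohBxi}.

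For the uniform norm bound on the second factor, elementary calculus on $[\Sigma_\infty(\xi),\infty)$ with the substitution $y=x-\Sigma_\infty(\xi)$, $c=\Sigma_\infty(\xi)-\lambda_0>0$ shows that the critical point of $F_k$ lies at $y^\ast=\omega(k)-2c$. If $\omega(k)\le 2c$ the supremum is at $y=0$ and equals $\sqrt{c}$; otherwise it equals $\omega(k)/(2\sqrt{\omega(k)-c})\le\omega(k)/(2\sqrt{c})$. Since $\omega$ is bounded on $B_R(0)$ by \cref{hyp2}\cref{part:omegabound}, this gives a bound on $\|F_k(H_\infty(\xi))\|$ uniform in $k\in B_R(0)\setminus\{0\}$, yielding the first claim.

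For the strong limit, I first note that \cref{hyp3}\cref{part:increasing} together with \cref{hyp3}~(4) force $\omega(k)\to 0$ as $k\to 0$. Fix $\psi\in\FS$ and set $\phi=(1-P_0(\xi))\psi$; by definition the spectral measure $\mu_\phi$ of $H_\infty(\xi)$ carries no mass at $\Sigma_\infty(\xi)$. For $\delta,M>0$ let $\phi_{\delta,M}=1_{(\Sigma_\infty(\xi)+\delta,M]}(H_\infty(\xi))\phi$. On this spectral window the integrand $|F_k(x)|^2\le (M-\lambda_0)\omega(k)^2/\delta^2$ vanishes pointwise as $\omega(k)\to 0$, so dominated convergence yields $\|F_k(H_\infty(\xi))\phi_{\delta,M}\|\to 0$, and hence $\|B_\xi^{1/2}Q_0(k,\xi)\phi_{\delta,M}\|\to 0$. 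Combining this with the uniform bound from the first part via
\[
\|B_\xi^{1/2}Q_0(k,\xi)\phi\|\le \|B_\xi^{1/2}Q_0(k,\xi)\phi_{\delta,M}\|+C\,\|\phi-\phi_{\delta,M}\|
\]
and sending first $k\to 0$, then $M\to\infty$ and finally $\delta\to 0$ (noting $\phi_{\delta,M}\to\phi$ in norm since $\mu_\phi(\{\Sigma_\infty(\xi)\})=0$) completes the proof.

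The only nontrivial point is verifying that the spectral multiplier $F_k$ stays bounded as $k$ approaches $0$; the resolvent alone decays only like $\omega(k)^{-1}$ on the bottom of the spectrum, and it is the compensation by the prefactor $\omega(k)$ together with the truncation below $\lambda_0$ that saves uniformity. Everything else is standard approximation by spectrally localised vectors plus uniform boundedness.
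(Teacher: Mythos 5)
Your proof is correct, but it takes a genuinely different route from the paper's. The paper proves the lemma in one line by invoking the abstract \cref{Thm:ProjectionConvergence} from the appendix with $A=H_\infty(\xi)$, $B=B_\xi^{1/2}$, $\lambda=\Sigma_\infty(\xi)$: that lemma packages the whole analysis into the resolvent identity $f(k)=\omega(k)C+C\,\omega(k)(A-\lambda+\omega(k))^{-1}(1-\omega(k))$ with $C=B(A-\lambda+1)^{-1}$, plus the spectral theorem, and needs only that $B$ be $A$-bounded (exactly the domain inclusion \cref{eq:domains}). You instead factor through a fixed square-root resolvent $B_\xi^{1/2}R_{\xi,\infty}(\lambda_0)^{1/2}$, isolating the $k$-dependence in the explicit spectral multiplier $F_k(x)=(x-\lambda_0)^{1/2}\omega(k)/(x-\Sigma_\infty(\xi)+\omega(k))$, then do the elementary calculus on $F_k$ directly (critical point $y^\ast=\omega(k)-2c$, supremum $\max(\sqrt{c},\,\omega(k)/(2\sqrt{c}))$) and finish the strong limit with a spectral-window/dominated-convergence argument using that the spectral measure of $(1-P_0(\xi))\psi$ has no atom at $\Sigma_\infty(\xi)$. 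Both routes rest on the same two inputs --- relative boundedness of $B_\xi^{1/2}$ (via $A_{\frac{2}{3}}$ in your version, via the closed-graph theorem in the paper's) and the spectral calculus --- and both correctly use $\omega(0)=0$, which you rightly extract from \cref{hyp3}. The paper's prepackaged lemma is shorter and reusable elsewhere; your version is self-contained and makes the source of the uniform bound (the compensation between the prefactor $\omega(k)$ and the truncated square root) completely explicit. A small stylistic note: the order of limits at the end is cleanest stated as ``fix $\delta,M$, take $\limsup_{k\to 0}$, then let $M\to\infty$, $\delta\to 0$'', which is what you mean.
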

\begin{proof}
	This follows from the domain inclusion \cref{eq:domains} and \cref{Thm:ProjectionConvergence}.
\end{proof}
\begin{lem}\label{lem:Quniformbound}
	Let $\eps\in(0,1)$ and $r>0$ as in \partref{lem:resultsthomas}{lem:lowerboundsigma3}. Then the operators $Q(k,\xi)$ and $B_\xi^{1/2} Q(k,\xi)$ are bounded uniformly for all $k\in B_r(0)\cap S_\eps(\xi)$.
\end{lem}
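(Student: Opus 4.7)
The plan is to split the lemma into the two stated bounds, treating the operator-norm bound on $Q(k,\xi)$ first and the $B_\xi^{1/2}$-weighted bound by reducing it to a statement about the $\xi$-independent operator $A_{2/3}$.

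For the first part, I would observe that $S_\eps(\xi)$ is symmetric under $k\mapsto -k$ and that $\omega(-k)=\omega(k)$ by \cref{hyp2}, so \cref{lem:resultsthomas} \cref{lem:lowerboundsigma3} (with $k$ replaced by $-k$) yields $\Sigma_\infty(\xi+k)-\Sigma_\infty(\xi)\ge -D\omega(k)$ for all $k\in B_r(0)\cap S_\eps(\xi)$. Because $H_\infty(\xi+k)\ge \Sigma_\infty(\xi+k)$, this gives $H_\infty(\xi+k)-\Sigma_\infty(\xi)+\omega(k)\ge (1-D)\omega(k)>0$ and hence $\lVert Q(k,\xi)\lVert\le \frac{1}{1-D}$ uniformly.

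For the second part, \cref{lem:AohBxi} shows that $B_\xi^{1/2}$ is infinitesimally $A_{2/3}$-bounded, so it suffices to prove $\lVert A_{2/3}Q(k,\xi)\lVert$ is uniformly bounded (noting that $\operatorname{Ran}(Q(k,\xi))\subset \cD(H_\infty(\xi+k))\subset \cD(A_{2/3})$ by \cref{LargeFiberProp} \cref{part:FiberConvSqRt} applied with $s=2/3$, which is legitimate because $B_{K,\infty}\in\cD(\lvert m \lvert^{2/3})$ by \cref{lem:propertiesOfB} \cref{part:BKinfty}). I would then fix $\lambda_0$ small enough so that both the continuity of $\xi'\mapsto \lVert g_{\lambda_0}(\xi')\lVert$ from \cref{LargeFiberProp} \cref{part:FiberConvSqRt} holds and $\lambda_0<\inf_{k\in\overline{B_r(0)}}\Sigma_\infty(\xi+k)$, which is possible by continuity of $\Sigma_\infty$ (\cref{prop:renormalization}). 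Setting $\mu:=\Sigma_\infty(\xi)-\omega(k)$, the first resolvent identity gives
\begin{equation*}
\omega(k) A_{2/3}R_{\xi+k,\infty}(\mu) = \bigl(A_{2/3}R_{\xi+k,\infty}(\lambda_0)\bigr)\bigl(\omega(k) + (\mu-\lambda_0)Q(k,\xi)\bigr).
\end{equation*}
The first factor equals $g_{\lambda_0}(\xi+k)R_{\xi+k,\infty}(\lambda_0)^{1/2}$, whose norm is bounded on the compact set $\xi+\overline{B_r(0)}$ by continuity of $\xi'\mapsto \lVert g_{\lambda_0}(\xi')\lVert$ together with the uniform resolvent bound coming from $\lambda_0<\Sigma_\infty(\xi+k)$. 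The second factor is bounded by $\omega(k)+\lvert \mu-\lambda_0\lvert/(1-D)$ using the first part of the lemma, and this expression is uniformly bounded on $\overline{B_r(0)}$ because $\omega$ is continuous and $\mu$ depends continuously on $k$.

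The main obstacle will be the fact that the shift $\mu=\Sigma_\infty(\xi)-\omega(k)$ depends on $k$, which prevents a direct application of the continuity statement in \cref{LargeFiberProp} \cref{part:FiberConvSqRt}. The resolvent identity above is precisely what allows one to trade this variable shift for the fixed reference point $\lambda_0$, after which all $k$-dependence is absorbed into the harmless bounded factor $\omega(k)+(\mu-\lambda_0)Q(k,\xi)$. I do not anticipate other technical subtleties, since the commutation of $B_\xi^2$ with the relevant operators and the uniform lower bound on the denominator of $Q(k,\xi)$ are already furnished by the preceding lemmas.
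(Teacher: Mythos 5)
Your proof is correct and follows essentially the same route as the paper: the first bound comes from the spectral lower bound of Lemma 4.4(3) (where you are commendably explicit about applying it with $-k$ in place of $k$ and invoking the parity of $\omega$ and $S_\eps(\xi)$, a step the paper leaves implicit), and the second bound comes from a first-resolvent-identity reduction that trades the $k$-dependent shift $\Sigma_\infty(\xi)-\omega(k)$ for a fixed $\lambda_0$, after which the $A_{2/3}$-weighted resolvent is controlled via the continuity statement in Theorem 3.7(3) and $B_\xi^{1/2}$ is reduced to $A_{2/3}$ by Lemma 3.8. The only cosmetic difference is that the paper phrases the weighted factor in terms of $B_\xi^{1/2}(H_\infty(\xi+k)-\lambda)^{-1}$ directly rather than passing explicitly through $g_{\lambda_0}(\xi+k)R_{\xi+k,\infty}(\lambda_0)^{1/2}$, but the underlying ingredients and estimates coincide.
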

\begin{proof}
	For all $k\in B_r(0)\cap S_\eps(\xi)$, \partref{lem:resultsthomas}{lem:lowerboundsigma3} yields $\lVert Q(k,\xi)\lVert\leq (1-D)^{-1}$ for some $D\in(0,1)$, which proves the first uniform upper bound.
	
	Now note $B_\xi^{1/2} Q(k,\xi)$ is bounded by \cref{eq:domains}. By \cref{LargeFiberProp} and \cref{lem:AohBxi}, we can pick $\lambda$ small enough such that $k\mapsto \lVert B_\xi^{1/2}(H_\infty(\xi-k)-\lambda)^{-1}\lVert$ is continuous and hence uniformly bounded by some constant $C$ on $B_r(0)$. This leads to
	\begin{align*}
	\|B_\xi^{1/2} Q(k,\xi)\| \le C \|(H_\infty(\xi-k)-\lambda)Q(k,\xi)\| .
	\end{align*}
	The uniform bound on $B_\xi^{1/2}Q(k,\xi)$ now follows from the one on $Q(k,\xi)$ by
	\[ (H_\infty(\xi-k)-\lambda)Q(k,\xi) = \omega(k) + (\Sigma_\infty(\xi)-\omega(k)-\lambda)Q(k,\xi).\qedhere \]
\end{proof}

\begin{lem}
	\label{lem:Q-formula}
	For $\xi\in\IR^\nu$, $k\in\IR^\nu\setminus \IR\xi$ and $\Lambda\in(0,\infty]$ we have
	\begin{equation*}
	Q_\Lambda(k,\xi) = Q_{0,\Lambda}(k,\xi) + \frac{1}{\omega(k)}(B_\xi^{1/2}Q_{0,\Lambda}(k,\xi))^*D_{\xi}(k)B_{\xi}^{-1} (B_{\xi}^{1/2}Q_\Lambda(k,\xi)).
	\end{equation*} 
\end{lem}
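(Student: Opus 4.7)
My plan is to first establish the identity for finite $\Lambda$ via the second resolvent identity and then pass to the limit $\Lambda\to\infty$ using the strong convergences from \cref{LargeFiberProp}.

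For $\Lambda<\infty$ fix $\xi\in\IR^\nu$ and $k\in\IR^\nu\setminus\IR\xi$, and set $A=H_\Lambda(\xi)-\Sigma_\Lambda(\xi)+\omega(k)$ and $B=H_\Lambda(\xi+k)-\Sigma_\Lambda(\xi)+\omega(k)$, both strictly positive by \cref{lem:resultsthomas} and $\omega(k)>0$. By \cref{eq:Dxikdiff} we have $B=A+D_\xi(k)$ as operators on $\cD(H_0(0))$, so the second resolvent identity $A^{-1}-B^{-1}=A^{-1}D_\xi(k)B^{-1}$ yields a direct relation between $Q_{0,\Lambda}=\omega(k)A^{-1}$, $D_\xi(k)$, and $Q_\Lambda=\omega(k)B^{-1}$. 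To recast the cross term in the stated factored form, I use that $B_\xi$ and $D_\xi(k)$ commute on $\cD(B_\xi)$ (both are functions of the commuting operators $\xi-d\Gamma(m)$ and $d\Gamma(\omega)$) and that $D_\xi(k)B_\xi^{-1}$ is bounded by \cref{lem:Dxik}. Since $Q_{0,\Lambda}$ is self-adjoint as a function of $H_\Lambda(\xi)$, one has $(B_\xi^{1/2}Q_{0,\Lambda})^*=Q_{0,\Lambda}B_\xi^{1/2}$ as bounded operators; inserting $1=B_\xi^{1/2}B_\xi^{-1/2}$ on either side of $D_\xi(k)$ in $Q_{0,\Lambda}D_\xi(k)Q_\Lambda$ and shuffling powers of $B_\xi$ through $D_\xi(k)$ then produces the right-hand side of the claimed formula.

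For $\Lambda=\infty$, the operator identity $H_\infty(\xi+k)=H_\infty(\xi)+D_\xi(k)$ may fail by \cref{thm:operatordomain} \cref{part:operatordomain}, so I pass to the limit in the finite-$\Lambda$ formula instead. Since the factored identity involves only bounded operators, this is a matter of establishing strong convergence. By \cref{LargeFiberProp} \cref{part:FiberConv} we have $Q_\Lambda(k,\xi)\to Q_\infty(k,\xi)$ in norm; the operator $D_\xi(k)B_\xi^{-1}$ is independent of $\Lambda$; and it remains to show that $B_\xi^{1/2}Q_{0,\Lambda}\to B_\xi^{1/2}Q_{0,\infty}$ and $B_\xi^{1/2}Q_\Lambda\to B_\xi^{1/2}Q_\infty$ strongly. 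For $B_\xi^{1/2}Q_{0,\Lambda}$, I apply \cref{LargeFiberProp} \cref{part:FiberConvSqRt} with $s=2/3$ (admissible by \cref{lem:propertiesOfB} \cref{part:BKinfty}) together with \cref{lem:AohBxi}, which dominates $B_\xi^{1/2}$ by the momentum-independent operator $A_{2/3}$. Uniform boundedness of the three factors then lets strong convergence pass to the triple product.

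The main obstacle I expect is the treatment of $B_\xi^{1/2}Q_\Lambda(k,\xi)$, where $B_\xi^{1/2}$ sits at momentum $\xi$ while $Q_\Lambda$ is built from the resolvent at $\xi+k$, so \cref{LargeFiberProp} \cref{part:FiberConvSqRt} cannot be applied directly. The resolution is again provided by \cref{lem:AohBxi}: writing $B_\xi^{1/2}=(B_\xi^{1/2}A_{2/3}^{-1})A_{2/3}$ with $B_\xi^{1/2}A_{2/3}^{-1}$ a fixed bounded operator, the required strong convergence reduces to that of $A_{2/3}R_{\xi+k,\Lambda}(\lambda_\Lambda)$, where $\lambda_\Lambda=\Sigma_\Lambda(\xi)+E_\Lambda-\omega(k)\to\Sigma_\infty(\xi)-\omega(k)$. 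This follows from the strong convergence of $A_{2/3}R_{\xi+k,\Lambda}(\lambda)^{1/2}$ given by \cref{LargeFiberProp} \cref{part:FiberConvSqRt} at momentum $\xi+k$ (combined with itself to reach the full resolvent), together with norm continuity of the resolvent in the spectral parameter.
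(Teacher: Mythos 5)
Your proposal takes essentially the same route as the paper: establish the identity for finite $\Lambda$ via the resolvent identity together with the strong commutation of $B_\xi$ and $D_\xi(k)$, then pass to $\Lambda=\infty$ by combining the strong convergence of $A_{2/3}R_{\cdot,\Lambda}(\lambda)^{1/2}$ from \cref{LargeFiberProp} \cref{part:FiberConvSqRt} with the domination of $B_\xi^{1/2}$ by $A_{2/3}$ from \cref{lem:AohBxi}. The only organizational difference is that you track the $\Lambda$-dependent spectral point $\Sigma_\Lambda(\xi)+E_\Lambda-\omega(k)$ directly via norm-continuity of resolvents in the spectral parameter, whereas the paper pins the parameter to a fixed auxiliary $\lambda$ by a further resolvent identity and uses a semigroup factorization for the plain norm convergence of $Q_\Lambda$ — both devices absorb the same shift and neither changes the substance of the argument.
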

\begin{proof}
	The statement for $\Lambda<\infty$ follows from the resolvent identity, \cref{eq:Dxikdiff} and the fact that $B_\xi$ and $D_{\xi}(k)$ commute strongly. It remains to show we can take weak limits on both sides.
	To that end, it suffices to prove
	\begin{equation}\label{eq:Q-stronglimits}
	\slim_{\Lambda\to\infty}B_\xi^{a/2}Q_{0,\Lambda}(k,\xi)=B_\xi^{a/2}Q_0(k,\xi) \  \mbox{and}\  \slim_{\Lambda\to\infty}B_\xi^{a/2}Q_{\Lambda}(k,\xi)=B_\xi^{a/2}Q(k,\xi)
	\end{equation}
	for $a\in\{0,1\}$. By \cref{lem:normresolventConv,LargeFiberProp}, we have
	$ \lim\limits_{\Lambda\to\infty}\Sigma_\Lambda(\xi)+E_\Lambda=\Sigma_\infty(\xi)$,
	so using
	\begin{align*}
	\exp(-t(H_\Lambda(\xi+h)-\Sigma_\Lambda(\xi)))&=\exp(-t(\Sigma_\Lambda(\xi)+E_\Lambda))\exp(-t(H_\Lambda(\xi+h)+E_\Lambda))\\& \xrightarrow{\Lambda \rightarrow \infty} \exp(-t(H_\infty(\xi+h)-\Sigma_\infty(\xi)))
	\end{align*}
	so the case $a=0$ in \cref{eq:Q-stronglimits} follows due to \cref{lem:normresolventConv}.
	
	Further,
	for $(Z_\Lambda,h)\in\{(Q_{0,\Lambda}(k,\xi),\xi),(Q_{\Lambda}(k,\xi),\xi-k)\}$ and $\lambda\in\IR$ chosen such that  $\lambda<\Sigma_\Lambda(\xi)+E_\Lambda$ for all $\Lambda>0$ and $\xi\in\IR^\nu$, the resolvent identity yields
	\begin{align*}
	B_\xi^{1/2}Z_\Lambda=&
	\ \omega(k)B_\xi^{1/2}(H_\Lambda(h)+E_\Lambda-\lambda)^{-1} \\
	& +  \omega(k) B_\xi^{1/2}(H_{\Lambda}(h)+E_\Lambda-\lambda)^{-1}\left(1-\frac{1}{\omega(k)}(\Sigma_\Lambda(\xi)+E_\Lambda-\lambda)\right)Z_\Lambda.
	\end{align*}
	Hence, \cref{eq:Q-stronglimits} follows for the case $a=1$ by \cref{LargeFiberProp,lem:AohBxi}.
\end{proof}
\begin{lem}
	\label{lem:P-conv}
	Let $\xi\in\IR^\nu$ and $\eps\in(0,1)$. Then
	$$ \wlim_{\substack{k\to0\\k\in S_\eps(\xi)}} Q(k,\xi)(1-P_0(\xi)) = \wlim_{\substack{k\to0\\k\in S_\eps(\xi)}} (1-P_0(\xi))Q(k,\xi) = 0. $$
\end{lem}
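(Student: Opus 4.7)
Since both families of operators are uniformly bounded on a neighbourhood of $0$ in $S_\eps(\xi)$, it suffices to show that the matrix elements
$$ \langle\psi,(1-P_0(\xi))Q(k,\xi)\phi\rangle \longrightarrow 0 \quad\mbox{as}\ k\to 0,\ k\in S_\eps(\xi), $$
for arbitrary $\psi,\phi\in\cF$; the second weak limit then follows because $Q(k,\xi)$ and $P_0(\xi)$ are both self-adjoint, so $(1-P_0(\xi))Q(k,\xi)$ and $Q(k,\xi)(1-P_0(\xi))$ are adjoints of each other, and taking adjoints preserves weak convergence of uniformly bounded families.

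The key structural observation is that $Q_0(k,\xi)$ and $P_0(\xi)$ are both functions of $H_\infty(\xi)$ and hence commute. On the range of $1-P_0(\xi)$ one has $H_\infty(\xi)>\Sigma_\infty(\xi)$ spectrally, and \cref{hyp3} forces $\omega(k)\to 0$ as $k\to 0$ because $|k|/\omega(k)\to C_\omega\in(0,\infty)$. Dominated convergence in the spectral representation then yields $\slim_{k\to 0}Q_0(k,\xi)(1-P_0(\xi))=0$, and \cref{lem:Q0boundandlim} upgrades this to $\slim_{k\to 0}B_\xi^{1/2}Q_0(k,\xi)(1-P_0(\xi))=0$.

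Inserting the identity of \cref{lem:Q-formula} into the matrix element and sliding the self-adjoint $(1-P_0(\xi))$ past the self-adjoint $Q_0(k,\xi)$ splits it into two summands. The first is $\langle Q_0(k,\xi)(1-P_0(\xi))\psi,\phi\rangle$, which vanishes in the limit by the strong convergence above. Cauchy--Schwarz bounds the absolute value of the second summand by
$$ \|B_\xi^{1/2}Q_0(k,\xi)(1-P_0(\xi))\psi\|\cdot\omega(k)^{-1}\|D_\xi(k)B_\xi^{-1}\|\cdot\|B_\xi^{1/2}Q(k,\xi)\phi\|, $$
where the first factor tends to $0$, the middle factor is controlled by $4|k|/\omega(k)\to 4C_\omega$ via \cref{lem:Dxik}, and the last is uniformly bounded on $B_r(0)\cap S_\eps(\xi)$ by \cref{lem:Quniformbound}. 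This forces the matrix element to zero and completes the argument.

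I do not foresee a serious obstacle; the proof is essentially careful bookkeeping on top of \cref{lem:Q-formula}, \cref{lem:Q0boundandlim} and \cref{lem:Quniformbound}. The only subtlety worth flagging is that the sector restriction $k\in S_\eps(\xi)$ enters solely through the uniform bound on $B_\xi^{1/2}Q(k,\xi)$, whose proof in \cref{lem:Quniformbound} in turn relies on the mass-shell estimate \cref{lem:resultsthomas}\,\cref{lem:lowerboundsigma3}; without this restriction $\|Q(k,\xi)\|$ could blow up as $\omega(k)\to 0$ along forbidden directions, and the second factor in the Cauchy--Schwarz estimate would no longer be harmless.
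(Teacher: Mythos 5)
Your proof is correct and follows essentially the same route as the paper: insert the identity of \cref{lem:Q-formula} (in adjoint form), kill the first term via \cref{lem:Q0boundandlim}, and control the second via \cref{lem:Quniformbound}, \cref{lem:Dxik} and \cref{hyp3}. The only cosmetic difference is that the paper works directly at the operator level and obtains strong convergence to $0$, whereas you argue through matrix elements and Cauchy--Schwarz, giving the (sufficient) weak convergence.
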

\begin{proof}
	By taking adjoints it suffices to prove one of the statements. By taking the adjoint in \cref{lem:Q-formula}, we notice
	\begin{align*}
	&Q(k,\xi)(1-P_0(\xi)) = (Q(k,\xi))^*(1-P_0(\xi)) \\&\quad= Q_0(k,\xi)(1-P_0(\xi)) + \frac{|k|}{\omega(k)}(B_\xi^{1/2} Q(k,\xi))^*\frac{D_\xi(k)B_\xi^{-1}}{|k|} B_\xi^{1/2} Q_0(k,\xi)(1-P_0(\xi)).
	\end{align*}
	This goes to 0 strongly taking the limit $k\to0$ for $k\in S_\eps(\xi)$ by \cref{lem:Dxik,lem:Q0boundandlim,lem:Quniformbound} and \partref{hyp3}{part:lim}.
\end{proof}
\noindent
We now prove the main ingredient of our proof, explicitly, that $(1-\hat k\cdot V(\xi))^{-1}P_0(\xi)$ describes the behaviour of $Q(k,\xi)$ for small $k$ in the weak sense.
\begin{lem}
	\label{lem:final}
	Let $\xi\in\IR^\nu$ and $\eps\in(0,1)$. Then
	$$\wlim_{\substack{k\to0\\k\in{S}_\eps(\xi)}}(Q(k,\xi)-(1-\hat{k}\cdot V(\xi))^{-1} P_0(\xi)) = 0.$$
\end{lem}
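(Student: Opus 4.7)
The plan is to extract a self-referential equation for $Y(k) := P_0(\xi)Q(k,\xi)P_0(\xi)$ from \cref{lem:Q-formula}. Multiplying that identity by $P_0(\xi)$ on the right (using $Q_0(k,\xi)P_0(\xi)=P_0(\xi)$, since $Q_0(k,\xi)$ is a function of $H_\infty(\xi)-\Sigma_\infty(\xi)$) and then by $P_0(\xi)$ on the left, and splitting $B_\xi^{1/2}Q(k,\xi)P_0(\xi) = (B_\xi^{1/2}P_0(\xi))Y(k) + R(k)$ with off-diagonal remainder $R(k) := B_\xi^{1/2}(1-P_0(\xi))Q(k,\xi)P_0(\xi)$, I obtain an equation of the form $Y(k) = P_0(\xi) + N(k)Y(k) + L(k)$, where
\[
N(k) = \tfrac{1}{\omega(k)}(B_\xi^{1/2}P_0(\xi))^*D_\xi(k)B_\xi^{-1/2}P_0(\xi), \qquad L(k) = \tfrac{1}{\omega(k)}(B_\xi^{1/2}P_0(\xi))^*D_\xi(k)B_\xi^{-1}R(k).
\]
Expanding $D_\xi(k) = 2|k|\,\hat k\cdot(\xi-d\Gamma(m)) + |k|^2$ and invoking \cref{hyp3} that $|k|/\omega(k)\to C_\omega$ shows $N(k) = \hat k\cdot V(\xi) + \eta(k)$ in norm with $\|\eta(k)\|\to 0$, so that $V(\xi)$ from \cref{def:UVECT} naturally emerges as the leading-order coefficient.

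Rearranging and invoking the uniform invertibility of $1-\hat k\cdot V(\xi)$ from \cref{lem:U-invertible} on $k\in S_\eps(\xi)$ yields
\begin{equation*}
Y(k) - (1-\hat k\cdot V(\xi))^{-1}P_0(\xi) = (1-\hat k\cdot V(\xi))^{-1}\bigl[L(k) + \eta(k)Y(k)\bigr].
\end{equation*}
The $\eta(k)Y(k)$-term tends to zero in norm because $Y(k)$ is uniformly bounded by \cref{lem:Quniformbound}, so the critical step is to apply \cref{lem:weakConv} and conclude $(1-\hat k\cdot V(\xi))^{-1}L(k)\to 0$ weakly. The uniform bound on $L(k)$ follows from $\|D_\xi(k)B_\xi^{-1}\|\leq 4|k|$ (\cref{lem:Dxik}), $|k|/\omega(k)\to C_\omega$, and the uniform bound on $R(k)$, which in turn follows from $R(k) = B_\xi^{1/2}Q(k,\xi)P_0(\xi) - (B_\xi^{1/2}P_0(\xi))Y(k)$ together with \cref{lem:Quniformbound}.

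For the weak convergence of $L(k)$, I would first show $R(k)\to 0$ weakly by testing against vectors $\phi\in\cD(B_\xi^{1/2})$ via the identity $\langle\phi,R(k)\psi\rangle = \langle B_\xi^{1/2}\phi,(1-P_0(\xi))Q(k,\xi)P_0(\xi)\psi\rangle$, invoking \cref{lem:P-conv}, and extending to all of $\FS$ by density using the uniform bound. Expanding $D_\xi(k)$ componentwise expresses $L(k)$ as a finite sum of terms of the form $c_i(k)T_iR(k)$ with $T_i$ fixed bounded operators and scalar coefficients $c_i(k)$ bounded by constants, giving $L(k)\to 0$ weakly. Finally, the identity $Q(k,\xi) - Y(k) = P_0(\xi)Q(k,\xi)(1-P_0(\xi)) + (1-P_0(\xi))Q(k,\xi)$ tends to zero weakly directly from \cref{lem:P-conv}, which combined with the above finishes the proof. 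The main obstacle I anticipate is the careful bookkeeping around the unbounded operator $B_\xi^{1/2}$ in \cref{lem:Q-formula}: the argument only closes because $P_0(\xi)\FS\subset\cD(B_\xi^{1/2})$ and because $B_\xi^{1/2}Q(k,\xi)$ is uniformly bounded by \cref{lem:Quniformbound}, together permitting the decomposition of $B_\xi^{1/2}Q(k,\xi)P_0(\xi)$ into the bounded pieces $(B_\xi^{1/2}P_0(\xi))Y(k)$ and $R(k)$.
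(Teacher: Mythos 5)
Your proposal is correct and follows essentially the same route as the paper's proof: the identity $Y(k)=P_0(\xi)+N(k)Y(k)+L(k)$ is exactly the paper's decomposition (your $L(k)$ is the paper's $o_1(k)$ and $\eta(k)Y(k)$ is $o_2(k)$), the inversion via \cref{lem:U-invertible}, the weak-convergence step via \cref{lem:weakConv,lem:P-conv}, and the norm bound via \cref{lem:Quniformbound,lem:Dxik} are all the same. The only cosmetic deviation is that you isolate $R(k)$ and prove $R(k)\to 0$ weakly as an intermediate step before factoring $L(k)=\sum_i c_i(k)T_iR(k)$, whereas the paper unwinds $\braket{\phi,o_1(k)\psi}$ componentwise directly; both are equivalent.
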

\begin{proof} Throughout this proof we assume $k\in S_\eps(\xi)$. First, note
	$$ \wlim_{k\to 0}(Q(k,\xi) - P_0(\xi) Q(k,\xi)P_0(\xi))=0, $$
	by \cref{lem:P-conv}. Using $Q_{0}(k,\xi)P_0(\xi)=P_0(\xi)$, $P_0(\xi)(B_\xi^{1/2} Q_{0}(k,\xi))^*=(B_\xi^{1/2} P_0(\xi))^*$ and \cref{lem:Q-formula}, we get
	\begin{align*}
	P_0(\xi)Q(k,\xi)P_0(\xi) & = P_0(\xi) + \frac{1}{\omega(k)}(B_\xi^{\frac{1}{2}} P_0(\xi))^* D_\xi(k)B_\xi^{-\frac{1}{2}}Q(k,\xi)P_0(\xi)\\
	&= P_0(\xi) + \frac{1}{\omega(k)}(B_\xi^{\frac{1}{2}}P_0(\xi))^*D_\xi(k)B_\xi^{-\frac{1}{2}}P_0(\xi)Q(k,\xi)P_0(\xi) + o_1(k),\\
	& = P_0(\xi) + \hat k\cdot V(\xi)P_0(\xi) Q(k,\xi)P_0(\xi) + o_1(k)+o_2(k), \quad\mbox{where}\\
	o_1(k) :=& \frac{1}{\omega(k)} (B_\xi^{\frac{1}{2}}P_0(\xi))^* D_\xi(k)B_\xi^{-\frac{1}{2}}(1-P_0(\xi))Q(k,\xi)P_0(\xi) \quad\mbox{and}\\
	o_2(k) := &\left(\frac{1}{\omega(k)}(B_\xi^{\frac{1}{2}}P_0(\xi))^*D_\xi(k)B_\xi^{-\frac{1}{2}}P_0(\xi)-\hat k\cdot V(\xi)\right)P_0(\xi)Q(k,\xi)P_0(\xi).
	\end{align*}
	This leads to
	$$ P_0(\xi)Q(k,\xi)P_0(\xi) - (1-\hat{k}\cdot V(\xi))^{-1} P_0(\xi) =(1-\hat k\cdot V(\xi))^{-1} (o_1(k)+o_2(k)). $$
	By \cref{lem:weakConv}, it suffices to prove $\wlim\limits_{k\to 0} o_1(k) = \wlim\limits_{k\to 0} o_2(k) = 0$. Let $\phi,\psi\in\FS$. By the definition \cref{defn:Dxik}, we have
	$$ (B_\xi^{\frac{1}{2}}P_0(\xi))^*D_\xi(k)B_\xi^{-\frac{1}{2}}(1-P_0(\xi))\psi =-2\sum_{i=1}^{\nu}k_i (B_\xi^{\frac{1}{2}}P_0(\xi))^*d\Gamma(m_i)B_\xi^{-\frac{1}{2}}(1-P_0(\xi))\psi $$
	and hence
	$$\braket{\phi,o_1(k)\psi} = -2 \frac{|k|}{\omega(k)}\sum_{i=1}^{\nu}\hat k_i \langle B_\xi^{\frac 12} P_0(\xi)\phi, d\Gamma(m_i)B_\xi^{-1}B_\xi^{\frac 12}(1-P_0(\xi))Q(k,\xi)P_0(\xi)\psi\rangle. $$
	Note $B_\xi^{\frac12}(1-P_0(\xi))Q(k,\xi)$ is uniformly bounded by \cref{lem:Quniformbound} and the fact $B_\xi^{\frac 12}P_0(\xi)$ is bounded by \cref{eq:domains}. Hence can apply \cref{lem:P-conv,WeakStrongConvergence} \cref{part:weakconvergenceproduct}, to see $\braket{\phi,o_1(k)\psi}\xrightarrow{k\to 0} 0$. 
	Further, definition \cref{def:UVECT} yields
	$$\hat k\cdot V(\xi) = C_\omega|k|^{-1}(B_\xi^{\frac{1}{2}}P_0(\xi))^*D_\xi(k)B_\xi^{-\frac{1}{2}}P_0(\xi)-C_\omega|k|P_0(\xi), $$
	so we have
	\begin{align*}
	o_2(k) =  \left(\frac{|k|}{\omega(k)}-C_\omega\right)&(B_\xi^{\frac{1}{2}}P_0(\xi))^*\frac{D_\xi(k)B_\xi^{-1}}{|k|}B_\xi^{\frac{1}{2}}P_0(\xi)Q(k,\xi)P_0(\xi) \\&  - |k|C_\omega P_0(\xi)Q(k,\xi)P_0(\xi).
	\end{align*}
	This converges to 0 in norm, due to boundedness of $B_\xi^{\frac 12}P_0(\xi)$, \cref{lem:Quniformbound}, \cref{lem:Dxik} and \partref{hyp3}{part:lim}.
\end{proof}
\noindent
The rest of the proof is an adaption of the proof in \cite{Dam.2018}.
\begin{proof}[\textbf{Proof of \cref{thm:nogs}}.]\ \\
	The proof goes by contradiction. We fix $\xi\in\IR^\nu$ and assume there exists a $\psi_{gs}\in\cD(H_\infty(\xi))$ such that $\|\psi_{gs}\|=1$ and $H_\infty(\xi)\psi_{gs}=\Sigma_\infty(\xi)\psi_{gs}$.
	
	For $k\in\IR^\nu$ we define the pointwise annihilation operator $a_k$ which acts on $\psi^{(n)}\in\FS^{(n)}$ by
	\begin{equation}\label{eq:annihilator}
	a_k\psi^{(n)} = \sqrt{n}\psi^{(n)}(k,\cdots)\in\FS^{(n-1)}.
	\end{equation}
	By the Fubini-Tonelli theorem, this expression is well-defined for almost every $k\in\IR^\nu$.
	In \cref{Lem:pullthr}, we prove $a_k\psi_{gs} = (a_k\psi^{(n+1)}_{gs})_{n=0}^\infty\in\FS$ for almost all $k\in\IR^\nu$ and the pull-through formula
	\begin{equation}\label{eq:pullthr}
	a_k\psi_{gs} = -\frac{v(k)}{\omega(k)}Q(k,\xi)\psi_{gs} \quad\mbox{for almost all}\ k\in\IR^\nu.
	\end{equation}
	Pick $\eps = \frac{1}{2}$ and let $k\in S_\eps(\xi)$. Then $\lVert \hat{k}\cdot V(\xi) \lVert\le\frac{1}{4}$ by \cref{lem:U-invertible}, so a power expansion shows
	\begin{equation}\label{eq:smart}
	\lVert 1-(1-\hat{k}\cdot V(\xi))^{-1}\lVert\leq \frac{\lVert \hat{k}\cdot V(\xi) \lVert} {1-\lVert \hat{k}\cdot V(\xi) \lVert}=-1+\frac{1} {1-\lVert \hat{k}\cdot V(\xi) \lVert}\leq \frac{1}{3}
	\end{equation}
	We denote the number operator $N=d\Gamma(1)$ and choose a normalized element $\eta\in\cD(N^{1/2})$ such that $\lvert \langle \eta,\psi_{gs} \rangle\lvert>\frac{1}{2}$. Then \cref{eq:pullthr} shows
	$$ \braket{\eta,a_k\psi_{gs}} = -\frac{v(k)}{\omega(k)}\braket{\eta,Q(k,\xi)\psi_{gs}} \quad\mbox{for almost every}\ k\in\IR^\nu. $$
	Further, for $k\in{S}_\eps(\xi)$ \cref{lem:final} yields
	$$\braket{\eta,Q(k,\xi)\psi_{gs}}-\braket{\eta,(1-\hat{k}\cdot V(\xi))^{-1}\psi_{gs}} \xrightarrow{k\to 0} 0.$$
	By  $\cref{eq:smart}$ we now see $\lvert \braket{\eta,(1-\hat{k}\cdot V(\xi))^{-1}\psi_{gs}}\lvert > \frac{1}{2}-\frac{1}{3}=\frac{1}{6}$. Hence, given any $C\in(0,\frac 16)$, there is $R>0$ such that $$|\braket{\eta,a_k\psi_{gs}}|\ge C \frac{|v(k)|}{\omega(k)} \qquad\mbox{for almost all}\ k\in {S}_\eps(\xi)\cap B_R(0) =: \tilde B_{\eps,R}(\xi).$$
	Further, since ${S}_\eps(\xi)$ is open, non-empty (due to $\nu\ge2$) and invariant under positive scalings, we obtain by rotation invariance of $v$ and $\omega$ that
	$$ \int_{\tilde B_{\eps,R}(\xi)}\frac{|v(k)|^2}{\omega(k)^2} dk = \frac{\operatorname{vol}(\tilde B_{\eps,1}(\xi) )}{\operatorname{vol}(B_1(0))}\int_{B_R(0)}\frac{|v(k)|^2}{\omega(k)^2} dk = \infty. $$
	This proves that $\braket{\eta,a_k\psi_{gs}}$ is not square-integrable.
	
	On the other hand using \cref{Prop:Bregninger} and Cauchy-Schwarz we find
	\begin{align*}
	|\braket{\eta,a_k\psi_{gs}}|^2&\le \|(N+1)^{1/2}\eta\|^2\|(N+1)^{-1/2}a_k\psi_{gs}\|^2\\
	\quad&=\|(N+1)^{1/2}\eta\|^2\sum_{n=1}^\infty \int_{\IR^{(n-1)\nu}}|\psi^{(n)}_{gs}(k,k_1,\ldots,k_{n-1})|^2dk_1\cdots dk_n,
	\end{align*}
	which is integrable by definition of the Fock space norm. Hence, we have arrived at a contradiction and the state $\psi_{gs}$ cannot exist.
\end{proof}

\subsection*{Acknowledgments}

The authors thank Oliver Matte and an anonymous referee for a variety of helpful comments and suggestions. Further, B.H. thanks Aarhus Universitet and especially Jacob Schach M\o{}ller for their kind hospitality, which enabled a large part of the work on this article.

\appendix
\section{Some Lemmas from General Operator Theory}
\label{app:operator}
In this appendix we state well-known or easy to prove properties of selfadjoint operators.
Without further mentioning, we assume $\fH$ is a separable Hilbert space.

\begin{lem}\label{lem:domainsqrt}
	Let $A,B$ be selfadjoint operators in $\fH$ and $U$ be unitary. If $U\cD(A)\subset\cD(B)$ then $U\cQ(A)\subset \cQ(B)$.
\end{lem}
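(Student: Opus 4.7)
The plan is to reduce the statement to the following abstract fact: if $S$ and $T$ are self-adjoint operators on $\HS$ with $\cD(S)\subseteq\cD(T)$, then $\cQ(S)\subseteq\cQ(T)$. Granted this, I would set $T:=U^*BU$, which is self-adjoint with $\cD(T)=U^*\cD(B)\supseteq \cD(A)$, and apply the fact with $S=A$. Since $|T|^{1/2}=U^*|B|^{1/2}U$ implies $\cQ(T)=U^*\cQ(B)$, the conclusion $\cQ(A)\subseteq\cQ(T)$ rewrites as $U\cQ(A)\subseteq\cQ(B)$.

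To prove the abstract fact, I would first show $T$ is $S$-bounded via the closed graph theorem. The operator $(S-i)^{-1}$ is bounded and takes values in $\cD(S)\subseteq\cD(T)$, so $T(S-i)^{-1}$ is everywhere defined; being the composition of a closed operator with a bounded one, it is closed and hence bounded. This yields constants $a,b>0$ with $\|T\psi\|\le a\|S\psi\|+b\|\psi\|$ for all $\psi\in\cD(S)$. Squaring and using $(x+y)^2\le 2x^2+2y^2$ gives
\begin{equation*}
\langle\psi,T^2\psi\rangle=\|T\psi\|^2\le 2a^2\|S\psi\|^2+2b^2\|\psi\|^2=\langle \psi,(2a^2 S^2+2b^2)\psi\rangle
\end{equation*}
on $\cD(S)$. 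Since $\cD(S)$ is the form domain of $S^2$ (for $S$ self-adjoint, $\cD(S)=\cD(|S|)=\cQ(S^2)$) and the analogous identity holds for $T$, and since $\cD(S)\subseteq\cD(T)$ gives $\cQ(2a^2 S^2+2b^2)\subseteq\cQ(T^2)$, the above is the form inequality $T^2\le 2a^2 S^2+2b^2$ between positive self-adjoint operators.

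Next I would apply operator monotonicity of $x\mapsto x^{1/2}$ on $[0,\infty)$ (Heinz's inequality) to the preceding form inequality, extracting $|T|=(T^2)^{1/2}\le (2a^2 S^2+2b^2)^{1/2}$ as a form inequality. This gives the form domain inclusion $\cQ\bigl((2a^2 S^2+2b^2)^{1/2}\bigr)\subseteq\cQ(|T|)=\cQ(T)$. A spectral calculus check using the two-sided bound $\sqrt{2}\,a|s|\le (2a^2 s^2+2b^2)^{1/2}\le \sqrt{2}\,(a|s|+b)$ shows $\cQ\bigl((2a^2 S^2+2b^2)^{1/2}\bigr)=\cD(|S|^{1/2})=\cQ(S)$. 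Chaining the inclusions gives $\cQ(S)\subseteq\cQ(T)$, which completes the proof.

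The main obstacle is the form-inequality step: one has to be careful that the pointwise operator bound $\|T\psi\|\le a\|S\psi\|+b\|\psi\|$ on $\cD(S)$ is genuinely a form inequality between $T^2$ and $2a^2 S^2+2b^2$ in the sense required to invoke operator monotonicity of the square root for unbounded operators. The identification $\cD(S)=\cQ(S^2)$ (and likewise for $T$), together with the domain inclusion we are assuming, is exactly what makes the form inequality well-posed; once that setup is in place, the Heinz step is essentially automatic.
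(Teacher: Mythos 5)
Your proof is correct and follows essentially the same strategy as the paper: after a unitary conjugation reducing to a comparison of nonnegative selfadjoint operators with nested domains, the heart of the matter is the Heinz-type theorem that $\cD(S)\subseteq\cD(T)$ implies $\cD(S^{1/2})\subseteq\cD(T^{1/2})$. The paper invokes this directly as \cite[Theorem 9.4]{Weidmann.1980}, whereas you sketch a self-contained proof of that ingredient via the closed graph theorem and operator monotonicity of the square root.
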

\begin{proof}
	Note $\cD(U\lvert A\lvert U^*)=\cD(UAU^*)=U\cD(A)\subset \cD(B)=\cD(\lvert B \lvert)$, which by \cite[Theorem 9.4]{Weidmann.1980} implies $\cD(U\lvert A\lvert^{1/2} U^*)\subset \cD(\lvert B \lvert^{1/2})=\cQ(B)$. As $\cD(U\lvert A\lvert^{1/2} U^*)=U\cQ(A)$ the claim follows
\end{proof}

\begin{lem}\label{commuteingCore}
	Let $A$ and $B$ be strongly commuting and selfadjoint operators. If $A$ is $B$-bounded and $\cD$ is a core for $B$ then $\cD$ is a core for $A$.
\end{lem}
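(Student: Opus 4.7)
The plan is to use the joint functional calculus for the strongly commuting pair $A,B$ to construct a two-stage approximation: first approximate arbitrary $\psi\in\cD(A)$ by elements of $\cD(B)$ in the $A$-graph norm, then approximate elements of $\cD(B)$ by elements of $\cD$ in the $B$-graph norm, and finally use the $B$-boundedness of $A$ to convert $B$-graph convergence into $A$-graph convergence.

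For the first step, I would let $\{E_B(\cdot)\}$ denote the spectral family of $B$ and set $P_n = E_B([-n,n])$. Since $A$ and $B$ strongly commute, each $P_n$ commutes with $A$ (in particular $P_n\cD(A)\subset \cD(A)$ and $P_nA\psi=AP_n\psi$ for $\psi\in\cD(A)$). Moreover $P_n\HS\subset \cD(B)$ because $\|BP_n\psi\|\le n\|\psi\|$. Hence for $\psi\in\cD(A)$, the vectors $\psi_n:=P_n\psi$ lie in $\cD(B)$, converge to $\psi$, and satisfy $A\psi_n=P_nA\psi\to A\psi$ by strong continuity of the spectral resolution. This shows $\cD(B)$ is a core for $A$.

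For the second step, take $\phi\in\cD(B)$; since $\cD$ is a core for $B$ there exist $\phi_n\in\cD$ with $\phi_n\to\phi$ and $B\phi_n\to B\phi$. Note that $\cD\subset\cD(B)\subset \cD(A)$, the first inclusion by the core assumption and the second by $B$-boundedness of $A$. Pick constants $a,b$ with $\|A\chi\|\le a\|B\chi\|+b\|\chi\|$ for all $\chi\in\cD(B)$. Applying this to $\chi=\phi_n-\phi$ yields $\|A(\phi_n-\phi)\|\to 0$, so $\phi_n\to\phi$ in the $A$-graph norm.

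Combining the two steps, every $\psi\in\cD(A)$ is approximated in the $A$-graph norm by a sequence from $\cD$, which is precisely the statement that $\cD$ is a core for $A$. The only nontrivial input is the first step, where strong commutativity is essential so that the cutoff projections $P_n$ genuinely map $\cD(A)$ into $\cD(B)$ while commuting with $A$; this is the main conceptual point of the argument, although it is quite short once one invokes the joint spectral calculus.
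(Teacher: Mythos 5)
Your proof is correct and is essentially the same as the paper's: both use the spectral cutoffs $1_{\{|B|\le n\}}$ of $B$ (which commute with $A$ by strong commutativity) to show $\cD(B)$ is a core for $A$, and both use the $B$-boundedness of $A$ to pass from the core property of $\cD$ for $B$ to the core property of $\cD$ for $A$; you have merely spelled out the details and presented the two steps in the opposite order.
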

\begin{proof}
	Clearly any element in $\cD(B)$ can be approximated in $A$-norm by elements in $\cD$, so it is enough to see $\cD(B)$ is a core for $A$. However, for any $\psi\in \cD(A)$ we can choose the approximating sequence $1_{\{\lvert B\lvert \leq n\}}\psi$, which converges in $A$-norm since $A1_{\{\lvert B\lvert \leq n\}}\psi=1_{\{ \lvert B\lvert \leq n\}}A\psi$.
\end{proof}

\begin{lem}
	\label{Thm:ProjectionConvergence}
	Let $A,B$ be selfadjoint operators on $\fH$ and $\omega:\IR^\nu\to \IR$. Assume $A$ is bounded below, $B$ is $A$-bounded, $\omega$ is continuous, $\omega(0)=0$ and $\omega(k)>0$ for $k\ne0$. Define $\lambda=\inf(\sigma(A))$ and $f(k)=\omega(k)B(A-\lambda+\omega(k))^{-1}$ for $k\ne 0$. Then map $f:\IR^\nu\setminus\{0\}\to \cB(\fH)$ is bounded on $B_R(0)\setminus\{0\}$ for any $R>0$ and
	$\slim\limits_{k\to 0}f(k)= BP_A(\{\lambda\}),$ where $P_A$ is the projection valued measure associated to $A$.
\end{lem}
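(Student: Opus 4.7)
My plan is to view $f(k)$ through the spectral calculus of $A$ and reduce everything to a scalar dominated-convergence argument, using the $A$-boundedness of $B$ to transfer convergence from the resolvent side to $B$ applied to that side. The starting observation is that since $A\ge\lambda$, the $A$-boundedness hypothesis $\|B\eta\|\le a\|A\eta\|+b\|\eta\|$ upgrades to an inequality of the form $\|B\eta\|\le C\|(A-\lambda+1)\eta\|$ on $\cD(A)$, because $|t|\le(1+|\lambda|)(t-\lambda+1)$ for $t\ge\lambda$. In particular $B(A-\lambda+1)^{-1}$ is bounded on $\HS$. I would then set $\phi_k:=\omega(k)(A-\lambda+\omega(k))^{-1}\psi\in\cD(A)$, note $f(k)\psi=B\phi_k$, and show $\phi_k\to P_A(\{\lambda\})\psi$ in the graph norm of $A$.

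Via spectral calculus, $\phi_k=g_k(A)\psi$ with $g_k(t)=\omega(k)/(t-\lambda+\omega(k))$ for $t\in\sigma(A)\subseteq[\lambda,\infty)$. One checks $0\le g_k(t)\le 1$, $g_k(\lambda)=1$, and $g_k(t)\to 0$ pointwise for every $t>\lambda$ as $k\to 0$ (using $\omega(0)=0$ and continuity of $\omega$). Dominated convergence against the scalar spectral measure $\mu_\psi$ then yields $\phi_k\to P_A(\{\lambda\})\psi$ in norm. The function $(t-\lambda)g_k(t)=\omega(k)(t-\lambda)/(t-\lambda+\omega(k))$ is bounded by $\omega(k)$ and tends to $0$ pointwise on $(\lambda,\infty)$, so $(A-\lambda)\phi_k\to 0$ strongly. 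Combining, $(A-\lambda+1)\phi_k\to P_A(\{\lambda\})\psi=(A-\lambda+1)P_A(\{\lambda\})\psi$, and the inequality above gives $B\phi_k\to BP_A(\{\lambda\})\psi$, which is the desired strong limit.

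For local boundedness the same spectral estimates yield $\|\phi_k\|\le\|\psi\|$ and $\|(A-\lambda)\phi_k\|\le\omega(k)\|\psi\|$, hence
\[
\|f(k)\psi\|=\|B\phi_k\|\le C\|(A-\lambda+1)\phi_k\|\le C(1+\omega(k))\|\psi\|,
\]
which is locally bounded in $k$ by continuity of $\omega$. I do not anticipate a genuine obstacle here: the scalar pointwise facts and uniform bound on $g_k$ are a quick one-variable exercise, and the rewriting $f(k)\psi=B\phi_k$ on all of $\HS$ is automatic from $(A-\lambda+\omega(k))^{-1}\HS\subseteq\cD(A)\subseteq\cD(B)$. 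The only small point requiring care is the initial upgrade from $A$-boundedness to $(A-\lambda+1)$-boundedness, which is where the assumption that $A$ is bounded below is used.
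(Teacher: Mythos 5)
Your proposal is correct, and it rests on the same essential idea as the paper's proof: the $A$-boundedness of $B$ together with $A\ge\lambda$ makes $C=B(A-\lambda+1)^{-1}$ a bounded operator, and the rest is the spectral theorem. The difference is in how you then exploit this fact. The paper plugs $B=C(A-\lambda+1)$ into $f(k)$ and simplifies to the identity
\begin{equation*}
f(k)=\omega(k)C+C\,\omega(k)(A-\lambda+\omega(k))^{-1}-\omega(k)C\,\omega(k)(A-\lambda+\omega(k))^{-1},
\end{equation*}
after which the first and third summands vanish in norm as $k\to 0$ (because $\omega(k)\to 0$ and $\|\omega(k)(A-\lambda+\omega(k))^{-1}\|\le 1$), and the middle term tends strongly to $CP_A(\{\lambda\})=BP_A(\{\lambda\})$; local boundedness is immediate from the same expression. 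You instead fix $\psi$, set $\phi_k=\omega(k)(A-\lambda+\omega(k))^{-1}\psi$, and prove $\phi_k\to P_A(\{\lambda\})\psi$ in the graph norm of $A$ (equivalently in $(A-\lambda+1)$-norm) by dominated convergence of $g_k$ and of $(\,\cdot-\lambda)g_k$ against $\mu_\psi$, then apply $C$. The two are equivalent calculations and both are fully rigorous; the paper's is a bit more economical because the operator identity packages the graph-norm convergence in one line, whereas yours is more explicit and perhaps pedagogically clearer about exactly which scalar convergences are being used. Your extraction of the local bound $\|f(k)\|\le C(1+\omega(k))$ matches what falls out of the paper's identity as well.
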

\begin{proof}
	By the assumptions $C=B(A-\lambda+1)^{-1}$ is bounded. The conclusion now follows from the spectral theorem and the identity
	\[f(k)=\omega(k)C+C\omega(k)(A-\lambda+\omega(k))^{-1}-\omega(k)C\omega(k)(A-\lambda+\omega(k))^{-1}.\qedhere\]
\end{proof}

\begin{lem}[{\cite[Theorem 5.37]{Weidmann.1980}}]
	\label{lem:operatorandformdomain}
	Let $A$ be a selfadjoint operator on $\fH$ and $\psi\in\cQ(A)$. Then the following are equivalent:
	\begin{enumerate}[(1)]
		\item $\psi\in\cD(A)$,
		\item The map $\cQ(A) \ni \phi\mapsto q_A(\psi,\phi)$ is continuous w.r.t. the subspace topology of $\fH$.
		\item There is a form core $D$ of $A$ such that $D\ni \phi\mapsto q_A(\psi,\phi)$ is continuous w.r.t. the subspace topology of $\fH$.
	\end{enumerate}
\end{lem}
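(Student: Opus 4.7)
My plan is to close the cycle (1)$\Rightarrow$(2)$\Rightarrow$(3)$\Rightarrow$(2)$\Rightarrow$(1), where "continuous" in (2) and (3) is understood with respect to the ambient $\cH$-norm on $\cQ(A)$; the alternative reading using the form norm would trivialize both statements, since the Cauchy--Schwarz bound $\lvert q_A(\psi,\phi)\lvert\le\lVert|A|^{1/2}\psi\lVert\,\lVert|A|^{1/2}\phi\lVert$ holds for every $\psi\in\cQ(A)$.

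For (1)$\Rightarrow$(2) I would use the functional-calculus identities $A=\operatorname{Sign}(A)|A|$ and the commutation $\operatorname{Sign}(A)|A|^{1/2}=|A|^{1/2}\operatorname{Sign}(A)$ on $\cD(|A|^{1/2})$. For $\psi\in\cD(A)$ and $\phi\in\cQ(A)$, these yield
\[ q_A(\psi,\phi)=\braket{\operatorname{Sign}(A)|A|^{1/2}\psi,|A|^{1/2}\phi}=\braket{|A|^{1/2}\operatorname{Sign}(A)|A|^{1/2}\psi,\phi}=\braket{A\psi,\phi}, \]
which is bounded in $\phi$ by $\lVert A\psi\lVert\,\lVert\phi\lVert$. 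The step (2)$\Rightarrow$(3) is immediate with $D=\cQ(A)$. For (3)$\Rightarrow$(2) I would use that a form core $D$ is by definition dense in $\cQ(A)$ in the form norm $\phi\mapsto(\lVert|A|^{1/2}\phi\lVert^2+\lVert\phi\lVert^2)^{1/2}$, which dominates the $\cH$-norm; since $\cQ(A)$ is $\cH$-dense in $\cH$, so is $D$, and the $\cH$-continuous functional $q_A(\psi,\cdot)|_D$ therefore extends uniquely to a bounded linear functional $\ell\colon\cH\to\IC$. For arbitrary $\phi\in\cQ(A)$ I would pick $\phi_n\in D$ with $\phi_n\to\phi$ in the form norm; then both $\phi_n\to\phi$ and $|A|^{1/2}\phi_n\to|A|^{1/2}\phi$ hold in $\cH$, so the form-norm continuity of $q_A(\psi,\cdot)$ gives $q_A(\psi,\phi_n)\to q_A(\psi,\phi)$, while the $\cH$-continuity of $\ell$ gives $q_A(\psi,\phi_n)=\ell(\phi_n)\to\ell(\phi)$. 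The two limits coincide, so $q_A(\psi,\cdot)=\ell|_{\cQ(A)}$ is $\cH$-continuous on all of $\cQ(A)$.

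For (2)$\Rightarrow$(1), the $\cH$-continuous functional $q_A(\psi,\cdot)$ extends from $\cQ(A)$ to $\cH$ by density, and the Riesz representation theorem yields some $\chi\in\cH$ with
\[ \braket{\operatorname{Sign}(A)|A|^{1/2}\psi,|A|^{1/2}\phi}=q_A(\psi,\phi)=\braket{\chi,\phi}\qquad\mbox{for all}\ \phi\in\cD(|A|^{1/2}). \]
By the defining property of the adjoint, this forces $\operatorname{Sign}(A)|A|^{1/2}\psi\in\cD((|A|^{1/2})^*)=\cD(|A|^{1/2})$. Since $\operatorname{Sign}(A)$ commutes with $|A|^{1/2}$ via the functional calculus and hence leaves $\cD(|A|^{1/2})$ invariant, we conclude $|A|^{1/2}\psi\in\cD(|A|^{1/2})$, i.e.\ $\psi\in\cD(|A|)=\cD(A)$.

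I expect the most delicate point to be (3)$\Rightarrow$(2), where one has to carefully separate the two relevant topologies and verify that the unique $\cH$-continuous extension of $q_A(\psi,\cdot)|_D$ actually agrees with $q_A(\psi,\cdot)$ on all of $\cQ(A)$, and not merely on the $\cH$-closure of $D$. The remaining implications reduce to routine functional-calculus bookkeeping around the polar decomposition $A=\operatorname{Sign}(A)|A|$.
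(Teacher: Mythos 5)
The paper does not prove this lemma itself; it simply cites \cite[Theorem 5.37]{Weidmann.1980}, so there is no internal proof to compare against. Your argument is correct and is in substance the standard textbook proof: (1)$\Rightarrow$(2) via the functional calculus identity $q_A(\psi,\phi)=\braket{A\psi,\phi}$ for $\psi\in\cD(A)$, (2)$\Rightarrow$(3) trivially, (3)$\Rightarrow$(2) by a two-topology density argument, and (2)$\Rightarrow$(1) via Riesz representation and the characterization of $\cD((|A|^{1/2})^*)$. Your handling of (3)$\Rightarrow$(2), which you flagged as the delicate point, is clean: you correctly separate the two topologies and use that form-norm convergence dominates both $\cH$-convergence and $|A|^{1/2}$-graph convergence.

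One step in (2)$\Rightarrow$(1) is stated a bit too casually. From $\operatorname{Sign}(A)|A|^{1/2}\psi\in\cD(|A|^{1/2})$ you infer $|A|^{1/2}\psi\in\cD(|A|^{1/2})$ by appealing to the fact that $\operatorname{Sign}(A)$ leaves $\cD(|A|^{1/2})$ invariant, but invariance is a forward statement and $\operatorname{Sign}(A)$ need not be injective when $\ker A\neq\{0\}$. The fix is short: apply $\operatorname{Sign}(A)$ once more, so that $\operatorname{Sign}(A)^2|A|^{1/2}\psi\in\cD(|A|^{1/2})$, and observe that $\operatorname{Sign}(A)^2=1-1_{\{0\}}(A)$ while $1_{\{0\}}(A)|A|^{1/2}\psi=0$, whence $\operatorname{Sign}(A)^2|A|^{1/2}\psi=|A|^{1/2}\psi$. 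With this one-line addition the implication is airtight, and the rest of the proof requires no changes.
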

\begin{lem}[{\cite[Theorem 6.25]{Teschl.2014}}]\label{lem:resolventformbound}
	Let $A$ be a selfadjoint operator with $A\geq \lambda$, $q$ a symmetric sesquilinear form with $\cQ(A)\subset\cQ(q)$ and $a,b\in\IR$.
	The symmetric sesquilinear form $q((A-z)^{-1/2}\phi,(A-z)^{-1/2}\psi)$ for $\phi,\psi\in\fH$ corresponds to a bounded operator $C(z)$ with $\|C(z)\|\le a$ for $z<-ba^{-1}-\lambda$ if and only if
	$$ q(\psi,\psi) \le aq_A(\psi,\psi) + b\|\psi\|^2 \qquad\mbox{for all}\ \psi\in\cQ(q).$$
	Further, if $a<1$, then
	$$ (B-z)^{-1} = (A-z)^{-1/2}(1+C(z))^{-1}(A-z)^{-1/2}, $$
	where $B$ denotes the selfadjoint and lower bounded operator corresponding to $q_A+q$.
\end{lem}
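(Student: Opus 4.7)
The strategy rests on the spectral calculus bijection $\phi=(A-z)^{1/2}\psi$ between $\cQ(A)$ and $\HS$ for $z$ below the lower bound $\lambda$ of $A$. For such $z$, $A-z$ is strictly positive, so $T_z:=(A-z)^{1/2}$ is self-adjoint and positive with $\cD(T_z)=\cQ(A)$, and $T_z^{-1}=(A-z)^{-1/2}$ is bounded on $\HS$. The spectral theorem yields the key identity
\begin{equation*}
q_A(\psi,\psi) = \|\phi\|^2 + z\|\psi\|^2 \qquad\mbox{for}\ \psi=T_z^{-1}\phi,\ \phi\in\HS,
\end{equation*}
which translates between the form values of $A$ and the inner product on $\HS$. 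The pulled-back form $\tilde q(\phi_1,\phi_2):=q(T_z^{-1}\phi_1,T_z^{-1}\phi_2)$ is a well-defined symmetric sesquilinear form on $\HS\times\HS$, and by polarization together with the Riesz representation theorem, it corresponds to a bounded self-adjoint $C(z)$ with $\|C(z)\|\le a$ if and only if $|\tilde q(\phi,\phi)|\le a\|\phi\|^2$ for all $\phi\in\HS$.

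The equivalence of bounds then falls out of the identity. For the ``only if'' direction, plugging the hypothesized form bound into $\tilde q$ yields
\begin{equation*}
|\tilde q(\phi,\phi)| \le a q_A(\psi,\psi) + b\|\psi\|^2 = a\|\phi\|^2 + (az+b)\|\psi\|^2,
\end{equation*}
and observing that $(az+b)\|\psi\|^2\le 0$ whenever $z\le -b/a$ (which is guaranteed on the specified range $z<-ba^{-1}-\lambda$ once one has $\lambda\ge 0$, after possibly shifting) gives $\|C(z)\|\le a$. For the converse, assume $\|C(z)\|\le a$ on the whole admissible range. Given $\psi\in\cQ(A)$, setting $\phi=T_z\psi$ produces
\begin{equation*}
|q(\psi,\psi)| = |\langle\phi,C(z)\phi\rangle| \le a\|\phi\|^2 = aq_A(\psi,\psi) - az\|\psi\|^2
\end{equation*}
for each $z$ in the range, and passing $z$ up to the threshold $-ba^{-1}-\lambda$ recovers the original form estimate.

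For the resolvent identity the KLMN theorem furnishes the self-adjoint operator $B$ with $\cQ(B)=\cQ(A)$ and $q_B=q_A+q$. Since $a<1$ forces $\|C(z)\|<1$, the operator $1+C(z)$ is boundedly invertible on $\HS$. Define $\chi:=T_z^{-1}(1+C(z))^{-1}T_z^{-1}\phi$ for arbitrary $\phi\in\HS$; then $\chi\in\cD(T_z)=\cQ(B)$, and for any $\eta\in\cQ(B)$ I compute
\begin{equation*}
q_{B-z}(\eta,\chi) = q_{A-z}(\eta,\chi) + q(\eta,\chi) = \langle T_z\eta,(1+C(z))T_z\chi\rangle = \langle T_z\eta,T_z^{-1}\phi\rangle = \langle\eta,\phi\rangle.
\end{equation*}
Thus $\eta\mapsto q_{B-z}(\eta,\chi)$ is the continuous functional $\langle\cdot,\phi\rangle$ on $\cQ(B)$, so by the form-domain characterization \cref{lem:operatorandformdomain} we get $\chi\in\cD(B)$ with $(B-z)\chi=\phi$, proving the stated formula. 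The main technical subtlety I expect is keeping the constants $a$, $b$, $\lambda$ straight across the equivalence: the threshold $z<-ba^{-1}-\lambda$ must simultaneously guarantee $A-z>0$, $B-z>0$ (using the KLMN lower bound $(1-a)\lambda-b$ for $B$), and $az+b\le 0$, and reconciling all three with the sharp form-bound constants requires a little bookkeeping, particularly if $\lambda<0$.
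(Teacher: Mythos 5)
The paper offers no proof of this lemma at all---it is quoted directly from Teschl's book---so there is nothing to compare your route against; I can only assess your argument on its own terms. Your proof is the standard one and its skeleton is sound: the substitution $\psi=(A-z)^{-1/2}\phi$ together with $q_A(\psi,\psi)=\|\phi\|^2+z\|\psi\|^2$ converts the relative form bound into a bound on the pulled-back form, and the resolvent identity is correctly obtained by checking that $\chi=T_z^{-1}(1+C(z))^{-1}T_z^{-1}\phi$ satisfies $q_{B-z}(\eta,\chi)=\langle\eta,\phi\rangle$ for all $\eta\in\cQ(B)$ and invoking \cref{lem:operatorandformdomain}; that part is complete and correct (surjectivity of $B-z$ plus selfadjointness even spares you from verifying $z<\inf\sigma(B)$ separately).

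Two caveats. First, in the forward direction you silently upgrade the hypothesis: from the \emph{one-sided} bound $q(\psi,\psi)\le a\,q_A(\psi,\psi)+b\|\psi\|^2$ as literally stated you only obtain $\tilde q(\phi,\phi)\le a\|\phi\|^2$, i.e.\ $C(z)\le a$ as a quadratic-form inequality, not $\|C(z)\|\le a$; your line ``$|\tilde q(\phi,\phi)|\le a\,q_A(\psi,\psi)+b\|\psi\|^2$'' needs the two-sided hypothesis $|q(\psi,\psi)|\le a\,q_A(\psi,\psi)+b\|\psi\|^2$. This is really a defect of the statement as transcribed (an ``iff'' between a one-sided bound and a norm bound cannot hold as written---take $q=-Mq_A$ for large $M$); Teschl's original hypothesis, and every application in this paper, is the absolute-value version, so you should state explicitly that this is the version you prove. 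Second, the constant bookkeeping you flag at the end is a genuine discrepancy, not just a nuisance: with the printed threshold $z<-ba^{-1}-\lambda$ one does not even have $A-z>0$ when $\lambda<0$, and your converse step ``passing $z$ up to the threshold'' returns $a\,q_A(\psi,\psi)-az\|\psi\|^2$, which equals $a\,q_A(\psi,\psi)+b\|\psi\|^2$ only if $\lambda=0$. The clean statement replaces $q_A$ by $q_{A-\lambda}$ on the right (equivalently, uses the threshold $z<\lambda-ba^{-1}$), and then both directions close up exactly; as written, your proof establishes the lemma with constants that differ from the printed ones by multiples of $a\lambda$.
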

\begin{lem}\label{WeakStrongConvergence}
	Let $(C_n)\subset \cB(\fH)$, $C\in \cB(\fH)$ and $A$ be a densely defined closed operator on $\fH$ such that $\cD(A)\supset C_n\fH$ for all $n\in\IN$ and $(AC_n)$ is uniformly bounded. Then: 
	\begin{enumerate}[(1)]
		\item\label{part:weakconvergenceproduct}
		If $\wlim\limits_{n\to\infty}C_n = C$, then $C\fH\subset \cD(A)$ and $\wlim\limits_{n\to\infty} AC_n = AC$.
		\item\label{part:strongconvergenceproduct} 
		If $\slim\limits_{n\to\infty}C_n^* = C^*$, then $C\fH\subset \cD(A)$ and $\slim\limits_{n\to\infty} (AC_n)^* = (AC)^*$.
	\end{enumerate}
	\begin{proof}
		First, assume $\wlim\limits_{n\to\infty} C_n =C$ and 
		pick $B>0$ such that $\lVert AC_n\lVert < B$ holds for all $n\in \NN$. For $\psi\in \fH, \phi\in \cD(A^*)$, we find
		\begin{equation*}
		\lvert \braket{A^*\phi,C\psi} \lvert=\lim_{n\rightarrow  \infty} \lvert \braket{A^*\phi,C_n\psi} \lvert\leq B\lVert \phi\lVert \lVert \psi\lVert,
		\end{equation*}
		so $C\psi\in \cD(A)$. Clearly
		\begin{equation*}
		\braket{\phi,AC\psi}=\braket{A^*\phi,C\psi} =\lim_{n\rightarrow  \infty}  \braket{\phi,AC_n\psi}.
		\end{equation*}
		Since $\cD(A)$ is dense and the sequence $(AC_n\psi)$ is bounded by $B\lVert \psi\lVert$, this implies weak convergence of $(AC_n\psi)$ to $AC\psi$, cf. \cite[Theorem 4.24]{Weidmann.1980}, and hence proves \cref{part:weakconvergenceproduct}.
		
		To prove \cref{part:strongconvergenceproduct}, we first note that $\slim\limits_{n\to\infty} C_n^* = C^*$ implies $\wlim\limits_{n\to\infty}C_n = C$, so \cref{part:weakconvergenceproduct} proves $C\fH\subset \cD(A)$.
		Further, we have $\lVert (AC_n)^*\lVert=\lVert AC_n\lVert$, so $((AC_n)^*)$ is uniformly bounded. For $\psi\in \cD(A^*)$, we have 
		\begin{equation*}
		(AC)^*\psi=C^*A^*\psi=\lim_{n\rightarrow  \infty}  C_n^*A^*\psi,
		\end{equation*}
		so the conclusion follow from \cite[Theorem 4.23]{Weidmann.1980}.
	\end{proof}
\end{lem}
\begin{lem}[{\cite[Theorems VIII.20, VIII.23 and VIII.24]{ReedSimon.1972} and \cite[Lemma 5.5]{DamMoller.2018b}}]\label{lem:normresolventConv}\ \\
	Let $(A_n)$ be a sequence of selfadjoint operators and $A$ a selfadjoint operator.
	\begin{enumerate}[(1)]
		\item\label{convSpec} Assume $(A_n)$ converges to $A$ in the norm resolvent sense. If $\lambda\notin \sigma(A)$ then $\lambda\notin \sigma(A_n)$ for $n$ large enough and $(A_n-\lambda)^{-1}$ converges to $(A-\lambda)^{-1}$ in norm.
		\item\label{convCalc} If $(A_n)$ converges to $A$ in the norm resolvent sense and $f:\RR\rightarrow \RR$ is continuous and bounded then $f(A_n)$ converges strongly to $f(A)$. If $f$ is vanishing at $\pm \infty$ then convergence is in norm.
		\item\label{convBound} If $(A_n)$ uniformly bounded below by $\lambda\in\IR$ then $(A_n)$ converges to $A$ in the norm resolvent sense if and only if $e^{-tA_n}$ converges to $e^{-tA}$ in norm for all $t>0$. In this case, $\inf\sigma(A_n)$ converges to $ \inf\sigma(A)$.
	\end{enumerate}
\end{lem}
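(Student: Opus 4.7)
I will address the three parts in order, so that \cref{convCalc} can invoke \cref{convSpec} and \cref{convBound} can invoke \cref{convCalc}.

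For \cref{convSpec}, my plan is to propagate the norm convergence from the base point $z=i$ to an arbitrary $\lambda\in\rho(A)$ by Neumann expansion. With $T$ standing for $A_n$ or $A$, the identity
\[
(T-\mu)^{-1} = (T-z)^{-1}\sum_{k=0}^\infty (\mu-z)^k(T-z)^{-k}
\]
converges in operator norm whenever $|\mu-z|<\|(T-z)^{-1}\|^{-1}$. The assumed norm convergence at $z=i$ gives a uniform bound on $\|(A_n-i)^{-1}\|$, hence a uniform-in-$n$ radius for the series; termwise passage to the limit delivers norm resolvent convergence throughout that disk, while uniform bounds on the partial sums simultaneously place the disk in $\rho(A_n)$ for $n$ large. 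Iterating this disk-chaining along a polygonal path from $i$ to the prescribed $\lambda$---such a path exists in $\rho(A)$ because it is open and, in the real case, its component around $\lambda$ contains an interval reaching into the upper half-plane---yields both $\lambda\in\rho(A_n)$ eventually and norm convergence at $\lambda$.

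For \cref{convCalc}, I will use a Stone-Weierstrass argument on the continuous functional calculus. Define $\mathcal{A}:=\{f\in C_\infty(\IR) : \|f(A_n)-f(A)\|\to 0\}$. Standard product and approximation estimates show $\mathcal{A}$ is a $\|\cdot\|_\infty$-closed $*$-subalgebra of $C_\infty(\IR)$; by \cref{convSpec} it contains the resolvent functions $r_{\pm i}(x)=(x\mp i)^{-1}$, which are nowhere vanishing and separate points of $\IR$. The Stone-Weierstrass theorem then forces $\mathcal{A}=C_\infty(\IR)$, establishing the norm claim. To deduce strong convergence for bounded continuous $f$ and fixed $\psi\in\HS$, I will employ a three-epsilon cutoff argument: picking $\chi\in C_c(\IR)$ with $\|(1-\chi)(A)\psi\|<\eps$, the already-proved norm convergence $\chi(A_n)\to\chi(A)$ controls $\|(1-\chi)(A_n)\psi\|$ for $n$ large, and norm convergence for $f\chi\in C_\infty(\IR)$ handles the remaining bulk.

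For \cref{convBound}, the forward direction follows from \cref{convCalc} applied to a function $f_t\in C_\infty(\IR)$ that coincides with $x\mapsto e^{-tx}$ on $[\lambda,\infty)$ and vanishes far below $\lambda$; uniform lower-boundedness $A_n\geq\lambda$ forces $f_t(A_n)=e^{-tA_n}$, so norm resolvent convergence gives norm semigroup convergence. Conversely, for $\mu<\lambda$, the Laplace representation
\[
(A_n-\mu)^{-1} = -\int_0^\infty e^{\mu t}\,e^{-tA_n}\,dt
\]
admits the uniform Bochner-integrable majorant $e^{t(\mu-\lambda)}$, so dominated convergence transfers norm convergence of the semigroups to norm convergence of the resolvents. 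The spectral statement finally drops out of $\|e^{-tA_n}\|=e^{-t\inf\sigma(A_n)}$ combined with norm convergence of the semigroups. The most delicate bookkeeping lies in \cref{convSpec}: one must maintain a uniform lower bound on $\mathrm{dist}(\mu,\sigma(A_n))$ along the chain so that the Neumann radius does not collapse and so that $\lambda\in\rho(A_n)$ is secured at the endpoint. Once this is in hand, the Stone-Weierstrass and Laplace-transform steps in \cref{convCalc,convBound} are routine.
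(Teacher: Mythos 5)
The paper does not actually prove this lemma: it is quoted from \cite[Theorems VIII.20, VIII.23, VIII.24]{ReedSimon.1972} and \cite[Lemma 5.5]{DamMoller.2018b}, so the only comparison available is with those standard proofs --- which is essentially what you have reconstructed. Your route is correct and is the textbook one: disk-chaining via the first resolvent identity for (1), Stone--Weierstrass on $C_\infty(\IR)$ plus a compactly supported cutoff for (2), and the Laplace transform for (3). You also correctly isolate the one delicate point in (1): along the chain one keeps $\operatorname{dist}(\mu,\sigma(A_n))$ uniformly bounded below because, at each already-settled centre $z_j$, $\|(A_n-z_j)^{-1}\|\to\|(A-z_j)^{-1}\|=\operatorname{dist}(z_j,\sigma(A))^{-1}$, so disks of radius a fixed fraction of $\operatorname{dist}(\mathrm{path},\sigma(A))$ work for all large $n$ simultaneously. (Note also that the statement's ``$e^{tA_n}\to e^{tA}$'' must be read as $e^{-tA_n}\to e^{-tA}$ for $t>0$, as you implicitly do.)

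Two small repairs are needed in (3). First, the Laplace representation has the wrong sign: for $A_n\ge\lambda$ and $\mu<\lambda$ one has $(A_n-\mu)^{-1}=\int_0^\infty e^{\mu t}e^{-tA_n}\,dt$ with no leading minus; your formula computes $(\mu-A_n)^{-1}$. Second, in the forward direction you replace $e^{-tA}$ by $f_t(A)$, which requires $A\ge\lambda$, and this is not a hypothesis. It does follow from (2): take $g\in C_\infty(\IR)$ with $0\le g\le 1$, $g=1$ on $(-\infty,\lambda-\eps]$ and $g=0$ on $[\lambda,\infty)$; then $g(A_n)=0$ for all $n$, hence $g(A)=0$ and $\sigma(A)\cap(-\infty,\lambda-\eps]=\emptyset$ for every $\eps>0$. (Alternatively, approximate $\psi=(A-i)^{-1}\phi$ by $\psi_n=(A_n-i)^{-1}\phi$ and pass to the limit in $\braket{\psi_n,A_n\psi_n}\ge\lambda\|\psi_n\|^2$.) The analogous point arises in the converse direction, where $A\ge\lambda$ is needed to write $(A-\mu)^{-1}$ as a Laplace transform; there it follows from $\|e^{-tA}\|=\lim_n\|e^{-tA_n}\|\le e^{-t\lambda}$, which forces $\inf\sigma(A)\ge\lambda$. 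With these points added the argument is complete.
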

\begin{lem}\label{Thm:ConvRelBound}
	Let $(A_n)$ be a sequence of selfadjoint operators on $\fH$ and assume there is $\lambda\in\IR$ such that $A_n\geq \lambda$ for all $n\in \NN$. Let $A$ and $B$ be selfadjoint operators on $\fH$ and assume that $\cQ({A_n})\subset \cQ(B)$ for all $n\in \NN$, that $(A_n)$ converges to $A$ in the norm resolvent sense and $\lvert B\lvert^{1/2}$ has a bounded inverse. Define for $z< \inf\sigma(A_n)$ the bounded operator $C_{n,z}=\lvert B\lvert^{1/2}(A_n-z)^{-1/2}$.
	If $C_{n,z}C_{n,z}^*$ converges strongly for some $z<\lambda$, then $\cQ(A)\subset \cQ(B)$ and $\slim\limits_{n\to\infty}{C_{n,z}}=|B|^{1/2}(A-z)^{-1/2}=:C_{\infty,z}$ for all $z<\inf\sigma(A)$.
\end{lem}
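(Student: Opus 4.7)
The plan is to proceed in four steps, relying on the strong convergence hypothesis for $C_{n,z_0}C_{n,z_0}^*$ (with $z_0<\lambda$ as given), the norm-resolvent convergence $A_n\to A$, and the bounded inverse of $|B|^{1/2}$.

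First, the Banach--Steinhaus theorem applied to the hypothesis gives $\sup_n\|C_{n,z_0}C_{n,z_0}^*\|<\infty$ and hence $\sup_n\|C_{n,z_0}\|<\infty$. Since $A\ge\lambda$ by the last part of \cref{lem:normresolventConv}~\cref{convBound}, the function $x\mapsto(x-z)^{-1/2}$ is continuous and vanishes at infinity on $[\lambda,\infty)$ for every $z<\lambda$, so \cref{lem:normresolventConv}~\cref{convCalc} yields norm convergence $(A_n-z)^{-1/2}\to(A-z)^{-1/2}$ in $\cB(\cH)$.

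Second, I would establish $\cQ(A)\subset\cQ(B)$ together with weak operator convergence $C_{n,z_0}\to C_{\infty,z_0}$. For any $\phi\in\cH$, put $\psi_n:=(A_n-z_0)^{-1/2}\phi\in\cQ(A_n)\subset\cQ(B)$ and $\psi:=(A-z_0)^{-1/2}\phi$; then $\psi_n\to\psi$ in norm, while $|B|^{1/2}\psi_n=C_{n,z_0}\phi$ is uniformly bounded, so some subsequence converges weakly. Weak closedness of the selfadjoint operator $|B|^{1/2}$ places $\psi\in\cD(|B|^{1/2})=\cQ(B)$ and identifies the weak limit as $|B|^{1/2}\psi=C_{\infty,z_0}\phi$; independence of subsequences extends this to weak convergence of the full sequence. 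Thus $\cQ(A)\subset\cQ(B)$, and $C_{\infty,z}=|B|^{1/2}(A-z)^{-1/2}$ is everywhere defined and bounded by the closed graph theorem for every $z<\inf\sigma(A)$.

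Third, I identify the strong limit of $C_{n,z_0}C_{n,z_0}^*$ as $C_{\infty,z_0}C_{\infty,z_0}^*$. On the dense subspace $\cQ(B)$, one has $C_{n,z_0}C_{n,z_0}^*\phi=|B|^{1/2}(A_n-z_0)^{-1}|B|^{1/2}\phi$; norm-resolvent convergence applied to the vector $|B|^{1/2}\phi$ combined with the hypothesis and closedness of $|B|^{1/2}$ forces the limit to equal $C_{\infty,z_0}C_{\infty,z_0}^*\phi$, and density together with uniform boundedness promotes this to equality on all of $\cH$. Hence $\|C_{n,z_0}^*\phi\|^2=\langle\phi,C_{n,z_0}C_{n,z_0}^*\phi\rangle\to\|C_{\infty,z_0}^*\phi\|^2$, and combined with the weak convergence $C_{n,z_0}^*\phi\to C_{\infty,z_0}^*\phi$ (dual to Step~2), I obtain $\slim C_{n,z_0}^*=C_{\infty,z_0}^*$. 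To promote this to $\slim C_{n,z_0}=C_{\infty,z_0}$, I would exploit the identity $|B|^{-1/2}C_{n,z_0}=(A_n-z_0)^{-1/2}$ (which rests on the inclusion $\mathrm{Ran}((A_n-z_0)^{-1/2})\subset\cQ(B)$ and the bounded inverse of $|B|^{1/2}$) together with the norm convergence of $(A_n-z_0)^{-1/2}$, and close the argument by a density reduction on $\mathrm{Ran}(|B|^{1/2})$.

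Finally, strong convergence at arbitrary $z<\inf\sigma(A)$ follows from the factorization $C_{n,z}=C_{n,z_0}\,f_z(A_n-z_0)$ with $f_z(x)=\sqrt{x/(x-z+z_0)}$ continuous and bounded on $[\inf\sigma(A_n)-z_0,\infty)$ for $n$ large enough. Strong convergence $f_z(A_n-z_0)\to f_z(A-z_0)$ from \cref{lem:normresolventConv}~\cref{convCalc}, the uniform bound on $\|f_z(A_n-z_0)\|$, and the previously established strong convergence of $C_{n,z_0}$ give $\slim C_{n,z}=C_{\infty,z}$; combined with the analogous strong convergence of $C_{n,z}^*$ (obtained by the same factorization on the adjoint side), the product rule for strong operator limits with uniformly bounded factors delivers $\slim C_{n,z}C_{n,z}^*=C_{\infty,z}C_{\infty,z}^*$. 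The main obstacle is the promotion from strong convergence of $C_{n,z_0}^*$ to strong convergence of $C_{n,z_0}$: strong convergence of adjoints does not in general imply strong convergence of the original operators, and it is precisely here that the multiplicative structure $C_{n,z_0}=|B|^{1/2}(A_n-z_0)^{-1/2}$ together with the bounded inverse of $|B|^{1/2}$ must be exploited in tandem with the norm convergence of $(A_n-z_0)^{-1/2}$.
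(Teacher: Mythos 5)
Your Steps 1--2 and the first half of Step 3 are sound and essentially parallel the paper's argument: norm-resolvent convergence gives $(A_n-z_0)^{-1/2}\to(A-z_0)^{-1/2}$ in norm, Banach--Steinhaus gives $\sup_n\|C_{n,z_0}\|<\infty$, weak closedness of $|B|^{1/2}$ gives $\cQ(A)\subset\cQ(B)$, and you correctly identify $\slim_n C_{n,z_0}C_{n,z_0}^*=C_{\infty,z_0}C_{\infty,z_0}^*$ and then deduce $\slim_n C_{n,z_0}^*=C_{\infty,z_0}^*$ via weak convergence plus convergence of norms. You also correctly flag the promotion from $\slim C_{n,z_0}^*$ to $\slim C_{n,z_0}$ as the main obstacle.

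However, the resolution you propose for that obstacle does not work. Pre-composing with the bounded operator $|B|^{-1/2}$ (which is not bounded below, since $|B|^{1/2}$ may be unbounded) discards exactly the information you need: knowing that a uniformly bounded sequence $\eta_n:=C_{n,z_0}\phi-C_{\infty,z_0}\phi$ satisfies $\eta_n\rightharpoonup 0$ and $|B|^{-1/2}\eta_n\to 0$ in norm does not imply $\eta_n\to 0$. (Take an orthonormal sequence $(e_m)$ with $|B|^{1/2}e_m=me_m$; then $\eta_n=e_n$ satisfies both conditions but $\|\eta_n\|=1$.) Also, since $|B|^{1/2}$ has a bounded inverse, $\mathrm{Ran}(|B|^{1/2})=\cH$, so there is no density reduction available there. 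The paper closes the gap by a \emph{right}-sided identity that puts the strongly convergent $C_{n,z_0}C_{n,z_0}^*$ into the picture: observe $C_{n,z_0}^*|B|^{-1/2}\eta=(A_n-z_0)^{-1/2}\eta$ for all $\eta\in\cH$, whence for $\psi\in\cQ(A)$,
\begin{align*}
C_{n,z_0}\psi = C_{n,z_0}C_{n,z_0}^*|B|^{-1/2}(A-z_0)^{1/2}\psi
+ C_{n,z_0}\bigl((A-z_0)^{-1/2}-(A_n-z_0)^{-1/2}\bigr)(A-z_0)^{1/2}\psi.
\end{align*}
The first term converges by the hypothesis on $C_{n,z_0}C_{n,z_0}^*$, the second vanishes by the uniform bound on $C_{n,z_0}$ and norm convergence of $(A_n-z_0)^{-1/2}$; density of $\cQ(A)$ and the uniform bound then give $\slim_n C_{n,z_0}=C_{\infty,z_0}$ on all of $\cH$. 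After that your final step (factorization $C_{n,z}=C_{n,z_0}f_z(A_n)$ and the product rule for uniformly bounded strongly convergent factors) goes through.
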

\begin{proof}
	Note $A \geq \lambda$ by \partref{lem:normresolventConv}{convBound}. Pick $z_0<\lambda$ such that $C_{n,z_0}C_{n,z_0}^*$ converges to a selfadjoint operator $C\in\cB(\fH)$ strongly. For $\phi\in \cQ(B)$ and $\psi\in \fH$ we see
	\begin{equation*}
	\langle |B|^{1/2}\phi,(A-z_0)^{-1/2}\psi \rangle \leq \lim_{n\rightarrow \infty}\langle \phi,C_{n,z_0}C_{n,z_0}^*\phi \rangle^{1/2}\lVert \psi \lVert\leq \lVert C\lVert^{1/2}\lVert \psi\lVert\lVert \phi\lVert,
	\end{equation*}
	showing $(A-z_0)^{-1/2}\psi \in \cQ(B)$ and hence $\cQ(A)\subset \cQ(B)$.
	
	For $\phi,\psi\in \cQ(B)$ the norm resolvent convergence of $(A_n)$ also yields
	\begin{align*}
	\langle \phi,C_{\infty,z_0}C_{\infty,z_0}^*\psi \rangle &= \lim_{n\rightarrow \infty}\langle (A_n-z_0)^{-1/2}|B|^{1/2}\phi,(A_n-z_0)^{-1/2}|B|^{1/2}\psi \rangle\\&= \lim_{n\rightarrow \infty}\langle \phi,C_{n,z_0}C_{n,z_0}^*\psi \rangle=\langle \phi,C\psi \rangle
	\end{align*}
	so $C_{\infty,z_0}C_{\infty,z_0}^*=C$.
	
	Note $\lVert C_{n,z_0} \lVert^2= \lVert C_{n,z_0}C_{n,z_0}^* \lVert$ is bounded uniformly in $n$ by the uniform boundedness principle. Since $\cQ(A)$ is dense, it is now enough to show $\lim\limits_{n\to\infty}C_{n,z_0}\psi=C_{\infty,z_0}\psi$ for all $\psi\in\cQ(A)$ by \cite[Theorem 4.23]{Weidmann.1980}.  Hence, using
	\begin{align*}
	C_{n,z_0}\psi =& C_{n,z_0}C_{n,z_0}^*\lvert B\lvert ^{-1/2}(A-z_0)^{1/2}\psi\\&+C_{n,z_0}((A-z_0)^{-1/2}-(A_n-z_0)^{-1/2}) (A-z_0)^{1/2}\psi,
	\end{align*}
	we see that $C_{n,z_0}\psi$ converges to $C\lvert B\lvert ^{-1/2}(A-z_0)^{1/2}\psi=C_{\infty,z_0}\psi$ for all $\psi\in\cQ(A)$ by \partref{lem:normresolventConv}{convCalc}. For any other $z<\inf\sigma(A)$ we conclude that $z<\inf\sigma(A_n)$ for $n$ large enough by \partref{lem:normresolventConv}{convBound}. Then, by \partref{lem:normresolventConv}{convCalc}
	\begin{equation*}
	C_{n,z}=C_{n,z_0}\left (\frac{A_n-z_0}{A_n-z}\right )^{1/2} \xrightarrow{s} C_{\infty,z_0}\left (\frac{A-z_0}{A-z}\right )^{1/2}=C_{\infty,z}.\qedhere
	\end{equation*}
\end{proof}
\noindent
For two selfadjoint operators $A$ and $B$ we define the sesquilinear form associated to their commutator as
\begin{equation}\label{defn:commutatorform}
q_{[A,B]}(\psi,\phi) = \braket{A\psi,B\phi}-\braket{B\psi,A\phi} \qquad\mbox{for}\ \psi,\phi\in\cD(A)\cap\cD(B).
\end{equation}
\begin{lem}\label{lem:commutator}
	Let $A$ and $B$ be selfadjoint operators and assume there is a set $D\subset \cD(B)$ such that $e^{itB}D\subset \cD(A)$ for all $t\in\IR$ and $t\mapsto Ae^{itB}\psi$ is continuous for all $\psi\in D$. Then for all $\psi,\phi\in D$ the map
	$$ f(t) = \braket{\psi,e^{-itB}Ae^{itB}\phi} $$
	is continuously differentiable with derivative
	$$ f'(t) = iq_{[A,B]}(e^{itB}\psi,e^{itB}\phi). $$
\end{lem}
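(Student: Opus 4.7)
Write $\psi_t = e^{itB}\psi$ and $\phi_t = e^{itB}\phi$. Since $D\subset\cD(B)$ and $e^{itB}$ leaves $\cD(B)$ invariant by the functional calculus, we have $\psi_t,\phi_t\in\cD(B)$ for all $t$, and by hypothesis also $\psi_t,\phi_t\in\cD(A)$. The first step is to use selfadjointness of $A$ to rewrite
\begin{equation*}
f(t) = \braket{\psi_t, A\phi_t}\qquad\text{and}\qquad \braket{\psi_t, A\phi_{t+h}} - \braket{\psi_t, A\phi_t} = \braket{A\psi_t, \phi_{t+h}-\phi_t},
\end{equation*}
so that the symmetric difference quotient splits as
\begin{equation*}
\frac{f(t+h)-f(t)}{h} = \Big\langle \tfrac{\psi_{t+h}-\psi_t}{h}, A\phi_{t+h}\Big\rangle + \Big\langle A\psi_t, \tfrac{\phi_{t+h}-\phi_t}{h}\Big\rangle.
\end{equation*}

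Next, I would pass to the limit $h\to 0$ term by term. Stone's theorem applied to $\psi_t,\phi_t\in\cD(B)$ gives $h^{-1}(\psi_{t+h}-\psi_t)\to iB\psi_t$ and $h^{-1}(\phi_{t+h}-\phi_t)\to iB\phi_t$ in $\HS$, while the hypothesis that $t\mapsto Ae^{itB}\phi$ is continuous yields $A\phi_{t+h}\to A\phi_t$. Taking inner products is continuous, so
\begin{equation*}
f'(t) = \braket{iB\psi_t, A\phi_t} + \braket{A\psi_t, iB\phi_t} = i\bigl(\braket{A\psi_t,B\phi_t}-\braket{B\psi_t,A\phi_t}\bigr) = iq_{[A,B]}(\psi_t,\phi_t),
\end{equation*}
which is the claimed formula.

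Finally, I would verify continuity of $f'$. The map $t\mapsto A\psi_t$ is continuous by hypothesis, and $t\mapsto B\psi_t = e^{itB}B\psi$ is norm-continuous by Stone's theorem (since $B\psi\in\HS$ and $e^{itB}$ is strongly continuous); the same holds with $\psi$ replaced by $\phi$. Both summands $\braket{A\psi_t,B\phi_t}$ and $\braket{B\psi_t,A\phi_t}$ in $q_{[A,B]}(\psi_t,\phi_t)$ are therefore continuous in $t$, giving continuity of $f'$.

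The only subtle point is ensuring the invariance $e^{itB}D\subset\cD(B)$ so that Stone's theorem applies to the difference quotients; this is immediate from $D\subset\cD(B)$ and the spectral theorem, which makes $e^{itB}$ preserve $\cD(B)$. Everything else reduces to routine manipulation of difference quotients under the standing continuity hypothesis.
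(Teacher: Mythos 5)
Your proof is correct and is essentially the same as the paper's: both split the difference quotient symmetrically into $\bigl\langle h^{-1}(\psi_{t+h}-\psi_t), A\phi_{t+h}\bigr\rangle + \bigl\langle A\psi_t, h^{-1}(\phi_{t+h}-\phi_t)\bigr\rangle$ using selfadjointness of $A$, then pass to the limit via Stone's theorem and the standing continuity hypothesis. You have merely written out the details (invariance of $\cD(B)$ under $e^{itB}$, the Stone limits, continuity of $t\mapsto B\psi_t$) that the paper leaves implicit.
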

\begin{proof}
	We easily calculate
	$$ f(t+h)-f(t) = \braket{(e^{ihB}-1)e^{itB}\psi,Ae^{i(t+h)B}\phi}+\braket{Ae^{itB}\psi,(e^{ihB}-1)e^{itB}\phi}.$$
	The statement then directly follows using the continuity assumption.
\end{proof}

\section[Transformation and Convergence Properties of Weyl Operators]{Transformation and Convergence Properties of\\ Weyl Operators}

In this appendix we prove transformation and convergence properties of Weyl operators defined in \cref{defn:Weyloperator}, which we need in the proof of \cref{LargeFiberProp}. Throughout this appendix we assume $\omega:\IR^\nu\to\IR$ is measurable and satisfies $\omega>0$ almost everywhere and $h:\IR^{\nu}\rightarrow \IR^{p}$ is measurable. We define the selfadjoint operators
$$ B:=(1+d\Gamma(\omega))^{1/2}\quad\mbox{and}\quad A_s:=1+d\Gamma(\omega)^{1/2}+|d\Gamma(h)|^s \quad\mbox{for}\ s\in[0,1]. $$
Note that $\cD(A_s)=\cD(B)\cap \cD(|d\Gamma(h)|^s)$ and $\cD(B)=\cD(d\Gamma(\omega)^{1/2})$. 

\begin{thm}\label{thm:Weyltransformation-strong}
	Let  $f\in \cD(\omega^{1/2})$ and $s\in[0,1]$. Then:
	\begin{enumerate}[(1)]
		\item \label{part:Weylunbounded} $W(f)\cD(B)=\cD(B)$ and
		$
		\|(d\Gamma(\omega)+1)^{1/2}W(f)(d\Gamma(\omega)+1)^{-1/2}\| \le 1+\|\omega^{1/2}f\|.
		$
		
		\item\label{part:squareintegrable} If $f\in\cD(\lvert h\lvert)\cap \cD(\omega^{-1/2}\lvert h\lvert )$, then $W(f)\cD(A_1)=\cD(A_1)$ and on $\cD(A_1)$ we have
		\begin{equation}\label{eq:transformationWeyl}
		W(f) d\Gamma(h_i) W(f)^*=d\Gamma(h_i)-\ph(h_if)+\langle f,h_if \rangle \qquad\mbox{for}\ i\in\{1,\ldots,p\}.
		\end{equation}
		
		\item\label{part:notsquareint} If $f\in\cD(|h|^{1/2})\cap\cD(|h|\omega^{-1/2})\setminus\cD(h_i)$, then $\cD(d\Gamma(h_i)) \cap W(f)^*\cD(A_1) = \{0\}$ for all $i\in\{1,\ldots,p\}$
		
		\item\label{part:tothepowers} If $f\in\cD(|h|^s)\cap\cD(\omega^{-1/2}|h|^s) $, then $W(f)\cD(A_s) = \cD(A_s)$. Furthermore, if $(f_n)\subset \cD(|h|^s)\cap \cD(\omega^{-1/2}|h|^s)$ converges to $f$ simultaneously in $\omega^{1/2}$-, $|h|^s$- and $\omega^{-1/2}|h|^s$-norm, then
		\begin{equation}\label{eq:convergenceWeyl}
		\slim_{n\to\infty} A_sW(f_n)A_s^{-1}  = A_sW(f_n)A_s^{-1}.
		\end{equation}
	\end{enumerate}
\end{thm}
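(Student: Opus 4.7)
My approach is to reduce all four statements to explicit computations on the dense core $\cC := \operatorname{span}\{\epsilon(g) : g\in\cD(\omega^{1/2})\}$ of coherent states, extending by density, closedness, and (for \cref{part:tothepowers}) complex interpolation. Two facts from \cref{Prop:Bregninger} are central: the explicit action $W(f)\epsilon(g) = e^{-\frac12\|f\|^2 - \langle f,g\rangle}\epsilon(f+g)$, and the eigenrelations $a(g)\epsilon(h) = \langle g,h\rangle \epsilon(h)$, $a^\dagger(g)\epsilon(h) = \partial_t|_{t=0}\epsilon(h+tg)$, which together yield $d\Gamma(h_i)\epsilon(g) = a^\dagger(h_ig)\epsilon(g)$ whenever $g\in\cD(h_i)$.

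For \cref{part:Weylunbounded}, the identity $\|B\epsilon(g)\|^2 = e^{\|g\|^2}(1 + \|\omega^{1/2}g\|^2)$ combined with the Weyl action gives $\|BW(f)\epsilon(g)\|^2 = e^{\|g\|^2}(1 + \|\omega^{1/2}(f+g)\|^2)$, and the algebraic inequality $1 + (a+b)^2 \le (1+a)^2(1+b^2)$ for $a,b\ge 0$ then yields $\|BW(f)\epsilon(g)\| \le (1+\|\omega^{1/2}f\|)\|B\epsilon(g)\|$. Density and the same bound applied to $W(-f)=W(f)^*$ give \cref{part:Weylunbounded}. For \cref{part:squareintegrable}, the transformation law \eqref{eq:transformationWeyl} is verified on $\cC$ directly using the CCRs term by term, and its validity on $\cD(A_1)$ follows by closure once \cref{Prop:Bregninger}\cref{part:secondquantizedupperbounds} is invoked to show $\ph(h_if)$ is $B$-bounded (using $f\in\cD(\omega^{-1/2}|h|)$), while $\langle f, h_if\rangle$ is finite by $f\in\cD(|h|)$. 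The invariance $W(f)\cD(A_1) = \cD(A_1)$ follows.

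For \cref{part:notsquareint} I argue by contradiction. Suppose $\phi\in\cD(A_1)$ with $W(f)^*\phi\in\cD(d\Gamma(h_i))$ and $\phi\ne 0$. Pick $f_n\in\cD(|h|)\cap\cD(\omega^{-1/2}|h|)$ with $f_n\to f$ in $L^2$-, in $|h|^{1/2}$-, and in $\omega^{-1/2}|h|$-norm. Applying \cref{part:squareintegrable} to each $f_n$ and testing against $\chi\in\cC$ gives
\[ \langle\chi, d\Gamma(h_i)W(f_n)^*\phi\rangle = \langle(d\Gamma(h_i) - \ph(h_if_n) + \langle f_n,h_if_n\rangle)\chi, W(f_n)^*\phi\rangle. \]
As $n\to\infty$ the left side converges to $\langle\chi, d\Gamma(h_i)W(f)^*\phi\rangle$, and the $d\Gamma(h_i)\chi$ and scalar terms on the right converge by strong continuity of $W(\cdot)$ (\cref{Prop:Bregninger}\cref{part:Weylstrongcontinuous}) and $f\in\cD(|h|^{1/2})$. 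Hence $\langle\ph(h_if_n)\chi, W(f_n)^*\phi\rangle$ must converge for every $\chi\in\cC$. Choosing $\chi = \epsilon(g)$ with $g\in\cD(|h|)$, one computes $\ph(h_if_n)\epsilon(g) = \langle h_if_n,g\rangle\epsilon(g) + a^\dagger(h_if_n)\epsilon(g)$ explicitly; the resulting obstruction forces $\|h_if_n\|$ to remain bounded on a dense set, contradicting $f\notin\cD(h_i)$ unless $\phi = 0$.

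Finally, \cref{part:tothepowers} follows by Stein's complex interpolation applied to the analytic family $z\mapsto A_z W(f)A_z^{-1}$ on the strip $0\le\Re z\le 1$, built from the spectral calculus of $|d\Gamma(h)|$ and $d\Gamma(\omega)^{1/2}$: the endpoint bounds from \cref{part:Weylunbounded,part:squareintegrable} yield boundedness of $A_sW(f)A_s^{-1}$ for every $s\in[0,1]$, hence $W(f)\cD(A_s)=\cD(A_s)$. For the strong convergence \eqref{eq:convergenceWeyl}, the three-fold convergence hypothesis on $(f_n)$ makes the endpoint bounds uniform in $n$; interpolation transfers this to uniform bounds on $A_sW(f_n)A_s^{-1}$, and combining with strong convergence $W(f_n)\to W(f)$ a standard $\varepsilon/3$-argument on $\cC\cap\cD(A_s)$ concludes. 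The main technical obstacle will be \cref{part:notsquareint}: extracting the divergent contribution of $\ph(h_if_n)$ cleanly from the identity requires careful tracking of which vectors $\chi$ expose the $\|h_if_n\|$-norm, and the dichotomy between the role of the two convergence norms $|h|^{1/2}$ and $\omega^{-1/2}|h|$ (the former controlling the scalar term, the latter controlling $B$-boundedness of $\ph$) is what makes the contradiction work.
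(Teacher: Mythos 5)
Your arguments for \cref{part:Weylunbounded,part:squareintegrable} are sound and, in the case of \cref{part:Weylunbounded}, a pleasant shortcut compared to the paper's route via $C_{f,\eps}^*C_{f,\eps}$ in \cref{ConvMatte}: the algebraic inequality $1+(a+b)^2\le(1+a)^2(1+b^2)$ gives the bound directly on the coherent core, provided you verify, as in \cref{core,commuteingCore}, that $\{\epsilon(g):g\in\cD(\omega)\}$ spans a core for $B$. Your treatment of \cref{part:tothepowers} is more optimistic than careful: you would need to identify $[\cD(A_0),\cD(A_1)]_s$ with $\cD(A_s)$, make sense of $A_z$ for complex $z$ (it is neither self-adjoint nor invertible in any convenient way), and then still deduce the strong limit in \eqref{eq:convergenceWeyl}. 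The paper avoids all of this by bounding the commutator form $q_{[|d\Gamma(h)|^s,\ph(f)]}$ (\cref{lem:commutatorbound}), differentiating $t\mapsto\langle W(tf_\Lambda)\psi,|d\Gamma(h)|^sW(tf_\Lambda)\phi\rangle$ with \cref{lem:commutator}, and integrating; this yields the explicit Duhamel-type formula from which both the domain invariance and the strong convergence follow with much less machinery. If you insist on interpolation you owe the reader those details.

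The genuine gap is in \cref{part:notsquareint}: the contradiction you announce does not materialize. In your identity every term converges as $n\to\infty$. The scalar $\langle f_n,h_if_n\rangle$ converges because $\||h|^{1/2}f_n\|$ does; for $\chi=\epsilon(g)$ with $g\in\cD(|h|)$ the field term splits as
\begin{equation*}
\langle\ph(h_if_n)\epsilon(g),W(f_n)^*\phi\rangle
= \overline{\langle f_n,h_ig\rangle}\,\langle\epsilon(g),W(f_n)^*\phi\rangle
+ \langle\epsilon(g),a(h_if_n)W(f_n)^*\phi\rangle,
\end{equation*}
and both summands converge: the first because $h_ig\in L^2$ and $f_n\to f$ in $L^2$, and the second because \cref{Prop:Bregninger}\cref{part:secondquantizedupperbounds} gives $\|a(h_if_n)W(f_n)^*\phi\|\le\|\omega^{-1/2}h_if_n\|\,\|d\Gamma(\omega)^{1/2}W(f_n)^*\phi\|$, with both factors convergent (the first by your $\omega^{-1/2}|h|$-convergence hypothesis, the second via \cref{part:Weylunbounded}). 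So there is no ``obstruction'' and $\|h_if_n\|\to\infty$ is simply invisible: once the pairing is written against a \emph{fixed} test vector $\epsilon(g)$, only the annihilation part $a(h_if_n)$ ever acts on $\epsilon(g)$, and this stays $d\Gamma(\omega)^{1/2}$-bounded uniformly in $n$. The divergence you need lives in the \emph{norm} of the creation part, $\|a^\dagger(h_if_\Lambda)\psi\|\ge\|h_if_\Lambda\|\,\|\psi\|$, and a weak pairing against a fixed vector cannot detect it. This is exactly the role of \cref{bound,lem:zerovector} in the paper: one first extracts the quantitative bound $\limsup_{\eps\to0}\limsup_{\Lambda\to\infty}\|B_\eps^{-1}a^\dagger(h_if_\Lambda)\psi\|<\infty$ from the transformation formula, and then shows that this norm is bounded below, up to a controlled error, by $\bigl(\int|h_if_\Lambda|^2\bigr)^{1/2}\|\psi\|$, forcing $\psi=0$ when $h_if\notin L^2$. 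Without some such lower bound on an actual norm, \cref{part:notsquareint} does not follow.
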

\noindent
This theorem is an extension of \cite[Lemma C.3, C.4 \& Cor. C.5]{GriesemerWuensch.2018} and \cite[Lemma A.4]{HiroshimaMatte.2019}. The proof is obtained in several lemmas. For $\eps>0$ and $\lambda\in[0,\infty)$ we write
$$ B_\eps (\lambda) := (\eps d\Gamma(\omega)+1+\eps\lambda)^{1/2}\quad\mbox{and}\quad B_\eps:=B_\eps(0).  $$
Further, if $f$ satisfies $W(f)\cD(B)\subset \cD(B)$, then we define
$$C_{f,\varepsilon}=B_\varepsilon W(f)B_\varepsilon^{-1} \quad\mbox{and}\quad C_{f}:=C_{f,1}.$$

\begin{lem}\label{core}
	The set $\cE := \{ \epsilon(g) \mid g\in \cD(\omega+\lvert h \lvert) \}$ spans a core for $A_s$. 
\end{lem}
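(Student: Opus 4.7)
The plan is to show that $\operatorname{span}(\cE)$ is $A_s$-graph-dense in $\cD(A_s)$ in two steps: (a) the algebraic symmetric Fock space over $\cD_0 := \cD(\omega+|h|)$ is a core for $A_s$, and (b) every elementary symmetric tensor $g_1\otimes_s\cdots\otimes_s g_n$ with $g_i\in\cD_0$ lies in the $A_s$-closure of $\operatorname{span}(\cE)$.

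For the inclusion $\cE\subset\cD(A_s)$, using $\epsilon(g)^{(m)} = g^{\otimes m}/\sqrt{m!}$ together with the pointwise bounds $\omega^{(m)}(k_1,\ldots,k_m)\le m\max_i\omega(k_i)$ and $|h^{(m)}|\le\sum_i|h(k_i)|$, a direct computation gives
\[
\|A_s\epsilon(g)\|^2 \le C\sum_{m\ge 0}\frac{m^2}{m!}\|g\|^{2(m-1)}\bigl(\|g\|^2+\|\omega^{1/2}g\|^2+\||h|g\|^2\bigr)<\infty
\]
for every $g\in\cD_0$. For step (a), I would exploit that $A_s$ preserves each Fock sector, so the cutoff $1_{\{N\le M\}}$ with $N := d\Gamma(1)$ commutes with $A_s$ and yields $A_s$-graph convergent approximation of any $\psi\in\cD(A_s)$ by finite-particle vectors. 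It then suffices to approximate a fixed sector vector $\psi^{(n)}\in\FS^{(n)}\cap\cD(A_s|_{\FS^{(n)}})$ by symmetric tensors of $\cD_0$-elements in the $A_s$-norm. Setting $E_M := \{k\in\IR^\nu : \omega(k)+|h(k)|\le M\}$ and letting $\pi_M := 1_{E_M}$ act as a multiplication operator on $\cH$, dominated convergence gives $\pi_M^{\otimes n}\psi^{(n)}\to\psi^{(n)}$ in the $A_s|_{\FS^{(n)}}$-graph norm, while on $L^2(E_M^n)$ the operator $A_s|_{\FS^{(n)}}$ acts as a bounded multiplication. Hence $A_s$-density there reduces to the standard $L^2(E_M^n)$-density of symmetric simple tensors of $L^2(E_M)$-functions, and $L^2(E_M)\subset\cD_0$.

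For step (b), the Taylor expansion $\epsilon(g) = \sum_{m\ge 0} g^{\otimes m}/\sqrt{m!}$ yields the key identity
\[
\sqrt{n!}\,g_1\otimes_s\cdots\otimes_s g_n = \partial_{t_1}\cdots\partial_{t_n}\,\epsilon(t_1 g_1+\cdots+t_n g_n)\big|_{t=0},
\]
and the iterated finite-difference formula
\[
\partial_{t_1}\cdots\partial_{t_n} F(t)\big|_{t=0} = \lim_{h\to 0}\, h^{-n}\sum_{\varepsilon\in\{0,1\}^n}(-1)^{n-|\varepsilon|} F(\varepsilon_1 h, \ldots, \varepsilon_n h)
\]
then represents $g_1\otimes_s\cdots\otimes_s g_n$ as a limit of linear combinations of coherent states $\epsilon(h\sum_i\varepsilon_i g_i)\in\cE$. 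The main obstacle is securing this convergence in the $A_s$-graph norm rather than merely in $L^2$. This reduces to checking that $t\mapsto A_s\epsilon(\sum_i t_i g_i)$ is $C^n$ in $L^2$-norm, which in turn follows by termwise differentiation of the series $A_s\epsilon(\sum_i t_i g_i) = \sum_{m\ge 0}(1+(\omega^{(m)})^{1/2}+|h^{(m)}|^s)(\sum_i t_i g_i)^{\otimes m}/\sqrt{m!}$; the same sectorwise estimates from the first step show absolute $L^2$-convergence uniformly for $t$ in compact subsets of $\IR^n$, so all mixed partials exist and are given by termwise differentiation. The factorial decay $1/\sqrt{m!}$ dominates the at-most-linear growth in $m$ of $(\omega^{(m)})^{1/2}+|h^{(m)}|^s$ on tensors of $\cD_0$-elements, yielding the required uniform bound across Fock sectors.
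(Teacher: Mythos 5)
Your proposal is correct, but it takes a genuinely different and much more hands-on route than the paper.

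The paper's proof is three short steps: (i) cite Parthasarathy's result that $\operatorname{span}\cE$ is dense in $\cD(d\Gamma(\omega+|h|))$ with graph norm; (ii) observe that $\cE$ is invariant under the strongly continuous unitary group $e^{itd\Gamma(\omega+|h|)} = \Gamma(e^{it(\omega+|h|)})$ (since $W(e^{it(\omega+|h|)}g)\in\cE$ whenever $\epsilon(g)\in\cE$), so by the Reed--Simon invariance criterion $\operatorname{span}\cE$ is a core for $d\Gamma(\omega+|h|)$; (iii) apply \cref{commuteingCore}, noting $A_s$ strongly commutes with $d\Gamma(\omega+|h|)$ and is $d\Gamma(\omega+|h|)$-bounded by \cref{Lem:propertiesOfFuncDGamma}, to transfer the core from $d\Gamma(\omega+|h|)$ to $A_s$. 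You instead bypass the auxiliary operator $d\Gamma(\omega+|h|)$ entirely and work directly with $A_s$: sector truncation plus momentum cutoff reduces the core property to $L^2$-density of symmetric simple tensors on a bounded region (your step (a)), and polarization plus finite differences of coherent states in the $A_s$-graph norm does the rest (your step (b)). Your step (b) is essentially a self-contained proof of the Parthasarathy fact (adapted to the operator $A_s$ rather than $d\Gamma$), and your step (a) replaces the slick group-invariance and commuting-core mechanism by an explicit double approximation. The paper's argument is shorter and reuses machinery already in \cref{app:operator}; yours is longer but self-contained and makes the analytic content explicit, which is a fair trade-off.

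Two small points worth tightening if you wrote this up: the bound you display for $\|A_s\epsilon(g)\|^2$ implicitly uses $|h^{(m)}|^s \le 1 + |h^{(m)}|$ and then expands $(\sum_i|h(k_i)|)^2$, which produces two distinct constants $\||h|g\|^2\|g\|^{2(m-2)}$ and $\||h|^{1/2}g\|^4\|g\|^{2(m-4)}$ rather than the single summand you wrote — the factorial decay still wins, but the displayed inequality is not literally correct as written; and in step (b) you should make explicit that each $h\sum_i\varepsilon_ig_i$ lies in $\cD_0$ (clear, since $\cD_0$ is a linear space), so that the finite-difference combinations really do lie in $\operatorname{span}\cE$. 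Neither affects the validity of the argument.
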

\begin{proof}
	Note that $\cE$ spans a dense set inside $\cD(d\Gamma(\omega+\lvert h \lvert))$ (cf. \cite[Prop. 20.7]{Parthasarathy.1992}) and is left invariant by $\Gamma(e^{it(\omega+\lvert h \lvert)})=e^{itd\Gamma(\omega+\lvert h \lvert)}$, so $\cE$ spans a core for $d\Gamma(\omega+\lvert h \lvert)$ (\cite[Thm. VIII.11]{ReedSimon.1972}). Now $A_s$ commutes with $d\Gamma(\omega+\lvert h \lvert)$ and is $d\Gamma(\omega+|h|)$-bounded by \cref{Lem:propertiesOfFuncDGamma}, so the conclusion follows from \cref{commuteingCore}
\end{proof}

\begin{lem}\label{lem:almostThere}
	If $f\in\cD(\lvert h\lvert)\cap \cD(\omega^{-1/2}\lvert h\lvert )$, then $W(f)\cD(A_1) \subset \cD(d\Gamma(h_i))$ for all $i\in \{1,\ldots,p \}$ and \cref{eq:transformationWeyl} holds on $\cD(A_1)$.
\end{lem}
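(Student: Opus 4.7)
My strategy is to first establish the identity on the coherent-state core $\cE$ from \cref{core} by direct computation, and then extend to all of $\cD(A_1)$ by exploiting closedness of $d\Gamma(h_i)$.

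For the coherent-state step, let $g\in\cD(\omega+|h|)$. Then $f+g\in\cD(|h|)\subset\cD(h_i)$, from which a direct sum-over-components computation gives $\epsilon(f+g)\in\cD(d\Gamma(h_i))$ and $d\Gamma(h_i)\epsilon(f+g)=a^\dagger(h_i(f+g))\epsilon(f+g)$; combined with $W(f)\epsilon(g)=e^{-\frac12\|f\|^2-\langle f,g\rangle}\epsilon(f+g)$ this shows $W(f)\epsilon(g)\in\cD(d\Gamma(h_i))$. Working with matrix elements between coherent states via $\langle\epsilon(u),\epsilon(v)\rangle=e^{\langle u,v\rangle}$ and $a(v)\epsilon(u)=\langle v,u\rangle\epsilon(u)$, I verify the Weyl transformation rule
\[ W(f)^*a^\dagger(v)W(f) = a^\dagger(v)+\langle f,v\rangle \qquad (v\in\HS). \]
Together with $a^\dagger(h_ig)\epsilon(g)=d\Gamma(h_i)\epsilon(g)$ and $a(h_if)\epsilon(g)=\langle f,h_ig\rangle\epsilon(g)$ (the latter by reality of $h_i$), this yields
\[ W(f)^*d\Gamma(h_i)W(f)\epsilon(g) = \bigl(d\Gamma(h_i)+\varphi(h_if)+\langle f,h_if\rangle\bigr)\epsilon(g), \]
and the sought identity \eqref{eq:transformationWeyl} on $\cE$ follows by substituting $-f$ for $f$ (the hypothesis is symmetric in this substitution).

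For the extension, let $\psi\in\cD(A_1)$ and pick $\psi_n$ in $\operatorname{span}\cE$ with $A_1\psi_n\to A_1\psi$ via \cref{core}. Since $A_1$ dominates both $d\Gamma(\omega)^{1/2}$ and, by \cref{Lem:propertiesOfFuncDGamma}, $|d\Gamma(h_i)|$, the sequences $d\Gamma(\omega)^{1/2}\psi_n$ and $d\Gamma(h_i)\psi_n$ converge. The assumption $f\in\cD(|h|)\cap\cD(\omega^{-1/2}|h|)$ puts $h_if$ in $\cD(\omega^{-1/2})$, so $\varphi(h_if)$ is $(d\Gamma(\omega)+1)^{1/2}$-bounded by \cref{Prop:Bregninger} \cref{part:secondquantizedupperbounds}, whence $\varphi(h_if)\psi_n\to\varphi(h_if)\psi$. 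Applying the bounded operator $W(f)^*$ to the rearranged Step~1 identity
\[ d\Gamma(h_i)W(f)^*\psi_n = W(f)^*\bigl(d\Gamma(h_i)-\varphi(h_if)+\langle f,h_if\rangle\bigr)\psi_n \]
makes the right-hand side convergent while $W(f)^*\psi_n\to W(f)^*\psi$, so closedness of $d\Gamma(h_i)$ gives $W(f)^*\psi\in\cD(d\Gamma(h_i))$ together with \eqref{eq:transformationWeyl} on $\cD(A_1)$. A last application of the $f\leftrightarrow-f$ symmetry yields the stated inclusion $W(f)\cD(A_1)\subset\cD(d\Gamma(h_i))$. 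The main obstacle is the Weyl transformation rule for $a^\dagger$; I expect verifying it directly on matrix elements between coherent states, as proposed, to be cleaner than the alternative derivation via the decomposition $2a^\dagger(v)=\varphi(v)-i\varphi(iv)$, which requires careful bookkeeping of real and imaginary parts of $\langle f,v\rangle$.
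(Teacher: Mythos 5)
Your proposal is correct and follows essentially the same strategy as the paper: verify the Weyl transformation identity on the coherent-state span $\cE$ via matrix-element computations, then lift it to $\cD(A_1)$ by exploiting that $\cE$ is a core for $A_1$ (from \cref{core}), closedness of the transformed $d\Gamma(h_i)$, and $A_1$-boundedness of the right-hand side. The only cosmetic difference is that you isolate the Weyl shift rule $W(f)^*a^\dagger(v)W(f)=a^\dagger(v)+\langle f,v\rangle$ as a standalone step, whereas the paper computes $\braket{\epsilon(g_2),W(f)d\Gamma(h)W(f)^*\epsilon(g_1)}$ directly and matches it against the right-hand side; you are also a bit more explicit about the closedness-extension bookkeeping and the final $f\leftrightarrow-f$ substitution, both of which the paper leaves to the reader.
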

\begin{proof}
	The left hand side of \cref{eq:transformationWeyl} is a closed operator and the right hand side is $A_1$-bounded, so by \cref{core} it suffices to show \cref{eq:transformationWeyl} holds on $\cE$.
	Let $g\in\cD(h_i)$. Then using \cref{defn:dGamma-scalar,defn:creation,defn:corherentstate} we easily see $\epsilon(g)\in\cD(d\Gamma(h))\cap \cD(a^\dag(hg))$ and
	$d\Gamma(h_i)\epsilon(g)=a^\dag(h_ig)\epsilon(g).$
	Further, \cref{defn:annihilation,defn:corherentstate} yield $a(f)\epsilon(g) = \braket{f,g}\epsilon(g)$ for all  $f,g\in \HS$.
	Now let $g_1,g_2\in\cD(\omega+|h|)$. Using \cref{defn:Weyloperator}, $W(f)^* = W(-f)$ and $\braket{\epsilon(g_2),\epsilon(g_1)}=e^{\braket{g_2,g_1}}$ we then have
	\begin{align*}
	\braket{\epsilon(g_2),W(f)d\Gamma(h)W(f)^*\epsilon(g_1)}
	&= e^{-\|f\|^2+\braket{f,g_1}+\braket{g_2,f}}\braket{\epsilon(g_2-f),d\Gamma(h)\epsilon(g_1-f)}\\
	&= \braket{g_2-f,h(g_1-f)}e^{\braket{g_2,g_1}}\\
	&= \braket{\epsilon(g_2),(d\Gamma(h)-\ph(hf)+\braket{f,hf})\epsilon(g_1)}
	\end{align*}
	As $\cE$ is total (as it spans a core), this proves \cref{eq:transformationWeyl} holds on $\cE$.
\end{proof}
\noindent 

\begin{lem}\label{ConvGeneral}
	Let $\eps>0, (f_n)\subset L^2(\IR^\nu)$ and assume $\lim\limits_{n\to\infty} f_n=: f$ exists in $L^2(\IR^\nu)$.\\ If $W(f_n)\cD(B)\subset \cD(B)$ for all $n\in\IN$ and both $C_{f_n,\eps}^*C_{f_n,\eps}$  and $C_{-f_n,\eps}^*C_{-f_n,\eps}$ converge strongly, then $W(f)\cD(B)\subset \cD(B)$ and $\slim\limits_{n\to\infty} C_{f_n,\eps}=C_{f,\eps}$, $\slim\limits_{n\to\infty} C_{f_n,\eps}^*=C_{f,\eps}^*$.
\end{lem}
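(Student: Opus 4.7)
\textbf{Plan for the proof of \cref{ConvGeneral}.}

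The plan proceeds in four stages: obtain uniform boundedness, extract weak convergence together with the domain invariance via duality against $\cD(B_\eps)$, sharpen to strong convergence of the adjoints via an explicit formula, and finally upgrade to strong convergence of $C_{f_n,\eps}$ itself.

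First, since $C_{f_n,\eps}^*C_{f_n,\eps}$ converges strongly, the uniform boundedness principle gives $\sup_n\|C_{f_n,\eps}\|^2=\sup_n\|C_{f_n,\eps}^*C_{f_n,\eps}\|<\infty$, and analogously $\sup_n\|C_{-f_n,\eps}\|<\infty$. Next, for $\phi\in\cD(B_\eps)$ and $\psi\in\cH$, the selfadjointness of $B_\eps$ combined with the hypothesis $W(f_n)\cD(B)\subset\cD(B)$ gives
\begin{equation*}
\braket{\phi,C_{f_n,\eps}\psi}=\braket{B_\eps\phi,W(f_n)B_\eps^{-1}\psi}.
\end{equation*}
Strong continuity of $g\mapsto W(g)$ (\cref{Prop:Bregninger}\cref{part:Weylstrongcontinuous}) together with $f_n\to f$ in $L^2$ makes the right-hand side converge to $\braket{B_\eps\phi,W(f)B_\eps^{-1}\psi}$. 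Combined with the uniform bound on $\|C_{f_n,\eps}\|$ and the density of $\cD(B_\eps)$, this shows that $C_{f_n,\eps}\psi$ converges weakly to some $\eta_\psi\in\cH$ satisfying $\braket{\phi,\eta_\psi}=\braket{B_\eps\phi,W(f)B_\eps^{-1}\psi}$ for all $\phi\in\cD(B_\eps)$. The resulting boundedness of $\phi\mapsto\braket{B_\eps\phi,W(f)B_\eps^{-1}\psi}$ on $\cD(B_\eps)$ and the selfadjointness of $B_\eps$ force $W(f)B_\eps^{-1}\psi\in\cD(B_\eps)$ with $B_\eps W(f)B_\eps^{-1}\psi=\eta_\psi$, yielding both $W(f)\cD(B)\subset\cD(B)$ and $\eta_\psi=C_{f,\eps}\psi$.

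In the third stage, by selfadjointness of $B_\eps$ one has $C_{f_n,\eps}^*\phi=B_\eps^{-1}W(-f_n)B_\eps\phi$ for $\phi\in\cD(B_\eps)$. Strong continuity of the Weyl representation and boundedness of $B_\eps^{-1}$ deliver norm convergence to $B_\eps^{-1}W(-f)B_\eps\phi=C_{f,\eps}^*\phi$, where the second equality uses the adjoint form of the previous step. Uniform boundedness and density then extend strong convergence $C_{f_n,\eps}^*\to C_{f,\eps}^*$ to all of $\cH$, and the symmetric argument applied to $-f_n$ gives $C_{-f_n,\eps}^*\to C_{-f,\eps}^*$ strongly.

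The main obstacle is the fourth stage, upgrading the weak convergence of $C_{f_n,\eps}$ to strong. Writing $T=\slim_n C_{f_n,\eps}^*C_{f_n,\eps}$, the expansion
\begin{equation*}
\|C_{f_n,\eps}\psi-C_{f,\eps}\psi\|^2=\braket{\psi,C_{f_n,\eps}^*C_{f_n,\eps}\psi}-2\operatorname{Re}\braket{C_{f_n,\eps}\psi,C_{f,\eps}\psi}+\|C_{f,\eps}\psi\|^2
\end{equation*}
converges, via stages~1--3, to $\braket{\psi,T\psi}-\|C_{f,\eps}\psi\|^2$. Hence strong convergence of $C_{f_n,\eps}$ is equivalent to the identification $T=C_{f,\eps}^*C_{f,\eps}$. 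This last identification is carried out by combining the strong convergence of the adjoints from stage~3 with the symmetric hypothesis on $C_{-f_n,\eps}^*C_{-f_n,\eps}$ and the algebraic inversion relation $C_{-f_n,\eps}=C_{f_n,\eps}^{-1}$, which follows from $W(f_n)W(-f_n)=1$ together with the bilateral domain invariance; propagating the inverse identity to the limit via strong convergence of both $C_{f_n,\eps}^*$ and $C_{-f_n,\eps}^*$ pins down $T$.
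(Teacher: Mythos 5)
Your four stages reproduce the paper's argument in substance: uniform boundedness from the Banach--Steinhaus theorem applied to $C_{\pm f_n,\eps}^*C_{\pm f_n,\eps}$, the domain statement and strong convergence of adjoints via the formula $C_{f_n,\eps}^*\phi=B_\eps^{-1}W(-f_n)B_\eps\phi$ together with strong continuity of the Weyl representation (the paper packages exactly this into its \cref{WeakStrongConvergence}\cref{part:strongconvergenceproduct}), and finally the algebraic relation $C_{-f_n,\eps}=C_{f_n,\eps}^{-1}$ to promote weak to strong convergence. Two remarks. First, stage~2 is largely redundant once stage~3 is in place: strong convergence $C_{f_n,\eps}^*\to C_{f,\eps}^*$ already yields $\wlim_n C_{f_n,\eps}=C_{f,\eps}$ by taking adjoints, and the domain invariance $W(f)\cD(B)\subset\cD(B)$ also falls out of the closedness argument for $B_\eps W(f)B_\eps^{-1}$ there; this is the order in which the paper proceeds. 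Second, and more importantly, your stage~4 is framed in a way that obscures the decisive step. Reducing strong convergence to the identity $T=C_{f,\eps}^*C_{f,\eps}$ via the expansion of $\|C_{f_n,\eps}\psi-C_{f,\eps}\psi\|^2$ is correct, but the phrase ``propagating the inverse identity to the limit'' is too vague: taking strong limits in $C_{-f_n,\eps}^*C_{f_n,\eps}^*=1$ only recovers $C_{-f,\eps}^*C_{f,\eps}^*=1$, which you already knew and which does not by itself identify $T$. The step that actually closes the argument is the product identity
\begin{equation*}
C_{f_n,\eps}=C_{-f_n,\eps}^*\bigl(C_{f_n,\eps}^*C_{f_n,\eps}\bigr),
\end{equation*}
obtained from $C_{-f_n,\eps}^*C_{f_n,\eps}^*=(C_{f_n,\eps}C_{-f_n,\eps})^*=1$. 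Since $C_{-f_n,\eps}^*$ converges strongly with uniform bound and $C_{f_n,\eps}^*C_{f_n,\eps}$ converges strongly by hypothesis, the right-hand side converges strongly; comparing with the weak limit then forces $\slim_n C_{f_n,\eps}=C_{f,\eps}$, and $T=C_{f,\eps}^*C_{f,\eps}$ drops out as a corollary. This is precisely the one-line conclusion of the paper's proof, and making it explicit renders your norm-expansion detour unnecessary.
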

\begin{proof}
	Since $\lVert C_{\pm f_n,\eps}^*C_{\pm f_n,\eps} \lVert=\lVert C_{\pm f_n,\eps}^* \lVert^2$, we see $C_{\pm f_n,\eps}^*$ is uniformly bounded. As
	$\slim\limits_{n\rightarrow \infty} B_\eps^{-1}W(f_n)=B_\eps^{-1}W( f)$ by \cref{Prop:Bregninger}, \partref{WeakStrongConvergence}{part:strongconvergenceproduct} implies
	$W(f)\cD(B)\subset \cD(B)$ and $\slim\limits_{n\rightarrow \infty} C_{\pm f_n,\eps}^*=C_{\pm f,\eps}^*$.
	Taking adjoints, this yields $\wlim\limits_{n\rightarrow \infty} C_{f_n,\eps} = C_{f,\eps}$. Now, the equality $C_{f_n,\eps} = C_{- f_n,\eps}^*(C_{f_n,\eps}^*C_{ f_n,\eps})$ shows the convergence is actually strong.
\end{proof}
In the following \lcnamecref{convOdd}, we discuss the construction of field operators for not square-integrable functions.
\begin{lem}\label{convOdd}
	Let $f,g:\RR^\nu\to \CC$ satisfy $\omega^{-1/2}f,\omega^{-1/2}g\in L^2(\RR^\nu)$. Then for $\eps>0$
	\begin{enumerate}[(1)]
		\item\label{part:l2sequence}  There is $(f_n)\subset\cD(\omega^{-1/2})$ such that $\lim\limits_{n\rightarrow \infty }\omega^{-1/2}f_n=\omega^{-1/2}f$ in $L^2(\RR^\nu)$.
		\item If $(f_n)$ is a sequence as in \cref{part:l2sequence}, then there is a bounded operator $\widetilde\ph_\eps(f)$ independent of the sequence such that $\widetilde \ph_\eps(f)=\lim\limits_{n\to\infty} B_\eps^{-1}\ph(f_n)B_\eps^{-1} $.
		\item $\lVert \widetilde{\varphi}_\eps(f) \lVert\leq 2\eps^{-1/2}\lVert  \omega^{-1/2}f \lVert$ and $\widetilde{\varphi}_\eps(f)-\widetilde {\varphi}_\eps(g)=\widetilde{\varphi}_\eps(f-g)$.
	\end{enumerate}
\end{lem}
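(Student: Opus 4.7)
My plan is to build $\widetilde\ph_\eps(f)$ by operator-norm approximation based on the refined sandwich estimate
\begin{equation*}
	\|B_\eps^{-1}\ph(g)B_\eps^{-1}\| \le 2\eps^{-1/2}\|\omega^{-1/2}g\| \qquad\mbox{for all}\ g\in\cD(\omega^{-1/2}),
\end{equation*}
whose right-hand side depends only on $\omega^{-1/2}g$ and not on $\|g\|$. Once this is in hand, all three parts follow. For (1), I would simply take $f_n:=1_{\{1/n\le\omega\le n\}}f$. The pointwise bound $|f_n|^2\le n|\omega^{-1/2}f|^2$ gives $f_n\in L^2(\IR^\nu)$, while $|\omega^{-1/2}f_n|\le|\omega^{-1/2}f|$ shows $f_n\in\cD(\omega^{-1/2})$; since $\omega>0$ almost everywhere, dominated convergence yields $\omega^{-1/2}f_n\to\omega^{-1/2}f$ in $L^2$.

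To establish the sandwich estimate (and thereby (2)), I split $\ph(g)=a(g)+a^\dag(g)$. The spectral theorem applied to $d\Gamma(\omega)\le\eps^{-1}B_\eps^2$ gives $\|d\Gamma(\omega)^{1/2}B_\eps^{-1}\|\le\eps^{-1/2}$, and composing with the first bound in \cref{Prop:Bregninger}~\cref{part:secondquantizedupperbounds} yields $\|B_\eps^{-1}a(g)B_\eps^{-1}\|\le\eps^{-1/2}\|\omega^{-1/2}g\|$. Since $B_\eps^{-1}$ is self-adjoint and $(a(g))^\ast=a^\dag(g)$, the adjoint $(B_\eps^{-1}a(g)B_\eps^{-1})^\ast$ is an everywhere-defined bounded extension of $B_\eps^{-1}a^\dag(g)B_\eps^{-1}$ with the same norm, and summing the two estimates gives the displayed inequality. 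Given an approximating sequence $(f_n)$ as in (1), applying the sandwich bound to $f_n-f_m\in\cD(\omega^{-1/2})$ together with linearity of $\ph$ shows that $B_\eps^{-1}\ph(f_n)B_\eps^{-1}$ is operator-norm Cauchy; I define $\widetilde\ph_\eps(f)$ as its limit, and independence of the chosen sequence is immediate from applying the same bound to the difference of two approximating sequences. Part (3) is then routine: the norm bound passes to the limit, and the identity $\widetilde\ph_\eps(f)-\widetilde\ph_\eps(g)=\widetilde\ph_\eps(f-g)$ follows from linearity of $\ph$ on $\cD(\omega^{-1/2})$-functions (using $f_n-g_n$ as an approximating sequence for $f-g$).

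The only real pitfall is the need for the improved bound in which $\|g\|$ does not appear. The nominal estimate $\|\ph(g)\psi\|\le 2\|(\omega^{-1/2}+1)g\|\|(d\Gamma(\omega)+1)^{1/2}\psi\|$ from \cref{Prop:Bregninger}~\cref{part:secondquantizedupperbounds} implicitly carries a $\|g\|$ term and would only extend the construction to $f\in L^2\cap\cD(\omega^{-1/2})$, not to general $f$ merely satisfying $\omega^{-1/2}f\in L^2$. Bypassing this by bounding $a(g)$ directly from its cleaner $\omega^{-1/2}$-estimate and recovering $a^\dag(g)$ by duality is precisely what lets the construction reach functions $f$ that need not lie in $L^2$ themselves.
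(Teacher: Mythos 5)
Your proposal is correct and follows essentially the same route as the paper: establish the refined bound $\|B_\eps^{-1}\ph(g)B_\eps^{-1}\|\le 2\eps^{-1/2}\|\omega^{-1/2}g\|$ by composing the clean $a(g)$-estimate from \cref{Prop:Bregninger}~\cref{part:secondquantizedupperbounds} with $\|d\Gamma(\omega)^{1/2}B_\eps^{-1}\|\le\eps^{-1/2}$ (the paper phrases this as applying the same lemma with $\eps\omega$ in place of $\omega$, which is the identical computation), recover the creation side by taking adjoints using $B_\eps\ge 1$, and then conclude (2)--(3) by a norm-Cauchy argument together with linearity of $\ph$ on $\cD(\omega^{-1/2})$. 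The only cosmetic difference is the truncation in (1): you cut $\omega$ from above and below, while the paper cuts $|f|$, $\omega$ from below, and $|k|$; both yield a sequence in $\cD(\omega^{-1/2})$ with $\omega^{-1/2}f_n\to\omega^{-1/2}f$ by dominated convergence.
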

\begin{proof}
	A possible choice in \cref{part:l2sequence} is $f_n(k)=f(k)1_{\{ \lvert f\lvert<n, n^{-1}<\omega,|k|<n  \}}(k)$.
	\cref{Prop:Bregninger} (applied with $\varepsilon \omega$ instead of $\omega$) yields $\|a(h)B_\eps^{-1}\|\le \eps^{-1/2} \|\omega^{-1/2}h\| $ for $h\in\cD(\omega^{-1/2})$. Hence, using $\|B_\eps^{-1}a^\dag(h)\|=\|(a(h)B_\eps^{-1})^*\|$ and $B_\eps\ge 1$, we find $$ \|B_\eps^{-1}\ph(h)B_\eps^{-1}\|\le 2 \eps^{-1/2} \|\omega^{-1/2}h\|.$$
	This inequality, the closedness of the bounded operators and the fact that for all $h_1,h_2\in\cD(\omega^{-1/2})$ the equality $B_\eps^{-1}\ph(h_1)B_\eps^{-1}-B_\eps^{-1}\ph(h_2)B_\eps^{-1}=B_\eps^{-1}\ph(h_1-h_2)B_\eps^{-1}$ holds  then finishes the proof.
\end{proof}

\begin{lem}\label{ConvMatte}
	Let $\eps>0$ and $f\in \cD(\omega^{1/2})$. Then $W(f)\cD(B)\subset \cD(B)$. Further, if $(f_n)\subset \cD(\omega^{1/2})$ converges to $f$ in $\omega^{1/2}$-norm, then $\slim\limits_{n\rightarrow \infty} C_{f_n,\eps}=C_{f,\eps}$,  $\slim\limits_{n\rightarrow \infty} C_{f_n,\eps}^*=C_{f,\eps}^*$ and $\lVert C_{f} \lVert\leq 1+\lVert  \omega^{1/2}f \lVert$.
\end{lem}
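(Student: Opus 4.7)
My plan is to first handle the regular case $f\in\cD(\omega)$ via the transformation formula from \cref{lem:almostThere} (applied with $p=1$, $h=\omega$), and then extend to general $f\in\cD(\omega^{1/2})$ by approximation, using \cref{convOdd} and \cref{ConvGeneral}. For $f\in\cD(\omega)$, the hypothesis $f\in\cD(\omega^{1/2})\cap \cD(\omega)\cap \cD(\omega^{-1/2}\omega)$ of \cref{lem:almostThere} is trivially satisfied, yielding $W(f)\cD(d\Gamma(\omega))\subset\cD(d\Gamma(\omega))$ (and hence $W(f)\cD(B)\subset\cD(B)$ by \cref{lem:domainsqrt}) together with the operator identity
\[
W(f)\,d\Gamma(\omega)\,W(-f) \;=\; d\Gamma(\omega) - \ph(\omega f) + \|\omega^{1/2}f\|^2
\]
on $\cD(d\Gamma(\omega))$. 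Multiplying by $\eps$, adding $1$, and sandwiching by $B_\eps^{-1}$ (using $W(f)^*=W(-f)$ from \cref{Prop:Bregninger} \cref{part:Weylstrongcontinuous}) produces the central identity
\[
C_{\pm f,\eps}^*\,C_{\pm f,\eps} \;=\; 1 \mp \eps\, B_\eps^{-1}\ph(\omega f)B_\eps^{-1} + \eps\,\|\omega^{1/2}f\|^2.
\]
Since $\omega^{-1/2}(\omega f)=\omega^{1/2}f\in L^2$, \cref{convOdd} bounds the middle term in norm by $2\eps^{-1/2}\|\omega^{1/2}f\|$, so $\|C_{\pm f,\eps}\|\le 1+\eps^{1/2}\|\omega^{1/2}f\|$; at $\eps=1$ this is the claimed bound on $\|C_f\|$.

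For the extension to $f\in\cD(\omega^{1/2})$, I would approximate by $f_n := 1_{\{\omega\le n\}}f\in\cD(\omega)$, which converges to $f$ in $\omega^{1/2}$-norm. Then $\omega^{-1/2}(\omega f_n)=\omega^{1/2}f_n$ is $L^2$-convergent, so by \cref{convOdd} the operators $B_\eps^{-1}\ph(\omega f_n)B_\eps^{-1}=\widetilde\ph_\eps(\omega f_n)$ converge in norm. Together with $\|\omega^{1/2}f_n\|^2\to\|\omega^{1/2}f\|^2$, the central identity forces $C_{\pm f_n,\eps}^*C_{\pm f_n,\eps}$ to converge in norm. \cref{ConvGeneral} then simultaneously yields $W(f)\cD(B)\subset\cD(B)$ and the strong convergences $\slim_n C_{\pm f_n,\eps}=C_{\pm f,\eps}$ together with their adjoint counterparts. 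Passing to the limit in the central identity (with $\widetilde\ph_\eps(\omega f)$ replacing $B_\eps^{-1}\ph(\omega f)B_\eps^{-1}$) extends it to all $f\in\cD(\omega^{1/2})$, and the norm bound $\|C_f\|\le 1+\|\omega^{1/2}f\|$ is preserved. For the stated convergence along a general sequence $f_n\to f$ in $\omega^{1/2}$-norm with $f_n,f\in\cD(\omega^{1/2})$, the extended identity and \cref{convOdd} again give norm convergence of $C_{\pm f_n,\eps}^*C_{\pm f_n,\eps}$, so a second application of \cref{ConvGeneral} delivers $\slim C_{f_n,\eps}=C_{f,\eps}$ and $\slim C_{f_n,\eps}^*=C_{f,\eps}^*$.

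The main obstacle is the transition from $f\in\cD(\omega)$, where the unbounded operator identity of \cref{lem:almostThere} applies directly, to general $f\in\cD(\omega^{1/2})$, where $\ph(\omega f)$ need not even be densely defined. This is resolved by sandwiching with $B_\eps^{-1}$, which turns the field operator into a bounded object, combined with the crucial observation from \cref{convOdd} that $B_\eps^{-1}\ph(g)B_\eps^{-1}$ depends norm-continuously only on $\omega^{-1/2}g\in L^2$; this lets one trade the stronger requirement $\omega f\in L^2$ for the weaker $\omega^{1/2}f\in L^2$ while preserving all identities and estimates.
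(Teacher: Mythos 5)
Your proof is correct and follows essentially the same approach as the paper: approximate $f\in\cD(\omega^{1/2})$ by truncations in $\cD(\omega)$, use \cref{lem:almostThere} with $p=1$, $h=\omega$ to obtain the explicit formula for $C^*C$, deduce norm convergence of $C_{\pm f_n,\eps}^*C_{\pm f_n,\eps}$ from \cref{convOdd}, and then invoke \cref{ConvGeneral} for the domain inclusion and the strong convergences, repeating the same combination for an arbitrary $\omega^{1/2}$-convergent sequence. One small slip: in your central identity the scalar piece should read $\eps\|\omega^{1/2}f\|^2 B_\eps^{-2}$, not $\eps\|\omega^{1/2}f\|^2$; this does not affect the norm bound (since $\|B_\eps^{-2}\|\le1$) or the convergence arguments, but the formula as written is not quite the correct operator identity.
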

\begin{proof}
	We set $g_n=f 1_{\{\omega<n\}} \in \cD(\omega)$. We apply \cref{lem:almostThere} with $p=1$ and $h=\omega$. This yields $W(g_n)\cD(B^2)\subset \cD(B^2)$, which implies $W(g_n)\cD(B)\subset \cD(B)$ by \cref{lem:domainsqrt}. Further, using \cref{eq:transformationWeyl},  on $\cD(B)$
	\begin{equation*}
	C_{g_n,\eps}^*C_{g_n,\eps}= B_\eps^{-1}W(g_n)^*B_\eps^2W(g_n)B_\eps^{-1}= 1-\eps B_\eps^{-1}{\varphi}(\omega f_n)B_\eps^{-1}+\eps\lVert \omega^{1/2}g_n \lVert^2 B_\eps^{-2}.
	\end{equation*}
	By \cref{convOdd}, the right hand side converges in norm as $n\to\infty$, so \cref{ConvGeneral} shows $W(f)\cD(B)\subset \cD(B)$ and 
	\begin{equation}\label{eqgtig}
	C_{f,\eps}^*C_{f,\eps}=1 -\eps\widetilde{\varphi}_\eps(\omega f)+\eps\lVert \omega^{1/2}f \lVert^2 B_\eps^{-2} \qquad \mbox{for any}\ f\in \cD(\omega^{1/2}).
	\end{equation}
	Another application of \cref{convOdd} then shows $C_{f_n,\eps}^*C_{f_n,\eps}$ is convergent if $(f_n)\subset \cD(\omega^{1/2})$ converges to $f$ in $\omega^{1/2}$-norm, and hence \cref{ConvGeneral} shows $\slim\limits_{n\rightarrow \infty} C_{f_n,\eps}=C_{f,\eps}$ and  $\slim\limits_{n\rightarrow \infty} C_{f_n,\eps}^*=C_{f,\eps}^*$. \cref{eqgtig,convOdd} now imply  $\lVert C_{f} \lVert^2= \lVert C_{f}^*C_{f} \lVert\leq 1+2\lVert \omega^{1/2}f \lVert+\lVert \omega^{1/2}f \lVert^2$.
\end{proof}

\begin{lem}\label{bound}
	Let $i\in \{1,..,p\}$ and $f\in\cD(|h_i|^{1/2})\cap \cD(\omega^{1/2})\cap \cD(|h_i|\omega^{-1/2})$.\\ Set $f_\Lambda=f1_{\{ \lvert h\lvert<\Lambda \}}$. 
	If $\psi\in \cD(B)\cap\cD(d\Gamma(h_i))$ and $W(f)\psi\in \cD(d\Gamma(h_i))$, then $$\limsup_{\varepsilon \rightarrow  0 } \limsup_{\Lambda\rightarrow  \infty } \lVert B_\varepsilon^{-1} a^{\dagger}(h_i f_\Lambda)\psi\lVert<\infty.$$
\end{lem}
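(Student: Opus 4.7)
The plan is to use the Weyl transformation identity from \cref{lem:almostThere} to rewrite $\ph(h_i f_\Lambda)\psi$ in terms of $W(f_\Lambda)^* d\Gamma(h_i) W(f_\Lambda)\psi$ and manifestly bounded remainders, and then to compare this with $W(f)^* d\Gamma(h_i) W(f)\psi$, which is a finite vector by the hypothesis $W(f)\psi\in\cD(d\Gamma(h_i))$. Since $f_\Lambda$ is compactly supported relative to both $\lvert h\rvert$ and $\omega$, it satisfies the assumptions of \cref{lem:almostThere}; moreover, a direct extension of that proof (which uses only the single coordinate $h_i$ and extends by closure in the graph norm of $d\Gamma(h_i)+d\Gamma(\omega)^{1/2}$) makes the identity valid on $\cD(B)\cap\cD(d\Gamma(h_i))$, yielding $\ph(h_if_\Lambda)\psi=W(f_\Lambda)^*d\Gamma(h_i)W(f_\Lambda)\psi-d\Gamma(h_i)\psi-\langle f_\Lambda,h_if_\Lambda\rangle\psi$. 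Writing $a^\dagger=\ph-a$ and applying $B_\eps^{-1}$ yields
\begin{equation*}
B_\eps^{-1} a^\dagger(h_if_\Lambda)\psi=B_\eps^{-1}W(f_\Lambda)^*d\Gamma(h_i)W(f_\Lambda)\psi-B_\eps^{-1}d\Gamma(h_i)\psi-\langle f_\Lambda,h_if_\Lambda\rangle B_\eps^{-1}\psi-B_\eps^{-1}a(h_if_\Lambda)\psi.
\end{equation*}
Three of the four terms are bounded uniformly in $\eps$ and $\Lambda$: $\|B_\eps^{-1}d\Gamma(h_i)\psi\|\le\|d\Gamma(h_i)\psi\|$ by hypothesis; $|\langle f_\Lambda,h_if_\Lambda\rangle|\le\|\lvert h_i\rvert^{1/2}f\|^2$ by $f\in\cD(\lvert h_i\rvert^{1/2})$; and by \cref{Prop:Bregninger}\cref{part:secondquantizedupperbounds} combined with $f\in\cD(\lvert h_i\rvert\omega^{-1/2})$, $\|B_\eps^{-1}a(h_if_\Lambda)\psi\|\le\|\omega^{-1/2}h_if\|\,\|d\Gamma(\omega)^{1/2}\psi\|$.

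The central task is therefore to control $\|B_\eps^{-1} W(f_\Lambda)^* d\Gamma(h_i) W(f_\Lambda)\psi\|$ as $\Lambda\to\infty$ by something $\eps$-independent. Using the Weyl group relation to write $W(f_\Lambda)=e^{i\theta_\Lambda}W(g_\Lambda)W(f)$ with $g_\Lambda:=f_\Lambda-f$, and setting $\phi:=W(f)\psi\in\cD(B)\cap\cD(d\Gamma(h_i))$ (by the hypothesis and \cref{thm:Weyltransformation-strong}\cref{part:Weylunbounded}), one has
\begin{equation*}
W(f_\Lambda)^* d\Gamma(h_i) W(f_\Lambda)\psi - W(f)^* d\Gamma(h_i) W(f)\psi = W(f)^*\bigl[W(g_\Lambda)^*d\Gamma(h_i)W(g_\Lambda)\phi - d\Gamma(h_i)\phi\bigr].
\end{equation*}
Approximating $g_\Lambda$ by the nice truncations $g_\Lambda^{(n)}:=g_\Lambda 1_{\{\lvert h\rvert<n\}}$, applying the identity of the previous paragraph to each $g_\Lambda^{(n)}$ with vector $\phi$, and moving $B_\eps^{-1}$ past $W(f)^*$ via $B_\eps^{-1}W(f)^*=W(f)^*\tilde B_{f,\eps}^{-1}$ with $\tilde B_{f,\eps}^{-1}:=W(f)B_\eps^{-1}W(f)^*$, one obtains after passing $n\to\infty$
\begin{equation*}
B_\eps^{-1}\bigl[W(f_\Lambda)^*d\Gamma(h_i)W(f_\Lambda)\psi-W(f)^*d\Gamma(h_i)W(f)\psi\bigr] = W(f)^*\tilde B_{f,\eps}^{-1}\bigl[\widetilde\ph(h_ig_\Lambda)\phi+\langle g_\Lambda,h_ig_\Lambda\rangle\phi\bigr],
\end{equation*}
where $\widetilde\ph(h_ig_\Lambda)\phi$ denotes the natural extension of the field operator applied to $\phi$, obtained via the adjoint-type bound $\|a(h_ig)\tilde B_{f,\eps}^{-1}\|\le\eps^{-1/2}\|\omega^{-1/2}h_ig\|+\|\lvert h_i\rvert^{1/2}g\|\,\|\lvert h_i\rvert^{1/2}f\|$ (which follows from the CCR-type identity $a(g)W(f)=W(f)[a(g)+\langle g,f\rangle]$).

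For each fixed $\eps$, as $\Lambda\to\infty$ the quantities $\|\omega^{-1/2}h_ig_\Lambda\|$, $\|\lvert h_i\rvert^{1/2}g_\Lambda\|$, and $|\langle g_\Lambda,h_ig_\Lambda\rangle|$ all tend to $0$ by dominated convergence from the three integrability hypotheses on $f$, while $\tilde B_{f,\eps}^{-1}$ remains uniformly bounded (indeed $\tilde B_{f,\eps}^{-2}\le 2B_\eps^{-2}$ for $\eps$ small, by the computation underlying \cref{ConvMatte}). Hence the displayed difference vanishes in norm, and combined with $\|B_\eps^{-1}W(f)^*d\Gamma(h_i)W(f)\psi\|\le\|d\Gamma(h_i)W(f)\psi\|$ (finite and $\eps$-independent by hypothesis) this gives $\limsup_\Lambda\|B_\eps^{-1}W(f_\Lambda)^*d\Gamma(h_i)W(f_\Lambda)\psi\|\le\|d\Gamma(h_i)W(f)\psi\|$, yielding the desired uniform limsup bound for $\|B_\eps^{-1}a^\dagger(h_if_\Lambda)\psi\|$. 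The main technical obstacle is the careful justification of the $n\to\infty$ limit: although $\ph(h_ig_\Lambda^{(n)})\phi$ need not converge in $\cF$ when $h_ig_\Lambda\notin L^2$, the compensated expression $\tilde B_{f,\eps}^{-1}\ph(h_ig_\Lambda^{(n)})\phi$ does converge strongly, and one must verify that its limit $\tilde B_{f,\eps}^{-1}\widetilde\ph(h_ig_\Lambda)\phi$ agrees with the appropriate sesquilinear-form extension of the Weyl identity applied to $\phi$.
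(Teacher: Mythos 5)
Your approach is in the right spirit and uses the same core ingredients as the paper's own proof: the Weyl transformation identity of \cref{lem:almostThere}, the $\eps$- and $\Lambda$-uniform remainder bounds, and the compensated field operator $\widetilde\ph$ from \cref{convOdd}. The decomposition of $B_\eps^{-1}a^\dagger(h_if_\Lambda)\psi$ into one ``hard'' term involving $W(f_\Lambda)^*d\Gamma(h_i)W(f_\Lambda)\psi$ plus three uniformly bounded pieces is the same as in the paper. Where you genuinely diverge is in controlling the hard term: the paper applies the Weyl identity to $f_\Lambda$ directly, observes that the \emph{right-hand side} of that identity converges as $\Lambda\to\infty$ (via \cref{convOdd}), and then peels off the Weyl factor with $C_{f_\Lambda,\eps}^*$ and invokes closedness of $d\Gamma(h_i)$ to identify the limit as $B_\eps^{-1}W(f)^*d\Gamma(h_i)W(f)\psi$. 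You instead factor $W(f_\Lambda)=e^{i\theta_\Lambda}W(g_\Lambda)W(f)$, compare with $W(f)^*d\Gamma(h_i)W(f)\psi$ directly, and try to show the difference vanishes by applying the Weyl identity to $g_\Lambda$. This is a workable idea and is more constructive than the paper's bootstrap, but it introduces an extra truncation layer $g_\Lambda^{(n)}$ that the paper avoids.

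The genuine gap is exactly the one you flag at the end, and I want to stress that it is not merely a formality: your key displayed identity
\begin{equation*}
B_\eps^{-1}\bigl[W(f_\Lambda)^*d\Gamma(h_i)W(f_\Lambda)\psi-W(f)^*d\Gamma(h_i)W(f)\psi\bigr]=W(f)^*\tilde B_{f,\eps}^{-1}\bigl[\widetilde\ph(h_ig_\Lambda)\phi+\langle g_\Lambda,h_ig_\Lambda\rangle\phi\bigr]
\end{equation*}
is only established for the truncations $g_\Lambda^{(n)}$, and passing $n\to\infty$ you must identify the limit of the \emph{left} side, i.e.\ prove that $B_\eps^{-1}W(g_\Lambda^{(n)})^*d\Gamma(h_i)W(g_\Lambda^{(n)})\phi$ converges to $B_\eps^{-1}W(g_\Lambda)^*d\Gamma(h_i)W(g_\Lambda)\phi$. (The right side converges by your compensated bound, but that does not identify the left side's limit.) Convergence of the compensated $\widetilde\ph$ is not the missing piece. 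What is missing is precisely the closedness step the paper uses: peel off $W(g_\Lambda^{(n)})^*$ via $C_{g_\Lambda^{(n)},\eps}^*$, observe that $\zeta_n:=B_\eps^{-1}W(g_\Lambda^{(n)})\phi\to B_\eps^{-1}W(g_\Lambda)\phi$ in norm, that $d\Gamma(h_i)\zeta_n=B_\eps^{-1}d\Gamma(h_i)W(g_\Lambda^{(n)})\phi$ converges, and then invoke closedness of $d\Gamma(h_i)$. Once you add that, the argument closes, but at that point you are effectively re-doing the paper's Step with an extra index. Two minor points: (i) your truncation $g_\Lambda^{(n)}=g_\Lambda 1_{\{|h|<n\}}$ does not obviously lie in $\cD(\omega^{-1/2}|h|)$ as required to invoke \cref{lem:almostThere} verbatim, since $\omega^{-1/2}g_\Lambda\notin L^2$ in general; you (and the paper, applied to $f_\Lambda$) need the observation that the proof of \cref{lem:almostThere} only uses the single coordinate $h_i$, so the hypotheses can be weakened to $h_i$ in place of $|h|$, and $\omega^{-1/2}|h_i|g_\Lambda^{(n)}\in L^2$ does hold; (ii) the remark $\tilde B_{f,\eps}^{-2}\le 2B_\eps^{-2}$ is unnecessary, since $\|\tilde B_{f,\eps}^{-1}\|=\|B_\eps^{-1}\|\le 1$ trivially.
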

\begin{proof}
	By \cref{lem:almostThere}, we have
	\begin{equation}\label{eq:core223}
	B_\eps^{-1}W(f_\Lambda)d\Gamma(h_i)W(f_\Lambda)^*\psi = d\Gamma(h_i)B_\varepsilon^{-1}\psi-\widetilde\ph(h_i f_\Lambda) B_\eps\psi+\braket{f_\Lambda,h_if_\Lambda}B_\varepsilon^{-1}\psi.
	\end{equation}
	Setting $C=\|d\Gamma(h_i)\psi\|+\||h_i|\omega^{-1/2}f\|\|d\Gamma(\omega)^{1/2}\psi\|+\||h_i|^{1/2}f\|^2\|\psi\|$, \cref{Prop:Bregninger} and \cref{eq:core223} imply
	$\|B_\varepsilon^{-1} a^{\dagger}(h_i f_\Lambda)\psi\|\le \|B_\eps^{-1}W(f_\Lambda)d\Gamma(h_i)W(f_\Lambda)^*\psi\|+C. $ 
	Note that \cref{convOdd} implies the right hand side of \cref{eq:core223} converges as $\Lambda\to\infty$. Hence,
	$$ B_\varepsilon^{-1}d\Gamma(h_i)W(f_\Lambda)^*\psi = C_{f_\Lambda,\varepsilon}^*  B_\eps^{-1}W(f_\Lambda)d\Gamma(h_i)W(f_\Lambda)^* \psi$$
	must also converge by \cref{ConvMatte}. As $B_\varepsilon^{-1}d\Gamma(h_i)$ is closed, the limit must be $B_\varepsilon^{-1}d\Gamma(h_i)W(f)^*\psi$. Hence, we obtain
	$$
	B_\eps^{-1}W(f_\Lambda)d\Gamma(h_i)W(f_\Lambda)^*\psi = C_{-f_\Lambda,\varepsilon}^*B_\varepsilon^{-1}d\Gamma(h_i)W(f_\Lambda)^*\psi
	$$
	converges to $B_\eps^{-1}W(f)d\Gamma(h_i)W(f)^*\psi$ as $\Lambda\to\infty$. We obtain
	\begin{equation*}
	\limsup_{\varepsilon \rightarrow  0 } \limsup_{\Lambda\rightarrow  \infty } \lVert B_\varepsilon^{-1}W(f_\Lambda )d\Gamma(h_i)W(f_\Lambda)^*\psi \lVert=\lVert d\Gamma(h_i)W(f)^*\psi \lVert<\infty.\qedhere
	\end{equation*}
\end{proof}
\begin{lem}\label{lem:zerovector}
	Let $i\in \{1,..,p\}$ and $f\in\cD(h_i\omega^{-1/2})\setminus\cD(h_i)$. Set
	$f_\Lambda=f1_{\{ \lvert h\lvert<\Lambda\}}$.  
	If $\psi\in \cD(B)$ and $\limsup\limits_{\varepsilon \rightarrow  0 } \limsup\limits_{\Lambda\rightarrow  \infty } \lVert B_\varepsilon^{-1} a^{\dagger}(h_i f_\Lambda)\psi\lVert<\infty$, then $\psi=0$.
\end{lem}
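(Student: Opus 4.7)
The plan is to lower-bound $\|B_\varepsilon^{-1}a^\dagger(h_i f_\Lambda)\psi\|^2$ by an explicit expression that necessarily diverges as $\Lambda\to\infty$ and $\varepsilon\to 0$, unless $\psi=0$. Writing $g:=h_if_\Lambda$ (which is bounded and compactly supported, so $g\in\HS$), I would first compute the norm sector-by-sector. Expanding the square with \cref{defn:creation} and noting that $B_\varepsilon^{-1}$ acts on $\cF^{(n+1)}$ by multiplication with $(1+\varepsilon\omega^{(n+1)})^{-1/2}$ gives
\begin{equation*}
\|B_\varepsilon^{-1}a^\dagger(g)\psi^{(n)}\|^2=\frac{1}{n+1}\sum_{j,l=0}^n\int\frac{\overline{g(k_j)}g(k_l)\overline{\psi^{(n)}(\hat k_j)}\psi^{(n)}(\hat k_l)}{1+\varepsilon\omega^{(n+1)}(k_0,\ldots,k_n)}dk_0\cdots dk_n.
\end{equation*}
The $(n+1)$ diagonal terms ($j=l$) are all equal by the permutation symmetry of $\psi^{(n)}$ and of $\omega^{(n+1)}$, and together they contribute $\int|g(k_0)|^2|\psi^{(n)}(\vec k)|^2(1+\varepsilon\omega(k_0)+\varepsilon\omega^{(n)}(\vec k))^{-1}dk_0\,d\vec k$.

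The core technical step is to show every off-diagonal term ($j\neq l$) is non-negative, so that discarding them produces a valid lower bound. By symmetry it suffices to treat $(j,l)=(0,1)$. For fixed spectators $\vec k_{\geq 2}$, that inner integral is $\int\!\!\int K(k_0,k_1)\overline{F(k_0)}F(k_1)dk_0\,dk_1$ with $F(k):=\overline{g(k)}\psi^{(n)}(k,\vec k_{\geq 2})$ and kernel
\begin{equation*}
K(k_0,k_1)=\frac{1}{1+\varepsilon\omega(k_0)+\varepsilon\omega(k_1)+\varepsilon\omega^{(n-1)}(\vec k_{\geq 2})}=\int_0^\infty e^{-t(1+\varepsilon\omega^{(n-1)}(\vec k_{\geq 2}))}e^{-t\varepsilon\omega(k_0)}e^{-t\varepsilon\omega(k_1)}dt,
\end{equation*}
which is a positive superposition of rank-one positive-semidefinite kernels (using that $\omega$ is real), so the double integral equals $\int_0^\infty e^{-t(1+\varepsilon\omega^{(n-1)})}|\int e^{-t\varepsilon\omega(k)}F(k)dk|^2 dt\geq 0$. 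Summing the diagonal lower bound over $n$ and applying Fubini I obtain
\begin{equation*}
\|B_\varepsilon^{-1}a^\dagger(g)\psi\|^2\geq\int\frac{|g(k)|^2}{1+\varepsilon\omega(k)}M_{\varepsilon,k}(\psi)\,dk, \quad M_{\varepsilon,k}(\psi):=(1+\varepsilon\omega(k))\langle\psi,(1+\varepsilon\omega(k)+\varepsilon d\Gamma(\omega))^{-1}\psi\rangle.
\end{equation*}

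Finally I would pass to the iterated limit. Since $|h_if_\Lambda|^2\nearrow|h_if|^2$ pointwise monotonically and $M_{\varepsilon,k}(\psi)\geq 0$, monotone convergence yields $\lim_{\Lambda\to\infty}\int|h_if_\Lambda|^2(1+\varepsilon\omega)^{-1}M_{\varepsilon,k}(\psi)dk=\int|h_if|^2(1+\varepsilon\omega)^{-1}M_{\varepsilon,k}(\psi)dk$. Because $\psi\in\cD(B)=\cD(d\Gamma(\omega)^{1/2})$, the spectral theorem together with dominated convergence gives $M_{\varepsilon,k}(\psi)\to\|\psi\|^2$ pointwise in $k$ as $\varepsilon\to 0$. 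Applying Fatou once more,
\begin{equation*}
\liminf_{\varepsilon\to 0}\lim_{\Lambda\to\infty}\|B_\varepsilon^{-1}a^\dagger(h_if_\Lambda)\psi\|^2\geq\int|h_if(k)|^2\|\psi\|^2dk=\|\psi\|^2\|h_if\|^2.
\end{equation*}
Since $f\notin\cD(h_i)$ the right-hand side equals $\|\psi\|^2\cdot\infty$, so the hypothesis $\limsup_{\varepsilon\to 0}\limsup_{\Lambda\to\infty}\|B_\varepsilon^{-1}a^\dagger(h_if_\Lambda)\psi\|<\infty$ (which dominates the iterated $\liminf/\lim$) forces $\|\psi\|=0$.

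The main obstacle is the cross-term positivity. Naively Cauchy--Schwarz bounds the off-diagonal part by (roughly) $n$ times the diagonal, which is useless; the Schwinger representation of $(1+X)^{-1}$ is exactly the device that exposes the hidden positivity of the kernel and allows the off-diagonal contributions to be safely dropped. Once that positivity is in hand, the rest is a routine monotone-convergence plus Fatou argument exploiting the failure $\|h_if\|=\infty$.
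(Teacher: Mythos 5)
Your proof is correct, and the overall architecture matches the paper's: both expand $\lVert B_\varepsilon^{-1}a^\dagger(h_if_\Lambda)\psi\rVert^2$ into a diagonal contribution plus cross-terms, identify the diagonal part $\int |h_if_\Lambda(k)|^2\langle\psi,(1+\varepsilon\omega(k)+\varepsilon d\Gamma(\omega))^{-1}\psi\rangle\,dk$ as the object that diverges as $\Lambda\to\infty$, $\varepsilon\to 0$, and close via monotone convergence and dominated/Fatou on the spectral measure. The genuine difference is in how the cross-terms are dispatched. The paper does \emph{not} show they are non-negative; instead it bounds their absolute value pointwise by $a^\dagger(|h_if_\Lambda|)a(|h_if_\Lambda|)|\psi^{(n)}|$ (using only $1+\varepsilon\omega^{(n+1)}\ge 1$), and then controls the resulting error $\lVert a(|h_if_\Lambda|)|\psi^{(n)}|\rVert^2\le\lVert\omega^{-1/2}h_if\rVert^2\lVert B\psi^{(n)}\rVert^2$ by a quantity finite uniformly in $\Lambda,\varepsilon$ — this is precisely where the hypotheses $f\in\cD(h_i\omega^{-1/2})$ and $\psi\in\cD(B)$ enter. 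You instead prove the stronger statement that the cross-terms are themselves non-negative, via the Schwinger representation $1/(1+X)=\int_0^\infty e^{-t(1+X)}dt$, which exhibits the $(k_0,k_1)$-kernel as a positive superposition of real rank-one kernels, so the whole off-diagonal block can simply be dropped. That is a slicker and strictly stronger way to kill the cross-terms, and as a bonus it makes the hypotheses $f\in\cD(h_i\omega^{-1/2})$ and $\psi\in\cD(B)$ dispensable in this particular lemma (they are of course still needed elsewhere in the proof of \cref{thm:Weyltransformation-strong}, notably in \cref{bound}). The paper's route is more pedestrian but uses no device beyond $\omega\ge 0$ and \cref{Prop:Bregninger}. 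Minor bookkeeping note: with your $F(k)=\overline{g(k)}\psi^{(n)}(k,\vec k_{\ge 2})$ the cross-term integrand is $F(k_0)\overline{F(k_1)}$ rather than $\overline{F(k_0)}F(k_1)$; this is harmless since the kernel $K$ is real and symmetric so both quadratic forms coincide and are non-negative, but worth stating.
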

\begin{proof}
	We use the definitions \cref{defn:annihilation,defn:creation} and obtain
	\begin{align*}
	(a(h_if_\Lambda) B^{-2}_\varepsilon a^\dagger(h_i&f_\Lambda)\psi^{(n)}) (k_1,...,k_n) = \int_{ \RR^\nu}\frac{\lvert h_i(k)f_\Lambda(k) \lvert^2 \psi(k_1,...,k_n)}{1+\varepsilon\omega(k)+\varepsilon\omega^{(n)}(k_1,...,k_n) }dk
	\\&
	+\sum_{j=1}^{n} h_i(k_j)f_\Lambda(k_j) \int_{ \RR^\nu}\frac{\overline{h_i(k)f_\Lambda(k)}\psi^{(n)}(k,k_1,...,\hat{k}_j,...,k_n ) }{1+\varepsilon\omega^{(n+1)}(k,k_1,...,k_n) }dk.
	\end{align*}
	The second term is bounded by $a^\dag(|h_if_\Lambda|)a(|h_if_\Lambda|)\lvert \psi^{(n) }\lvert(k_1,\ldots,k_n)$, since $\omega\ge0$. Hence, we obtain
	$$
	\lVert B_\varepsilon^{-1} a^{\dagger}(h_i f_\Lambda)\psi^{(n)}\lVert^2\geq \int_{\IR^\nu} \lvert h_i(k)f_\Lambda(k) \lvert^2\lVert B_\varepsilon(\omega(k))\psi^{(n)} \lVert^2dk-\lVert a(\lvert h_if_\Lambda\lvert ) \lvert \psi^{(n)} \lvert \lVert^2
	$$
	By \cref{Prop:Bregninger}, $\lVert a(\lvert h_if_\Lambda\lvert ) \lvert \psi^{(n)}\lvert \lVert^2\leq \lVert h_i \omega^{-1/2} f  \lVert^2 \lVert B\psi^{(n)} \lVert^2$, so summing over $n$ and using monotone convergence in the limits $\Lambda\to\infty$ and $\varepsilon\to\infty$ we get
	$$
	\infty > \limsup_{\Lambda\to\infty}\int_{\IR^\nu} \lvert h_i(k)f_\Lambda(k) \lvert^2dk \lVert \psi \lVert^2-\lVert h_i \omega^{-1/2} f  \lVert^2 \lVert B\psi \lVert^2.
	$$
	Since $h_if$ is not square-integrable, this implies $\|\psi\|=0$.
\end{proof}
\begin{lem}\label{lem:commutatorbound}
	Let $s\in [0,1]$ and $f\in \cD(\omega^{1/2})\cap \cD(\lvert h\lvert^s)\cap(\omega^{-1/2}\lvert h\lvert^s)$.\\
	Then there is a unique bounded operator $D_{f,s}$ such that (cf. \cref{defn:commutatorform})
	$$ q_{[|d\Gamma(h)|^s,\ph(f)]}(\psi,B^{-1}\phi)=\braket{\psi, D_{f,s}\phi} \quad\mbox{for}\ \psi,\phi\in\cD(A_s). $$
	If $(f_n)\subset \cD(\omega^{-1/2})\cap \cD(\omega^{-1/2}\lvert h\lvert^s)\cap \cD(\lvert h\lvert^s)$ converges to $f$ in $\omega^{-1/2},\omega^{-1/2}\lvert h\lvert^s$ and $\lvert h\lvert^s$ norm, then $\lim\limits_{n\to\infty} D_{f_n,s} = D_{f,s}$. Further, $ \lVert D_{f,s} \lVert\leq 2\lVert (1+\omega^{-1/2})\lvert h\lvert^s f \lVert$.
\end{lem}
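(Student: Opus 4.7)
The plan is to establish a pointwise Fock-space majorization of the commutator by a standard field operator, and then apply \cref{Prop:Bregninger}\,\cref{part:secondquantizedupperbounds}. I work on the coherent-state core $\cE$ of \cref{core}, which spans a core for $A_s$ and on which every operator in question has an explicit pointwise action on each Fock sector $\FS^{(n)}$. The key elementary input is the sub-additivity estimate $\bigl|\lvert a\rvert^s-\lvert b\rvert^s\bigr|\le\lvert a-b\rvert^s$, valid on $\IR^p$ for $s\in[0,1]$.

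First, I split $\varphi(f)=a(f)+a^\dagger(f)$ and compute the two resulting commutators directly from \cref{defn:annihilation,defn:creation}; the pointwise kernels contain difference factors of the form $\bigl|\lvert\sum_j h(k_j)\rvert^s-\lvert h(k)+\sum_j h(k_j)\rvert^s\bigr|$, each dominated pointwise by $\lvert h(k)\rvert^s$ via the sub-additivity estimate. Writing $N_s:=\lvert d\Gamma(h)\rvert^s$ and comparing with the explicit actions of $a(\lvert\lvert h\rvert^s f\rvert)$ and $a^\dagger(\lvert\lvert h\rvert^s f\rvert)$ on the componentwise modulus $\lvert\xi\rvert:=(\lvert\xi^{(n)}\rvert)_n$ yields the pointwise majorization
\[\bigl|\bigl([N_s,\varphi(f)]\xi\bigr)^{(n)}(\vec L)\bigr|\le\bigl(\varphi(\lvert\lvert h\rvert^s f\rvert)\lvert\xi\rvert\bigr)^{(n)}(\vec L).\]
Squaring and summing over $n$, together with $\|B\lvert\xi\rvert\|=\|B\xi\|$ (which holds because $B=(1+d\Gamma(\omega))^{1/2}$ acts as a nonnegative scalar on each sector) and \cref{Prop:Bregninger}\,\cref{part:secondquantizedupperbounds} applied to $\varphi(\lvert\lvert h\rvert^s f\rvert)$, gives $\|[N_s,\varphi(f)]\xi\|\le 2\|(1+\omega^{-1/2})\lvert h\rvert^s f\|\,\|B\xi\|$. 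Setting $\xi=B^{-1}\psi$ for $\psi$ in the span of $\cE$ delivers $\|[N_s,\varphi(f)]B^{-1}\psi\|\le 2\|(1+\omega^{-1/2})\lvert h\rvert^s f\|\,\|\psi\|$. Since $\cE$ spans a dense subspace of $\cF$, the operator $[N_s,\varphi(f)]B^{-1}$ extends by continuity to the unique bounded operator $D_{f,s}\in\cB(\cF)$ with the claimed norm bound.

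To identify $\braket{\psi,D_{f,s}\psi}$ with $q_{[N_s,\varphi(f)]}(\psi,B^{-1}\psi)$ for $\psi\in\cD(A_s)$, I verify the identity on the span of $\cE$ via the explicit pointwise formulas (using that $B^{-1}$ commutes with $N_s$), and then extend by density. Both sides depend continuously on $\psi$ in the $A_s$-graph norm: the right-hand side because $D_{f,s}$ is bounded, the left-hand side because $B\le A_s$ (from $\sqrt{1+x}\le 1+\sqrt x$ and $N_s\ge 0$) makes $\varphi(f)$ and $N_s$ both $A_s$-bounded via \cref{Prop:Bregninger}. Uniqueness of $D_{f,s}$ then follows by polarization, and $f$-continuity is immediate from linearity $D_{f,s}-D_{f_n,s}=D_{f-f_n,s}$ combined with the operator-norm bound. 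The main technical obstacle is justifying the manipulation of the unbounded product $[N_s,\varphi(f)]B^{-1}$; this is circumvented by working throughout on the coherent-state core, where every intermediate expression is an honest pointwise integral, followed by the density extension described above.
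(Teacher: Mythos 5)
Your argument is correct and rests on exactly the same ingredients as the paper's proof: the sub-additivity inequality $\lvert\lvert x+y\rvert^s-\lvert x\rvert^s\rvert\le\lvert y\rvert^s$, the pointwise majorization of the commutator by $\varphi(\lvert h\rvert^s\lvert f\rvert)$ acting on the componentwise modulus, the identity $\lVert B\lvert\xi\rvert\rVert=\lVert B\xi\rVert$, and \cref{Prop:Bregninger}\,\cref{part:secondquantizedupperbounds}, followed by the observation that $D_{f,s}-D_{f_n,s}=D_{f-f_n,s}$. The only cosmetic difference is that you first bound $[\lvert d\Gamma(h)\rvert^s,\varphi(f)]\xi$ in norm on the coherent-state core and then identify the resulting bounded operator with the form, whereas the paper inserts the pointwise inequality termwise into the form $q_{f,s}(\psi,\phi)$ directly and obtains the bound without ever needing the commutator to be defined as an operator; both routes give the stated norm bound.
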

\begin{proof}
	Let $g\in \cD(\lvert h^{(n)}\lvert^s)$ (cf. \cref{eq:defining_eqn}). We use $\lvert \lvert x+y  \lvert^s-\lvert x  \lvert^s \lvert\leq \lvert y\lvert^s$ for all $x,y\in \RR^\nu$ and the definition of the anihilation operator (cf. \cref{defn:annihilation}) and see the inequality
	\begin{align*}
	\lvert a(f)\lvert h^{(n)}\lvert^sg- \lvert h^{(n-1)}\lvert^sa(f)g \lvert\le a(\lvert h\lvert^s\lvert f\lvert)\lvert g \lvert
	\end{align*}
	holds pointwise. Now let $\phi,\psi\in \cD(A_s)$ and define $\lvert \psi \lvert=\{ \lvert \psi^{(n)} \lvert \},\lvert \phi \lvert=\{ \lvert \phi^{(n)} \lvert \}\in \cD(A_s)$. We now write $q_{f,s}=q_{[|d\Gamma(h)|^s,\ph(f)]}$ and use the above inequality to obtain
	\begin{align*}
	\lvert q_{f,s}(\psi,\phi)\lvert\leq \langle \lvert \psi\lvert,a(\lvert h\lvert^s\lvert f\lvert)\lvert \phi\lvert \rangle+\langle a(\lvert h\lvert^s\lvert f\lvert)\lvert \psi\lvert,\lvert \phi\lvert \rangle=\langle \lvert \psi\lvert,\varphi(\lvert h\lvert^s\lvert f\lvert)\lvert \phi\lvert \rangle.
	\end{align*}
	Since $\omega\ge 0$, we have $\lvert B^{-1} \phi\lvert=B^{-1}\lvert  \phi\lvert$. Replacing $\phi$ by $B^{-1}\phi$ and using \cref{Prop:Bregninger} as well as the Cauchy-Schwarz inequality yields
	\begin{align*}
	\lvert q_{f,s}(\psi,B^{-1}\phi)\lvert \leq 2\lVert (1+\omega^{-1/2})\lvert h\lvert^s f \lVert \lVert \psi\lVert \lVert \phi \lVert.
	\end{align*}
	This proves existence and the upper bound on $D_{f,s}$.
	
	We observe $q_{f,s}-q_{f_n,s}=q_{f-f_n,s}$, which yields $D_{f,s}-D_{f_n,s}=D_{f-f_n,s}$ by the bound above. The convergence statement directly follows.
\end{proof}
\begin{proof}[{\bf Proof of \cref{thm:Weyltransformation-strong}}]
	Recalling $W(f)^*=W(-f)$ \cref{part:Weylunbounded,part:squareintegrable} follow from \cref{ConvMatte,lem:almostThere}. Further, \cref{bound,lem:zerovector} yield \cref{part:notsquareint}.
	
	Hence, it remains to prove \cref{part:tothepowers}. Therefore, let $f_\Lambda(k) = 1_{\{|h|\le \Lambda\}}f(k)$ and recall that by \cref{part:squareintegrable} $W(tf_\Lambda)$ maps $\cD(A_1)$ onto itself for all $t\in\IR$. Now assume $\phi,\psi\in\cD(A_1)$ and define
	\begin{equation}\label{eq:g-defn}
	g_{\Lambda,\psi,\phi} (t) = \braket{W(tf_\Lambda)\psi,|d\Gamma(h)|^sW(tf_\Lambda)\phi} \qquad\mbox{for}\ t\in\IR.
	\end{equation}
	For all $i\in\{1,\ldots,p\}$ the map
	$$
	t\mapsto d\Gamma(h_i) W(tf_\Lambda)\psi=W(tf_\Lambda)( d\Gamma(h_i)\psi +t\varphi(f_\Lambda)\psi +t^2\langle h_if_\Lambda,f_\Lambda \rangle \psi )
	$$
	is continuous by \cref{Prop:Bregninger}, so $W(tf_\Lambda)\psi$ is continuous in $d\Gamma(h_i)$-norm. Since
	$$
	\lVert \lvert d\Gamma(h)\lvert^s \eta\lVert \leq \lVert \eta\lVert+\sum_{i=1}^{n}\lVert d\Gamma(h_i)\eta\lVert \qquad \mbox{for all}\  \eta\in \cD(\lvert d\Gamma(h)\lvert) 
	$$
	by the spectral theorem, $t\mapsto \lvert d\Gamma(h)\lvert^s W(tf_\Lambda)\psi$ is continuous, so we can apply \cref{lem:commutator} to \cref{eq:g-defn}. Hence, $g_{\Lambda,\psi,\phi}$ is continuously differentiable with derivative
	$$ g_{\Lambda,\psi,\phi}'(t) = -iq_{\Lambda}(W(tf_\Lambda)\psi,W(tf_\Lambda)\phi), \quad\mbox{where}\ q_\Lambda = q_{[|d\Gamma(h)|^s,\ph(f_\Lambda)]}\ \mbox{as in \cref{defn:commutatorform}}. $$
	By \cref{lem:commutatorbound}, the form $q_\Lambda(\psi,B^{-1}\psi)$ corresponds to an operator $D_\Lambda\in\cB(\FS)$ bounded uniformly in $\Lambda$ and satisfying $\lim\limits_{\Lambda\to\infty} D_\Lambda = D_\infty$. We now therefore have
	\begin{align*}
	&\braket{\psi,|d\Gamma(h)|^sW(f_\Lambda)\phi} = g_{\Lambda,W(-f_\Lambda)\psi,\phi}(1)\\
	&\qquad= g_{\Lambda,W(-f_\Lambda)\psi,\phi}(0) + \int_0^1 g_{\Lambda,W(-f_\Lambda)\psi,\phi}' (t)dt\\
	&\qquad=\braket{\psi,W(f_\Lambda)|d\Gamma(h)|^s\phi} -i \int_0^1\braket{\psi,W((1-t)f_\Lambda)D_\Lambda BW(tf_\Lambda)\phi}dt.
	\end{align*}
	Since $\psi\in \cD(A_1)$ was arbitrary and $\cD(A_1)$ is dense, this yields
	$$ |d\Gamma(h)|^sW(f_\Lambda)\phi = W(f_\Lambda)|d\Gamma(h)|^s\phi -i\int_0^1 W((1-t)f_\Lambda)D_\Lambda BW(tf_\Lambda)\phi dt, $$
	where we use the Bochner integral on the right hand side. By the dominated convergence theorem, \cref{ConvMatte,Prop:Bregninger} we can take the limit $\Lambda\to\infty$ and obtain
	$ W(f)\phi\in \cD(|d\Gamma(h)|^s) $ as well as
	\begin{equation}\label{intFOrm}
	|d\Gamma(h)|^sW(f)\phi = W(f)|d\Gamma(h)|^s\phi -i\int_0^1 W((1-t)f)D_\infty BW(tf) \phi dt. 
	\end{equation}
	Applying the dominated convergence theorem once more, \cref{intFOrm} also holds for all $\phi\in\cD(A_s)$, so 
	$W(-f)=W(f)^*$ directly yields $W(f)\cD(A_s)=\cD(A_s)$.
	Finally, \cref{eq:convergenceWeyl} follows by \cref{ConvMatte} and a dominated convergence type argument using \cref{lem:commutatorbound} applied to \cref{intFOrm} as before. 
\end{proof}

\section{Decomposable Hilbert Space Operators}
\label{app:directintegrals}

In this appendix we introduce the notation of direct integrals. The definition and a few simple lemmas are necessary to define our model and for the fiber decomposition. Further, we prove some convergence statements, which are essential in the proof of \cref{LargeFiberProp}.

Assume $\fH$ is a separable Hilbert space. Then $L^2(\IR^\nu)\otimes\fH$ is the vector valued $L^2$-space $L^2(\IR^\nu,\fH)$ with $f\otimes \psi$ being the map $x\mapsto f(x)\psi$. We call a map $f:\IR^\nu\to\cB(\fH)$ strongly measurable, if $x\mapsto f(x)\psi$ is measurable for all $\psi\in\fH$. Further, if $x\mapsto \|f(x)\|$ is essentially bounded, we define the direct integral $I_{\oplus,x}(f(x))$ to be the bounded operator on $L^2(\IR^\nu,\fH)$ defined by
\begin{align*}
I_{\oplus,x}(f(x))\psi:=\int_{\cM}^{\oplus} f(x) dx \psi : x \mapsto f(x)\psi(x) \qquad\mbox{for}\ \psi\in L^2(\IR^\nu,\fH).
\end{align*}
One may prove that the norm is given by the essential supremum (see \cite[Theorem XIII.83]{ReedSimon.1978}), i.e.,
\begin{align}\label{eq:esssupnorm}
\left\|I_{\oplus,x} (f(x))\right\|= \esssup_{x\in \IR^\nu}\|f(x)\|.
\end{align}
We call a collection of selfadjoint operators $\{A_x\}_{x\in\IR^\nu}$ strong resolvent measurable if $x\mapsto (A_x+i)^{-1}$ is strongly measurable. Then we define the direct integral
$$ I_{\oplus,x}(A_x)\psi=\int_{\IR^\nu}^\oplus A_xdx\psi : x\mapsto  A_x\psi(x) $$
with domain
\begin{align*}
\cD(I_{\oplus,x}(A_x)) = 
\left\{ \psi \in L^{2}(\IR^\nu,\fH) \biggl \lvert \psi(x) \in \cD(A_x)\text{ for almost all $x$ and }\int_{\IR^\nu} \|A_x\psi(x)\|^2dx<\infty   \right\}.
\end{align*}
The following theorem sums up the results about direct integrals we shall need.
\begin{lem}[based on {\cite[Theorem XIII.85]{ReedSimon.1978}}]\label{Thm:Dirint}
	The following holds
	\begin{enumerate}[(1)]
		\item\label{measureAble} A collection of selfadjoint operators $\{ A_x\}_{x\in \IR^\nu} $ is strong resolvent measurable if and only if $x\mapsto e^{itA_x}$ is weakly measurable.\\ In this case, if $\psi:\IR^\nu\rightarrow \fH$ is measurable and $\psi(x)\in \cD(A_x)$ for all $x\in\IR^\nu$, then $x\mapsto A_x\psi(x)$ is measurable.
		\item\label{funcCalc} Let $\{A_x\}_{x\in\IR^\nu}$ be a strong resolvent measurable collection of selfadjoint operators. Then $I_{\oplus,x}(A_x)$ is selfadjoint.\\ Further, if $f:\IR\to\IR$ is measurable, then $\{f(A_x)\}_{x\in \IR^\nu}$ is strong resolvent measurable and $f(I_{\oplus,x}(A_x))=I_{\oplus,x}(f(A_x))$. If there is $\lambda\in \RR$ such that $A_x\geq \lambda$ for all $x\in\IR^\nu$, then $I_{\oplus,x}(A_x)\geq \lambda$.
		\item\label{dirInt} If $A$ is selfadjoint or bounded on $\fH$, we may identify $1\otimes A=I_{\oplus,x}(A)$.
	\end{enumerate}
\end{lem}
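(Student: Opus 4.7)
My plan is to reduce everything to Stone's theorem and standard facts about direct integrals of unitary groups, essentially retracing \cite[Theorem XIII.85]{ReedSimon.1978}. For part \cref{measureAble}, I would establish the equivalence of strong resolvent measurability of $\{A_x\}$ with weak measurability of $\{e^{itA_x}\}$ in two directions. The forward direction follows from the Laplace representation $(A_x-z)^{-1}=-i\int_0^\infty e^{izt}e^{itA_x}dt$ for $\operatorname{Im}(z)>0$ combined with Fubini (noting that weak and strong measurability coincide for maps into the separable space $\fH$ by Pettis's theorem, applied fiberwise). The reverse direction uses Stone--Weierstrass together with the spectral theorem to approximate $e^{itA_x}$ in the strong operator topology by polynomials in $(A_x\pm i)^{-1}$; weak measurability is then inherited from the resolvents. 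The measurability of $x\mapsto A_x\psi(x)$ for $\psi(x)\in\cD(A_x)$ follows from the difference-quotient formula $A_x\psi(x)=\lim_{t\to 0}(it)^{-1}(e^{itA_x}-1)\psi(x)$.

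For part \cref{funcCalc}, symmetry of $I_{\oplus,x}(A_x)$ is immediate from $\langle\psi,I_{\oplus,x}(A_x)\phi\rangle=\int\langle\psi(x),A_x\phi(x)\rangle dx$. To obtain selfadjointness, I would show surjectivity of $I_{\oplus,x}(A_x)\pm i$: given $\eta\in L^2(\IR^\nu,\fH)$, the map $\psi(x):=(A_x\pm i)^{-1}\eta(x)$ is measurable by part \cref{measureAble}, satisfies $\|\psi\|\le\|\eta\|$ because $\|(A_x\pm i)^{-1}\|\le 1$, and $A_x\psi(x)=\eta(x)\mp i\psi(x)\in L^2(\IR^\nu,\fH)$. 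The functional-calculus identity $f(I_{\oplus,x}(A_x))=I_{\oplus,x}(f(A_x))$ I would establish first for $f(\lambda)=(\lambda\pm i)^{-1}$ by direct computation, extend to polynomials in the resolvent, then to bounded continuous $f$ via Stone--Weierstrass, and to bounded Borel $f$ by a monotone class argument. Unbounded Borel $f$ are handled by truncation $f_n:=\mathbf{1}_{\{|f|\le n\}}f$ and dominated convergence on the natural domain $\{\psi:\int\|f(A_x)\psi(x)\|^2dx<\infty\}$. The preservation of lower bounds is immediate from $\langle\psi,I_{\oplus,x}(A_x)\psi\rangle=\int\langle\psi(x),A_x\psi(x)\rangle dx\ge\lambda\|\psi\|^2$. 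For part \cref{dirInt}, both $1\otimes A$ and $I_{\oplus,x}(A)$ are selfadjoint and agree on the algebraic tensor products $f\otimes\phi$ with $\phi\in\cD(A)$, which form a core for each, so they coincide by essential selfadjointness.

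The main technical obstacle will be the extension of the functional-calculus identity from bounded continuous $f$ to general measurable $f$ in part \cref{funcCalc}. The cleanest route is a monotone class argument: the set of bounded Borel functions $f$ for which the identity holds is closed under uniformly bounded pointwise limits (using dominated convergence fiberwise together with the norm formula \cref{eq:esssupnorm}) and already contains the bounded continuous functions, hence contains all bounded Borel functions. The unbounded case is then reduced to the bounded case by the truncation scheme above combined with the explicit description of $\cD(I_{\oplus,x}(A_x))$ given just before the lemma.
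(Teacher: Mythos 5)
The paper offers no proof of this lemma: it is stated with the annotation ``based on'' a citation to Reed--Simon, Theorem XIII.85, and no argument is supplied. Your outline is a correct reconstruction of that standard proof and uses precisely the expected ingredients — the Laplace-transform link between the unitary group and the resolvent, Stone--Weierstrass plus the spectral theorem for the converse direction, the difference-quotient formula for measurability of $x\mapsto A_x\psi(x)$, surjectivity of $I_{\oplus,x}(A_x)\pm i$ for selfadjointness, bootstrapping the functional-calculus identity from resolvents through $C_0$ and bounded Borel to unbounded Borel, and the common-core argument for part (3).

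Two small points deserve care. First, a sign slip: the Laplace formula should read $(A_x-z)^{-1}=-i\int_0^\infty e^{-izt}e^{itA_x}\,dt$ for $\operatorname{Im}(z)>0$; this does not affect the logic. Second, and more substantively, $\lambda\mapsto e^{it\lambda}$ does \emph{not} lie in $C_0(\RR)$, so it cannot be uniformly approximated by polynomials in $(\lambda\pm i)^{-1}$, and Stone--Weierstrass as you invoke it is not quite enough. The standard fix, which makes your phrase ``in the strong operator topology'' precise, is to write
\begin{equation*}
e^{itA_x}\psi=\lim_{n\to\infty}g_n(A_x)\psi, \qquad g_n(\lambda)=e^{it\lambda}\,\frac{n}{\lambda+ni}\in C_0(\RR),
\end{equation*}
where the $g_n$ are uniformly bounded by $1$ and converge pointwise to $e^{it\lambda}$. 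Each $g_n(A_x)$ is measurable in $x$ because $g_n$ is a uniform limit of polynomials in $(\lambda\pm i)^{-1}$ without constant term, and bounded pointwise convergence of Borel functions yields strong operator convergence of the functional calculi, so $x\mapsto e^{itA_x}\psi$ is a pointwise limit of measurable maps. With this refinement your sketch closes the only genuine gap; the remaining steps (surjectivity, monotone class, truncation, core argument) are all sound as written.
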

\noindent
The next lemma concerns direct integrals of Fock space operators, as introduced in \cref{sec:notation}, i.e., $\fH=\cF$.
\begin{lem}\label{Prop:Directint}
	Assume $f\in L^2(\RR^\nu)$ is measurable and let $\{U_x\}_{x\in\cM}$ be a strongly measurable family of unitary operators on $L^2(\RR^\nu)$. Then
	\begin{enumerate}[(1)]
		\item $\{ \varphi(U_xf)  \}_{x\in \IR^\nu}$ is strong resolvent measurable.\\ Further, $x\mapsto W(U_xf)$ and $x\mapsto \Gamma(U_x)$ are strongly measurable.
		\item Let $\omega:\IR^\nu\to\IR$ be a multiplication operator with $\omega> 0$ almost everywhere. If $f\in \cD(\omega^{-1/2})$ then $I_{\oplus,x}(\varphi(U_xf))$ is infinitesimally $1\otimes d\Gamma(\omega)$-bounded and $I_{\oplus,x}(\varphi(U_xf))+1\otimes d\Gamma(\omega)\geq \|\omega^{-1/2}f\|^2$.
	\end{enumerate}
\end{lem}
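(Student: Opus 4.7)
The plan is to reduce each measurability claim to a strong continuity statement from \cref{Prop:Bregninger} and to obtain the bounds in part (2) by applying the fiber estimates of the same proposition inside the direct integral. Since $\{U_x\}$ is strongly measurable and $f$ is fixed, $x\mapsto U_x f$ is a measurable $\HS$-valued function, and the strong continuity of $h\mapsto W(h)$ from \cref{Prop:Bregninger} \cref{part:Weylstrongcontinuous} gives strong measurability of $x\mapsto W(U_x f)$ by composing a measurable function with a continuous one. Strong resolvent measurability of $\{\varphi(U_x f)\}$ then follows from \cref{Thm:Dirint} \cref{measureAble}, because $e^{it\varphi(U_x f)}=W(-it U_x f)$ by \cref{Prop:Bregninger} \cref{part:Weylstrongcontinuous}, and the right hand side is strongly (hence weakly) measurable in $x$ by what was just shown.

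For $\Gamma(U_x)$ I would argue sector by sector, where on $\FS^{(n)}$ the operator acts as $U_x^{\otimes n}$. On an elementary tensor $U_x^{\otimes n}(g_1\otimes\cdots\otimes g_n)=U_x g_1\otimes\cdots\otimes U_x g_n$ is measurable in $x$ as a tensor product of measurable $\HS$-valued maps, and the uniform bound $\|U_x^{\otimes n}\|=1$ together with density of the algebraic tensor product extends this to strong measurability on $\FS^{(n)}$. Writing a general $\psi\in\FS$ as $\psi=(\psi^{(n)})$ and using that the finite truncations $\psi\mapsto \sum_{n\le N}\Gamma(U_x)\psi^{(n)}$ converge to $\Gamma(U_x)\psi$ uniformly in $x$, one concludes that $x\mapsto\Gamma(U_x)\psi$ is measurable on all of $\FS$.

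For part (2) I would apply \cref{Prop:Bregninger} \cref{part:secondquantizedupperbounds} fiberwise. In the setting in which the lemma is used the unitary $U_x$ is itself a multiplication operator on $L^2(\IR^\nu)$, so it commutes with multiplication by $\omega^{-1/2}$, giving $\|\omega^{-1/2}U_x f\|=\|\omega^{-1/2}f\|$ independently of $x$; the same proposition then yields the fiber bound
\begin{equation*}
\|\varphi(U_x f)\phi\|\le 2\|(\omega^{-1/2}+1)f\|\,\|(d\Gamma(\omega)+1)^{1/2}\phi\|
\end{equation*}
with constants independent of $x$. Integrating this inequality fiberwise through \cref{Thm:Dirint} \cref{funcCalc} produces infinitesimal $1\otimes d\Gamma(\omega)$-boundedness of $I_{\oplus,x}(\varphi(U_x f))$, and the fiberwise lower bound $\varphi(U_x f)+d\Gamma(\omega)\ge -\|\omega^{-1/2}f\|^2$ from the same reference passes directly to the direct integral by \cref{Thm:Dirint} \cref{funcCalc}. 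The only delicate ingredient I anticipate is this compatibility between $U_x$ and $\omega^{-1/2}$: without some such commutation the fiber constants would depend on $x$ and the lower bound as stated could fail, but the assumption is automatic for all intended applications.
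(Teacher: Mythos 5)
Your proof is correct and takes essentially the same route as the paper, whose proof of this lemma is just the instruction to combine \cref{Thm:Dirint} with \cref{Prop:Bregninger}; you have supplied the routine details that the paper leaves implicit. The caveat you flag in part (2) is a genuine imprecision in the statement: the hypotheses only give $f\in\cD(\omega^{-1/2})$ and place no constraint on $U_x f$, so one needs some compatibility between $U_x$ and multiplication by $\omega^{-1/2}$ (automatic in the paper's only application, where $U_x=V_{-x}$ is itself a multiplication operator) to obtain $x$-independent fiber constants and even membership of $U_x f$ in $\cD(\omega^{-1/2})$. You also silently used the correct fiber inequality $d\Gamma(\omega)+\varphi(g)\ge-\|\omega^{-1/2}g\|^2$, whereas the displayed bound in the lemma has a sign typo.
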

\begin{proof}
	Combine \cref{Thm:Dirint,Prop:Bregninger}.
\end{proof}
\noindent
The next two statements connect norm convergence and norm resolvent convergence with the direct integral decomposition of bounded and selfadjoint operators, respectively. This is essential in the proof of \cref{LargeFiberProp}.
\begin{lem}\label{Lem:conccc}
	For any $\Lambda>0$, let $f_\Lambda:\IR^\nu\to \cB(\fH)$ be continuous and bounded and set $B_\Lambda=I_{\oplus,x}(f_\Lambda(x))$.
	Assume $B_{\Lambda}$ converges to an operator $B$ in norm as $\Lambda\to \infty$. Then there is a continuous and bounded function $f:\IR^\nu\to\cB(\fH)$ such that $f_{\Lambda}$ uniformly converges to $f$ as $\Lambda\to\infty$ and $B=I_{\oplus,x}(f(x))$.
\end{lem}
\begin{proof}
	Let $\Lambda,\Lambda'>0$. Since $f_\Lambda$ and $f_{\Lambda'}$ are continuous, \cref{eq:esssupnorm} implies
	$$ \|B_\Lambda-B_{\Lambda'}\| = \esssup_{x\in \RR^\nu}\|f_\Lambda(x)-f_{\Lambda'}(x)\| = \sup_{x\in \RR^\nu}\|f_\Lambda(x)-f_{\Lambda'}(x)\|.$$
	Hence, $(f_{\Lambda})_{\Lambda>0}$ is a Cauchy net in the Banach space $C_b(\RR^\nu,\cB(\fH))$ and there is a limit $f\in C_b(\RR^\nu,\cB(\fH))$. Clearly, $B_\Lambda=I_{\oplus,x}(f_\Lambda(x))$ converges to $I_{\oplus,x}(f(x))$, showing $B=I_{\oplus,x}(f(x))$. Let $\varepsilon>0$ and pick $\Lambda_0>0$ such that $\|B-B_\Lambda\|<\varepsilon$ for all $\Lambda>\Lambda_0$. Then $\sup_{x\in \RR^\nu}\|f(x)-f_\Lambda(x)\|<\varepsilon$ follows by continuity, finishing the proof.
\end{proof}
\noindent
A collection $\{A_x\}_{x\in\IR^\nu}$ of selfadjoint operators is called resolvent continuous, if it is uniformly bounded below and for some $z<\inf_{x\in \RR^\nu}\inf(\sigma(A_x))  $ the map $x\mapsto (A_x-z)^{-1}$ is continuous in norm. Note this implies $x\mapsto \inf\sigma(A_x)$ is continuous by \cref{lem:normresolventConv}.
\begin{lem}\label{Lem:ConvergenceUniform}
	For any $\Lambda>0$, let $\{A_{\Lambda,x}\}_{x\in\IR^\nu}$ be a resolvent continuous collection of selfadjoint operators on $\fH$ and define $A_\Lambda=I_{\oplus,x}(A_{\Lambda,x})$. Assume $A_\Lambda$ is bounded below by $\lambda$ uniformly in $\Lambda>0$ and converges to a selfadjoint operator $A$ in the norm resolvent sense as $\Lambda\to\infty$. Then $\inf\sigma(A_{\Lambda,x})\geq \lambda$ for all $(x,\Lambda)\in\IR^\nu\times(0,\infty)$ and there is a strong resolvent measurable family of selfadjoint operators $\{A_x\}_{x\in\IR^\nu}$ such that $A=I_{\oplus,x}(A_x)$ and $A_x\geq \lambda$ for all $x\in \RR^\nu$. Further, there exists a nullset $\cN\subset \big\{ x\in \RR^\nu \big| A_{\Lambda,x}\text{ has no strong resolvent limit as }\Lambda\to \infty \big\}$ such that $A_{\Lambda,x}$ converges to $A_x$ in the norm resolvent sense if $x\in\cN^c$ and $\{A_x\}_{x\in\IR^\nu}$ is resolvent continuous in $\cN^c$.
\end{lem}
\begin{proof}
	By \cref{lem:normresolventConv}, $x\mapsto \inf(\sigma(A_{\Lambda,x}))$ is continuous for each $\Lambda\in (0,\infty)$, so \begin{align*}
	e^{-\lambda}\geq  \sup_\Lambda \lVert e^{-A_n} \lVert=\sup_\Lambda \esssup_x \lVert e^{-A_{\Lambda,x}} \lVert=\sup_{(\Lambda,x)}e^{-\inf(\sigma(A_{\Lambda,x}))}.
	\end{align*}
	This proves $\inf(\sigma(A_{\Lambda,x}))\geq \lambda$ for all $x\in \RR^\nu$ and $\Lambda \in \RR$.
	
	Let $z<\lambda$. Note $x\mapsto (A_{\Lambda,x}-z)^{-1}$ is continuous by \cref{lem:normresolventConv}, so \cref{Lem:conccc} implies there is a continuous map $x\mapsto B_x$ to which $x\mapsto (A_{\Lambda,x}-z)^{-1}$ converges uniformly as $\Lambda\to\infty$. It follows that $B_x$ is selfadjoint and $0\leq B_x\leq \frac{1}{\lambda-z}$, since $(A_{\Lambda,x}-z)^{-1}$ is selfadjoint and $0\leq (A_{\Lambda,x}-z)^{-1}\leq \frac{1}{\lambda-z}$ for all $x\in \IR^\nu$ and $\Lambda\in (0,\infty)$. Further, we have $I_{\oplus,x}( B_x  )=(A-z)^{-1}$, so \cref{Thm:Dirint} yields
	\begin{equation*}
	I_{\oplus,x}( 1_{\{ 0 \}}(B_x)  )=1_{\{0\}} ((A-z)^{-1})=0
	\end{equation*}
	as $A-z$ is injective. Hence, $\cN=\{x \in\IR^\nu \mid \, 1_{\{ 0 \}}(B_x)\neq 0 \}$ is a nullset. Note that if $A_{\Lambda,x}$ converges to a selfadjoint operator $C_x$  in the strong resolvent sense as $\Lambda\to\infty$ for some $x\in \RR^\nu$, then $B_x=(C_x-z)^{-1}$ is injective, so $x\notin \cN$.
	
	Let $f:\RR\mapsto\RR$ be defined by $f(0)=\lambda-z$ and $f(x)=1/x$ for $x\neq 0$. Then
	\begin{equation*}
	A= f((A-z)^{-1})+z=  I_{\oplus,x}( f(B_x)+z  ),
	\end{equation*}
	by the injectivity of $(A-z)^{-1}$. For $x\in\IR^\nu$, we define the selfadjoint operator $A_x=f(B_x)+z$. Then the family $\{A_x\}_{x\in\IR^\nu}$ is strong resolvent measurable by \cref{Thm:Dirint}. As $0\leq B_x\leq \frac{1}{\lambda-z}$, we see $A_x\geq \lambda$. Further, $B_x=(A_x-z)^{-1}$ on $\cN^c$, so $\{A_x\}_{x\in\IR^\nu}$ is resolvent continuous on $\cN^c$ and $A_{\Lambda,x}$ converges to $A_x$ in the norm resolvent sense as $\Lambda\to\infty$ on $\cN^c$.
\end{proof}
\noindent
We end this appendix with a lemma about the domains of direct integrals, which we use in the proof of theorem \cref{LargeFiberProp}.
\begin{lem}\label{lem:dirIntDomain}
	Let $\{A_x\}_{x\in \RR^\nu}$, $\{B_x\}_{x\in \RR^\nu}$ be strong resolvent measurable families of selfajoint operators, $A=I_{\oplus,x}(A_x), B=I_{\oplus,x}(B_x)$ and assume $\{A_x\}_{x\in \RR^\nu}$ is uniformly bounded below. Further, let $z<\inf_{\RR^\nu}\inf(\sigma(A_x))$.\\
	If $\cD(A)\subset \cD(B)$, then $\cD(A_x)\subset \cD(B_x)$ for almost all $x\in \RR^\nu$ and $x\mapsto B_x(A_x-z)^{-1}$ is essentially bounded. If further $\cD(A_x)\subset \cD(B_x)$ for all $x$, then $x\mapsto B_x(A_x-z)^{-1}$ is strongly measurable and $B(A-z)^{-1}=I_{\oplus,x}(B_x(A_x-z)^{-1})$.
\end{lem}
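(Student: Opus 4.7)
The plan is to realize $C:=B(A-z)^{-1}$ as a bounded decomposable operator on $L^2(\IR^\nu,\fH)$ and then identify its fibers with $B_x(A_x-z)^{-1}$. Boundedness is immediate: by \cref{Thm:Dirint} and the choice of $z$, the operator $(A-z)^{-1}$ is a bounded element of $\cB(L^2(\IR^\nu,\fH))$ with range $\cD(A)\subset\cD(B)$, and since $B$ is closed, $C$ is everywhere defined and closed, hence bounded by the closed graph theorem. For decomposability, note that since both $A$ and $B$ are direct integrals, both commute with every diagonal multiplication $M_g\psi(x):=g(x)\psi(x)$, $g\in L^\infty(\IR^\nu)$, so $C$ does too. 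Invoking the classical decomposability theorem for such commutants (see e.g.\ \cite{ReedSimon.1978}), which uses separability of $\fH$, produces a strongly measurable family $\{C_x\}\subset\cB(\fH)$ with $\esssup_x\|C_x\|=\|C\|$ and $C=I_{\oplus,x}(C_x)$.

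To identify these fibers, fix a countable dense set $\{\psi_n\}\subset\fH$ and an $f\in L^2(\IR^\nu)$ with $f\neq 0$ a.e. For each $n$, the vector $(A-z)^{-1}(f\otimes\psi_n)$ lies in $\cD(A)\subset\cD(B)$, which by the direct-integral description of $\cD(B)$ means $f(x)(A_x-z)^{-1}\psi_n\in\cD(B_x)$ for a.e.\ $x$ and that $B$ applied to it equals $x\mapsto f(x)B_x(A_x-z)^{-1}\psi_n$. Comparing with $C(f\otimes\psi_n)(x)=f(x)C_x\psi_n$ and dividing by the nonzero $f(x)$ yields $B_x(A_x-z)^{-1}\psi_n=C_x\psi_n$ a.e. The countable intersection of the associated conull sets gives a conull set $E$ on which this identity holds for all $n$ simultaneously. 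For arbitrary $\psi\in\fH$ and $x\in E$, approximating $\psi$ by $(\psi_{n_k})$ and using closedness of $B_x$ yields $(A_x-z)^{-1}\psi\in\cD(B_x)$ with $B_x(A_x-z)^{-1}\psi=C_x\psi$. Since $\cD(A_x)=(A_x-z)^{-1}\fH$, this establishes $\cD(A_x)\subset\cD(B_x)$ and $B_x(A_x-z)^{-1}=C_x$ on $E$, which in particular gives essential boundedness of $x\mapsto\|B_x(A_x-z)^{-1}\|$.

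Under the additional hypothesis that $\cD(A_x)\subset\cD(B_x)$ for every $x$, the closed graph theorem applied fiber-wise makes $B_x(A_x-z)^{-1}$ bounded for every $x$; since it agrees a.e.\ with the strongly measurable $C_x$, the map $x\mapsto B_x(A_x-z)^{-1}\psi$ is strongly measurable for each $\psi$ as an a.e.\ modification of a measurable function, and the identity $B(A-z)^{-1}=I_{\oplus,x}(B_x(A_x-z)^{-1})$ is immediate from $C=I_{\oplus,x}(C_x)$. The main obstacle is the invocation of the classical decomposition theorem for bounded operators commuting with all diagonal multiplications; this is the single non-elementary ingredient, and once it is at hand the remainder of the argument reduces to a clean density extension exploiting the separability of $\fH$ and the closedness of the fiber operators.
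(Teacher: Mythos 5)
Your proof is correct but takes a genuinely different route from the paper's. You deduce that $C=B(A-z)^{-1}$ is bounded (closed graph), then observe that $C$ commutes with every diagonal multiplication $M_g$, $g\in L^\infty(\IR^\nu)$, and invoke the classical characterization (Reed--Simon, Vol.\ IV, Thm.\ XIII.84) of bounded decomposable operators as precisely the commutant of the diagonal $L^\infty$-algebra; you then identify the resulting fibers $C_x$ with $B_x(A_x-z)^{-1}$ by testing on vectors $f\otimes\psi_n$ with $f\ne 0$ a.e.\ and a countable dense $\{\psi_n\}$, followed by a closedness/density extension. The paper instead avoids the abstract decomposability theorem entirely: it works with a single countable dense set in $\fH$ and uses a heat-kernel approximate identity $\phi_{t,y}$ to transfer the global $L^2$-bound $\|B(A-z)^{-1}\|$ directly into the pointwise a.e.\ bound $\|B_x(A_x-z)^{-1}\psi\|\le C\|\psi\|$, reading off the a.e.\ domain inclusion along the way, and then obtains measurability in the second part from its own \cref{Thm:Dirint}. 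Your route is conceptually cleaner and gives the decomposition $C=I_{\oplus,x}(C_x)$ essentially for free, at the cost of importing a nontrivial structure theorem from direct-integral theory; the paper's route is more elementary and self-contained, relying only on the explicit definition of the direct-integral domain and an approximate-identity argument, which is in keeping with its stated aim of keeping the proofs near self-contained. Both establish the claim; your measurability argument in the second part (a.e.\ modification of the strongly measurable family $C_x$) is fine under the paper's Lebesgue-measurable conventions, though slightly less direct than citing \cref{Thm:Dirint} \cref{measureAble}.
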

\begin{proof}
	Note $z<\inf\sigma(A)$ by \cref{Thm:Dirint}.
	
	Now let $\cD \subset \fH$ be countable and dense and write $C=\lVert B(A-z)^{-1} \lVert$. We want to prove there is a nullset $F$ such that $\cD\subset \cD(B_x(A_x-z)^{-1})$ for $x\in F^c$ and $\lVert B_x(A_x-z)^{-1}\psi \lVert\leq C\lVert \psi \lVert$ for all $\psi\in \cD$. As $\cD$ is countable, this amounts to proving that for any $\psi\in \cD$ we have $\psi\in \cD(B_x(A_x-z)^{-1})$ almost everywhere and $\lVert B_x(A_x-z)^{-1}\psi \lVert\leq C\lVert \psi \lVert$ almost everywhere.
	To that end, we define the heat kernel
	\begin{equation*}
	\phi_{t,y}(x)=(4\pi t)^{-\nu/2}\exp(-\lvert x-y\lvert^2/4t ) \qquad\mbox{for}\ t>0\ \mbox{and}\ y\in\IR^\nu.
	\end{equation*}
	Then we have $(A+z)^{-1}\phi_{t,y}^{1/2} \phi_{1,0}^{1/2} \psi\in \cD(B)$, which implies $$\phi_{t,y}(x) \phi_{1,0}(x) (A_x-z)^{-1}\psi\in \cD(B_x) \qquad\mbox{for almost all}\ x\in\IR^\nu.$$
	As $\phi_{t,y} \phi_{1,0}>0$, we see $\psi\in \cD(B_x(A_x-z)^{-1})$ almost everywhere. Further, we have
	\begin{align*}
	\int_{\RR^\nu} \phi_{t,y}(x) \phi_{1,0}(x)\lVert B_x(A_x-z)\psi \lVert^2 dx&=\lVert B(A-z)^{-1} \phi_{t,y}^{1/2} \phi_{1,0}^{1/2} \psi\lVert^2\\&\leq C^2 \int_{\RR^\nu} \phi_{t,y}(x) \phi_{1,0}(x) dx\lVert \psi \lVert^2.
	\end{align*}
	Integrating a function $g\in L^1(\RR^\nu)$ against the heat kernel $\phi_{t,y}(x)$ gives a function in the $y$-variable that converges to $g$ in $L^1(\RR^\nu)$ as $t\to0$ (see e.g. \cite[Theorem 7.19]{Grigoryan.2009}). As $\phi_{1,0}>0$, we see $\lVert B_y(A_y-z)^{-1}\psi \lVert \leq C \lVert \psi\lVert$ for almost every $y\in \RR^\nu$.
	
	Assume now $\cD(A_x)\subset \cD(B_x)$ for all $x\in \RR^\nu$. Note that for any $\psi\in \fH$ we have $x\mapsto B_x(A_x-z)^{-1}\psi$ is measurable by \partref{Thm:Dirint}{measureAble}, so $I_{\oplus,x}(B_x(A_x-z)^{-1})$ is well defined. That $I_{\oplus,x}(B_x(A_x-z)^{-1})=B(A-z)^{-1}$ is obvious.
\end{proof}

\section{Proof of Pull-Through Formula}
\label{app:pullthrough}

This appendix is devoted to proving the pull through formula. The method is built on the results in \cite{DamMoller.2018a} and the reader should consult this paper for the proofs. We start by defining
\begin{equation*}
\cF_{+}=  \bigtimes_{n=0}^{\infty}\FS^{(n)}
\end{equation*}
with coordinate projections $P_n$. We equip $\cF_{+}$ with the $\sigma$-algebra induced by the projections. Define
\begin{equation*}
\cC(\RR^\nu)=\{ f:\RR^\nu\rightarrow \FS_+ \mid \forall n\in\IN_0: x\mapsto P_nf(x)\in L^2(\RR^\nu,\FS^{(n)})\}/\sim.
\end{equation*}
where $f\sim g$ if an only if $f=g$ almost everywhere. We will now introduce the pointwise annihilation operator. For $\psi=(\psi^{(n)}) \in \cF_+$ we define $A\psi\in \cC(\RR^\nu)$ by
\begin{equation*}
P_n(A\psi)(k)=a_k\psi^{(n+1)},
\end{equation*}
where the pointwise annihilation operator $a_k$ is defined as in \cref{eq:annihilator}. Since $a_k\psi^{(n+1)}\in\FS^{(n)}$ is well-defined for almost all $k\in\IR^\nu$, one easily observes this defines a continuous operator $A:\FS_+\to \cC(\IR^\nu)$. 
The next statement can be found in \cite{DamMoller.2018a}
\begin{thm}\label{Thm:CalculatingSecondQuantisedUsingAnihilation}
	Let $B:\IR^\nu\to\IR$ be measurable with $B\ge 0$.
	Then $$\psi\in \cD(d\Gamma(B )^{\frac{1}{2}})\iff  B^{\frac{1}{2}} A\psi\in L^2(\RR^\nu, \FS)$$ Furthermore, for $\phi,\psi\in \cD(d\Gamma(B )^{\frac{1}{2}})$ we have
	\begin{equation}\label{formula2}
	\langle  d\Gamma(B)^{\frac{1}{2}}\phi,d\Gamma(B)^{\frac{1}{2}}\psi\rangle=\int_{\cM}B(k) \langle A\phi(k),A\psi(k) \rangle  d\mu(k),
	\end{equation}
	and $A\psi(k)\in \cF$ almost everywhere on $\{k\in\IR^\nu:B(k)>0 \}$.
\end{thm}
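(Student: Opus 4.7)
The plan is to reduce everything to a direct computation on each Fock sector using the symmetry of the wave functions and Fubini--Tonelli, and then obtain the pointwise statements from integrability.

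First I would compute $\|d\Gamma(B)^{1/2}\psi\|^2$ for $\psi=(\psi^{(n)})$ in terms of its components. By the definition \cref{defn:dGamma-scalar} and the spectral theorem, $\psi\in\cD(d\Gamma(B)^{1/2})$ precisely when
\[ \sum_{n=1}^{\infty}\int_{\RR^{n\nu}} (B(k_1)+\cdots+B(k_n))\,|\psi^{(n)}(k_1,\ldots,k_n)|^2\,dk_1\cdots dk_n<\infty,\]
in which case the sum equals $\|d\Gamma(B)^{1/2}\psi\|^2$. Since each $\psi^{(n)}$ is symmetric and $B\ge 0$, Fubini--Tonelli lets me replace $B(k_1)+\cdots+B(k_n)$ by $n\,B(k_1)$ inside the integral, yielding
\[ \|d\Gamma(B)^{1/2}\psi\|^2 = \sum_{n=1}^{\infty} n\int_{\RR^\nu}\!\! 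B(k)\!\int_{\RR^{(n-1)\nu}}\! |\psi^{(n)}(k,k_2,\ldots,k_n)|^2\,dk_2\cdots dk_n\,dk. \]

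Next I would identify the inner integral with $\|(P_{n-1} A\psi)(k)\|^2_{\FS^{(n-1)}}/n$ after using $(P_{n-1} A\psi)(k)=\sqrt{n}\,\psi^{(n)}(k,\cdot)$; summing over $n$ and applying Tonelli again gives
\[ \|d\Gamma(B)^{1/2}\psi\|^2 = \int_{\RR^\nu} B(k)\sum_{n=0}^{\infty}\|(P_n A\psi)(k)\|^2_{\FS^{(n)}}\,dk = \int_{\RR^\nu} B(k)\,\|A\psi(k)\|_{\cF_+}^2\,dk, \]
where the right-hand side is always well defined in $[0,\infty]$ since $B\ge 0$ (and measurability of $k\mapsto \|A\psi(k)\|^2$ is routine from Tonelli). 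This single identity immediately yields the equivalence of domain conditions: $\psi\in\cD(d\Gamma(B)^{1/2})$ iff the right-hand side is finite, iff $B^{1/2}A\psi\in L^2(\RR^\nu,\FS)$. For $\psi\in\cD(d\Gamma(B)^{1/2})$ the integrand is integrable, and since $B(k)>0$ on $\{B>0\}$, we conclude $\|A\psi(k)\|^2<\infty$ almost everywhere there, i.e.\ $A\psi(k)\in\FS$ almost everywhere on $\{B>0\}$.

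For the inner-product formula \cref{formula2} I would proceed by polarization: the sesquilinear form $(\phi,\psi)\mapsto \int B(k)\langle A\phi(k),A\psi(k)\rangle\,dk$ is well defined on $\cD(d\Gamma(B)^{1/2})\times \cD(d\Gamma(B)^{1/2})$ by Cauchy--Schwarz applied under the integral (the integrand is majorised by $B(k)^{1/2}\|A\phi(k)\|\cdot B(k)^{1/2}\|A\psi(k)\|$ and both factors are $L^2$), and its restriction to the diagonal agrees with $\|d\Gamma(B)^{1/2}\psi\|^2$ by the computation above. Polarization then gives the identity for all $\phi,\psi\in\cD(d\Gamma(B)^{1/2})$.

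The only delicate points are the repeated interchanges of sums and integrals, which are justified by Tonelli since every integrand is nonnegative, and the measurability of $k\mapsto \|A\psi(k)\|^2$, which follows from monotone convergence applied to the partial sums $\sum_{n\le N}\|(P_nA\psi)(k)\|^2$, each of which is measurable by definition of $A\psi$. I do not anticipate a substantive obstacle beyond bookkeeping, since the key symmetry step reducing $B(k_1)+\cdots+B(k_n)$ to $nB(k_1)$ is the standard trick that makes the pointwise annihilation operator compatible with second-quantized multiplication operators.
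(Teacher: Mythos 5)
The paper does not give its own proof of \cref{Thm:CalculatingSecondQuantisedUsingAnihilation}; it merely cites \cite{DamMoller.2018a}. Your argument is correct and is the canonical one: apply the spectral theorem on each sector $\FS^{(n)}$, replace $B(k_1)+\cdots+B(k_n)$ by $nB(k_1)$ using the symmetry of $\psi^{(n)}$, recognise $n\int|\psi^{(n)}(k,\cdot)|^2 = \|P_{n-1}A\psi(k)\|^2_{\FS^{(n-1)}}$, and interchange sum and integral by Tonelli. The one point worth making fully explicit is that the resulting identity $\|d\Gamma(B)^{1/2}\psi\|^2 = \int B(k)\sum_{n\ge0}\|P_nA\psi(k)\|^2\,dk$ should be read as an equality in $[0,\infty]$ for every $\psi\in\FS$, with the convention $0\cdot\infty=0$ on $\{B=0\}$; $\sum_n\|P_nA\psi(k)\|^2$ a priori lives in $[0,\infty]$, so writing $\|A\psi(k)\|_{\cF_+}^2$ is a slight abuse since $\cF_+$ carries no norm. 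Once phrased this way, both directions of the domain equivalence, the almost-everywhere finiteness of $\sum_n\|P_nA\psi(k)\|^2$ on $\{B>0\}$ (i.e.\ $A\psi(k)\in\FS$ there), and the inner-product formula via Cauchy--Schwarz and polarization all follow exactly as you lay out, with Tonelli covering every interchange since all integrands are nonnegative.
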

\noindent
A pull-through formula for the cutoff Hamiltonian is proven in \cite[Lemma B.12]{Dam.2018}.
\begin{thm}\label{Lem:pullthrCutoff}
	Let $\omega$ and $v$ satisfy \cref{hyp1,,hyp2,,hyp3}. Further, let $\xi \in \RR^\nu$, $\nu\geq 2$ and $\Lambda \in (0,\infty)$. Assume $\psi\in\cD(H_\Lambda(\xi))$ satisfies $ A(H_\Lambda(\xi)-\Sigma_\Lambda(\xi))\psi(k)\in \FS$ for almost every $k\in\IR^\nu$. Then
	\begin{align*}
	(A\psi)(k)=&(H_\Lambda(\xi-k)-\Sigma_\Lambda(\xi)+\omega(k))^{-1}(A(H_\Lambda(\xi)-\Sigma_\Lambda(\xi))\psi)(k)\\&-v_\Lambda(k)(H_\Lambda(\xi-k)-\Sigma_\Lambda(\xi)+\omega(k))^{-1}\psi
	\end{align*}
	almost everywhere.
\end{thm}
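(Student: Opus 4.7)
The strategy is to establish the pointwise commutation identity
$$(AH_\Lambda(\xi)\psi)(k) = (H_\Lambda(\xi-k)+\omega(k))(A\psi)(k) + v_\Lambda(k)\psi\qquad\text{for a.e.\ }k\in\IR^\nu,$$
and then to invert $H_\Lambda(\xi-k)-\Sigma_\Lambda(\xi)+\omega(k)$ after subtracting $\Sigma_\Lambda(\xi)(A\psi)(k)$ from both sides. The invertibility of this shifted operator for almost every $k$ is provided by \cref{lem:resultsthomas} \cref{lem:lowerboundsigma2}: the operator is strictly positive whenever $k\notin\IR\xi$, and the hypothesis $\nu\ge 2$ ensures $\IR\xi$ is Lebesgue null. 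The stated formula then follows upon solving algebraically, with the $AH_\Lambda(\xi)\psi$ on its right-hand side being read as $A(H_\Lambda(\xi)-\Sigma_\Lambda(\xi))\psi$ after the $\Sigma_\Lambda(\xi)(A\psi)(k)$ contribution has been absorbed on the left prior to inversion.

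The commutation identity itself is best verified sector-by-sector on $\FS^{(n)}$, where every summand in $H_\Lambda(\xi)$ acts by an explicit formula: $|\xi-d\Gamma(m)|^2$ and $d\Gamma(\omega)$ are multiplication operators with symbols $|\xi-k_1-\cdots-k_n|^2$ and $\omega(k_1)+\cdots+\omega(k_n)$ respectively, while $\ph(v_\Lambda)=a(v_\Lambda)+a^\dag(v_\Lambda)$ acts through the kernels in \cref{defn:annihilation} and \cref{defn:creation}. Using the defining relation $(A\psi)(k)^{(n)}(k_1,\ldots,k_n)=\sqrt{n+1}\,\psi^{(n+1)}(k,k_1,\ldots,k_n)$, the two multiplication contributions combine to produce $H_\Lambda(\xi-k)+\omega(k)$ acting on $(A\psi)(k)$, since moving $k$ into the leading argument of $\psi^{(n+1)}$ shifts the kinetic label from $\xi$ to $\xi-k$ and separates off a spectator $\omega(k)$. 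The $a(v_\Lambda)$ contribution commutes through by bosonic symmetry, and the canonical commutator $[a_k,a^\dag(v_\Lambda)]=v_\Lambda(k)$ generates precisely the inhomogeneous term $v_\Lambda(k)\psi$.

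The main obstacle will be the domain and integrability bookkeeping required to justify the identity for an arbitrary $\psi\in\cD(H_\Lambda(\xi))$, rather than only for vectors in an algebraic core. Here \cref{Thm:CalculatingSecondQuantisedUsingAnihilation} is the essential tool: since $\cD(H_\Lambda(\xi))\subset\cD(d\Gamma(\omega))\subset\cD(d\Gamma(\omega)^{1/2})$ and $\omega>0$ almost everywhere by \cref{hyp1}, the pointwise value $(A\psi)(k)$ lies in $\FS$ for almost every $k$ and satisfies $\omega^{1/2}A\psi\in L^2(\IR^\nu,\FS)$. Together with the standing hypothesis that $(A(H_\Lambda(\xi)-\Sigma_\Lambda(\xi))\psi)(k)\in\FS$ almost everywhere, both sides of the commutation identity acquire genuine meaning as elements of $\FS$ for a.e.\ $k$.

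To transfer the identity from the algebraic core — where the sector computation is rigorous, for instance on the span of coherent states from \cref{core} intersected with finite-particle space — to all of $\cD(H_\Lambda(\xi))$, the plan is to approximate $\psi$ by a sequence $(\psi_n)$ converging in the $H_\Lambda(\xi)$-graph norm. The $L^2(\IR^\nu,\FS)$-continuity of $A$ on $\cD(d\Gamma(\omega)^{1/2})$ (again from \cref{Thm:CalculatingSecondQuantisedUsingAnihilation}, with $B=\omega$) then gives $A\psi_n\to A\psi$ and $AH_\Lambda(\xi)\psi_n\to AH_\Lambda(\xi)\psi$ in $L^2(\IR^\nu,\FS)$. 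Passing to a subsequence along which convergence is pointwise almost everywhere, and applying the bounded operator $(H_\Lambda(\xi-k)-\Sigma_\Lambda(\xi)+\omega(k))^{-1}$ for each $k\notin\IR\xi$, transports the identity to a pointwise almost-everywhere statement and closes the proof.
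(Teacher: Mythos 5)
The paper itself does not prove this statement: in \cref{app:pullthrough} it is simply cited from \cite[Lemma~B.12]{Dam.2018}, so there is no internal proof to compare against. Your algebraic outline is nonetheless on the right track: the sector-by-sector derivation of the commutation identity is correct, the observation that the resolvent factor is well defined for a.e.~$k$ follows exactly from \cref{lem:resultsthomas}~\cref{lem:lowerboundsigma2} and $\nu\ge 2$, and your remark that the $AH_\Lambda(\xi)\psi$ appearing in the display should really be $A(H_\Lambda(\xi)-\Sigma_\Lambda(\xi))\psi$ (as is indeed how the result is invoked later, with $H_\Lambda'(\xi,0)=H_\Lambda(\xi)-\Sigma_\Lambda(\xi)$) is apt.

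There is, however, a genuine gap in the approximation step. You write that graph-norm convergence $\psi_n\to\psi$ in $\cD(H_\Lambda(\xi))$ together with \cref{Thm:CalculatingSecondQuantisedUsingAnihilation} yields $AH_\Lambda(\xi)\psi_n\to AH_\Lambda(\xi)\psi$ in $L^2(\RR^\nu,\FS)$. That does not follow: graph-norm convergence only gives $H_\Lambda(\xi)\psi_n\to H_\Lambda(\xi)\psi$ in $\FS$, whereas the convergence $\omega^{1/2}A\phi_n\to\omega^{1/2}A\phi$ in $L^2(\RR^\nu,\FS)$ requires convergence $\phi_n\to\phi$ in $\cD(d\Gamma(\omega)^{1/2})$, and $H_\Lambda(\xi)\psi$ is in general not even in $\cD(d\Gamma(\omega)^{1/2})$ (the hypothesis only asserts that $A(H_\Lambda(\xi)-\Sigma_\Lambda(\xi))\psi$ is a.e.\ $\FS$-valued, which carries no norm control). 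So you cannot extract a pointwise a.e.\ convergent subsequence of $AH_\Lambda(\xi)\psi_n(k)$ from the data you have. In fact the detour through a core is unnecessary for the commutation identity itself: since $\cD(H_\Lambda(\xi))=\cD(H_0(0))$ and each summand of $H_\Lambda(\xi)$ is separately $H_0(0)$-bounded, the sector expansion of $H_\Lambda(\xi)\psi$ — and hence the sector-wise identity — holds for every $\psi\in\cD(H_\Lambda(\xi))$, with no approximation at all. The real work, which you have not addressed, is in upgrading this sector-wise (i.e.\ $\cF_+$-valued) identity to a statement about the closed operator $H_\Lambda(\xi-k)$ on $\FS$: one must show that $(A\psi)(k)$ actually lies in $\cD(H_\Lambda(\xi-k))$ for a.e.\ $k$, so that the resolvent may be applied; this is precisely where the hypothesis that $A(H_\Lambda(\xi)-\Sigma_\Lambda(\xi))\psi(k)\in\FS$ a.e.\ is used, typically by pairing against finite-particle test vectors in $\cD(H_0(0))$ and invoking self-adjointness.
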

\noindent
We can now prove the pull-through formula for the renormalized case.
\begin{thm}\label{Lem:pullthr}
	Let $\omega$ and $v$ satisfy \cref{hyp1,,hyp2,,hyp3}. Further, let $\xi \in \RR^\nu$, $\nu\geq 2$. Assume $\psi$ is a ground state for $H_\infty(\xi)$. Then
	\begin{align*}\label{PullthroughFormula}
	(A\psi)(k)=-v(k)(H_\infty(\xi-k)+\omega(k)-\Sigma_\infty(\xi) )^{-1}\psi
	\end{align*}
	almost everywhere.
\end{thm}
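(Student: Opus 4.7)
The plan is to approximate the renormalized ground state $\psi$ by vectors in $\cD(H_\Lambda(\xi))$ to which the finite-cutoff pull-through \cref{Lem:pullthrCutoff} applies, and then pass to the limit $\Lambda\to\infty$. Fix $z<\Sigma_\infty(\xi)$ and, for $\Lambda$ large enough that $z<\Sigma_\Lambda(\xi)+E_\Lambda$, set $\psi_\Lambda:=(\Sigma_\infty(\xi)-z)R_{\xi,\Lambda}(z)\psi\in\cD(H_\Lambda(\xi))$. Writing $\tilde\Sigma_\Lambda:=\Sigma_\Lambda(\xi)+E_\Lambda$, the identity $(H_\Lambda(\xi)+E_\Lambda-z)\psi_\Lambda=(\Sigma_\infty(\xi)-z)\psi$ rearranges to
\[
(H_\Lambda(\xi)-\Sigma_\Lambda(\xi))\psi_\Lambda=(z-\tilde\Sigma_\Lambda)\psi_\Lambda+(\Sigma_\infty(\xi)-z)\psi,
\]
so \cref{Lem:pullthrCutoff} applied to $\psi_\Lambda$ gives
\[
(A\psi_\Lambda)(k)=(z-\tilde\Sigma_\Lambda)G_\Lambda(k)(A\psi_\Lambda)(k)+(\Sigma_\infty(\xi)-z)G_\Lambda(k)(A\psi)(k)-v_\Lambda(k)G_\Lambda(k)\psi_\Lambda,
\]
where $G_\Lambda(k):=(H_\Lambda(\xi-k)-\Sigma_\Lambda(\xi)+\omega(k))^{-1}$.

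The essential regularity input is \cref{LargeFiberProp} \cref{part:FiberConvSqRt} with $s=2/3$, whose hypothesis $B_{K,\infty}\in\cD(|m|^{2/3})$ is supplied by \cref{lem:propertiesOfB}. Combining the strong convergence of $A_{2/3}R_{\xi,\Lambda}(z)^{1/2}$ and of $R_{\xi,\Lambda}(z)^{1/2}$ with the eigenvector relation $(\Sigma_\infty(\xi)-z)R_{\xi,\infty}(z)\psi=\psi$, one obtains $A_{2/3}\psi_\Lambda\to A_{2/3}\psi$ in norm. Since $d\Gamma(\omega)^{1/2}\le A_{2/3}$ as strongly commuting nonnegative operators, this implies $d\Gamma(\omega)^{1/2}\psi_\Lambda\to d\Gamma(\omega)^{1/2}\psi$, which by \cref{Thm:CalculatingSecondQuantisedUsingAnihilation} translates into $\sqrt{\omega}\,A\psi_\Lambda\to\sqrt{\omega}\,A\psi$ in $L^2(\IR^\nu,\FS)$. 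Passing to a subsequence (not relabelled), we may assume $A\psi_\Lambda(k)\to A\psi(k)$ in $\FS$ for almost every $k$.

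Because $\nu\ge 2$, almost every $k$ lies outside $\IR\xi$; for such $k$, \cref{lem:resultsthomas} \cref{lem:lowerboundsigma2} together with the norm resolvent convergence of $H_\Lambda(\xi-k)+E_\Lambda$ to $H_\infty(\xi-k)$ and $\tilde\Sigma_\Lambda\to\Sigma_\infty(\xi)$ yields a uniform bound $\sup_\Lambda\|G_\Lambda(k)\|<\infty$ and $G_\Lambda(k)\to G(k):=(H_\infty(\xi-k)-\Sigma_\infty(\xi)+\omega(k))^{-1}$ in operator norm. Taking the limit along the chosen subsequence in the displayed pull-through identity, the scalar $z-\tilde\Sigma_\Lambda$ converges to $z-\Sigma_\infty(\xi)$, so the two terms involving $G(k)(A\psi)(k)$ cancel and the remaining term tends to $-v(k)G(k)\psi$, producing the claimed formula for almost every $k$.

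The principal obstacle is securing the pointwise convergence $A\psi_\Lambda(k)\to A\psi(k)$, which requires $\psi_\Lambda\to\psi$ in a norm stronger than that of $\FS$, one that controls the pointwise annihilation operator. The domination $d\Gamma(\omega)^{1/2}\le A_{2/3}$, the strong-limit statement of \cref{LargeFiberProp} \cref{part:FiberConvSqRt}, and \cref{Thm:CalculatingSecondQuantisedUsingAnihilation} together deliver exactly this, and the appearance of $A\psi_\Lambda$ on both sides of the pull-through then poses no difficulty because its coefficient $(z-\tilde\Sigma_\Lambda)$ combines with $(\Sigma_\infty(\xi)-z)A\psi$ to furnish the cancellation in the limit.
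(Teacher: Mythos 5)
Your proof is correct and proceeds by the same overall strategy as the paper -- regularize $\psi$ to a vector $\psi_\Lambda\in\cD(H_\Lambda(\xi))$, apply the cutoff pull-through of \cref{Lem:pullthrCutoff}, control the $d\Gamma(\omega)^{1/2}$-norm convergence so that \cref{Thm:CalculatingSecondQuantisedUsingAnihilation} yields a.e.\ pointwise convergence of $A\psi_\Lambda$ along a subsequence, and pass to the limit using the norm-resolvent convergence of the shifted resolvents off $\IR\xi$. The genuine difference is in the choice of approximant. The paper takes $\psi_\Lambda=\chi(H_\Lambda(\xi)-\Sigma_\Lambda(\xi))\psi$ with $\chi\in C_0^\infty(\IR)$, $\chi(0)=1$, and then needs to verify two convergences in weighted norm, namely $\psi_\Lambda\to\psi$ and $(H_\Lambda(\xi)-\Sigma_\Lambda(\xi))\psi_\Lambda\to 0$ (the latter makes the inhomogeneous term in the pull-through vanish pointwise); this is arranged via the auxiliary operators $C_{a,\Lambda}$ and the observation that $\chi$ is compactly supported, so $\psi_\Lambda$ lands in $\cD(H_\Lambda(\xi))$. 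You instead take the resolvent approximant $\psi_\Lambda=(\Sigma_\infty(\xi)-z)R_{\xi,\Lambda}(z)\psi$, which lands in $\cD(H_\Lambda(\xi))$ automatically, and the algebraic identity $(H_\Lambda(\xi)-\Sigma_\Lambda(\xi))\psi_\Lambda=(z-\tilde\Sigma_\Lambda)\psi_\Lambda+(\Sigma_\infty(\xi)-z)\psi$ converts the inhomogeneous pull-through term into pieces that cancel in the limit after $\tilde\Sigma_\Lambda\to\Sigma_\infty(\xi)$; you therefore only need to check a single weighted convergence $\psi_\Lambda\to\psi$, for which \cref{LargeFiberProp} \cref{part:FiberConvSqRt} (via $A_{2/3}$ domination of $d\Gamma(\omega)^{1/2}$ and the uniform-boundedness principle) suffices. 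In short, the resolvent approximant trades the second convergence statement for a small algebraic bookkeeping step, and is arguably a touch leaner; both routes use the same structural ingredients (\cref{Lem:pullthrCutoff}, \cref{Thm:CalculatingSecondQuantisedUsingAnihilation}, \cref{LargeFiberProp}, \cref{lem:resultsthomas}, \cref{lem:normresolventConv}). One small remark: you implicitly read the cutoff pull-through with $A(H_\Lambda(\xi)-\Sigma_\Lambda(\xi))\psi$ on the right-hand side rather than the literal $AH_\Lambda(\xi)\psi$; that is the intended reading (the paper's own proof makes the same substitution and the hypothesis refers to $A(H_\Lambda(\xi)-\Sigma_\Lambda(\xi))\psi$), so no issue, but it is worth flagging explicitly.
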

\begin{proof}
	Define $E_\infty=0$ and define for $\Lambda \in [0,\infty]$ the operator
	\begin{align*}
	H_\Lambda' (\xi,k)=H_\Lambda(\xi-k)-\Sigma_\Lambda(\xi)=H_\Lambda(\xi-k)+E_\Lambda-(\Sigma_\Lambda(\xi)+E_\Lambda)
	\end{align*}
	Similar to the proof of \cref{lem:Q-formula}, we find $H_\Lambda' (\xi,k)$ converges to $H_\infty' (\xi,k)$ in the norm resolvent sense. 
	Pick $\chi \in C_0^\infty(\RR)$ such that $\chi(0)=1$ and pick $\Lambda_0$ such that $\Sigma_\Lambda(\xi)+E_\Lambda-\Sigma_\infty(\xi)+1>0$ for all $\Lambda>\Lambda_0$. We then define for $a\in\{0,1\}$, $\Lambda\in(\Lambda_0,\infty)$
	\begin{align*}
	\psi_\Lambda&=\chi(H_\Lambda'(\xi,0))\psi,\\
	C_{a,\Lambda}&=H_\Lambda'(\xi,0)^a(H_\Lambda'(\xi,0)+E_\Lambda+\Sigma_\Lambda(\xi)-\Sigma_\infty(\xi)+1)\chi(H_\Lambda'(\xi,0)),\\
	B&=(d\Gamma(\omega)+1 )^{1/2},\\
	D_{\Lambda}&=B (H_\Lambda(\xi)+E_\Lambda-\Sigma_\infty(\xi)+1)^{-1}.
	\end{align*}
	By the spectral theorem, \cref{prop:operatorsareSA,LargeFiberProp}, we see $H_\Lambda'(\xi,0)^a\psi_\Lambda\in \cD(H_\Lambda(\xi))\subset \cD(B)$ and that
	\begin{align*}
	B(H'_\Lambda(\xi,0))^a\psi_\Lambda=D_\Lambda C_{a,\Lambda}\psi.
	\end{align*} 
	We abuse notation by setting $0^0=1$. By the spectral theorem, the norm resolvent convergence of $H_\Lambda'(\xi,0)$, \partref{lem:normresolventConv}{convCalc} and $\chi(H_\infty'(\xi,0))\psi=\psi$, we see $C_{a,\Lambda}\psi$ converges to $0^a\psi$ as $\Lambda\to\infty$. Thus, using \cref{LargeFiberProp}, we find $D_\Lambda C_{a,\Lambda}\psi$ converges to $0^a\psi$. Hence $(H'_\Lambda(\xi,0))^a\psi_\Lambda$ converges to $0^a\psi$ in $B$-norm.
	
	By \cref{Thm:CalculatingSecondQuantisedUsingAnihilation} we see that $AH_\Lambda' (\xi,0)\psi_\Lambda $ is Fock space valued, so we may apply \cref{Lem:pullthrCutoff} and find
	\begin{align*}
	(A\psi_\Lambda)(k)=&(H_\Lambda(\xi-k)-\Sigma_\Lambda(\xi)+\omega(k))^{-1}(AH'_\Lambda (\xi,0)\psi_\Lambda)(k)\\&-v_\Lambda(k)(H_\Lambda(\xi-k)-\Sigma_\Lambda(\xi)+\omega(k))^{-1}\psi_\Lambda.
	\end{align*}
	By \cref{Thm:CalculatingSecondQuantisedUsingAnihilation} we see that $\omega^{1/2}A\psi_\Lambda$ converges to $\omega^{1/2}A\psi$ in $L^2(\RR^\nu,\cF(\cH))$. Additionally, it holds that $\omega^{1/2}AH'_\Lambda (\xi,0)\psi_\Lambda$ converges to 0 in $L^2(\RR^\nu,\cF(\cH))$. As $\omega>0$ almost everywhere, we may pick elements $\Lambda_0<\Lambda_1<\Lambda_2<\cdots$ such that 
	\begin{align*}
	&\lim_{n\rightarrow \infty} (A\psi_{\Lambda_n})(k)=(A\psi)(k)\\
	&\lim_{n\rightarrow \infty} (AH'_{\Lambda_n} (\xi,0)\psi_{\Lambda_n})(k)=0
	\end{align*}
	for almost every $k$. Now
	\begin{align*}
	\lim_{\Lambda\rightarrow \infty}(H_\Lambda(\xi-k)-\Sigma_\Lambda(\xi)+\omega(k))^{-1}=(H_\infty(\xi-k)-\Sigma_\infty(\xi)+\omega(k))^{-1}
	\end{align*}
	in norm except at $k\in \RR \xi$ (see \partref{lem:resolventformbound}{lem:lowerboundsigma2}), which finishes the proof.
\end{proof}

\bibliographystyle{halpha-abbrv}
\bibliography{lit}

\end{document}